\documentclass[preprint,12pt]{elsarticle}

\usepackage{amssymb}
\usepackage{graphicx}
\usepackage{wrapfig}
 \usepackage{mathtools} 
\usepackage{float}
\usepackage{subcaption}
\usepackage{parboxx}

\usepackage{fix-cm}
\usepackage{ulem}
\usepackage{soul}
\usepackage{tikz}
\usetikzlibrary{patterns}
\usepackage{mathrsfs}
\usepackage[ruled,vlined]{algorithm2e}

\usepackage{amssymb}
\usepackage{amsmath,dsfont}
\usepackage{xcolor}
\usepackage{nicefrac}
\usepackage{mdframed}

\usepackage{float}

\newcommand{\extrawork}[1]{}

\usepackage{hyperref}
\newcommand{\Canticipate}[1]{}
\newcommand{\ignore}[1]{}
\newcommand{\Ilextra}[1]{}
\newcommand{\ENMremoved}[1]{}
\newcommand{\Details}[1]{}
\newcommand{\extra}[1]{}
\newcommand{\jlt}[1]{}

\renewcommand{\nu}{w}

\newcommand{\manu}{manufacturer}
\newcommand{\spcialComment}[1]{}

\newtheorem{cor}{Corollary}

\newcommand{\eop}{{\hfill~$\Box$}}

\renewcommand{\S}{{\mathcal S}}

\newcommand{\M}{{\mathbb M}}

\newcommand{\V}{{\mathbb V}}
\renewcommand{\S}{{\mathbb S}}

\newcommand{\Me}{M_e}
\newcommand{\Mi}{M}

\newcommand{\bMe}{{\bf Me}}
\newcommand{\bMi}{{\bf M}}
\newcommand{\bS}{{\bf S}}

\newcommand{\pe}{p_e}

\newcommand{\Vi}{{{\mathbb V}}}

\newcommand{\dbar}{\bar{d}}

\newcommand{\indc}[1]{\mathds{1}_{\left\{ #1 \right \}} }

\newcommand{\F}{{\cal F}}

\newcommand{\sM}{{\mbox{\fontsize{4.7}{5}\selectfont{$\M$}}}}
\newcommand{\sMi}{{\mbox{\fontsize{4.7}{4.5}\selectfont{$\bMi$}}}}
\newcommand{\sMe}{{\mbox{\fontsize{4.7}{4.5}\selectfont{$\bMe$}}}}

\newcommand{\sS}{{\mbox{\fontsize{5.2}{5.2}\selectfont{$\bS$}}}}

\newcommand{\sV}{{\mbox{\fontsize{5.2}{5.2}\selectfont{$\V$}}}}

\newcommand{\I}{{\mathbb I}}

\newcommand{\RB}[1]{}

\newcommand{\Sco}{\eta_{co}}
\newcommand{\Sse}{\eta_{se}}

\newcommand{\pmax}{p_{mx}}

\newtheorem{lemma}{Lemma}
\newtheorem{theorem}{Theorem}

\vspace{-2mm}

\journal{International Journal of Production Economics}

\begin{document}

\begin{frontmatter}



\title{What should the encroaching supplier do?: A Stackelberg Game Approach}


\author{Gurkirat Wadhwa, Veeraruna Kavitha}

\affiliation{organization={IEOR,IIT Bombay},
            addressline={Powai}, 
            city={Mumbai},
            postcode={400076}, 
            state={Maharashtra},
            country={India}}

\begin{abstract}
Suppliers often encroach downstream by operating in-house production-units while continuing to supply independent production-units. 
We study the optimal configuration, including optimal pricing, for an encroaching supplier that balances these dual roles through a Stackelberg game. The integrated supplier  determines the wholesale price charged to the outsourced production unit and the retail price of its own product, while the outsourced unit responds optimally.
 Customer demand-response incorporates both price-based substitutions (of the two production-units) and loyalty (towards individual units).
With strong customer loyalty and luxury products, at the  optimal choice for the coalition,   both units  co-exist profitably. In contrast, when the products become essential, the optimal strategy depends upon customer-fallback rates (fraction of the exiting production-unit's market that falls-back to other). Under low fallback, the coalition either sustains  co-existence at maximum prices or disciplines the out-house to operate at break-even---with high fallback it  is optimal to shut-down the in-house or eliminate the out-house---we derive two factors that identify the above.  We further develop a numerical procedure to identify the optimal regime for any given set of parameters. Two surprising results are---higher market potential of the out-house can become a reason for it to operate at break-even---and the coalition may find it beneficial to operate its in-house at losses, particularly  for products that are
neither highly essential nor in the luxury category. 
\end{abstract}



\begin{keyword}
Supplier encroachment, Stackelberg game, partial vertical integration, market strength
\end{keyword}

\end{frontmatter}


\vspace{-2mm}

\section{Introduction}
\label{sec_intro}

Supplier encroachment refers to a strategic move where suppliers bypass traditional distribution channels, set-up an in-house production unit
to sell directly to end consumers while continuing supplying to downstream  or  lower echelon manufacturers. Such strategies have become increasingly prevalent as firms seek to capture larger margins, strengthen brand identity,  reduce dependence on downstream intermediaries and thereby gain more control over the supply chain   (see e.g., \cite{arya2007bright,ha2022supplier,yoon2016supplier}).
This trend is evident across various industries, transforming conventional supply chain (SC) dynamics. For instance, in the automotive industry, major suppliers such as \href{https://www.boschautoservice.com/} Bosch and \href{https://www.continental-aftermarket.com/us-en} Continental now market parts and services directly to consumers, thereby enhancing their brand visibility and fostering closer customer relationships. 
Acer Inc., initially a supplier for IBM and Apple, leveraged this strategy to become one of the big computer manufacturers worldwide by 2007 (\cite{nystedt2007acer}).

The arrangement is closely linked to another aspect namely  `vertical integration'  studied in SC literature (e.g., \cite{ursino2015supply,simchi1999designing}). Vertical integration typically implies the integration of various units (across various echelons) into a single unit that   controls    multiple stages of production and distribution   (e.g., \cite{simchi1999designing,wadhwapartition,zheng2021willingness}).  The idea in most of this literature is to illustrate  the advantages of a centralized SC formed by complete integration of all  the manufacturers and the supplier.  
%
%
Recently in \cite{wadhwapartition}, we showed that, for essential products, a partially integrated supplier--manufacturer coalition facing competition from another manufacturer is more stable than the fully centralized SC (a structure that is not   opposed by other collaborative arrangements). Extending this analysis to more general settings, including non-essential products, remains an open problem.
A key challenge is to characterize the worths of coalitions under different partition structures, with the main hurdle being the characterization of worths of the individual coalitions in a  partition involving partial vertical integration (see \cite{wadhwa2025should,wadhwapartition}). 

The common feature in both the aspects mentioned above,  is a single unit that has capacity  spanning across multiple echelons. 
In 
 this study, we investigate one such  SC with one supplier and two manufacturers, where the supplier collaborates with one of the manufacturers resulting in a partial vertical integration, while competing with the other.  We  explore the optimal operating strategies for the vertical collaborating unit and thereby derive it's worth---when it acts as a leader by setting the wholesale price for the raw  materials  (to the out-house manufacturer) and by quoting another price to the end customers---while anticipating    the optimal response of the out-house, that acts as a follower. We also derive the worth of the out-house manufacturer.

\ignore{ We consider a Stackelberg (SB) game framework, where the coalition of supplier and manufacturer  acts as the leader, and the out-house manufacturer is the follower.
Under some mild conditions 
we show the existence of  Stackelberg equilibrium. 
The major findings of this study, some of which are supported by numerical illustrations are as follows: (i) when the two production units are of comparable 
strengths and  are not substitutable (where the customers are extremely loyal to their respective manufacturers), both of them derive strict positive profits at the optimal operating point of the supplier-manufacturer duo; (ii) more interestingly, at the optimal choice for the market with not-so loyal customers, the out-house manufacturer is compelled to operate at par (with almost zero profit margins);   and (iii) when the production units are of significantly different strengths, it is never optimal to allow both the production units to derive strict positive utilities; 
either it is optimal to operate the in-house   at losses, just sufficient to ensure the out-house is not a monopoly in the downstream market, or    to force the out-house unit to operate  with  negligible profit margins.

\newpage 

Supplier encroachment refers to the strategic decision of an upstream supplier to establish in-house production and sell directly to end consumers while continuing to supply inputs to downstream manufacturers. Such strategies have become increasingly prevalent as firms seek to capture larger margins, strengthen brand identity, and reduce dependence on downstream intermediaries. A growing stream of supply chain (SC) literature has examined the implications of supplier encroachment for pricing, channel coordination, and profitability \cite{arya2007bright,ha2022supplier,yoon2016supplier}, demonstrating its potential to fundamentally reshape power relationships within SCs.
 For instance, in the automotive industry, major suppliers such as \href{https://www.boschautoservice.com/} Bosch and \href{https://www.continental-aftermarket.com/us-en} Continental now market parts and services directly to consumers, thereby enhancing their brand visibility and fostering closer customer relationships. 
Acer Inc., initially a supplier for IBM and Apple, leveraged this strategy to become one of the big computer manufacturers worldwide by 2007 (\cite{nystedt2007acer}).

The trend is closely linked to another aspect namely  `vertical integration'  studied in SC literature (e.g., \cite{ursino2015supply,simchi1999designing}). Vertical integration typically implies the integration of various units (across various echelons) into a single unit that   controls    multiple stages of production and distribution   (e.g., \cite{simchi1999designing,wadhwapartition,zheng2021willingness}). The idea in most of this literature is to illustrate  the advantages of a centralized SC formed by complete integration of all  the manufacturers and the supplier.   Recent research in \cite{wadhwapartition} however emphasizes \textit{partial} vertical integration---a coalition of the supplier and the stronger manufacturer to be stable in one asymptotic regime (when the products are highly essential). In this coalition formation study, all configurations of two manufacturers and one supplier are considered and the worths of the coalition in each configuration are derived in the mentioned asymptotic regime. The analysis of the vertically integrated SC derived in this paper can thus be helpful to analyze the overall  stability of SC agents in   \cite{wadhwapartition} in complete generalized manner as the closed form of other configurations are easy to derive. In all, the detailed vertical SC analysis of this paper can help to understand the overall coalition formation aspects in a one supplier two manufacturer SC.

Motivated by these developments, we study a partially integrated SC consisting of one supplier and two manufacturers: an in-house manufacturer that forms a coalition $\V$ with the supplier, and an out-house manufacturer $M_e$ that depends on the supplier for raw material while competing with $\V$ downstream. The coalition acts as a Stackelberg leader, jointly choosing the wholesale price quoted to $M_e$ and the retail price of its own product, while $M_e$ responds optimally. This framework is built to answer one central managerial question:

\begin{quote}
\textit{Under what market conditions should a supplier sustain both roles---supplying an out-house manufacturer while operating an in-house unit---and when should it instead discipline, withdraw from, or eliminate one of them?}
\end{quote} }

\ignore{We  study   the  partially integrated SC elaborately by  considering a Stackelberg  model  that incorporates  market reputations, product essentialness, demand substitution (when one unit exits the market), production costs and non-negligible operating costs, etc. Such frictions  are typically neglected in many of the existing encroachment models (e.g., \cite{arya2007bright,ha2022supplier} do not consider operating costs and non-operability choice). \ignore{The  consideration of  these aspects helped us in identifying the  market-condition dependent optimal configurations  that span  across various possible operating regimes like in-house loss-making operation, out-house forced to operate at break-even,  shutdown/elimination of one of the two units,  rather than just the commonly discussed  profitable co-existence of the two units.}
Considering these aspects enabled us to identify some surprising optimal configurations for the supplier-manufacturer-duo, depending upon  market parameters,   like loss-making in-house operation, out-house forced to operate at break-even, shutdown of one of the two units, and the conventional regime of profitable coexistence.}

We study the partially integrated SC using a Stackelberg model that incorporates market reputation, product essentialness, demand substitution or fallback rate  (when a  manufacturer  exits the  market), production costs, and non-negligible operating costs. Such features are often ignored in existing encroachment models (e.g., \cite{arya2007bright,ha2022supplier}, which do not consider operating costs or the option to cease operations). Accounting for these factors reveals several surprising market-dependent optimal configurations for the supplier--manufacturer duo---including loss-making in-house operation, forcing out-house to operate at break-even, shutdown of one of the two units, and the conventional regime of profitable co-existence.

\ignore{As already mentioned, one interpretation of our SC is related to supplier encroachment---one can view the above arrangement as  a supplier  with an in-house production unit that also outsources   materials to an independent out-house production unit. Elaborate study of such an arrangement, as mentioned above,  can  answer one central managerial question:}

As noted earlier, our SC can also be interpreted as a supplier encroachment model, where the supplier operates an in-house production unit while simultaneously supplying materials to an independent outsourced production unit. Analyzing this arrangement in detail, as mentioned above,  helps address a central managerial question:

\begin{quote}
\textit{Under what market conditions should a supplier sustain both roles---supplying an out-house manufacturer while operating an in-house unit---and when should it instead discipline, withdraw from, or eliminate one of them?}
\end{quote}

\ignore{On the other hand,  the current   paper also derives the  the worths of  the coalitions in a partition, involving partial vertical integration, under  fairly general conditions  compared to that in \cite{wadhwapartition}, which  primarily   focuses  on essential products. This would enable a more complete future analysis of coalition formation game in a one-supplier, two-manufacturer SC.}

On the other hand, this paper also derives the coalition worths for partitions involving partial vertical integration under substantially more general conditions than \cite{wadhwapartition}, which primarily focuses on essential products. The results of this paper hence can facilitate a more comprehensive future analysis of coalition formation game in a one-supplier, two-manufacturer SC.

A preliminary conference version of this work \cite{wadhwa2025should}, that restricted attention only to co-existence possibilities,  illustrated  that strong customer loyalty (towards individual manufacturers) sustains profitable co-existence of both units; while for more essential products (where customers  are desperate to buy from any) the coalition can push the out-house manufacturer to break-even, or even the in-house unit into a loss. The analysis in \cite{wadhwa2025should} stopped short of asking whether co-existence is optimal at all, relative to shutting down the in-house unit or eliminating the out-house manufacturer entirely.

The present paper closes  the gaps in  analysis of \cite{wadhwa2025should} by investigating the coalition's choice across all major operating regimes. Our contributions are as follows.

\begin{enumerate}
    \item[i)] \textbf{A complete regime taxonomy:} We formalize various operating regimes available to the coalition---operate both units profitably (Bp), in-house at   loss (I$\ell$), out-house forced to operate at par (Op), shutdown  the in-house unit (Sh), and elimination of the out-house  (E$\ell$); we also  derive  optimal prices and utilities for each regime.

    \item[ii)] \textbf{Optimal regime in two asymptotic scenarios:} We prove that when products are weak substitutes and the customer loyalty is high (the low-essentialness  scenario), profitable co-existence (Bp) is optimal. Thus for luxury  or   weakly substitutable products, the coalitional  advantage of the encroaching supplier does not create a significantly adverse outcome for the outsourced production unit.  

    \medskip
        Conversely, in the high-essentialness scenario, we show  that the optimal regime depends upon the fallback rates. When the rates are small (less than some  fallback-rate threshold $\bar{r}$),  the optimal choice reduces to a choice between forcing the out-house to operate at par (Op) or  profitable co-existence  (Bp)---we derive a closed-form expression for a certain relative-strength score, constructed using the parameters of the SC,   that dictates this optimal choice. For high fallback, the optimal choice is between shutting down the in-house unit (Sh) or eliminating the out-house manufacturer (E$\ell$)---this is governed by a second relative-strength indicator score.

    \item[iii)] \textbf{A numerical procedure:} 
    We further provide a numerical procedure that computes the optimal regime   for any given set of parameters---by comparing the closed-form sub-optimal utilities of each possible operating regime.  
    
    We   plot these optimal regimes to obtain `regime maps', as essentialness factor ($\varepsilon$) and fallback rate ($r$) are varied continually, while keeping the rest of the parameters fixed---the resulting optimal regime maps show how the two asymptotic results extend—or, in some cases, fail to extend—for intermediate values of essentialness factor. 

  We have some surprising  observations:  (a) when the in-house unit is inferior, operating it  at loss  (I$\ell$) becomes optimal for a non-trivial intermediate range of essentialness factor, when the fall-back rates are small;  and (b) 
    when the out-house is significantly superior---both in price-sensitivity or market reputation and production costs---its higher potential can paradoxically become a disadvantage, particularly under low fallback rates---the coalition can push it to operate at break-even point, by quoting a sufficiently large wholesale price.
\end{enumerate}

Together, these results show that profitable co-existence, far from being the generic outcome of partial vertical integration, is confined to conditions of high customer loyalty (and for weakly-substitutable products); when products become more essential, the supplier optimally moves through a sequence of increasingly aggressive strategies---disciplining the out-house manufacturer to break-even, sustaining its own in-house unit at a loss to forestall downstream monopoly, or, at the extreme, shutting down one of the two units altogether---with the specific choice pinned down by simple, computable market-strength scores at the two ends of the essentialness spectrum and by direct numerical comparison in between.
Overall, our results demonstrate that partial vertical integration is not merely an intermediate organizational form between outsourcing and full integration, but an active strategic lever for shaping downstream market structure.

\subsection*{Literature Review}

Supplier encroachment, whereby an upstream supplier sells directly to end customers while continuing to supply downstream firms, has been extensively studied in the supply chain literature. Early studies primarily emphasized its adverse effects, arguing that encroachment intensifies channel competition and may reduce downstream profitability \cite{frazier1996determinants,fein1997patterns}. Subsequent research, however, demonstrated that supplier encroachment can also improve channel performance by mitigating double marginalization and better aligning upstream and downstream incentives. For example, \cite{chiang2003direct} showed that introducing a direct sales channel may reduce wholesale prices and improve retailer profitability when consumers exhibit sufficient acceptance of the direct channel, while \cite{arya2007bright} identified conditions under which supplier encroachment benefits both suppliers and retailers. More recent studies have extended this literature by incorporating information asymmetry, digital channels, and platform-mediated competition, highlighting the influence of market structure and information availability on the profitability of encroachment \cite{ha2022supplier}. Collectively, these studies show that supplier encroachment simultaneously involves upstream cooperation and downstream competition, making it a fundamental strategic issue in modern supply chains.

Despite these advances, much of the analytical literature focuses primarily on pricing and channel coordination within a predetermined operating structure. In many existing models, firms optimize prices while the market configuration itself is assumed to remain fixed throughout the game. Comparatively less attention has been devoted to settings in which an integrated supplier can strategically determine the operating structure by choosing whether to sustain co-existence, maintain a downstream competitor at break-even, temporarily accept losses, shut down its own production unit, or eliminate downstream competition. Furthermore, the joint analytical treatment of dedicated customer bases, price-based demand substitution, and non-negligible production and operating costs remains limited. In our formulation, incorporating these features results in piecewise-defined payoff functions with multiple feasible operating regimes, motivating a regime-by-regime analytical framework rather than a conventional single-regime pricing analysis.

A related stream of research investigates the trade-off between outsourcing and in-house production. For example, \cite{kaya2011outsourcing} studies sourcing decisions under effort-dependent demand, while \cite{wang2013advantage} analyzes competition between an original equipment manufacturer and a contract manufacturer within a Stackelberg framework. These studies demonstrate that coexistence between outsourcing and internal production can emerge under suitable contractual arrangements. However, the production structure is generally specified exogenously, with relatively little attention devoted to settings in which an upstream supplier simultaneously supplies and competes with a downstream manufacturer while optimally determining its mode of operation.

Another closely related stream considers vertical integration as a mechanism for improving coordination and mitigating double marginalization \cite{simchi1999designing,ursino2015supply,arora2025vertical}. While this literature demonstrates the efficiency benefits of integration, many models either assume complete integration or treat the integration structure as fixed. Consequently, they provide limited insight into how a partially integrated coalition strategically interacts with an independent downstream manufacturer when multiple operating regimes are feasible.

Game-theoretic models, particularly Stackelberg games, have been widely employed to analyze leader--follower interactions in supply chains \cite{li2006channel,yan2011managing,lin2011dual,das2022integration,taleizadeh2016pricing}. These studies have generated important insights into pricing, channel power, and competitive behaviour. Nevertheless, supplier encroachment, outsourcing, and partial vertical integration have largely developed as related but distinct research streams, and comparatively few analytical frameworks integrate these decisions while allowing the operating regime itself to be determined endogenously.

Recent behavioural studies further recognize that firms may deliberately sacrifice short-term profitability to preserve long-term strategic objectives \cite{zheng2021willingness}. Although these studies provide valuable insights into strategic behaviour, they do not explicitly examine such decisions within a supplier encroachment framework that simultaneously captures partial vertical integration, asymmetric dependence, and endogenous operating regimes.

Motivated by these observations, this paper develops a unified Stackelberg framework for supplier encroachment under partial vertical integration. The model jointly incorporates customer loyalty, demand substitution through fallback behaviour, production and operating costs, and endogenous regime selection. Rather than optimizing prices within a predetermined operating structure, the integrated supplier optimally chooses among multiple operating regimes, including profitable coexistence, break-even operation of the downstream manufacturer, loss-making operation of the in-house unit, shutdown of the in-house unit, and elimination of the downstream manufacturer.

The analysis contributes in four directions. First, it characterizes the conditions under which different operating regimes become optimal. Second, it derives closed-form analytical characterizations for high- and low-essentialness scenarios together with a numerical procedure for determining the optimal regime in intermediate parameter regions. Third, it demonstrates how customer loyalty, product essentialness, demand fallback, and market asymmetry jointly influence the supplier's optimal operating strategy. Finally, numerical experiments over economically meaningful parameter ranges illustrate how optimal operating regimes evolve as product essentialness and demand transfer intensify. Overall, the paper contributes to the supplier encroachment and partial vertical integration literature by explicitly treating the operating regime as an endogenous strategic decision rather than assuming a predetermined operating structure.
\begin{figure}[h]
\vspace{-5mm}
\centering
    \includegraphics[width = \linewidth, height= 6 cm]{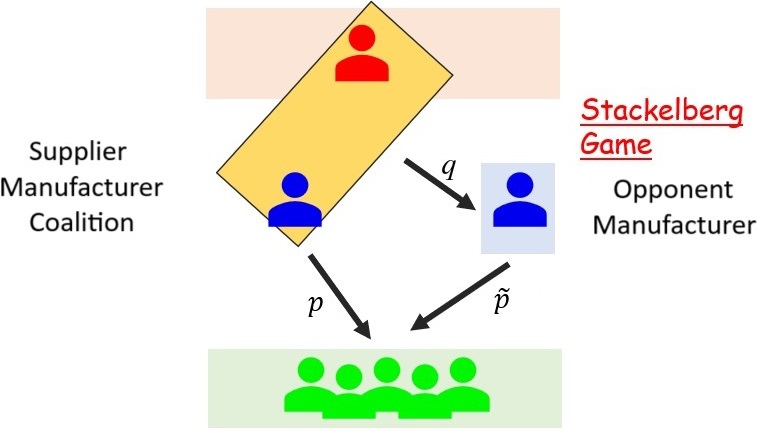}
    \caption{Model Description}
    \label{fig:model}
\vspace{-4mm}
\end{figure}

\section{Model Description}
\label{sec:model}
Consider  a partially integrated SC with one   supplier $S$  that collaborates with  a manufacturer  $\Mi$,  by forming the coalition $\Vi = \{\Mi, S\}$. The coalition  competes with an  out-house manufacturer,  referred by $\Me$. 

We consider a Stackelberg framework, where the members of  the  coalition $\Vi$ quote their prices  first: (a) the supplier $S$ quotes wholesale price $q$   to $\Me$ for the raw materials; and (b) the manufacturer $M$ quotes price $p$ for its final product to the end-customers. Thus  the coalition $\V$  forms the leader of the 
  Stackelberg game, while the manufacturer  $\Me$  is the follower and responds to the quoted prices $(q,p)$.   The latter  
 can choose to operate by quoting a price $\pe$ to the end customers 
 for its own finished product. It can  also choose not to operate represented by action $n_o$,  if  the resultant market response is not conducive. Any unit can choose $n_o$ and with such a choice, the corresponding unit is completely shut-down and incurs zero utility (zero profit and    cost). 
  The out-house manufacturer $\Me$ uses  the  raw material supplied by $\V$
  for producing the  finished products,  see Figure \ref{fig:model} for the flow of materials and the prices.

\subsection{Market Response}  The demand attracted by any manufacturer depends upon the price quoted for the finished product, for example, that attracted by
 manufacturer $\Mi$ is given by  (see \cite{wadhwapartition,zheng2021willingness} for similar models):
 \vspace{-2mm}
\begin{eqnarray}\label{eqn_demand_vc_coal}
D_\sMi =  (\dbar_\sMi - \alpha_\sMi p +\varepsilon\alpha_\sMe \pe)^{+},
\end{eqnarray}
where the different influencing factors are as  below:
\begin{itemize}
\item $\dbar_\sMi$  is the dedicated market potential of manufacturer~$\Mi$,
\item $\alpha_\sMi p$ is the fraction of demand lost by $\Mi$ due to
 its quoted   price $p$, sensitized by  parameter $\alpha_\sMi$ (here $\alpha_\sMi$ can be  a representative of the reputation of $\Mi$),
 
 \item The demand is positive as long as the term inside
 $(.)^+$ is positive; else, the demand is zero.

 \item  \ignore{the last component $\varepsilon g_j(\pe) $ is due an `unhappy' fraction of the loyal customer base of the out-house manufacturer $\Me$ and is explained in the immediate following.} $\varepsilon \alpha_\sMe
 \pe$ is the fraction of customer base of $\Me$
 that rejected $\Me$ (due to its quoted price $\pe$) and got converted as customers of $\Mi$.
\end{itemize}

\ignore{
The demand $D_\sMe$ attracted by manufacturer $\Me$ 
has exactly similar structure with respective parameters, ($\dbar_\sMe, \alpha_\sMe $), the fraction of the loyal customers of $\Me$ that are unhappy with price $\pe$  exactly equal $g_j(\pe) = 
\varepsilon \alpha_\sMe \pe$ and an $\varepsilon$  fraction among these unhappy customers seek service from $\Mi$; additionally observe $g_j(\pe)$ can at maximum be $\dbar_\sMe$, see  \eqref{eqn_demand_vc_coal}.}

The parameter $\varepsilon$ represents the essentialness of the product. 
When $\varepsilon \approx 1$, the product is essential implying that the manufacturers (or their products) are substitutable and the customers can buy the product from any of the manufacturers. On the other hand, when $\varepsilon \approx 0$, the product is not essential, i.e.,  the customers are loyal and choose  to buy the product only from `their'   manufacturers. 

\subsection{Utilities} 
We begin with the utility of out-house manufacturer $\Me$. When it does not operate, represented by indicator  $\F^c_\sMe = \indc{\pe= n_{o}}$, it derives zero utility. 
When it operates (represented by $\F_\sMe$), it attracts demand as in \eqref{eqn_demand_vc_coal} and then the revenue derived equals the demand times the price minus the expenses (the raw material price $q$ plus the production cost). 
 Thus the utility of the manufacturer $\Me$  equals:
\begin{eqnarray}
 U_\sMe (\pe; p, q) &=& \left(D_\sMe \left(\pe- q - C_\sMe \right) {\cal F}_{\sV} - O_\sMe \right)\F_\sMe, \mbox{ with, } \label{Eqn_Umj}\\  D_\sMe &=&  (\dbar_\sMe - \alpha_\sMe \pe+\varepsilon\alpha_\sMi p)^{+},   \nonumber
\end{eqnarray}
where    $C_\sMe$ represents the  production cost per unit and $O_\sMe$ represents the operating cost. The profit of manufacturer
 $\Me$ is zero   when  the
 supplier does not operate (represented by indicator ${\cal F}_{\sV}^c$).  Some dependencies are suppressed, when there is clarity,  to keep the notations simple. 

The utility of $\V$ due to  demand $D_\sMi $  attracted by its manufacturer will have similar structure.   
Additionally, the demand  $D_\sMe$ attracted by $\Me$ also contributes towards the revenue of $\V$  (as it  supplies raw material).
  In all,  the utility of the coalition $\V$ is given by,

\vspace{-4mm}
{\small\begin{eqnarray}
 U_{\sV} (p, q; \pe) &=&  \big( U_\sM  + D_\sMe \left(q-C_\sS \right)\F_\sMe
 - O_\sS - O_\sMi \big)\F_{\sV} \label{Eqn_Util_V}, \\
 U_\sM (p; \pe)  &=& D_\sMi\left(p-C_\sMi -C_\sS \right)\F_\sMi \label{eqn_Util_M} 
\end{eqnarray}}%
where  $C_\sS$ represents the raw material procurement cost  (per unit) and $C_\sMi$, $O_\sS$ and $O_\sMi$ have similar interpretations. The coalition $\V$ can choose to shut    in-house production (or it's   manufacturer $\Mi$) if it deems   advantageous,   represented by action $(p, q)$ with $p=n_o$, and hence the inclusion of the  flag   $\F_\sMi := \indc{p \ne n_o}$ in  \eqref{Eqn_Util_V}; alternatively it might find it beneficial to not operate at all, indicated by  
$ \F^c_{{\sV}} = 1-\indc{q  \ne n_o}$.

We need to choose an upper bound for the prices without loosing generality, as compact domains significantly simplify the analysis. Towards this, first observe that the demand attracted by any manufacturer (say $m$) gets zero, even after considering that  maximum possible fraction {\small$\varepsilon \dbar_{-m}$} is received from the other manufacturer (say $-m$), if 
$
p_m >  \nicefrac{ (\dbar_m + \varepsilon \dbar_{-m})} {\alpha_m}. 
$
Thus we set $\pmax =\nicefrac{(\dbar_\sMi + \varepsilon\dbar_\sMe)}{\alpha_\sMi}$  and ${\pe}_{_{mx}} =\nicefrac{(\dbar_\sMe + \varepsilon\dbar_\sMi)}{\alpha_\sMe}$ as the maximum prices respectively  for $\Mi$ and $\Me.$ 
\textit{We assume that if any agent is indifferent between the action~$a = n_o$ and an $a \ne n_o$,   the agent prefers to operate.} This consideration is inspired from the practical scenarios (see \cite{wadhwapartition}). We further assume   the following   as in \cite{wadhwapartition}, which ensures all the   agents find it `beneficial   to operate':
\begin{itemize}
    \item [{\bf A.1}] Assume the market potentials are sufficiently high, i.e., 

    \vspace{-4mm}
{\small\begin{eqnarray*}
    \dbar_\sMi  &\ge& \alpha_\sMi(C_\sS + C_\sMi) +2\sqrt{\alpha_\sMi(O_\sS + O_\sMi)}  \mbox{ and }\\
\dbar_\sMe  &\ge& \alpha_\sMe(C_\sS + C_\sMe) + 2\max\{\sqrt{2\alpha_\sMe (O_\sS + O_\sMi)},2\sqrt{ \alpha_\sMe O_\sMe}\} .    
    \end{eqnarray*}}

  \item [{\bf A.2}] Assume 
 $\varepsilon \le 2 C_{\sMe} + 4 \sqrt{\alpha_\sMe O_\sMe}$. 
\end{itemize}
Assumption  {{\bf A.1}} ensures that the market potentials of both the manufacturers are sufficiently high compared to production, procurement and the operating costs  (see \cite{wadhwapartition} for similar details). We will observe that    $\V$   finds it optimal to operate (i.e $\F^*_{\sV} =1$) under {{\bf A.1}},   which is important for meaningful analysis.
 Assumption  {{\bf A.2}} is required for some technical reasons in the proof of Theorem~\ref{thm_Fco_positive}  mentioned in \ref{sec_Appendix_AA}; besides, in general the operating and the production costs are significantly large
 and hence the assumption would automatically be satisfied (note here $\varepsilon\le 1$).

\subsection{Preliminary analysis and discussions}
  
\subsubsection { Best response of   $\Me$}
We  begin  by obtaining the best response of the follower, the out-house manufacturer $\Me$,  when the Stackelberg leader (coalition $\Vi$)  declares $  (p, q)$. In particular we consider the case with ${\cal F}_{\sV} =1$, or when $\Vi$ decides to operate. This  response of $\Me$  is governed by the  following optimization problem (observe from   \eqref{Eqn_Umj} that $D_\sMe$ depends upon $(p,q)$):

\vspace{-4mm}
{\small\begin{eqnarray}
U^{*}_\sMe(p,q) = \sup_{\pe\in \{n_o, [0, {\pe}_{_{mx}}] \}} \Big( D_\sMe (\pe- C_\sMe - q) - O_\sMe \Big) \indc{\pe\ne n_o}.\label{eqn_opt_util_out-house}
\end{eqnarray}}%
Such a problem is considered in \cite[Lemma 4]{wadhwapartition}.  By similar concavity arguments, the best response exists and equals:

\vspace{-4mm}
{\small\begin{eqnarray}
\pe^*(p,q) = \min\left\{ \frac{\dbar_\sMe + \varepsilon \alpha_\sMi p}{2 \alpha_\sMe} + \frac{C_\sMe + q}{2}, {\pe}_{_{mx}} \right\} \indc{q \le \theta(p)}   + n_o \indc{q > \theta(p)}, \mbox{ with, } \hspace{3mm} 
\label{Eqn_opt_policy_Mj} 
\end{eqnarray}
\begin{eqnarray}
 \theta(p) := \left\{ 
\begin{array}{lll}
\frac{\dbar_\sMe +\varepsilon\alpha_\sMi p -\alpha_\sMe C_\sMe - 2\sqrt{\alpha_\sMe O_\sMe}}{\alpha_\sMe} & \mbox{if } p < p_{sw}, \\
\frac{\dbar_\sMe + \varepsilon\dbar_\sMi - \alpha_\sMe C_\sMe}{\alpha_\sMe} - \frac{\alpha_\sMe O_\sMe}{\alpha_\sMe (\varepsilon\alpha_\sMi p - \varepsilon\dbar_\sMi)} & \mbox{else,}
\end{array}
\right.  \hspace{-85mm}   \label{Eqn_feasible_Regioin_Mj}\\ 
& p_{sw} := \frac{\dbar_\sMi}{\alpha_\sMi} + \frac{\sqrt{\alpha_\sMe O_\sMe}}{\varepsilon\alpha_\sMi}, \mbox{ and recall, } p_{mx} = \frac{\dbar_\sMi + \varepsilon \dbar_\sMe}{\alpha_\sMi}.  \hspace{2mm} \label{Eqn_psw}
\end{eqnarray}}%
In the above $p_{sw}$ represents a switching point---if the price of in-house $\Mi$ is above $p_{sw}$,
the optimal price of the out-house $\Me$ is clamped at the maximum possible value ${\pe}_{_{mx}}$.
Further,
  $\Me$ may not find it beneficial even to operate if the price $q$ quoted for raw materials is high (this happens when $q > \theta (p) $ in \eqref{Eqn_opt_policy_Mj}). Interestingly,  this  also depends upon the price $p$ quoted by the in-house manufacturer $\Mi$   towards the end-product. More interestingly $\Me$ can tolerate a larger $q$ if the price $p$ is higher (observe $\theta(p)$ increases with $p$)---a large part of loyal customer-base of in-house $\Mi$ can improve market opportunities for $\Me$ 
\ignore{(observe $\varepsilon g_i (p)$ seek products from $\Me$, and that $p \mapsto g_i (p)$ is increasing).} (observe from  \eqref{Eqn_Umj} that $\varepsilon \alpha_\sMi p$  fraction of customers seek products from $\Me$, and it is increasing in $p$).

\subsubsection {Choices of   coalition $\V$}    The  coalition $\V$ comprising of in-house manufacturer $\Mi$ and supplier $S$  has several advantages, as the vertical cooperation (VC) provides it  multiple choices as discussed below.  
 
 \noindent$ \bullet${\bf [Eliminate downstream competition (E$\ell$)]} The existence of in-house manufacturer in $\V$ provides it an option to operate in monopolistic manner when it is possible to attract a large fraction of `unhappy' loyal customers of the out-house $\Me$; this is possible probably when the manufacturers are substitutable to a good extent, i.e., if   $\varepsilon$ is large.  In  this case, it can completely eliminate out-house \manu 
    $\Me$ (by quoting exorbitantly large $q$)   and operate in the monopolistic manner in the downstream market with the combined market potential,  $\dbar_\sMi + \epsilon \dbar_\sMe$. 
    
 \noindent$ \bullet${\bf [Shut down the in-house (Sh)]} If either the market potential of the in-house manufacturer is low or when its reputation is not very good (when $\alpha_\sMi$ is more, its  customers are highly sensitive to price $p$), or when these factors of the out-house are significantly better, then $\V$ has an  option to completely shut its in-house production unit $\Mi$.  Such a choice can reduce the competition for out-house $\Me$ which in turn can become beneficial for $\V$---it may have an option to sell large amount of raw material (as market $D_\sMe$ attracted by $\Me$ can be large)  at good/optimal prices and without expending on operating costs of in-house. 

However it may not be beneficial to allow the out-house to operate in a monopolistic manner; like-wise it may not be beneficial to completely eliminate out-house $\Me$ unless the two production units are completely substitutable (in an ideal world with $\varepsilon=1$).  In such cases, there are other choices for $\Vi$ which we describe next and   are the focus of this paper.

 \noindent$ \bullet${\bf [Co-existence (Co)]}   
    In this scenario\textit{, both    $\Vi$   and out-house  $\Me$ operate; rather $\Vi$ allows both to operate}.  By virtue of this, it can charge sufficiently large (optimal) price $q$ for raw materials, which (probably) leaves few choices for $\Me$---the latter then has to quote larger prices $\pe$  to survive in the downstream market. This   facilitates  $\V$ to benefit from both the worlds, because of the `unhappy' loyal customers ($\varepsilon \alpha_\sMi p$) of $\Me$ that seek product from $\Mi$ as well as from the  high profits derived by selling the raw material to $\Me$ at large~$q$. Basically it chooses optimal $( p, q)$ that provides the best combined utility as a Stakelberg leader, while competing with the out-house manufacturer $\Me$ in the downstream market. 
 There are several sub-possibilities for $\Vi$ here: 
    \begin{itemize}
        \item {\bf [Operate \underline{b}oth \underline{p}rofitably(Bp)]} The coalition $\V$ quotes the price pair $(p,q)$ such that both production units derive non-zero profits. 
        
         \item {\bf [\underline{I}n-house operates at \underline{l}osses (I$\ell$)]} Alternatively $\V$ can operate it's in-house production unit at losses (by quoting large $p$), if that could fetch it a   larger revenue by just supplying to $\Me$; basically it might be beneficial not to allow the out-house to operate in  monopolistic manner in the downstream  market, by expending towards operating its in-house.  
         \item {\bf [\underline{O}ut-house forced to operate at \underline{p}ar (Op)]} In this case, the coalition $\V$ quotes the price $q$ to   $\Me$ such that this manufacturer operates but gets \textit{zero revenue}---this means that the coalition $\V$ quotes $q$ large, but sufficient to keep the out-house manufacturer operate at par.
   \end{itemize}

   \ignore{ In this case, it would chose a $(p,q)$ such that $\alpha_\sMi p > \dbar_\sMi + \alpha_\sMe \pe^*(p,q)  $ and hence such that $D_\sMi = 0$ and then optimizes the following:
    \begin{eqnarray*}
        \sup_{p, q, \ s.t., \ D_\sMi = 0, \ q \le \theta(p) }  \left(\dbar_\sMe + \varepsilon \alpha_\sMi p - \alpha_\sMe \pe^* (p,q) \right)(q- C_\sS) - O_\sMi - O_\sS \\
                \sup_{p, q, \ s.t., \ D_\sMi = 0, \ q \le \theta(p) }  \left(\dbar_\sMe + \varepsilon \alpha_\sMi p - \alpha_\sMe \left (e_1' + \frac{q}{2} +   \frac{p\varepsilon \alpha_\sMi}{2 \alpha_\sMe} \right ) \right)(q- C_\sS) - O_\sMi - O_\sS \\
                                \sup_{p, q, \ s.t., D_\sMi = 0, \ q \le \theta(p) }  \left(\dbar_\sMe + \frac{ \varepsilon \alpha_\sMi p}{2} - \alpha_\sMe \left (e_1' + \frac{q}{2}     \right ) \right)(q- C_\sS) - O_\sMi - O_\sS
    \end{eqnarray*}
    Thus the optimizers are $p^* = p_{max}$, the maximum possible value and $q^*$ satisfies:
    $$
    q^* = 
  \frac{  \dbar_\sMe + \frac{ \varepsilon \alpha_\sMi p_{max}}{2} - \alpha_\sMe  e_1'  + \frac{\alpha_\sMe C_\sS}{2} } {\alpha_\sMe}
    $$
    Thus the optimal utility  in this regime is given by:
    \begin{eqnarray*}
        \frac{ \left (\dbar_\sMe + \frac{ \varepsilon \alpha_\sMi p_{max}}{2} - \alpha_\sMe  e_1'       - \frac{\alpha_\sMe}{2} C_\sS  \right )^2 }{2 \alpha_\sMe }
    \end{eqnarray*}
    }

 \vspace{-1mm}
\medskip
\noindent
The objective is to identify $(p^*,q^*)$ that maximizes the coalition’s payoff across all the above regimes—--\textit{co-existence (in one of the three modes)}, \textit{shutdown of the in-house unit}, and \textit{elimination of the downstream competition}. As shown in Theorem~\ref{Thm_all_in_one}  (provided in later sections), the optimal choice is  driven by essentialness,   relative market strengths and other parameters. The results of the current paper significantly extend those provided in  conference paper \cite{wadhwapartition}, where the dominance of co-existence is established only  under restricted conditions like under essentialness assumption (near $\varepsilon \approx 1$). We now proceed towards detailed analysis and begin with the co-existence regime.

\vspace{-3mm}

\section{ Co-existence---VC allows both units to operate } \label{sec_Co}

\ignore{
Define the following 4 straight lines in $(p,q)$ domain, which help define certain boundaries, while defining the co-existence regime. 
\begin{eqnarray*}
    \mathbb{L}_1 \mbox{ --- } \   q &=& \phi_1 (p)  =  \frac{\dbar_\sMe + \varepsilon\alpha_\sMi p -\alpha_\sMe C_\sMe - 2\sqrt{\alpha_\sMe O_\sMe}}{\alpha_\sMe} \\
    && \  \mbox{ derived using } U^*_\sMe (p,q) \ge 0 
    \\
   \mathbb{L}_2 \mbox{ --- } \  q &=&  \phi_2(p) = \frac{\dbar_\sMe + 2\varepsilon\dbar_\sMi - \varepsilon\alpha_\sMi p - \alpha_\sMe C_\sMe}{\alpha_\sMe}  \\ 
   && \ \mbox{ derived using } \pe^*(p, q) = \frac{\dbar_\sMe + \varepsilon \dbar_\sMi}{\alpha_\sMe}   \\
   && \mbox{\color{red}
The optimizer for $U_\sMe$ is constrained by $\frac{\dbar_\sMe + \varepsilon \dbar_\sMi}{\alpha_\sMe}$ and by concavity of utility function } \\
&& \mbox{ $U_\sMe$ of the out-house its optimizer for all $(p,q)$ is at $\frac{\dbar_\sMe + \varepsilon \dbar_\sMi}{\alpha_\sMe}$.   
   }
   \\
       \mathbb{L}_3 \mbox{ --- }  \ p &=& \pmax  \\
         \mathbb{L}_4 \mbox{ --- } \  p &=& \psi (q)  =  \frac{1}{\alpha_\sMi (2-\varepsilon^2)} \left ( 2 \dbar_\sMi + \varepsilon  \dbar_\sMe + \varepsilon \alpha_\sMe  (C_\sMe+q ) \right )  \\
          && \  \mbox{ derived using } \dbar_\sMi + \varepsilon\alpha_\sMe \pe^{*}(p,q) - \alpha_\sMi p \ge  0 \\
   .
\end{eqnarray*}
Observe that the feasible region ${\cal F}_{co}$ under co-existence should include all those pairs of $(p,q)$ for which the following are all true:
\begin{itemize}
    \item we require $p \le \pmax$ as well as $\pe^*(p,q) \le  {\pe}_{_{mx}} := \nicefrac{ (\dbar_\sMe + \varepsilon \dbar_\sMi )}{\alpha_\sMe} $ 
    \item we require that the out-house at least operates on par, i.e., that  $U^*_\sMe (p,q) \ge 0$, which implies $q \le \theta(p)$
\end{itemize}
Thus ${\cal F}_{co}$ is the region in positive quadrant  bounded between line $\mathbb{L}_4 $ 
and the operability of out-house curve $\{q \le \theta(p)\}$ defined in \eqref{Eqn_feasible_Regioin_Mj}, 
in other words:
\begin{eqnarray}
    {\cal F}_{co} &=&  \left \{ (p, q) \in ({\cal R}^+)^2 :    p \le  \pmax \mbox{ and }  q \le \theta (p) \right \},
  \end{eqnarray}  
   Thus the optimal utility under co-existence is given by:
\begin{eqnarray}
    U_{\sV, co}^* = \max_{(p,q) \in {\cal F}_{co} } U_\sV (p,q)  
\end{eqnarray}

    Define,
 \begin{eqnarray}
    {\bar q}(p) &:=& \min \left \{\phi_1 (p), \phi_2 (p) \right  \} = \left \{ 
    \begin{array}{lll}
       \phi_1(p)  &  \mbox{ if } p  \le \frac{\dbar_\sMi}{\alpha_\sMi} + \frac{\sqrt{\alpha_\sMe O_\sMe}}{\varepsilon\alpha_\sMi} \\
      \phi_2(p)    & \mbox{ else. }
    \end{array} 
    \right . 
\end{eqnarray}

    For the sake of analysis we would also require to define the sub-region of ${\cal F}_{co}^{+}$, 
in the interior of which the $\Vi$ coalition derives strict positive utility from in-house production unit also --- this is the sub-region where   $\dbar_\sMi + \varepsilon\alpha_\sMe \pe^{*}(p,q) - \alpha_\sMi p >  0$ Such a region  represented by ${\cal F}^+_{co}$  is the region in positive quadrant  bounded between lines $\mathbb{L}_1, \mathbb{L}_2, \mathbb{L}_3$ and $\mathbb{L}_4$ (see Figure \ref{fig:feasible region} for on representative scenario). In other words,
\begin{eqnarray}
    {\cal F}^+_{co} =  \left \{ (p, q) \in ({\cal R}^+)^2 :    p \le \min \{\pmax, \psi(q)  \} \mbox{ and }  q \le {\bar q}(p) \right \}.
\end{eqnarray}
It is immediate that we have the following:
\begin{eqnarray}
     U_{\sV, co}^* = \max \left \{ \max_{(p,q) \in {\cal F}_{co}^+ } U_\sV (p,q),   \max_{(p,q) \in {\cal F}_{co} \setminus {\cal F}_{co}^+ } U_\sV (p,q)     \right \}.
\end{eqnarray}
We first analyze the first term in the above:
\begin{theorem}
\label{thm_Fco_positive}
Assume A.1 and A.2. 
i)  When $(p^*_{co}, q^*_{co})$ is   in the interior of ${\cal F}_{co}^+$ then 
\begin{eqnarray}
\max_{(p,q) \in {\cal F}_{co}^+ } = U_\sV(p^*_{co}, q^*_{co}).
  \end{eqnarray} 
ii) If  $(p^*_{co}, q^*_{co})$ is not in the interior of ${\cal F}_{co}^+$, the   optimal  utility under  ${\cal F}_{co}^+$ is at one of the non-empty boundaries, excluding the $\{q=0\}$ and $\{p=0\}$ lines:
\begin{eqnarray}
\max_{(p,q) \in {\cal F}_{co}^+ } U_\sV (p,q) =  \max_{l \in \{1, 2, 3, 4\}}  \left \{ \max_{ (p,q) \in {\cal F}_{co} \cap \mathbb{L}_l    }   U_\sV (p,q) \right \}.  
  \end{eqnarray} 
In the above, by convention, the maximum of an empty set is set to zero. 
\end{theorem}
{\bf Proof of the  Theorem \ref{thm_Fco_positive}:}  
The following are the steps:
\begin{itemize}
    \item[i)]  Define p-sections $\S_p := {\cal F}_{co}^+ \cap \{(p, q): q \ge 0\}$ lines for each $p \le \pmax$,

    \item[ii)] The function $U_\sV$ in  ${\cal F}_{co}^+$ matches with  the   `unconstrained' function $U$ given in equation \eqref{eqn_util_co-exist_uc}, which can be rewritten as:
\begin{eqnarray}
    U(p,q) \ = \ w_1 p^2 + w_2 pq + w_3 q^2 + w_4 p + w_5 q + w_6, \mbox{ with }  \hspace{26mm}&& \\
    \begin{array}{llll}
  &  w_1 \ = \ \frac{ -\alpha_\sMi \left (2- \varepsilon^2  \right )}{2} \hspace{4mm}\nonumber  %
    &   w_4 \ = \  \frac{2\dbar_\sMi + \varepsilon\dbar_\sMe + \varepsilon\alpha_\sMe C_\sMe  - \varepsilon\alpha_\sMi C_\sS + \alpha_\sMi(2-\ \varepsilon^2)\left(C_\sMi + C_\sS \right) }{2}  \\ 
   &  w_2 \ =  \  \frac{\varepsilon\left(\alpha_\sMi + \alpha_\sMe\right)}{2}   
   & w_5  \ = \  -\frac{\varepsilon\alpha_\sMe \left(C_\sMi + C_\sS\right)}{2} + \frac{\left(\dbar_\sMe - \alpha_\sMe C_\sMe + \alpha_\sMe C_\sS\right)}{2}   \\
   & w_3 \ = \  -\frac{\alpha_\sMe}{2} 
   &  w_6 \ = -\left(\dbar_\sMi + \frac{\varepsilon\left(\dbar_\sMe + \alpha_\sMe C_\sMe \right)}{2}\right)\left(C_\sMi + C_\sS \right)
    \end{array} \nonumber 
\end{eqnarray}    
 Observe that
     $U$ is strictly concave across each non-empty section $\S_p$, as the second derivative $\nicefrac{\partial^2 U}{\partial^2 q} = w_3 < 0$.  Thus, the   corresponding sub-optimizer, i.e., the unique maximizer of the sub-optimization problem $ \max_{q \in \S_p} U(p,q)$,  is at 
     $$ 
     q^*(p) := 
     \max\{l(p),  \min \{ h(p),  {\bar q}(p)  \}, 
     $$
     where  $l(p) :=  \min\{ q : (p,q) \in \S_p \}$ is the left boundary point  of $\S_p$, ${\bar q}(p)$ is the right  boundary point and
     $h(p)$ is the `unconstrained' optimizer
    $$
    h(p) = - \frac{w_2 p + w_5} {2 w_3} .
    $$
Observe that under {\bf A}.2, we have $h(p) > 0$ and hence $q^*(p) > 0$ for each non-empty $\S_p$.

    \ignore{ 
    $$
    q^*(p)  = \left \{ 
    \begin{array}{lll}
           \min \{ h(p),  \theta (p) \} &  \mbox{ if } l(p) = 0, \mbox{ i.e., if } p <  \frac{    2 \dbar_\sMi + \varepsilon  \dbar_\sMe + \varepsilon \alpha_\sMe  C_\sMe   }{\alpha_\sMi (2-\varepsilon^2)} \\
      \max\{l(p),  \min \{ h(p),  \theta (p)  \}     &  \mbox{ else.}
    \end{array}
    \right . $$ }

\item Thus we have:
$$
\max_{(p, q) \in {\cal F}_{co}^+} U_\sV(p,q) = \max_{p \le \pmax, \S_p \ne \emptyset}  U_\sV(p, q^*(p) ). 
$$
In other words,   the global optimizer of $U_\sV$ in ${\cal F}_{co}^+$ is among, 
  $$ \mathbb {L}^* := \{ (p, q) : p \le \pmax, \S_p \ne \emptyset, q = q^*(p) \}. $$

\item  Define the mapping, whose optimizer across $\{p: \S_p \ne \emptyset\}$,   probably contains the optimal pair:
$$
\omega(p) := U(p, h(p)) = w_1 p^2 -  \frac{w_2^2 p +w_5 w_2 }{2  w_3  }   p  +\frac{ (w_2 p +w_5)^2}{4  w_3  }  + w_4 p - \frac{w_2 w_5  p +w_5^2}{2  w_3  }  + w_6
$$
the first derivative of  mapping, $\omega(p)$ 
 at $p = 0$ is given by:

 \vspace{-4mm}
 {\small
 $$
\left . \frac{d \omega }{ d p}\right |_{p=0} = \left ( 2 w_1 p - \frac{ 2 w_2^2p  + w_5 w_2}{2w_3} + \frac{ 2w_2(w_2 p +w_5)}{4  w_3  }  + w_4  - \frac{w_2 w_5}{2w_3}  \right )_{p=0} = \frac{2w_3w_4 - w_2w_5}{2w_3}
 $$}
 which is 
 always positive as $w_3 < 0$ and all others are positive. Thus $(0, h (0)) $ can never be the global optimizer in ${\cal F}_{co}^+$, even if $q^*(p) = h(p)$ near $p = 0$. 
 
 \item 
 Thus, in all,  one can refine the potential set of global optimizers, by removing any pair with $p=0$,  to:
 \begin{eqnarray*}
     \mathbb {L}^* &:=&  \{ (p, q) : 0 < p \le \pmax, \S_p \ne \emptyset, q = q^*(p) \}.  \end{eqnarray*}
When $p \le {\bar p}$ with $\bar p := \min\{\psi(0), \pmax\}$ we have $\S_p \ne \emptyset$ and $l(p) = 0$. Thus 
one can split the above as in the following (recall $q^*(p) > 0$ for all $p$, note the second set can be empty when $\bar p = \pmax$), this formulation is used later: 
    \begin{eqnarray}  
\mathbb {L}^*     &=& \bigg  \{ (p, q) :  0 <  p \le   {\bar p}, \ \  \ \ \ \ \ \ \  q = \min \{ h(p),  {\bar q}(p) \} \bigg  \}   \nonumber \\
&& \cup \bigg \{ (p, q) : {\bar p}  < p \le \pmax, \ \S_p \ne \emptyset, \ \ \  q = q^*(p) \bigg  \} \mbox{ where }  \label{Eqn_L_star} \\ 
   {\bar p} &:=& \min\{ \psi (0), \pmax\}  = \min \left \{  \frac{    2 \dbar_\sMi + \varepsilon  \dbar_\sMe + \varepsilon \alpha_\sMe  C_\sMe   }{\alpha_\sMi (2-\varepsilon^2)}, \pmax  \right \}. \label{Eqn_pbar}
 \end{eqnarray}

\item  
The second derivative of the mapping, $p \mapsto  \omega (p) $ equals,
$$
\frac{d^2 \omega }{d p^2} =\frac{4w_1w_3 - w_2^2}{4w_3} 
$$
which is 
either negative or positive --  thus the mapping $\omega$ is either  concave or convex. This implies either of the two possibilities:

\begin{itemize}
    \item  If $
\left \{ (p, h(p) ) : \S_p \ne \emptyset \right \} \cap \mathbb{L}^*  = \emptyset
$, then from \eqref{Eqn_L_star},  the global optimizer is among
 \begin{eqnarray*}  
\mathbb {L}^*     &=& \bigg  \{ (p, q) :  0 <  p \le   {\bar p}, \ \  \ \ \ \ \ \ \  q =    {\bar q}(p)  \bigg  \} \\
&& \cup \bigg \{ (p, q) : {\bar p}  < p \le \pmax, \S_p \ne \emptyset, \ \ \  q =  l(p) \indc {h(p) < l(p)} + {\bar q}(p) \indc{h(p) > {\bar q}(p)}   \bigg  \},  
 \end{eqnarray*}
hence is
one of the three boarder lines  $\mathbb{L}_1$-${\mathbb L}_4$ (observe here that $\psi (.)$ is inverse function of  $l(.)$ with $p > \bar p$). This completes the proof of the theorem for this sub-case.

\item  In the other case, some section-wise optimizers $\{(p, h(p))\}$ intersect with $\mathbb{L}^*$, and one can again have two sub-cases:  

\begin{itemize}
    \item [(a)] the global optimizer of $U$  is in the interior of ${\cal F}_{co}^+$;   this happens when $(p_{co}^*, q_{co}^*)$ is in the interior of $   {\cal F}_{co}^+$;  in this case $q_{co}^* = h(p_{co}^*)$ and $(p_{co}^*, q_{co}^*)$ becomes the global optimizer of $U_\sV$ on ${\cal F}_{co}^+$; 
    \\

\item[(b)] the global optimizer of $U$ is outside   ${\cal F}_{co}^+$; then 
by concavity/convexity of $\omega$  the global optimizer of $U_\sV$ is on one of the  boundaries, $q= \psi(p)$ or $p = \psi(q)$,
\end{itemize}

This completes the proof for this sub-case also and hence the theorem. 
\end{itemize}
\eop 
}

The VC, $\Vi = \{S, \Mi\}$,  sets the two prices and the out-house manufacturer $\Me$ responds. 
Thus, it is a Stackelberg game  with    $\Vi$   as  the leader and $\Me$ as the follower. For any given $(p,q)$ the joint-price policy of $\Vi$, the optimal utility of out-house  $\Me$  is given by:

\vspace{-4mm}
{\small
\begin{eqnarray}
U^*_\sMe(p,q) = \big[ \left( \dbar\sMe - \alpha_\sMe \pe^{*}  + \varepsilon \alpha_\sMi p \right)^{+} \left( \pe^{*}  - C_\sMe - q \right)   
- O_\sMe \big ] \indc{\pe^{*} \ne n_o},
\label{Eqn_Opt_for_Mj}  
\end{eqnarray}}%
\ignore{
\begin{eqnarray}
U^*_\sMe(p,q) = \big (  \left( \dbar_\sMe \hspace{-1mm}- \alpha_\sMe \pe^{*} + \varepsilon\alpha_\sMi p \right)^{+} \hspace{-1mm}\left   (\pe^{*}  - C_\sMe - q\right)\nonumber\\ &&- O_\sMe \bigg) \indc{\pe^* \ne n_o} \hspace{3mm},   
\label{Eqn_Opt_for_Mj}
\end{eqnarray}}%
where $\pe^{*}=\pe^*(p,q)$, the optimizer of   $\Me$, is given by~\eqref{Eqn_opt_policy_Mj}.
In this section, we are interested in obtaining the optimal utility of $\Vi$ under co-existence, hence consider those $(p,q)$ for which $\pe^* \ne n_o$. Thus
the feasible region for co-existence  using \eqref{Eqn_opt_policy_Mj}-\eqref{Eqn_feasible_Regioin_Mj}  is given by:
\begin{eqnarray}\label{eqn_f_co}
{\cal F}_{co} &:= & \{ (p, q) \in (0, \infty)^2 :   q \le \theta(p)   \}.
\end{eqnarray}

For  $(p,q) \in {\cal F}_{co}$ or in the co-existence regime,  the utility of $\V$ is given by:

\vspace{-4mm}
{\small\begin{eqnarray}
    U_\sV (p, q) &=& \left ( \dbar_\sMi - \alpha_\sMi p  + \varepsilon \alpha_\sMe \pe^*(p, q) \right )^+ (p - C_\sMi - C_\sS) \nonumber \\
   &+& \left ( \dbar_\sMe + \alpha_\sMi \varepsilon p  -  \alpha_\sMe \pe^*(p, q) \right )^+ (q- C_\sS)  - O_\sMi - O_\sS , \hspace{5 mm} \label{eqn_util_co-exist_given_pq}
\end{eqnarray}}%
and the aim in this section is to optimize the above  over $(p,q) \in {\cal F}_{co}$ of~\eqref{eqn_f_co}. Towards this, first
consider the following `unconstrained' optimization problem, which resembles \eqref{eqn_util_co-exist_given_pq} but for $(\cdot)^+$ operators, and   when   $\pe^*(p,q)  <  {\pe}_{_{mx}}$:

\vspace{-3mm}
{\small\begin{eqnarray}
    && \sup_{p,q} \ \ {\cal U} (p,q) \quad \mbox{where, }\nonumber  \\
    {\cal U} (p, q) &=& \left( \dbar_\sMi - \alpha_\sMi p + \varepsilon \alpha_\sMe \left( \frac{\dbar_\sMe + \varepsilon \alpha_\sMi p}{2 \alpha_\sMe} + \frac{C_\sMe + q}{2} \right) \right) (p - C_\sMi - C_\sS)\nonumber \\
    &&\hspace{-23mm}+ \left( \dbar_\sMe + \alpha_\sMi \varepsilon p - \alpha_\sMe \left( \frac{\dbar_\sMe + \varepsilon \alpha_\sMi p}{2 \alpha_\sMe} + \frac{C_\sMe + q}{2} \right) \right) (q - C_\sS) - O_\sMi - O_\sS. \hspace{2mm}\label{eqn_util_co-exist_uc}
  \end{eqnarray}}%
\spcialComment{
The proof for the existence of the  optimizer of this unconstrained optimization problem is given in  \ref{sec_Appendix_AA} --- the proof is constructed using `sections-wise' (obtained by keeping $p$ or $q$ fixed) convexity and concavity arguments.
Let $(p^*_{co}, q^*_{co} )$ represent its optimizer (derived  in the proof of Theorem \ref{thm_Fco_positive} in steps g(1)--g(3), which is provided in \ref{sec_Appendix_AA}), which is given by:

{\color{blue}
The existence of an optimizer for this unconstrained optimization problem
 is established in ~\ref{sec_Appendix_AA}, for a special case for which it also coincides with the global optimizer of $U_\sV$. In the complementary case,  we further identify  some boundary regions of a subset of ${\cal F}_{co}$,   which  contain the global optimizer  of $U_\sV$.

 is in the interior of certain  region ${\cal F}_{co}^+$  (defined later in \eqref{Eqn_Fco_plus}). The proof is constructed using \emph{section-wise}
analysis: for fixed $p$ (respectively, fixed $q$), the objective function
is concave in $q$ (respectively, in $p$), which guarantees the existence
of a maximizer along each section. Combining these section-wise optimizers
yields a stationary point of the joint problem.
Let $(p^{*}_{co}, q^{*}_{co})$ denote this  stationary point (candidate optimal pair) obtained in
steps~g(1)--g(3) of the proof of Theorem~\ref{thm_Fco_positive} in
~\ref{sec_Appendix_AA} which is given by:}
}%
It is easier to analyze the above unconstrained problem.
Now define the following pair of prices,

\vspace{-3mm}
{\small\begin{eqnarray}
    p^{*}_{co} &=& \hspace{-2mm} -\frac{2w_3w_4 - w_2w_5}{4w_1w_3 - w_2^2}, \quad
    q^{*}_{co} = -\frac{w_2 p^{*}_{co} + w_5}{2w_3},   \mbox{ with,} \label{Eqn_pco_qco}\\
    w_1 &=& \hspace{-2mm}-\alpha_\sMi \left(1-\frac{\varepsilon^2}{2} \right), 
    w_2 = \frac{\varepsilon\left(\alpha_\sMi + \alpha_\sMe\right)}{2}, 
    w_3 = -\frac{\alpha_\sMe}{2}, \nonumber\\
   w_4 &=&\hspace{-2mm} \frac{\scriptstyle 2\dbar_\sMi + \varepsilon\dbar_\sMe + \varepsilon\alpha_\sMe C_\sMe - \varepsilon\alpha_\sMi C_\sS + 2\alpha_\sMi\left(1-\frac{\varepsilon^2}{2}\right)(C_\sMi + C_\sS)}{2}, \nonumber\\
    w_5 &=& -\frac{\varepsilon\alpha_\sMe \left(C_\sMi + C_\sS\right)}{2} + \frac{\left(\dbar_\sMe - \alpha_\sMe C_\sMe + \alpha_\sMe C_\sS\right)}{2},  \hspace{4mm}
    \label{Eqn_ws}
\end{eqnarray}}%
 which become the optimizers of \eqref{eqn_util_co-exist_uc}, when there exists one (proved in  \ref{sec_Appendix_AA})---the above  pair $(p_{co}^*, q_{co}^*)$ 
   will also become the optimizer for the original co-existence objective function $U_\sV$  given in \eqref{eqn_util_co-exist_given_pq}, under the same existence condition---this pair and the equivalence is obtained in
steps~(g.1)-(g.3) of the proof of Theorem~\ref{thm_Fco_positive} provided in
~\ref{sec_Appendix_AA}. For now, we continue with discussing the other details related to the `optimal co-existence policy' or the optimizer of \eqref{eqn_util_co-exist_given_pq} in~${\cal F}_{co}$.

It is clear that, \eqref{eqn_util_co-exist_given_pq} is different from the `unconstrained' function  \eqref{eqn_util_co-exist_uc} in some sub-regimes of the co-existence regime ${\cal F}_{co}$. So in the quest towards the optimal co-existence policy, one also needs to find the optimizer(s) in the sub-regimes where the two differ.  In all, we will   partition   ${\cal F}_{co}$ into many sub-regimes, such  that the objective functions  \eqref{eqn_util_co-exist_given_pq} and   \eqref{eqn_util_co-exist_uc} match
 in the first sub-regime, while they differ in the remaining   sub-regimes (see Figure \ref{fig:feasible region}).   We now consider them one after the other.  Interestingly three of these sub-regimes align with our initial discussion on the choices of $\Vi$, however, an extra boundary line $\{p=\pmax\} \cap {\cal F}_{co}$  of ${\cal F}_{co}$ also becomes important and requires separate attention. 
 
 We begin with operate both profitably (where functions \eqref{eqn_util_co-exist_given_pq} and   \eqref{eqn_util_co-exist_uc} match) in the immediate next, while the sub-regime   with in-house at loss is analyzed in subsection      \ref{sec_in-house_at_loss}; operate at maximum price  (i.e.$\{p = \pmax\}$) is provided in subsection \ref{sec_max_price}, and the
 sub-regimes  $\{p_e^* = {\pe}_{_{mx}}\} $ and operate out-house at par (i.e., $\{ q= \theta(p)\}$) are considered together in subsection \ref{sec_par}.


\begin{figure}[H]
\centering
\vspace{-1mm}
\includegraphics[scale=0.2]{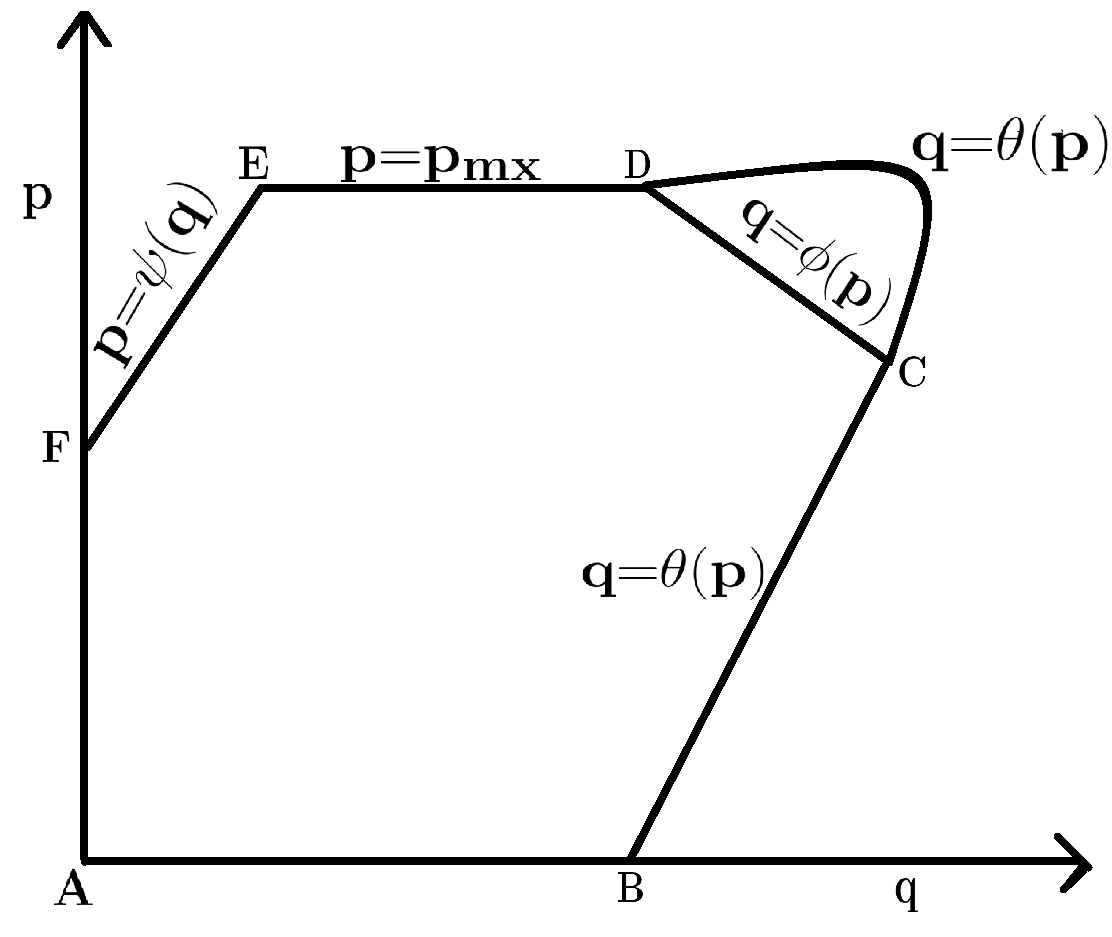}
\caption{Representative feasible region, ${\cal F}^+_{co}$.}
\label{fig:feasible region}
\vspace{-4mm}
\end{figure}

\subsection{ Operate \underline{B}oth \underline{p}rofitably   (Bp regime)} \label{subsec_bp}
The\textit{ BP regime is the set/region of prices $(p,q)$ where both the manufacturers derive strictly positive utilities}.
It is interesting to identify the conditions under which the coalition $\V$ finds it beneficial to operate in Bp regime  and we answer this partially in this subsection. 
   As we will see, such a regime consists of two sub-regimes: a) the interior and some boundaries of a certain region, identified in this subsection; and b) a certain part of the boundary with price $p = \pmax$, discussed in subsection \ref{sec_max_price}.  
 
 We  now  focus on  the first sub-regime which includes the complete interior of the Bp regime;  this sub-regime is identified using the following steps (as we will see, such a choice also ensures  \eqref{eqn_util_co-exist_given_pq} 
equals \eqref{eqn_util_co-exist_uc}):

$\bullet$  includes the pair of prices $(p,q)$, for which the optimal price $\pe^*(p,q)$ of the out-house manufacturer is strictly less than ${\pe}_{_{mx}} = \nicefrac{(\dbar_\sMe + \varepsilon \dbar_\sMi)}{\alpha_\sMe}$; such a regime from \eqref{Eqn_opt_policy_Mj}-\eqref{Eqn_feasible_Regioin_Mj} is given by  $\{ (p,q) :  q < \phi (p) \}$ where $\phi (\cdot) $ is defined below, 
  \begin{eqnarray}\label{Eqn_phi_st_p}
   \phi (p)  := \frac{\dbar_\sMe + 2\varepsilon\dbar_\sMi - \varepsilon\alpha_\sMi p - \alpha_\sMe C_\sMe}{\alpha_\sMe}; 
    \end{eqnarray}
    the boundary of  such a regime is the  straight line,  
    $$
    \mathbb{L}_1 :=  \{ q =  \phi (p)\}; 
    $$ 
this condition ensures $\pe^*$ in \eqref{eqn_util_co-exist_given_pq}  matches with its counterpart in \eqref{eqn_util_co-exist_uc};

$\bullet$ includes the pair of prices $(p,q)$, for which  the $\Vi$ coalition derives strict positive utility from in-house production unit also;  this is the sub-regime where   $\dbar_\sMi + \varepsilon\alpha_\sMe \pe^{*}(p,q) - \alpha_\sMi p >  0$ (see \eqref{eqn_demand_vc_coal} and \eqref{eqn_util_co-exist_given_pq}); such a regime, further  within  $\{  q < \phi (p) \}$, is given by $\{    q < \phi (p) \mbox{ and } p < \psi (q) \}$, with $\psi (\cdot)$ defined below: 
  \begin{eqnarray}
  \label{Eqn_psi_q}
  \psi (q)  :=  \frac{ \left ( 2 \dbar_\sMi + \varepsilon  \dbar_\sMe + \varepsilon \alpha_\sMe  (C_\sMe+q ) \right )}{(2-\varepsilon^2)\alpha_\sMi} ;   
  \end{eqnarray}
observe here that  the   straight  line,     
$$
\mathbb{L}_2 := \{p = \psi (q)  \},
$$bounds the regime  of interest only when  
    it is also bounded by $\mathbb{L}_1$ and these constraints ensure both $(\cdot)^+$ terms in \eqref{eqn_util_co-exist_given_pq} are positive and hence match with the correspnding terms in \eqref{eqn_util_co-exist_uc};

$\bullet$  the pair of prices $(p,q)$  which ensure  co-existence, along with out-house operating, belong to    $\{q \le \theta(p)\}$ with $\theta(\cdot)$ as in \eqref{Eqn_feasible_Regioin_Mj}; when $p \ge p_{sw}$ in \eqref{Eqn_feasible_Regioin_Mj} by simple computations  one can show that\footnote{By directly substituting the terms, for any $p > p_{sw}$, we have,  $\theta(p)-\phi(p) > \theta(p_{sw}) -\phi(p_{sw}) = 0 $.}  $\phi(p ) \le \theta(p)$ and so this constraint is already satisfied by bounding with $\mathbb{L}_1$; so it is sufficient to  ensure bounding by $\theta(p)$ for   $p < p_{sw}$, which is provided in the first row of~\eqref{Eqn_feasible_Regioin_Mj}; hence, in all, it is sufficient to   bound by the following additional straight line    
within the regime bounded by the lines $\mathbb{L}_1$ and $\mathbb{L}_2$,

       {\small{
    $$
    \mathbb{L}_3 = \left \{q =    
      \frac{\dbar_\sMe + \varepsilon\alpha_\sMi p -\alpha_\sMe C_\sMe - 2\sqrt{\alpha_\sMe O_\sMe}}{\alpha_\sMe} \right \};
    $$}}

$\bullet$  and finally bounded by  the horizontal line of maximum price, 
$$
\mathbb{L}_4 =\{ p = \pmax \}.$$

 Such a sub-regime (actually its closure),  represented by ${\cal F}^+_{co}$,  is the regime in the positive quadrant,  bounded by all the lines $\mathbb{L}_1, \mathbb{L}_2, \mathbb{L}_3$ and $\mathbb{L}_4$ (see polygon ABCDEF in  Figure \ref{fig:feasible region} for one representative scenario). To summarize:
\begin{eqnarray} 
    {\cal F}^+_{co} =  \big \{ (p, q) \in  [0,\infty)^2  :    p \le \min \{\pmax, \psi(q)  \} \mbox{ and }  
    q \le \min \left \{\theta (p), \phi (p) \right  \} \big \}. \hspace{2mm}  \label{Eqn_Fco_plus}
\end{eqnarray}
Now the BP regime, where both the manufacturers obtain strictly positive profits, denoted by 
${\cal F}_{_{Bp}}$, is a subset of the above region. More precisely, ${\cal F}_{_{Bp}} = \{ (p, q) \in {\cal F}^+_{co}: q < \theta(p), p < \psi(q) \} $---because with $p = \psi(q)$ or  $q = \theta(p)$  we respectively have $U_{\sM}(p,q) = 0$ in \eqref{eqn_Util_M} or $U^*_{\sMe}(p,q) = 0$ in \eqref{eqn_opt_util_out-house}. 
Thus the   ${\cal F}_{Bp}$ region spans  the interior of ${\cal F}^+_{co}$ and  the boundary lines:
\begin{eqnarray}\label{eqn_f_bp}
 {\cal F}_{Bp} = Interior ({\cal F}^+_{co}) \cup \left (
{\cal F}^+_{co}\cap \left (\mathbb{L}_4 \cup \mathbb{L}_1\right ) \right ).   
\end{eqnarray}  
Hence ${\cal F}_{Bp}$ is not a closed set and may not have an optimizer and therefore it is more convenient to analyze the closed region ${\cal F}^+_{co}$  \eqref{Eqn_Fco_plus}; we proceed with the same for now, and later discuss the possible optimality of   Bp regime.    %

Further ${\cal F}^+_{co}$ is also a sub-region of  the co-existence regime---clearly ${\cal F}^+_{co} \subseteq {\cal F}_{co} $---for example, when non-empty, the I$\ell$ regime $\{p > \psi(q)\} \subset {\cal F}_{co}\setminus {\cal F}^+_{co}$. Hence the optimal utility in co-existence regime is given by:
\begin{eqnarray}\label{eqn_opt_v_coexxx}
     U_{\sV, co}^* = \max \left \{  \max_{(p,q) \in {\cal F}_{co}^+ } U_\sV (p,q),   \max_{(p,q) \in {\cal F}_{co} \setminus {\cal F}_{co}^+ } U_\sV (p,q)     \right \}.
\end{eqnarray}
Thus with the dual purpose of analyzing the co-existence and the BP regimes, we 
  begin with analyzing the first term of \eqref{eqn_opt_v_coexxx}: 
\begin{theorem} 
\label{thm_Fco_positive}
Assume {{\bf A.1}}-{{\bf A.2}}. 
(i)  If $(p^*_{co}, q^*_{co})$  of \eqref{Eqn_pco_qco} is in the interior of ${\cal F}_{co}^+$ then, 
\begin{eqnarray}
\label{eqn_vc_opt_f_co_+}
\max_{(p,q) \in {\cal F}_{co}^+ }U_\sV(p,q) = U_\sV(p^*_{co}, q^*_{co}).
  \end{eqnarray} 
(ii) If  $(p^*_{co}, q^*_{co})$ is not in the interior, then the   optimal  utility across ${\cal F}_{co}^+$ is at one of the non-empty boundaries, excluding the $\{q=0\}$ and $\{p=0\}$ lines:
\begin{eqnarray}
\max_{(p,q) \in {\cal F}_{co}^+ } U_\sV (p,q) =  \max_{l \in \{1, 2, 3, 4\}}  \left \{ \max_{ (p,q) \in {\cal F}^+_{co} \cap \mathbb{L}_l,   ~p > 0, q > 0    }   U_\sV (p,q) \right \}.  
  \end{eqnarray} 
In the above, by convention, the maximum of an empty set is set to zero. 
\end{theorem}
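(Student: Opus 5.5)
The plan is to reduce everything to the quadratic $\mathcal U$ of \eqref{eqn_util_co-exist_uc} and then use elementary facts about quadratics on a compact polygon. First I would check that $U_\sV = \mathcal U$ on all of ${\cal F}^+_{co}$. On ${\cal F}^+_{co}$ we have $q \le \phi(p)$, so $\pe^*(p,q)$ is the unclamped expression in \eqref{Eqn_opt_policy_Mj}. We also have $q\le\theta(p)$, so $\pe^*\neq n_o$. The constraint $p \le \psi(q)$ makes the in-house demand nonnegative. The out-house demand at the unclamped best response is $\alpha_\sMe(\pe^*-C_\sMe-q)\ge 0$, since $\pe^*\ge C_\sMe+q$ whenever $q\le\theta(p)$. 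Hence both $(\cdot)^+$ operators in \eqref{eqn_util_co-exist_given_pq} are inactive. Next, ${\cal F}^+_{co}$ is compact: it lies in $[0,\pmax]\times[0,\infty)$ and $q\le\phi(p)$, which is bounded on $[0,\pmax]$. Its boundary consists of pieces of $\mathbb L_1,\dots,\mathbb L_4$ and of the two axes. So $\mathcal U$, being continuous, attains its maximum there, and the maximizer is either interior or on one of these boundary pieces.

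Next I would expand $\mathcal U$ as $w_1p^2+w_2pq+w_3q^2+w_4p+w_5q+w_6$ with the $w_i$ of \eqref{Eqn_ws}. Its Hessian has diagonal entries $2w_1<0$ and $2w_3<0$, and determinant $4w_1w_3-w_2^2$. For part (i), the unique stationary point of $\mathcal U$ is exactly $(p^*_{co},q^*_{co})$ of \eqref{Eqn_pco_qco}, obtained by solving $\nabla\mathcal U=0$. If $4w_1w_3-w_2^2>0$, then $\mathcal U$ is strictly concave. In that case an interior stationary point is the global maximizer over $\mathbb R^2$, hence also over ${\cal F}^+_{co}$, which gives \eqref{eqn_vc_opt_f_co_+}.

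The delicate case is $4w_1w_3-w_2^2\le 0$, where the stationary point is a saddle, or there is none. I would argue section-wise, as the text hints. For fixed $p$, the map $q\mapsto\mathcal U$ is strictly concave because $w_3<0$, with unconstrained maximizer $h(p)=-(w_2p+w_5)/(2w_3)$. Substituting gives $\omega(p)=\mathcal U(p,h(p))$, a quadratic in $p$ with leading coefficient $(4w_1w_3-w_2^2)/(4w_3)$. This is $\ge 0$ in the non-concave case, so $\omega$ is convex there. On the stretch of $p$ where $h(p)$ is feasible, a convex $\omega$ attains its maximum at an endpoint of that stretch. Such an endpoint is exactly where $h(p)$ meets a boundary line, or where $p=\pmax$. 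On the rest of the $p$-range the section maximizer is clamped to a boundary line. So, in the non-concave case, no interior point can be the maximizer, and (i) can only arise in the concave case. This is consistent with the paper's phrase that the pair is the optimizer ``when there exists one''. I would state this interpretation explicitly. I expect this to be the main obstacle: making the ``(i) versus (ii)'' dichotomy airtight when the Hessian is indefinite.

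For part (ii), if the maximizer is not interior it lies on the boundary, and it remains to exclude the axes $\{q=0\}$ and $\{p=0\}$. For $\{q=0\}$ I would compute $\partial\mathcal U/\partial q=w_2p+w_5$ at $q=0$. This is positive because $w_2>0$, and $w_5>0$ follows from {\bf A.1}--{\bf A.2} (this is exactly where {\bf A.2} is used, and it gives $h(p)>0$). So moving into the region strictly increases $\mathcal U$. For $\{p=0\}$ the analogous derivative $\partial\mathcal U/\partial p=w_2q+w_4$ is positive because $w_4>0$ under {\bf A.1}. The origin is covered by either argument. Therefore the maximum sits on ${\cal F}^+_{co}\cap\mathbb L_l$ for some $l\in\{1,2,3,4\}$ with $p,q>0$. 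Empty intersections are handled by the stated zero convention.
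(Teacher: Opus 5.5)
You follow the paper's route. You identify $U_\sV$ with the quadratic ${\cal U}$ on ${\cal F}^+_{co}$, use strict concavity of ${\cal U}(p,\cdot)$ (since $w_3<0$), pass to the reduced map $\omega(p)={\cal U}(p,h(p))$, and split on the sign of $4w_1w_3-w_2^2$. Your Hessian argument in the concave case and the bound $\partial_p{\cal U}(0,q)=w_2q+w_4>0$ on $\{p=0\}$ are cleaner than the corresponding steps in the paper.

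\textbf{Main gap: part (i) in the indefinite case.} This is the step you yourself flag as the main obstacle. Part (i) is an unconditional implication. When $4w_1w_3-w_2^2<0$ you only show that no interior point maximizes ${\cal U}$. So if the saddle $(p^*_{co},q^*_{co})$ lay in the interior of ${\cal F}^+_{co}$, the hypothesis of (i) would hold and its conclusion would fail. Re-reading (i) as holding ``under concavity'' proves a weaker statement than the one given.

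The missing step (the paper's (g.3)) uses only facts you already invoke in part (ii). From $w_3<0$ and $w_2,w_4,w_5>0$ you get $2w_3w_4-w_2w_5<0$, hence
\[
p^*_{co}=-\frac{2w_3w_4-w_2w_5}{4w_1w_3-w_2^2}<0\qquad\text{whenever } 4w_1w_3-w_2^2<0 .
\]
So in the indefinite case the saddle lies outside $[0,\infty)^2\supseteq{\cal F}^+_{co}$ and (i) is vacuous there; in the concave case $p^*_{co}>0$.

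The same sign fact also settles the degenerate case $4w_1w_3=w_2^2$, which you pass over with ``or there is none''. There $p^*_{co}$ is undefined. Because $2w_3w_4\neq w_2w_5$, the first-order equations $2w_1p+w_2q+w_4=0$ and $w_2p+2w_3q+w_5=0$ are inconsistent. Hence there is no interior critical point, and only (ii) is in play.

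\textbf{Secondary gap: one corner of $\{q=0\}$.} The inequality $\partial_q{\cal U}=w_2p+w_5>0$ helps only where you can move upward inside ${\cal F}^+_{co}$. This fails at the corner $(\bar p,0)$ when $\bar p=\phi^{-1}(0)$. There $\bar q(\bar p)=\phi(\bar p)=0$, so the only feasible directions point leftward, between $\{q=0\}$ and $\mathbb{L}_1$. ${\cal U}$ need not increase along any of them, for instance when $\partial_p{\cal U}(\bar p,0)=2w_1\bar p+w_4$ is large and positive. That corner lies on $\mathbb{L}_1$, so it is still captured in (ii) if the maxima are read as suprema, by continuity along ${\cal F}^+_{co}\cap\mathbb{L}_1$ as $q\downarrow 0$. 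You should state this explicitly; the paper handles the corner by a separate argument.
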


{\bf Proof} is provided in \ref{Proof_thm_1}. \eop

\medskip
Thus by part (i), if   $(p^*_{co}, q^*_{co})$  of \eqref{Eqn_pco_qco} is in the interior of  ${\cal F}_{co}^+$, 
this pair in the Bp regime has the potential to become a global optimizer of the co-existence region ${\cal F}_{co}$; and then that of the overall problem.  From \eqref{eqn_f_bp}, one may also have a Bp optimizer on lines ${\mathbb L}_1$ (i.e., with $q = \phi(p)$)  or on ${\mathbb L}_4$ (i.e., with  $p = \pmax$); we will show in later subsections that the former is not possible but one can have an optimizer in BP regime with $p = \pmax$ in subsection~\ref{sec_max_price}. 
Eventually in section~\ref{sec_overall_comp},  we derive some conditions
for global optimality of a pair where both manufacturers operate profitably (i.e., optimality of Bp regime), by  building upon the results of Theorem \ref{thm_Fco_positive} and that of subsection~\ref{sec_max_price}.  We also provide more insights on this aspect along with others in section \ref{sec_num} using several numerical examples.

Next, we analyze the 
remaining co-existence regions at which at least one of the manufacturers derive zero utility (but still operates at marginal costs)---these comprise of ${\cal F}_{co} \setminus {\cal F}_{co}^+ $ and the boundary lines of $ {\cal F}_{co}^+ $---as already mentioned, there is however a small  exception,   one may find an optimizer in Bp regime  while analyzing   $\mathbb{L}_4 =\{ p = \pmax \}$.

\vspace{2mm}
 \subsection  {  \underline{I}n-house   operates at \underline{l}oss (I$\ell$ regime)}
\label{sec_in-house_at_loss}

 This sub-region corresponds to the case in which the coalition $\V$ allows its in-house manufacturer $\Mi$ to operate while incurring losses. In this scenario, the coalition $\V$ strategically quotes a very high price $p$, leading to zero demand for its in-house production. Such a strategy can still be advantageous to $\V$, as it may enhance the overall market potential captured via the out-house manufacturer $\Me$ (I$\ell$ is the best among  all possible regimes  in Figure \ref{fig_overall} of section \ref{sec_num},   depicted by   green regions).

We denote this sub-region by ${\cal F}_{_{I\ell}} $, and define it  using \eqref{Eqn_psi_q} (see also \eqref{eqn_f_co}):
\begin{eqnarray*}
{\cal F}_{_{I\ell}}   
&=&  \left \{ (p,q) \in {\cal F}_{co} :  p \le  \pmax, \ p > \psi(q), \ q \le  \theta(p) \right \} \\
&=&  \left \{ (p,q) \in {\cal F}_{co} :  p \le  \pmax, \ q  \le   \min \{  \psi^{-1}(p), \    \theta(p) \} \right \}
\end{eqnarray*}

From~\eqref{Eqn_psi_q}, $\psi(\cdot)$ is an increasing function in $q$, immediately implying,  \textit{I$\ell$ regime is non-empty only if $\psi(0) < \pmax$}.  
Within this sub-region, the co-existence utility $U_\sV$ in~\eqref{eqn_util_co-exist_given_pq},     simplifies to  the following function (see~\eqref{Eqn_opt_policy_Mj}):
\begin{eqnarray}
U_{_{I\ell}} (p,q) 
:=  
\left(\dbar_\sMe - \alpha_\sMe \pe^{*}(p,q) + \varepsilon\alpha_\sMi p \right)\left(q- C_\sS \right) 
- O_\sMi - O_\sS, 
\  \forall (p,q) \in {\cal F}_{_{I\ell}} . \hspace{1mm}
\label{Eqn_Il_util}
\end{eqnarray}
This is basically the regime, where  $\V$ keeps the presence of its in-house alive with negligible production and profits, whose presence molds the demand of the out-house in a much more profitable manner  for the supplier component of~$\V$  (and this happens, for example, when I$\ell$ is the optimal regime).  

Let the optimal utility under this regime be denoted by $U_{_{I\ell}}^{*}$.  
Using~\eqref{Eqn_opt_policy_Mj} and \eqref{Eqn_Il_util}, it immediately follows that the optimizer of $U_{_{I\ell}} $ (which is strictly increasing in $p$ for any fixed~$q$) is given by $(\pmax,q^*_{_{I\ell}})$, where $q^*_{_{I\ell}}$ is the solution to the following optimization problem (recall  $\psi(\cdot)$ is increasing),
\begin{eqnarray}
U_{_{I\ell}}^{*} 
&=& 
\max_{q : (\pmax,q)\in {\cal F}_{_{I\ell}}  } U_{_{I\ell}}(\pmax, q) 
= \max_{q \le  r_{_{I\ell}}} U_{_{I\ell}}(\pmax, q),\nonumber \\
\text{where} \quad 
 r_{_{I\ell}} &:=& \min \left \{  \max\{0, \psi^{-1} (\pmax)\}, \ \theta(\pmax) \right \}.\label{eqn_pi_il}
\end{eqnarray}
Thus the optimal in I$\ell$-regime (if non-empty) is along ${\mathbb L}_4 = \{p = \pmax\}$ line.  As we will see in the next sub-section the Mp-Bp regime along the same line starts after I$\ell$-regime, when both the regimes are non-empty  (clearly, we have 
$ r_{_{I\ell}} \le l_{_{Mp}} $, 
the left boundary of Mp regime given in \eqref{Eqn_psi_inv_pmax}).

Further, using~\eqref{Eqn_phi_st_p}, for any $q \le  r_{_{_{I\ell}}}$ (with the convention that $[a,b]=\emptyset$ when $a>b$), we have:

\vspace{-3mm}
{\small\begin{eqnarray*}
U_{_{_{I\ell}}}(\pmax, q) 
&=& \hspace{-2mm} 
\Bigg( 
\frac{\dbar_\sMe (1 + \varepsilon^2) + \varepsilon \dbar_\sMi - \alpha_\sMe (q + C_\sMe)}{2} \I_1(q)     
+ \varepsilon^2 \dbar_\sMe \I_2(q)
\Bigg) (q - C_\sS)  \\
&-&  O_\sMi - O_\sS, 
\quad \text{where} \\
\I_1(q) &:=& \indc{q \le \phi(\pmax)}, 
\qquad 
\I_2(q) := \indc{q \in [\phi(\pmax),  r_{_{_{I\ell}}}]}.
\end{eqnarray*}

The first term can be optimized by ignoring the indicator $\I_1(q)$ to obtain a candidate optimizer ${\tilde q}^*$ (given below).  
The overall optimizer for this sub-case then becomes (first term in $U_{_{_{I\ell}}}$ is concave, while  the second is linear in $q$):
\begin{eqnarray}
q^*_{_{I\ell}}
&=&  {\tilde q}^* \indc{  {\tilde q}^* \le \min\{ r_{_{_{I\ell}}}, \phi(\pmax)\}} 
+ 
r_{_{_{I\ell}}} \indc{  {\tilde q}^* >  \min\{ r_{_{_{I\ell}}}, \phi(\pmax)\}},
\mbox{  
where } \nonumber  \\
\label{eqn_tilde_q_star}
{\tilde q}^* 
&=& \left(
\frac{\dbar_\sMe(1+\varepsilon^2) + \varepsilon\dbar_\sMi - \alpha_\sMe C_\sMe}{2\alpha_\sMe} 
+ \frac{C_\sS}{2}
\right).
\end{eqnarray}
By direct substitution (when $\varepsilon > 0$, the inequality '$a$' is strict, in the below), 

\vspace{-3mm}
{\small\begin{eqnarray}
  \label{Eqn_psi_inv_less_phi_at_pmax}  
  \psi^{-1} (\pmax) 
 = \frac{ (1-\varepsilon^2) \dbar_\sMe - \varepsilon \dbar_\sM} {\alpha_\sMe} - C_\sMe 
 \stackrel{a}{\le} \phi (\pmax) 
 =  \frac{ (1-\varepsilon^2) \dbar_\sMe + \varepsilon \dbar_\sM} {\alpha_\sMe} - C_\sMe,
\end{eqnarray}}%
and since 
 $r_{_{_{I\ell}}} \le \max\{0, \psi^{-1} (\pmax)\}$
  we have 
$
\min\{ r_{_{_{I\ell}}}, \phi(\pmax)\} =  r_{_{_{I\ell}}},
$
 and so,
\begin{eqnarray}\label{eqn_opt_q_loss}
q^*_{_{I\ell}}
= {\tilde q}^* \indc{  {\tilde q}^* \le  r_{_{_{I\ell}}} } 
+  r_{_{_{I\ell}}} \indc{  {\tilde q}^* >  r_{_{_{I\ell}}} }.
\end{eqnarray}
Hence the optimal pair in this sub-regime (when non-empty, i.e., when $\psi(0) < \pmax$)  is $(\pmax,q^*_{_{I\ell}})$  and the corresponding (sub) optimal utility   equals:

\vspace{-3mm}
{\small
\scalebox{0.82}{
\begin{minipage}{\linewidth}
\begin{eqnarray}
\hspace{-8mm}
U_{_{I\ell}}^* 
\label{Eqn_opt_loss_util} 
\label{Eqn_operate_at_loss_inhouse}
=     
\frac{ \left(\dbar_\sMe(1+\varepsilon^2)+ \varepsilon\dbar_\sMi - \alpha_\sMe(C_\sMe + C_\sS)  \right)^2}{8\alpha_\sMe} \indc{  {\tilde q}^* \le   r_{_{_{I\ell}}} }    
   + \varepsilon^2\dbar_\sMe( r_{_{_{I\ell}}} - C_\sS)\indc{  {\tilde q}^* >    r_{_{_{I\ell}}}  }
 -  O_\sMi - O_\sS. \hspace{1mm}
\label{Eqn_Ustar_Il}
\end{eqnarray}
\end{minipage}
}
}

The above indicates the possibility of an interesting optimizer for $\V$---when $r_{_{_{I\ell}}}  = \theta(\pmax) $ and  ${\tilde q}^* \ge  r_{_{_{I\ell}}}$ in \eqref{eqn_opt_q_loss}, then $\V$ might find the optimizer in the combined I$\ell$-Op regime, where the in-house incurs losses and the out-house operates at par simultaneously.  However, 
 we could not observe the optimality of such a pair in the exhaustive numerical examples of  section \ref{sec_num}; we did not observe such an optimality even in the other exhaustive set of numerical examples not included in the paper.

\Ilextra{
{\color{blue}Observe $r_{I\ell} \le l_{Mp}$, so along $\mathbb{L}_4$ line first I$\ell$ regime ends and then the  Mp regime starts.}

{
\color{blue}

{\bf Case 1, $\max\{0, \psi^{-1} (\pmax)\}    \le   \theta(\pmax) $:}  In this case
\begin{eqnarray*} 
l_{_{Mp}} &=&  \max\{0, \psi^{-1} (\pmax)\} = r_{I\ell}  , \mbox{ and, } \\ r_{Mp} &=&   \bar q (\pmax) := \min \{\theta(\pmax), \phi(\pmax) \}, 
\end{eqnarray*}
which further by \eqref{Eqn_psi_inv_less_phi_at_pmax} implies $l_{_{Mp}}  \le r_{_{Mp}} $, so \underline{Mp regime is non-empty}
 and hence,
\begin{itemize}
\item  if  $h(\pmax) \le l_{_{Mp}} $, then $I\ell$ is optimal among $I\ell$ and Mp regimes, 
    \item If $h(\pmax) \in (l_{_{Mp}},  r_{_{Mp}}) $, then either I$\ell$ or Mp-Bp is the optimal regime

    \item else, i.e., if  $h(\pmax) \ge  r_{_{Mp}} $, 
I$\ell$ or Mp-Op is the optimal regime

\item in the last two sub-cases, the optimal regime is decided based on I$\ell$  and optimal Mp utilities.

\item {\color{red} If additionally, $ 0 = \max\{0, \psi^{-1} (\pmax)\}$, there will be no $I\ell$ regime. }
\end{itemize}

{\bf Case 2, When $  \theta(\pmax)  <  \max\{0, \psi^{-1} (\pmax)\}   $:}

The case 2  is applicable only when $\max\{0, \psi^{-1} (\pmax)\}  > 0$ and then\footnote{
When $\pmax > p_{sw}$, i.e., when 
$\sqrt{\alpha_\sMe O_\sMe} < \varepsilon^2 \dbar_\sMe$, and 
then further using {\bf A}.1
\begin{eqnarray*}
    \theta(\pmax) = \frac{\dbar_\sMe + \varepsilon\dbar_\sMi - \alpha_\sMe C_\sMe}{\alpha_\sMe} - \frac{  O_\sMe}{ \varepsilon^2  \dbar_\sMe} = \frac{ (\dbar_\sMe + \varepsilon\dbar_\sMi - \alpha_\sMe C_\sMe ) \varepsilon^2  \dbar_\sMe -  \alpha_\sMe  O_\sMe}{\alpha_\sMe \varepsilon^2  \dbar_\sMe} > 0. 
\end{eqnarray*}
Even  with $\pmax \le  p_{sw}$,  we have 
$$
\theta(\pmax) = \frac{\dbar_\sMe + \varepsilon\dbar_\sM -\alpha_\sMe C_\sMe   }{\alpha_\sMe} + \frac{\varepsilon^2 \dbar_\sMe -2\sqrt{\alpha_\sMe O_\sMe} }{\alpha_\sMe}
$$
by directly using {\bf A.1}, we have $\theta(\pmax) >0$.

Further 
\begin{eqnarray*}
    \psi^{-1} (\pmax) &=& \frac{ (1-\varepsilon^2) \dbar_\sMe - \varepsilon \dbar_\sM} {\alpha_\sMe} - C_\sMe  \\
    h(\pmax) &=&  \frac{\varepsilon(\alpha_\sMe + \alpha_\sM)\left(\frac{\dbar_\sM + \varepsilon\dbar_\sMe}{\alpha_\sM}\right) + (\dbar_\sMe -\alpha_\sMe C_\sMe + \alpha_\sMe C_\sS)}{2\alpha_\sMe} \\
    &=& 
\frac{\varepsilon\alpha_\sMe \left(\frac{\dbar_\sM + \varepsilon\dbar_\sMe}{\alpha_\sM}\right) + (\dbar_\sMe (1+\varepsilon^2) + \varepsilon \dbar_\sM  -\alpha_\sMe C_\sMe + \alpha_\sMe C_\sS)}{2\alpha_\sMe}
\end{eqnarray*}
} 
$$
0< r_{_{I\ell}} = \theta(\pmax) <   
l_{_{Mp}} =  \max\{0, \psi^{-1} (\pmax)\}. 
$$
By further using~\eqref{Eqn_psi_inv_less_phi_at_pmax}, we have  $r_{_{Mp}} 
 = \bar{q}(\pmax)  
 = \min \{\theta(\pmax), \phi(\pmax) \} = \theta(\pmax)$, we immediately have $r_{_{Mp}} < l_{_{Mp}} $ and thus only I$\ell$-regime exists (and \underline{Mp regime is empty}) along $\{p = \pmax\}$ line.
 Basically, in this condition, the coalition $\V$ can only operate at $\pmax$ while incurring losses for in-house. 

 Further, the optimal $q_{I\ell}^* = \min \{ \theta(\pmax), \tilde q^*\}$ of  \eqref{eqn_opt_q_loss} --- either you make the out-house operate at par while in-house is incurring losses (I$\ell$-Op regime) or only in-house incurs losses.

Furthermore, if $ \pmax > p_{sw} $, it is easy to verify that $\theta(\pmax) >  \max\{0, \psi^{-1} (\pmax)\} $. Hence for this case 2, \underline{we will have $p_{sw} \ge \pmax$}. Thus   
$\sqrt{\alpha_\sMe O_\sMe} \ge  \varepsilon^2 \dbar_\sMe$ and 
 \begin{eqnarray*}
 \theta(\pmax) &=&     \frac{\dbar_\sMe + \varepsilon\dbar_\sM -\alpha_\sMe C_\sMe   }{\alpha_\sMe} + \frac{\varepsilon^2 \dbar_\sMe -2\sqrt{\alpha_\sMe O_\sMe} }{\alpha_\sMe} \mbox{ and hence } \\
q_{I\ell}^* &=&  \min \Bigg  \{   \underbrace{\frac{\dbar_\sMe (1+\varepsilon^2) + \varepsilon\dbar_\sM -\alpha_\sMe C_\sMe   }{\alpha_\sMe} + \frac{\  -2\sqrt{\alpha_\sMe O_\sMe} }{\alpha_\sMe}}_{\mbox{ when this is smaller, I$\ell$-Op regime}} ,   \\
&& \hspace{23mm}
\underbrace{\frac{\dbar_\sMe(1+\varepsilon^2) + \varepsilon\dbar_\sMi - \alpha_\sMe C_\sMe}{2\alpha_\sMe} 
+ \frac{C_\sS}{2}}_{\mbox{ when this is smaller only I$\ell$-regime}}
 \Bigg \}  
 \end{eqnarray*}

{\color{red}
In all,  $\theta(\pmax) < \max\{0, \psi^{-1}(\pmax)\}$ and Case 2 is applicable if 
\begin{eqnarray*} \varepsilon\dbar_\sM + \varepsilon^2\dbar_\sMe < \sqrt{\alpha_\sMe O_\sMe}
\end{eqnarray*}

}

Thus   one can have I$\ell$-Op regime only when Case 2 further satisfies:
\begin{eqnarray}\label{eqn_cond_il_op}
      \dbar_\sMe(1 + \varepsilon^2) + \varepsilon\dbar_\sM - \alpha_\sMe C_\sMe - \alpha_\sMe C_\sS \le 4\sqrt{\alpha_\sMe O_\sMe}  ,  
\end{eqnarray}
or by {\bf{A.1}} when 
\begin{eqnarray*}
    \dbar_\sMe  - \alpha_\sMe (C_\sMe + C_\sS) \in \left [2\sqrt{\alpha_\sMe O_\sMe},  \ 4\sqrt{\alpha_\sMe O_\sMe} - \varepsilon^2 \dbar_\sMe - \varepsilon \dbar_\sM  \right ]    \mbox{  along with } \\
    \sqrt{\alpha_\sMe O_\sMe} \ge \varepsilon^2 \dbar_\sMe.
\end{eqnarray*}
In all, we have I$\ell$-Op  regime under following conditon along with the  condition in \eqref{eqn_cond_il_op}:
\begin{eqnarray*}
    \varepsilon\dbar_\sM + \varepsilon^2\dbar_\sMe < \sqrt{\alpha_\sMe O_\sMe}
\end{eqnarray*}

To prove that I$\ell$-Op  regime is never optimal, the idea is to do the following. Say whenever Il regime is optimal, compare the utilities at $\tilde{q}^{*}$ and at $\theta(\pmax)$. It would be sufficient to prove that the utility of I$\ell$ regime at $\theta(\pmax)$ is less under the given conditions, then our job is done.}}
 }


  \subsection{ Operate at \underline{M}aximum \underline{p}rice (Mp regime)}\label{sec_max_price}
  We now consider the sub-regime located along the boundary $\mathbb{L}_4$ of ${\cal F}_{co}^+$ and find the corresponding sub-optimizer.  As mentioned in subsection \ref{subsec_bp}, this optimizer can also be in the BP regime, where both the \manu s derive strictly positive utility. 
Define the following using \eqref{Eqn_feasible_Regioin_Mj},\eqref{Eqn_phi_st_p}-\eqref{Eqn_psi_q}, to reflect the boundary points corresponding to $p=\pmax$:

 \vspace{-3mm}
{\small \begin{eqnarray}
l_{_{Mp}} 
&:=& 
\max\left \{ \psi^{-1} (\pmax), 0 \right \} 
=  
 \max\left \{ 
\frac{ -\varepsilon \dbar_\sMi + (1-\varepsilon^2) \dbar_\sMe - \alpha_\sMe C_\sMe }{ \alpha_\sMe}, \ 0 
\right \},  
\hspace{5mm}\label{Eqn_psi_inv_pmax} \\ 
r_{_{Mp}} 
&:=& 
\scalebox{1}{$\bar{q}(\pmax)$} 
\mbox {, where, }  \bar q (p) := \min \{\theta(p), \phi(p) \}. 
%
\label{Eqn_r_mx}
\end{eqnarray}}
It is  clear that Mp regime corresponds to $\left \{ (p,q) : p = \pmax,  q \in  [\, l_{_{Mp}}, r_{_{Mp}}  \,] \right \}$ and  
  is non-empty only when $l_{_{Mp}} < r_{_{Mp}} $. Using \eqref{Eqn_feasible_Regioin_Mj}-\eqref{Eqn_psw},  $r_{_{Mp}}$ equals:
\begin{equation}
\label{eqn_r_max_simplified}
r_{_{Mp}}  
:= 
\scalebox{1}{$\bar{q}(\pmax)$} 
= 
\scalebox{1}{$\left \{ 
\begin{array}{lll}
\frac{\dbar_\sMe (1+\varepsilon^2) + \varepsilon \dbar_\sMi -\alpha_\sMe C_\sMe - 2\sqrt{\alpha_\sMe O_\sMe}}{\alpha_\sMe},  
& \text{if } \varepsilon^2 \dbar_\sMe  < \sqrt{\alpha_\sMe O_\sMe}, \\[8pt]
\frac{\dbar_\sMe (1-\varepsilon^2)+ \varepsilon\dbar_\sMi - \alpha_\sMe C_\sMe}{\alpha_\sMe}, 
& \text{otherwise.}
\end{array}
\right .$}   
\end{equation}
When $\pmax \ge  p_{sw}$ (i.e., when $\varepsilon^2 \dbar_\sMe  \ge  \sqrt{\alpha_\sMe O_\sMe}$ in the above), it 
is immediate that $l_{_{Mp}} < r_{_{Mp}} $, however the same is not always guaranteed for $\pmax < p_{sw}$.
  For non-empty Mp regime, the corresponding sub-optimizer (for $p = \pmax$) is obtained as in the proof of Theorem~\ref{thm_Fco_positive} and is given by (function $h$ is defined in \eqref{eqn_q_star_p} of \ref{sec_Appendix_AA}):
\begin{eqnarray}
U^*_{_{Mp}} 
:= 
U_\sV (\pmax, q^*(\pmax)), 
 \text{ where }  
q^*(\pmax) 
= \max \{\, l_{_{Mp}}, \min\{ r_{_{Mp}} , h(\pmax) \} \}. 
 \hspace{2 mm}\label{Eqn_opt_max_util}
\end{eqnarray}
 Clearly,  when $q^{*}(\pmax) = h(\pmax)$ or when $q^{*}(\pmax) = r_{_{Mp}} = \phi(\pmax)$, this sub-optimizer is in Bp regime. If the overall optimal pair for $\V$ is at such points, we say the optimal regime is \textit{Mp-Bp} regime. Likewise, the sub-optimizer is in the OP regime if $q^{*}(\pmax) = r_{_{Mp}} = \theta(\pmax)$ (recall in Op regime, the out-house derives zero utility and hence is not in Bp regime).

We are now left with two additional sub-regimes:  
one in which the out-house’s optimal price equals ${\pe}_{_{mx}}$ in~\eqref{Eqn_opt_policy_Mj},  
and another in which the out-house $\Me$ is forced to operate at break-even.  
In the latter case, the optimal utility of $\Me$ is exactly zero.  We first discuss the former case. 
\subsection*{Out-house operates at maximum price}

From \eqref{Eqn_opt_policy_Mj} and \eqref{eqn_util_co-exist_given_pq},  when the optimal price of the out-house saturates at ${\pe}_{_{mx}}$, then   the utility function $U_\sV$ of  coalition $\V$ modifies to the following:
    \begin{eqnarray*}
      U_{_{St}}(p,q)  
   \hspace{-2mm}   &:= & \hspace{-2mm}\left(\dbar_\sMi(1+\varepsilon^2) + \varepsilon\dbar_\sMe - \alpha_\sMi p \right)\left(p- C_\sMi - C_\sS\right) + \varepsilon\left(\alpha_\sMi p -\dbar_\sMi \right)\left(q- C_\sS \right) \\ && \  \  -  O_\sMi - O_\sS.
    \end{eqnarray*}
    The set of $(p,q) \in {\cal F}_{co}$ where such a saturation occurs 
    is given by 
    (see \eqref{Eqn_feasible_Regioin_Mj}~\eqref{Eqn_phi_st_p}):

    \vspace{-4mm}
    {\small\begin{eqnarray}
        {\cal F}_{_{St}} &=& \left \{ (p,q) \in {\cal F}_{co} :  \pe^* (p,q) =\frac{\dbar_\sMe + \varepsilon 
    \dbar_\sMi}{\alpha_\sMe}  \right \} \nonumber \\ &=&     \left \{ (p,q) \in {\cal F}_{co} :  q > \phi(p)  \mbox{ and }  q \le \theta(p) \right \} \label{Eqn_Fco_Saturate} .
    \end{eqnarray}}
Comparing section wise,  once again across $q$, one can easily verify that 
  \begin{eqnarray}
   U_{_{St}}(p, q) \le U_{_{St}} (p, \theta(p) ) \mbox{ for all } p \mbox{ such that  }  (p, \theta(p) ) \in {\cal F}_{_{St}}.  \label{Eqn_Fst}
  \end{eqnarray}
   Further,  it is not difficult to see that if there exists a $p$ such that $(p,q) \in {\cal F}_{_{St}}$, then $(p,\theta(p)) \in {\cal F}_{_{St}}$. 
  Thus the optimal co-existence utility in ${\cal F}_{_{St}}$ is given by the optimal across all points in which the out-house operates at par and at saturation, i.e., in $  {\cal F}_{_{St}} \cap \{ (p, \theta(p) ) \}.$  As a result, towards finding the global optimization point, it is sufficient to consider the optimal across Op regime (and compare with others).  This is discussed in the immediate next. \textit{Before proceeding, we would also like to note here that an optimal pair in BP regime does not exist with out-house operating at maximum price.   This also implies Mp-Bp regime is optimal (if at all) only when $q^{*}_{\pmax} = h(\pmax)$ in \eqref{Eqn_opt_max_util}.} 

\subsection  {\underline{O}ut-house operates at \underline{p}ar  (Op regime)}\label{sec_par} 
This is the regime, where the out-house  operates, however is forced to do so at par. 
As just discussed in \eqref{Eqn_Fst},  
towards finding the optimal among  the  Op and the out-house-price-saturation  regions, it is sufficient to consider optimal across the  Op regime---the relevant optimization problem is:
  \begin{eqnarray*}   
 U_{_{Op}}^{*} := \max_{ (p, q) \in {\cal F}_{co} : q = \theta(p) }  U_{\sV} (p,q).
  \end{eqnarray*}
Towards solving the above, we need to proceed separately depending upon  the sign of  $ (p_{sw}-p)  $    (see \ref{Eqn_feasible_Regioin_Mj}). The following optimization problem is relevant for $p \le p_{sw}$ 
\begin{eqnarray}
    \max_{ p \le \min\{p_{sw}, \pmax\} }
    &\Bigg( \left(\dbar_\sMi + \varepsilon\dbar_\sMe - \alpha_\sMi(1-\varepsilon^2)p - \varepsilon\sqrt{\alpha_\sMe O_\sMe}\right)\left(p - C_\sMi - C_\sS\right) \nonumber\\
    &\hspace{-15mm}+ 
    \frac{\sqrt{\alpha_\sMe O_\sMe}\left(\dbar_\sMe + \varepsilon\alpha_\sMi p - \alpha_\sMe C_\sMe - \alpha_\sMe C_\sS - 2\sqrt{\alpha_\sMe O_\sMe}\right)}{\alpha_\sMe}  - O_\sMi - O_\sS
    \Bigg) \label{eqn_u1}.
\end{eqnarray}

The optimizer of the above by strict concavity is  at $p^{1, *}  $ given below:

\vspace{-3mm}
\begin{eqnarray}\label{eqn_u2}
  p^{1, *}:=  \min \Bigg\{ p_{sw}, \  \pmax, & \frac{(C_\sMi + C_\sS)}{2} 
    + \frac{\left( \dbar_\sMi + \varepsilon\dbar_\sMe - \varepsilon\sqrt{\alpha_\sMe O_\sMe} + \frac{\varepsilon\alpha_\sMi \sqrt{\alpha_\sMe O_\sMe}}{\alpha_\sMe} \right)}{2\alpha_\sMi(1-\varepsilon^2)} \Bigg\}.
\end{eqnarray}

The second optimization for $p > p_{sw}$ is given by the following and is applicable only when $p_{sw} \le  \pmax$:

 \vspace{-4mm}
{\small
\begin{eqnarray}
\begin{aligned}
    \max_{ p_{sw} \le p \le \pmax }
    & \Bigg( - O_\sMi - O_\sS + \left( \dbar_\sMi(1+\varepsilon^2) + \varepsilon\dbar_\sMe - \alpha_\sMi p \right) \left(p - C_\sMi - C_\sS\right) \\
    & \quad \hspace{-21mm} + \left( \varepsilon\alpha_\sMi p - \varepsilon\dbar_\sMi \right) \left( \frac{\left( \dbar_\sMe + \varepsilon\dbar_\sMi - \alpha_\sMe C_\sMe - \alpha_\sMe C_\sS \right)\left( \varepsilon\alpha_\sMi p - \varepsilon\dbar_\sMi \right) - \alpha_\sMe O_\sMe}{\alpha_\sMe (\varepsilon\alpha_\sMi p - \varepsilon\dbar_\sMi)} \right)  
   \Bigg).
\end{aligned}
\label{Eqn_second_opt_Par}
\end{eqnarray}}%
    The optimizer of the above by strict concavity is at   $p^{2, *}$, given below: 

\vspace{-3mm}
{\small\begin{eqnarray}
 p^{2, *} &:=& 
\max\!\left\{
p_{sw},\ 
\min\!\left\{
\pmax,\ 
p_{in}^{2, *}
\right\}
\right\}
\label{Eqn_p2_star}, \mbox{ where, } \\
p_{in}^{2, *} &:= &\frac{
\dbar_{\sMi}(1+\varepsilon^{2})
+ \varepsilon\dbar_{\sMe}
+ \alpha_{\sMi}(C_{\sMi}+C_{\sS})
+ \frac{\varepsilon\alpha_{\sMi}}{\alpha_{\sMe}}
(\dbar_{\sMe}+\varepsilon\dbar_{\sMi}-\alpha_{\sMe} C_{\sMe}-\alpha_{\sMe} C_{\sS})
}{2\alpha_{\sMi}}. \nonumber
\end{eqnarray}}

In all, the optimal value in the Op sub-regime  is given by:

\vspace{-4mm}
{\small\begin{eqnarray}\label{eqn_u_star_pr}
U_{_{Op}}^{*} = \max\Bigg \{ U_\sV (p^{1,*}\theta(p^{1,*}),  \  \  U_\sV (p^{2,*},  \theta(p^{2,*}) \indc{p_{sw} \le \pmax }   \Bigg \}.\hspace{1mm}\label{Eqn_opt_util_par} 
\end{eqnarray}}%
The  sub-optimal pair
is either  $(p^{1,*}\theta(p^{1,*})) $ or $(p^{2,*}, \theta(p^{2,*}) )  $, depending upon the bigger of the two in the above. 

\ignore{
{\color{blue}
When $\epsilon \to 1$ (see \eqref{Eqn_psw}) , then 
$$
p_{mx} \to \frac{\dbar_\sMi + \dbar_\sMe}{\alpha_\sMi}
$$
and 
$$
p_{sw} \to \frac{\dbar_\sMi + \sqrt{\alpha_\sMe O_\sMe}}{\alpha_\sMi}
$$
It is clear by assumption A.1 that $\lim_{\varepsilon \to 1} p_{mx} > \lim_{\varepsilon \to 1} p_{sw}$. At $\varepsilon \to 1$, we get that $p^{1,*} = p_{sw}$ as  $ lim_{\varepsilon \to 1} \frac{(C_\sMi + C_\sS)}{2} 
    + \frac{\left( \dbar_\sMi + \varepsilon\dbar_\sMe - \varepsilon\sqrt{\alpha_\sMe O_\sMe} + \frac{\varepsilon\alpha_\sMi \sqrt{\alpha_\sMe O_\sMe}}{\alpha_\sMe} \right)}{2\alpha_\sMi(1-\varepsilon^2)} \to \infty$ from  \eqref{eqn_u2}.  Now at $\varepsilon \to 1$, the second optimizer $p^{2,*}$ from \eqref{eqn_p2star} depends on the following expression :
    \begin{eqnarray*}
       \lim_{\varepsilon \to 1} \min \left\{ \pmax, \frac{\dbar_\sMi(1+\varepsilon^2) + \varepsilon\dbar_\sMe + \alpha_\sMi(C_\sMi + C_\sS) + \frac{\varepsilon\alpha_\sMi}{\alpha_\sMe}(\dbar_\sMe + \varepsilon\dbar_\sMi - \alpha_\sMe C_\sMe - \alpha_\sMe C_\sS)}{2\alpha_\sMi} \right\}
    \end{eqnarray*}
    Now  the minimum equals $p_{mx}$ when the following is true :
\begin{eqnarray*}
(\alpha_\sMe - \alpha_\sMi)\dbar_\sMe < \alpha_\sMi \alpha_\sMe (C_\sMi- C_\sMe) + \alpha_\sMi \dbar_\sMi.
\end{eqnarray*}
Finally under this condition as $\lim_{\varepsilon \to 1} \max \{p_{mx}, p_{sw}\} \to p_{mx}$ and thus we get:
\begin{eqnarray*}
    \lim_{\varepsilon \to 1} p^{2,*} \to p_{mx}.
\end{eqnarray*}
The utility at par regime when $\varepsilon \to 1$ is given by:
\begin{eqnarray}
U^{*}_{par} = U_{\sV}(\pmax,\theta_{\pmax}) &=& -O_\sMi - O_\sS \nonumber\\ &+&\frac{\dbar_\sMi(\dbar_\sMi +\dbar_\sMe - \alpha_\sMi(C_\sMi +C_\sS)}{\alpha_\sMi} + \frac{\dbar_\sMe(\dbar_\sMi +\dbar_\sMe - \alpha_\sMe(C_\sMe +C_\sS)}{\alpha_\sMe} \hspace{8mm}\label{eqn_u_star_par_eps_to_one}
\end{eqnarray}
}}

    We finally have the following result using  Theorem \ref{thm_Fco_positive} and the sub-optimal utilities in  \eqref{Eqn_opt_loss_util}, \eqref{Eqn_opt_max_util}, \eqref{Eqn_opt_util_par}:
\begin{theorem}
\label{Thm_all_in_one}\textbf{[Optimal  under Co-existence]}
    Assume {{\bf A.1-2}}.  Then, 
    \begin{eqnarray*}
        U^*_{co} 
      =  \max \Bigg \{  U(p_{co}^*, q_{co}^*) \indc{ (p_{co}^*, q_{co}^*) \in {\cal F}_{co}^+ }, \   U_{_{Op}}^{*},  \   U_{_{I\ell}}^{*} \indc{\psi(0) < \pmax},   \ U^*_{_{Mp}} \indc{ l_{_{Mp}}  <  r_{_{Mp}}  }   \Bigg \}. 
    \end{eqnarray*}
    \eop
\end{theorem}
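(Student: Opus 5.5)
The plan is to partition the co-existence region ${\cal F}_{co}$ of \eqref{eqn_f_co} into finitely many pieces, each already treated in this section. On each piece we identify the sub-optimum, and $U^*_{co}$ is then the largest of these. Starting from \eqref{eqn_opt_v_coexxx}, we write
\[
{\cal F}_{co} = {\cal F}^+_{co} \cup {\cal F}_{_{I\ell}} \cup {\cal F}_{_{St}} \cup \big({\cal F}_{co}\cap\{q=\theta(p)\}\big).
\]
The key point is that this union is exhaustive. Take any $(p,q)\in{\cal F}_{co}$ with $p\le \pmax$, a bound that costs nothing since $D_\sMi=0$ beyond it. Either $q>\phi(p)$, and then $(p,q)\in{\cal F}_{_{St}}$ by \eqref{Eqn_Fco_Saturate}. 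Or $q\le\phi(p)$, and then either $p\le\psi(q)$, giving $(p,q)\in{\cal F}^+_{co}$ by \eqref{Eqn_Fco_plus}, or $p>\psi(q)$, giving $(p,q)\in{\cal F}_{_{I\ell}}$. The constraint $q\le\theta(p)$ is inherited in every case from membership in ${\cal F}_{co}$. Because the maximum over a union is the maximum of the maxima, it then suffices to bound $U_\sV$ on each piece by one of the four terms in the statement, and to check that each term is attained by a feasible point.

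Next we treat ${\cal F}^+_{co}$ using Theorem~\ref{thm_Fco_positive}. If $(p^*_{co},q^*_{co})$ lies in its interior, part (i) gives the first term. Otherwise part (ii) reduces the maximum to the four boundary lines $\mathbb{L}_1,\dots,\mathbb{L}_4$, and each line is absorbed into one of the other terms:
\begin{itemize}
\item The maximum over $\mathbb{L}_4\cap{\cal F}^+_{co}$ is precisely $U^*_{_{Mp}}$ by \eqref{Eqn_opt_max_util}, which is meaningful when $l_{_{Mp}}<r_{_{Mp}}$.
\item $\mathbb{L}_3$ is $\{q=\theta(p)\}$ for $p<p_{sw}$, so it sits inside the Op regime and is dominated by $U^*_{_{Op}}$.
\item On $\mathbb{L}_2=\{p=\psi(q)\}$ the in-house demand vanishes. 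Such points satisfy the closure of the I$\ell$ constraints, and since $U_{_{I\ell}}$ is increasing in $p$ they are dominated by the I$\ell$ optimum along $p=\pmax$. If instead $\psi(0)\ge\pmax$, the relevant segment of $\mathbb{L}_2$ collapses to $p=\pmax$, and I would handle it as part of $\mathbb{L}_4$.
\item On $\mathbb{L}_1=\{q=\phi(p)\}$ the out-house price $p_e^*$ equals ${\pe}_{_{mx}}$, so this line lies in the closure of ${\cal F}_{_{St}}$ and is covered by the next step.
\end{itemize}

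For the remaining pieces, we use the earlier subsections. On ${\cal F}_{_{I\ell}}$, the monotonicity in $p$ and the concave-plus-linear structure in $q$ give $U^*_{_{I\ell}}$ of \eqref{Eqn_Ustar_Il}; this piece is empty unless $\psi(0)<\pmax$, which explains the indicator in the statement. On ${\cal F}_{_{St}}$ the objective is $U_{_{St}}$, which is increasing in $q$ because its $q$-coefficient $\varepsilon(\alpha_\sMi p-\dbar_\sMi)$ is nonnegative there (this holds since $p\ge p_{sw}$ on that piece). Hence \eqref{Eqn_Fst} pushes every point to $(p,\theta(p))$, so this piece is dominated by $U^*_{_{Op}}$. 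Finally, on $\{q=\theta(p)\}$, the two strictly concave one-dimensional problems \eqref{eqn_u1} and \eqref{Eqn_second_opt_Par} give $U^*_{_{Op}}$ as in \eqref{eqn_u_star_pr}.

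The main obstacle is the boundary bookkeeping. First, one must check that $\mathbb{L}_1$ and $\mathbb{L}_2$ are truly contained in the closures of the saturation and I$\ell$ pieces, with matching objective values. This holds because $U_\sV$ is continuous across these lines: the $(\cdot)^+$ terms vanish continuously, and $p_e^*$ in \eqref{Eqn_opt_policy_Mj} is continuous. Second, one must verify that sets which are not closed, such as ${\cal F}_{_{St}}$ with its strict inequality $q>\phi(p)$, do not produce a supremum that exceeds the listed terms without being attained. I would settle this by the same continuity argument: each supremum over a non-closed piece equals the maximum over its closure, and every closure is covered by one of the four terms.
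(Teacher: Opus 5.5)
Your proposal is correct and follows the paper's own route. The paper obtains the theorem by combining Theorem~\ref{thm_Fco_positive} on ${\cal F}^+_{co}$ with the I$\ell$, Mp, saturation-reduced-to-Op (via \eqref{Eqn_Fst}) and Op sub-optima; your exhaustive partition and line-by-line absorption of $\mathbb{L}_1$--$\mathbb{L}_4$ simply make explicit the bookkeeping the paper leaves implicit.

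One justification needs fixing: the restriction $p\le\pmax$ holds because the model caps prices at $\pmax$, not because $D_\sMi$ vanishes beyond it. That claim is false, since in the saturated region the in-house demand at $\pmax$ equals $\varepsilon^2\dbar_\sMi>0$, and the I$\ell$ objective keeps increasing in $p$, so without the cap the bound would not be free.
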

{\bf Remarks:} Thus, the optimal operating pair for $\V$ within the co-existence regime can occur in one of the following four (actually three) sub-regimes:

\begin{enumerate}[(a)]
    \item \textbf{Operate both profitably (Bp):} The optimal pair lies in the interior of ${\cal F}_{co}^+$, where both the manufacturers derive strictly positive profits. This sub-regime also corresponds to the case where $\V$ and the out-house $\Me$ operate simultaneously and profitably.

    \item \textbf{Operate at par (Op):} The out-house manufacturer is forced to operate at zero profit, with the coalition $\V$ quoting its optimal price at the boundary $p^* = \theta(q^*)$. This reflects a strategic equilibrium where $\V$ maximizes its own payoff while constraining the competitor’s profit margin.

    \item \textbf{Operate at losses (I$\ell$):} The in-house unit $\Mi$ operates at a loss, a regime that is non-empty only when $\psi(0) < \pmax$. Here, $\V$ intentionally accepts some losses for its in-house unit to achieve bigger strategic advantage through the out-house market channel.

    \item \textbf{Operate at max (Mp):} The optimal price quoted by the in-house manufacturer $\Mi$ equals $\pmax$, corresponding to the boundary regime along $\{p = \pmax\}$. This sub-regime is non-empty only when $l_{_{Mp}} < r_{_{Mp}} $ (see~\eqref{Eqn_r_mx}).  This (Mp) optimizer is actually in Op regime if  $q^{*}(\pmax) = \theta(\pmax) $ in \eqref{Eqn_opt_max_util}, else it is in Bp regime. 
\end{enumerate}

Taken together, these three co-existence sub-regimes capture the coalition’s strategic flexibility in balancing cooperation and competition. From a managerial perspective, they demonstrate that profitability for a vertically integrated supplier need not always depend on the in-house unit’s direct success. In certain market conditions, deliberately constraining or influencing the out-house manufacturer through strategic pricing, for its in-house, can yield superior overall outcome.

\ignore{
\vspace{2mm}
 \section{Comparison Analysis}
 We now compare the different regimes to identify the beneficial regimes for the given market conditions. 
 To begin, we consider the following term  (see \eqref{Eqn_psi_q})
\begin{eqnarray*}
    \psi(0) - p_{mx}
    &=& \frac{  \varepsilon^2  \dbar_\sMi - \varepsilon  (1-\varepsilon^2) \dbar_\sMe + \varepsilon \alpha_\sMe  C_\sMe   }{\alpha_\sMi (2-\varepsilon^2)}.
\end{eqnarray*}
Thus when $\varepsilon d_\sMi >    (1-\varepsilon^2) \dbar_\sMe -  \alpha_\sMe  C_\sMe$,    operating its in house production unit at losses is never a good option. 

Interestingly this does not depend either upon its reputation nor upon its production capacity. For further analysis we prove the following, whose proof is in Appendix. 
\begin{lemma}
\label{lem_comp}
 \textit{If $(8-6\varepsilon^2)\alpha_\sMi \alpha_\sMe - \varepsilon^2(\alpha_\sMi^2 + \alpha_\sMe^2) < 0,$ then the  $\Vi$ coalition finds it beneficial to either operate at loss or at par or at maximum price.}  \eop 
\end{lemma}

Thus under the above assumptions, it is not optimal for  $\Vi$ coalition to operate at a point where both the manufacturers derive non-zero profits, unless it is optimal to quote   maximum possible price $\pmax$ for  in-house products.

 Further,   for any given set of parameters excluding $\varepsilon$, 
there exists a $\bar \varepsilon  < 1$, such that for all $\varepsilon \ge \bar \varepsilon$ (while the other parameters  are kept fixed), it is not optimal to operate both profitably  ---  
  the term $(8-6\varepsilon^2)\alpha_\sMi \alpha_\sMe - \varepsilon^2(\alpha_\sMi^2 + \alpha_\sMe^2) $ of Lemma \ref{lem_comp}, converges to a negative value  as $\varepsilon \to 1$.  
 Using this, we finally derive the following result, whose proof is in \ref{sec_Appendix_AA}: 
 \begin{lemma}\label{lem_compr}
\textit{ (i) For any given set of parameters excluding $\varepsilon$, 
there exists a $\bar \varepsilon  < 1$, such that for all $\varepsilon \ge \bar \varepsilon$  it is either beneficial to operate at par or at maximum price. (ii)  Further if, }

\vspace{-3mm}
{\small
\begin{eqnarray}
(\alpha_\sMe - \alpha_\sMi)\dbar_\sMi + (2\alpha_\sMi + \alpha_\sMe)\dbar_\sMe +\alpha_\sMi \alpha_\sMe(C_\sMe - C_\sMi) \nonumber  
\\ 
&
\hspace{-83mm} < \ 
2\sqrt{2}\alpha_\sMi \sqrt{ (\dbar_\sMe)^2 - \alpha_\sMe O_\sMe  }\label{Eqn_cond_forat_max} \hspace{1mm}
\end{eqnarray}} 
\textit{it is beneficial to operate at max (for all such $\varepsilon$) with the optimal point being $(\pmax, h(\pmax) )$.}
\ignore{, or,
\vspace{3mm}
\scalebox{0.65}{$
\begin{aligned}
app \left( \frac{\dbar_\sMi + \varepsilon \dbar_\sMe}{\alpha_\sMi}, \ \frac{\varepsilon (\alpha_\sMe+\alpha_\sMi) \frac{\dbar_\sMi + \varepsilon \dbar_\sMe}{\alpha_\sMi} - \varepsilon\alpha_\sMe \left(C_\sMi + C_\sS\right) + \left(\dbar_\sMe - \alpha_\sMe C_\sMe + \alpha_\sMe C_\sS\right)}{2 \alpha_\sMe} \right) .
\end{aligned}
$}}
\textit{(iii) If  \eqref{Eqn_cond_forat_max}   is negated, then it is optimal to operate at par with $(\pmax, \theta(\pmax) )$.\eop}
\ignore{, or,
$$
 \left  (  \frac{\dbar_\sMi + \varepsilon \dbar_\sMe}{\alpha_\sMi}, \   \frac{\dbar_\sMe + \varepsilon\dbar_\sMi - \alpha_\sMe C_\sMe  }{\alpha_\sMe} -\frac{   O_\sMe}{ \varepsilon^2 \dbar_\sMe}  \right ).
$$.} 

\end{lemma}

Thus when the two units are identical, or even if the  in-house is inferior to an extent that still satisfies the condition of part (ii), $\Vi$ can compel the out-house unit to operate at par, once the substituitability factors are sufficiently high. 

A  result in a similar direction follows again by Lemma \ref{lem_comp}, even when the reputation factors   are  different.  There exists a threshold $\bar \gamma $ and  when, 
\begin{eqnarray*}
    \frac{\max \{ \alpha_\sMi, \alpha_\sMe \}  }{  \min \{ \alpha_\sMi, \alpha_\sMe \} }  > \bar \gamma, 
\end{eqnarray*}
operate both profitably is not an optimal choice. 
Interestingly   threshold $\bar \gamma$  depends  only upon $\varepsilon$ and not on other  parameters. We now consider some numerical example to derive further insights regarding  the optimal choice under such asymmetric conditions.
}

\subsection{Comparison: Sub-regimes of Co-existence}\label{subsec_Comp}

We now compare the different sub-regimes within the co-existence regime in order to identify which operating modes are optimal under the given market conditions that satisfy the feasibility assumptions {\bf{A.1-2}}.

We begin with a case where $\varepsilon $ is near $ 1$ and with I$\ell$ regime. 
When $\psi(0) \ge   p_{mx}$,  the I$\ell$-regime is empty. From \eqref{Eqn_psi_q}  
the difference, $\psi(0) - p_{mx}
= \nicefrac{ (\varepsilon^2\dbar_\sMi - \varepsilon (1-\varepsilon^2)\dbar_\sMe + \varepsilon \alpha_\sMe C_\sMe) }
{ (\alpha_\sMi (2-\varepsilon^2))}$ 
and   thus when, 

\vspace{-2mm}
{\small\begin{eqnarray}
   \varepsilon \dbar_\sMi > (1-\varepsilon^2)\dbar_\sMe - \alpha_\sMe C_\sMe, \label{eqn_il_not_opt}
\end{eqnarray}}%
operating the in-house production unit at loss  is never optimal. In other words, \textit{when the product is sufficiently essential,  I$\ell$ is never an optimal regime.}  
\ignore{
We next establish the conditions 
under which the coalition  never prefers Bp regime: 
\begin{lemma}\label{lem_comp}
\textit{If}
$
(8-6\varepsilon^2)\alpha_\sMi \alpha_\sMe
- \varepsilon^2(\alpha_\sMi^2 + \alpha_\sMe^2) < 0,
$
\textit{then the $\Vi$ coalition finds it optimal to operate either in I$\ell$, Op, or Mp regimes.}
\end{lemma}

{\color{red}
Lemma \ref{lem_comp} implies that, under the above condition, interior co-existence equilibria in which both manufacturers earn strictly positive profits cannot be optimal except for the Mp-Bp regime. Equivalently, profitable co-existence without price saturation is possible only when the condition of Lemma \ref{lem_comp} fails to hold, i.e., when product essentialness is sufficiently small.}

Furthermore, for any fixed set of parameters other than $\varepsilon$, there exists a threshold $\bar{\varepsilon}<1$ such that for all $\varepsilon\ge\bar{\varepsilon}$, Bp  ceases to be optimal. To see this, define
\[
k(\varepsilon):=(8-6\varepsilon^2)\alpha_\sMi \alpha_\sMe
- \varepsilon^2(\alpha_\sMi^2 + \alpha_\sMe^2).
\]
When $\alpha_\sMi=\alpha_\sMe=\alpha$, this expression simplifies to
\[
k(\varepsilon)=8\alpha^2(1-\varepsilon^2),
\]
which is strictly decreasing in $\varepsilon$ and satisfies
\[
\lim_{\varepsilon\to1} k(\varepsilon)=0.
\]
Thus, although $k(\varepsilon)$ decreases monotonically, it converges to zero from above as $\varepsilon\to1$  in the symmetric case. }

For subsequent theoretical  analysis, in the regime $\varepsilon \to 1$, we require   $\alpha_\sMi\ne\alpha_\sMe$ for a technical reason   (for example results in  Lemma \ref{lem_compr} and Theorem~\ref{thm_all_r})). One can separately analyze the case with equal $\alpha$'s in a similar manner, however we skip them for keeping the paper short and because the case with $\alpha_\sMe$ close to $\alpha_\sMi$  would provide a good picture about the equal case, by regular continuity arguments (one does require some technical proof for saying this formally).  
We begin with the following lemma which first shows that  the optimal   among the  co-existence regimes near $\varepsilon \to 1$ is either  Mp-Bp or Op, and also finds the best among the two.  

\ignore{
\newpage
{\color{red}
\subsection{Analysis with $\alpha_\sMi = \alpha_\sMe$ and at $\varepsilon \to 1$}

We now compute the following from \eqref{Eqn_ws}:
\begin{eqnarray*}
     w_1 &\to& \hspace{-2mm}-\frac{\alpha}{2}, 
    w_2 \to \alpha, 
    w_3 \to -\frac{\alpha}{2}, \nonumber\\
   w_4 &\to&\hspace{-2mm} \frac{\scriptstyle 2\dbar_\sMi + \dbar_\sMe + \alpha(C_\sMe -  C_\sS)+ \alpha(C_\sMi + C_\sS)}{2}, \nonumber\\
    w_5 &\to& -\frac{\alpha \left(C_\sMi + C_\sS\right)}{2} + \frac{\left(\dbar_\sMe - \alpha (C_\sMe - C_\sS)\right)}{2},
\end{eqnarray*}
The idea is to compare $\lim_{\varepsilon \to 1} U(p^{*},h(p^{*})) $ with either $\lim_{\varepsilon \to 1}U^{*}_{Op}$ or 
$\lim_{\varepsilon \to 1}U^{*}_{Mp}$.

\begin{eqnarray*}
\omega(p) &:=& {\cal U}(p, h(p)) = w_1 p^2 -  \frac{w_2^2 p +w_5 w_2 }{2  w_3  }   p  +\frac{ (w_2 p +w_5)^2}{4  w_3  }    \\
&&
+ w_4 p - \frac{w_2 w_5  p +w_5^2}{2  w_3  }  + w_6. \\
&=& \frac{-\alpha}{2}p^2 +  \left(\alpha p +  -\frac{\alpha \left(C_\sMi + C_\sS\right)}{2} + \frac{\left(\dbar_\sMe - \alpha (C_\sMe - C_\sS)\right)}{2}  \right)p\\
&-& \frac{ (\alpha p  -\frac{\alpha \left(C_\sMi + C_\sS\right)}{2} + \frac{\left(\dbar_\sMe - \alpha (C_\sMe - C_\sS)\right)}{2})^2}{2\alpha  }
\end{eqnarray*}
}
}

\ignore{
{\color{red} Need to check whether the optimal points in Lemma \ref{lem_comp} and Theorem \ref{thm_all_r} hold pre limit.}

\begin{lemma}\label{lem_compr}
\textit{Assume $\alpha_\sMi\neq\alpha_\sMe$.}
 \textit{For any fixed set of parameters excluding $\varepsilon$, there exists $\bar{\varepsilon}<1$ such that for all $\varepsilon\ge\bar{\varepsilon}$, it is optimal for the coalition $\V$ to operate in Op or Mp-Bp regime. Further,}
\begin{itemize}
\item[(i)] 
\textit{ it is optimal to operate in Op regime, with  $(\pmax,\theta(\pmax))$ as the optimal pair if,}
\begin{eqnarray}
\hspace{-7mm}
\frac{ \alpha_\sMe - \alpha_\sMi}{\alpha_\sMi}\dbar_\sMi
+ \frac{2\alpha_\sMi + \alpha_\sMe}{\alpha_\sMi}\dbar_\sMe
+   \alpha_\sMe(C_\sMe - C_\sMi)
  \ge  
\sqrt{8 \left ( \dbar_\sMe^2 - \alpha_\sMe O_\sMe \right ) }. \hspace{3mm}
\label{Eqn_cond_forat_max}
\end{eqnarray}
\item[(ii)]  \textit{
it is optimal to operate in Mp-Bp regime with $(\pmax,h(\pmax))$ as the optimal pair  
if \eqref{Eqn_cond_forat_max} is violated.}
\end{itemize}
\end{lemma}
The proof is in \ref{Proof_Lemma_2}. \eop}

\begin{lemma}\label{lem_compr}
\textit{Assume $\alpha_\sMi\neq\alpha_\sMe$.}
 \textit{For any fixed set of parameters excluding $\varepsilon$, there exists $\bar{\varepsilon}<1$ such that for all $\varepsilon\ge\bar{\varepsilon}$, it is optimal for the coalition $\V$ to operate in Op or Mp-Bp regime. Further,}
\begin{itemize}
\item[(i)] 
\textit{ it is optimal to operate in Op regime, with  $(\pmax,\theta(\pmax))$ as the optimal pair if the co-existence score defined below is non-negative,}
\begin{eqnarray}
\Sco  :=   \frac{ \alpha_\sMe - \alpha_\sMi}{\alpha_\sMi}\dbar_\sMi
+ \frac{2\alpha_\sMi + \alpha_\sMe}{\alpha_\sMi}\dbar_\sMe
+   \alpha_\sMe(C_\sMe - C_\sMi)    \ge  0 ;
\label{Eqn_cond_forat_max} 
\end{eqnarray}
\item[(ii)]  \textit{
it is optimal to operate in Mp-Bp regime with $(\pmax,h(\pmax))$ as the optimal pair  
if $\Sco$ score given in  \eqref{Eqn_cond_forat_max} is negative.}
\end{itemize}
\end{lemma}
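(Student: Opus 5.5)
Near $\varepsilon\to 1$ the plan is to compare the four candidates of Theorem~\ref{Thm_all_in_one} through their limits, using continuity in $\varepsilon$ of all closed-form sub-optimal utilities (they are rational functions of $\varepsilon$ with no singularity at $\varepsilon=1$ except where noted). \emph{Strict} inequalities between the limiting values then persist on some interval $[\bar\varepsilon,1)$.

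First, discard I$\ell$ and interior Bp. By \eqref{eqn_il_not_opt}, at $\varepsilon=1$ the condition reads $\dbar_\sMi > -\alpha_\sMe C_\sMe$, which is strict. Hence $\psi(0)\ge\pmax$ for $\varepsilon$ near $1$, and the I$\ell$ regime is empty. For the interior candidate $(p^*_{co},q^*_{co})$, I would look at the Hessian quantity $4w_1w_3-w_2^2$ from \eqref{Eqn_ws}:
\begin{equation*}
4w_1w_3-w_2^2=\alpha_\sMi\alpha_\sMe(2-\varepsilon^2)-\tfrac{\varepsilon^2}{4}(\alpha_\sMi+\alpha_\sMe)^2 .
\end{equation*}
At $\varepsilon=1$ this equals $-\tfrac14(\alpha_\sMi-\alpha_\sMe)^2<0$, and here the hypothesis $\alpha_\sMi\ne\alpha_\sMe$ is used. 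So ${\cal U}$ is indefinite (a saddle) near $\varepsilon=1$. Its stationary point cannot be an interior maximum of $U_\sV$ over ${\cal F}^+_{co}$, and by Theorem~\ref{thm_Fco_positive}(ii) the maximum over ${\cal F}^+_{co}$ lies on $\mathbb{L}_1$–$\mathbb{L}_4$. Along $\mathbb{L}_2$ the in-house earns zero, so this case is dominated by the Op/saturated analysis, and \eqref{Eqn_Fst} already reduces $\mathbb{L}_1$ and saturation to Op. I would then check that along $\mathbb{L}_4$ and in Op the section-wise maximizer in $p$ is pushed to $\pmax$. In Op, $p^{1,*}$ in \eqref{eqn_u2} has a $(1-\varepsilon^2)^{-1}$ blow-up and is clamped at $p_{sw}<\pmax$ (using {\bf A.1}). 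The second problem \eqref{Eqn_second_opt_Par} is concave with $p^{2,*}\to\pmax$ once $p^{2,*}_{in}\ge\pmax$, which I would verify from the limit expression. Hence only $U_\sV(\pmax,h(\pmax))$ (Mp-Bp) and $U_\sV(\pmax,\theta(\pmax))$ (Op) survive. For $\varepsilon$ near $1$ we have $\pmax>p_{sw}$, so $\theta(\pmax)$ takes its second form.

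Next, compare the two along the line $p=\pmax$. There $U_\sV(\pmax,\cdot)$ is a concave quadratic in $q$ (coefficient $w_3$) on $[l_{Mp},r_{Mp}]$, with $r_{Mp}=\theta(\pmax)\le\phi(\pmax)$. Therefore Op wins exactly when $h(\pmax)\ge\theta(\pmax)$, and Mp-Bp with $q=h(\pmax)$ wins otherwise, since $h(\pmax)>l_{Mp}$ near $\varepsilon=1$. Writing out $h(\pmax)-\theta(\pmax)$ at $\varepsilon=1$:
\begin{equation*}
h(\pmax)=\tfrac{(\alpha_\sMi+\alpha_\sMe)\pmax+\dbar_\sMe-\alpha_\sMe(C_\sMe+C_\sMi)}{2\alpha_\sMe},\qquad
\theta(\pmax)\to\tfrac{\dbar_\sMe+\dbar_\sMi-\alpha_\sMe C_\sMe}{\alpha_\sMe}-\tfrac{O_\sMe}{\dbar_\sMe}.
\end{equation*}
Simplifying $2\alpha_\sMe(h-\theta)$ with $\pmax=(\dbar_\sMi+\dbar_\sMe)/\alpha_\sMi$ should give $\Sco$ plus a term of definite sign, namely $2\alpha_\sMe O_\sMe/\dbar_\sMe\ge 0$. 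The sign condition therefore has to be reconciled with the stated $\Sco\ge0$. I would carefully recompute, expecting either that this extra term is absorbed or that the clean statement holds in the limit.

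The main obstacle is making the passage to the limit rigorous: showing that the boundary maxima on $\mathbb{L}_1$–$\mathbb{L}_3$ really are dominated, and handling the borderline case where the limiting gap is zero, e.g. $\Sco=0$, where strictness is lost. I would first try to obtain an explicit expansion in $(1-\varepsilon)$ of each candidate's utility difference and pick $\bar\varepsilon$ from the leading terms. If that fails, I would fall back on continuity plus compactness of the price domain.
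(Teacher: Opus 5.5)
Your reduction to ``Op versus the line $p=\pmax$'' is the same as the paper's. I$\ell$ is empty by \eqref{eqn_il_not_opt}, and $4w_1w_3-w_2^2\to-\tfrac14(\alpha_\sMi-\alpha_\sMe)^2$ is exactly the sign condition of Lemma~\ref{lem_comp}. The proposal breaks at the comparison along $p=\pmax$.

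\textbf{The error.} Your claim $r_{_{Mp}}=\theta(\pmax)\le\phi(\pmax)$ is backwards. Put $x=\varepsilon(\alpha_\sMi p-\dbar_\sMi)$. For $p>p_{sw}$ one has $\theta(p)-\phi(p)=x/\alpha_\sMe-O_\sMe/x>0$, as in the footnote of Subsection~\ref{subsec_bp}. So near $\varepsilon=1$, $r_{_{Mp}}=\phi(\pmax)$, the second row of \eqref{eqn_r_max_simplified}. The point $(\pmax,\theta(\pmax))$ lies in the saturation set ${\cal F}_{_{St}}$, where $\pe^*={\pe}_{_{mx}}$ and
\[
U_\sV(\pmax,q)=\varepsilon^2\dbar_\sMi(\pmax-C_\sMi-C_\sS)+\varepsilon^2\dbar_\sMe(q-C_\sS)-O_\sMi-O_\sS .
\]
This is linear and increasing in $q$, not the quadratic ${\cal U}$. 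Hence ``Op iff $h(\pmax)\ge\theta(\pmax)$'' is the wrong test, and that is why $\Sco$ does not come out. Your algebra also loses a term: $2\alpha_\sMe(h(\pmax)-\theta(\pmax))\to\Sco-2\dbar_\sMe+2\alpha_\sMe O_\sMe/\dbar_\sMe$. What $\Sco$ actually measures is $2\alpha_\sMe\lim_{\varepsilon\to1}(h(\pmax)-\phi(\pmax))$, as in \eqref{Eqn_hpmax_minus_rmp}.

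\textbf{The case $\Sco\ge0$.} Here ${\cal U}(\pmax,\cdot)$ increases on $[0,\phi(\pmax)]$. The linear piece then adds $\varepsilon^2\dbar_\sMe(\theta-\phi)\to\dbar_\sMe^2/\alpha_\sMe-O_\sMe>0$. So $U^*_{_{Op}}\ge U_\sV(\pmax,\theta(\pmax))>U^*_{_{Mp}}$, which is the paper's Case~2. This identifies only the regime. The pair is $(\pmax,\theta(\pmax))$ only if $p^{2,*}_{in}\ge\pmax$ in the limit. That follows from $\Sco<0$ (checked at the end of the appendix) but can fail when $\Sco\ge0$. With the parameters of Figure~\ref{fig:noncomp1}, $p^{2,*}\to 6550$ while $\pmax\to 11000$. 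So the check you planned on $p^{2,*}_{in}$ would not go through in general.

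\textbf{The case $\Sco<0$.} Writing $h,\phi,\theta$ for their values at $\pmax$, the exact comparison is
\[
U_\sV(\pmax,h)-U_\sV(\pmax,\theta)=\tfrac{\alpha_\sMe}{2}(h-\phi)^2-\varepsilon^2\dbar_\sMe(\theta-\phi)\longrightarrow\frac{\Sco^2}{8\alpha_\sMe}-\frac{\dbar_\sMe^2}{\alpha_\sMe}+O_\sMe .
\]
The paper's Case~1 asserts this is nonnegative ``by optimality of $h(\pmax)$ along $p=\pmax$''. But $h(\pmax)$ maximizes only over the unsaturated piece $[0,\phi(\pmax)]$, and the limit is negative whenever $-\sqrt{8(\dbar_\sMe^2-\alpha_\sMe O_\sMe)}<\Sco<0$.

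A concrete instance: take $\dbar_\sMi=100$, $\dbar_\sMe=20$, $\alpha_\sMi=0.1$, $\alpha_\sMe=0.03$, $C_\sMi=C_\sMe=4$, $C_\sS=3$, and all operating costs $10$. These satisfy \textbf{A.1}--\textbf{A.2} and give $\Sco=-24$. Yet at $\varepsilon=1$, $U_\sV(\pmax,h(\pmax))\approx 188207<199130=U_\sV(\pmax,\theta(\pmax))$, and by continuity this persists for $\varepsilon$ near $1$.

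So part~(ii) with threshold $0$ is not provable by any argument. The limit computation gives:
\begin{itemize}
\item Mp-Bp at $(\pmax,h(\pmax))$ when $\Sco<-\sqrt{8(\dbar_\sMe^2-\alpha_\sMe O_\sMe)}$;
\item Op at $(\pmax,\theta(\pmax))$ when $-\sqrt{8(\dbar_\sMe^2-\alpha_\sMe O_\sMe)}<\Sco<0$.
\end{itemize}
You were right that the sign condition needed reconciling, but the resolution is this corrected threshold, not an absorbed term. The passage to the limit that you flagged as the main obstacle is routine by comparison: strict limiting inequalities plus continuity, as in the paper.
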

The proof is in \ref{Proof_Lemma_2}. \eop

Lemma~\ref{lem_compr} characterizes the asymptotic structure of the optimal co-existence regime as the product essentialness parameter approaches unity, i.e., as $\varepsilon \to 1$. In this regime, product becomes highly essential and the customer loyalty weakens. As a result, the coalition $\V$ no longer finds it beneficial to operate it's in-house unit at loss (I$\ell$ regime)---basically there is no requirement to explicitly force the customers towards the out-house manufacturer as they automatically switch gears because of essentialness (and non-loyalty). Instead, the optimal strategy necessarily reduces to one of two boundary regimes: the \textit{operate-at-par} or Op regime  or the \textit{maximum-price profitable co-existence} or Mp-Bp regime (here opponent also derives positive profit and $\V$ operates at maximum price, latter is  facilitated  again by essentialness).

More specifically, condition~\eqref{Eqn_cond_forat_max} serves as the threshold criterion governing this choice.  The score $\Sco$ of \eqref{Eqn_cond_forat_max} can be seen as a consolidated relative strength indicator, near essentialness regime, which is constructed by a special combination of  the three levers or the characteristics of the two production houses:  market potentials ($\dbar_\sMi, \dbar_\sMe$), price sensitivities ( $\alpha_\sMi, \alpha_\sMe$) and the production costs ($C_\sMi, C_\sMe$). When $\Sco < 0$, 
the out-house is a significant player and can't be bullied by $\V$---at the optimal choice of $\V$, the out-house also derives positive profit; here the optimal pair for $\V$ is  $(p_{mx},h(p_{mx}))$. 

If  $C_\sMe > C_\sMi$ and $\alpha_\sMe > \alpha_\sMi$, then the out-house is clearly weaker---here $\Sco > 0$ and $\V$ manages to force the out-house to operate at par---its optimal pair is $(p_{mx},\theta(p_{mx}))$. If one or both of the above inequalities are not true, i.e.,    when the out-house is supremum either in customer reputation (with $\alpha_\sMe <  \alpha_\sMi$) or in production cost  (with $C_\sMe < C_\sMi$), there is a possibility of  it operating with profit in the essentialness regime   (as $ \Sco $ can become negative under such conditions).   

Also observe the market potentials $\dbar_\sM$ and $\dbar_\sMe$ play a role in determining the zero/non-zero profits of out-house, only when the out-house is strictly superior in one of the two quantities, reputation factor or  the production costs.  More strikingly, with good reputation, for the sake of better explanation consider $\alpha_\sMe \approx 0$, we observe that a higher market potential of out-house can force it to operate at par: note with $\alpha_\sMe \approx 0$, $\Sco \approx - \dbar_\sM + 2 \dbar_\sMe$. In other words, having a higher market potential in the essentialness regime along with good reputation, can only become hazardous for the out-house manufacturer. We provide an elaborate numerical case study to clearly illustrate this surprising dependency in subsection \ref{subsec_num_2}, see Figures \ref{fig:low_ess}-- \ref{fig:high_ess}.

Thus, Lemma~\ref{lem_compr} is instrumental in characterizing the high-essentialness asymptotics of the co-existence regime.  It also provides the foundation for the subsequent analysis of global optimality that includes regimes other than co-existence, or  more precisely the single-existence   regimes, where one of the production units is shut or is forced to shut completely; these regimes  are  discussed in the next  section. The complementary low-essentialness scenario (results when $\varepsilon \to 0$), where the customers remain highly loyal to their respective manufacturers, is directly  analyzed along with single-existence regimes  after analyzing the latter.

\section{Single-existence regimes}\label{sec_Se}
In the single-existence regimes,  the coalition $\V$ allows  only one production unit to operate.  Accordingly, we have two regimes: i) Shut down its in-house unit (Sh regime); and  ii) Eliminate the downstream competition (E$\ell$ regime). We begin with the  study of the Sh regime. 

\subsection{Shut down the in-house production unit (Sh regime) }\label{subsec_sh}

In contrast to the co-existence regimes analyzed in Section~\ref{sec_Co}, the coalition $\V$ may  choose to \textit{completely shut down its in-house production unit (Sh)} and operate solely as an upstream supplier, if the choice is optimal.
By withdrawing from the downstream competition, the coalition eliminates internal cannibalization and effectively consolidates the downstream market in favor of the out-house $\Me$. Although this results in the loss of direct retail revenue, it may enhance the upstream profitability---a stronger (and monopoly) downstream presence of $\Me$ may expand the total demand captured, enabling the supplier to extract probably a much higher wholesale revenue.

We model the 
  demand attracted by the out-house manufacturer, in this $\Mi$-absent-regime, by first considering that 
  the market potential  of out-house  increases to $\dbar_\sMe + \varepsilon \dbar_\sMi$ (keeping in view of substituitability and essentialness factors). We then model the demand after customer response as below:
\begin{equation}\label{demand_alonee}
D_\sMe(n_o,\pe) = \dbar_\sMe + \varepsilon \dbar_\sMi - \alpha_\sMe \pe + r\varepsilon \alpha_\sMe \pe,
\end{equation}
using an additional parameter $r \in [0,1]$, which we refer to as the 
\textit{secondary fallback rate}, to capture the \textit{double fold-back} behavior 
of customers in the absence of~$M$. This modeling is considered  because of the following reasons:  a) among the $\alpha_\sMe \pe$ fraction of customers dissatisfied with $\Me$, the sub-fraction $\varepsilon \alpha_\sMe \pe$ would have resorted to $\Mi$, had $\Mi$ been operational; 
b) however, since $\Mi$ is not operational, they attempt to fold back to $\Me$ due to lack of options; and c) we model this fraction as $r\varepsilon \alpha_\sMe \pe$, using the additional parameter $r$. 
In effect, due to the absence of competition, the effective price sensitivity  reduces to  $\alpha_\sMe(1-r \varepsilon)$ and the market potential   increases to $\dbar_\sMe  + \varepsilon \dbar_\sMi$.
And the parameter $r$ quantifies the intensity of the fallback effect:
\begin{itemize}
    \item[i)] When $r \approx 0$, almost none of the customers who would have switched to $M$ 
    return to $\Me$, reflecting strong dissatisfaction or low substituitability.
    
    \item[ii)] When $r \approx 1$, nearly all such customers return to $\Me$, corresponding 
    to highly substitutable products and strong fallback behavior when alternatives 
    are unavailable.
\end{itemize}

We study a general  problem  by analyzing the system for different values of~$r$ and other system parameters.  We also have a special numerical case-study where we set  $ r =\varepsilon$ in subsection \ref{subsec_num_3}---such a study is important as both the factors represent a kind of fallback or substituting nature of the customers.

The utilities of the coalition $\V$ and the out-house manufacturer $\Me$ under this regime are given by (see \eqref{Eqn_Umj}-\eqref{eqn_Util_M}):
\begin{align}
U_{\sV} (n_o, q; \pe) &= D_\sMe(n_o,\pe) (q - C_\sS) - O_\sS, \label{eqn_util_v_sh}\\
U_\sMe (\pe; n_o, q) &= D_\sMe(n_o,\pe) (p_e - q - C_\sMe) - O_\sMe \label{eqn_util_Me_sh}.
\end{align}

Formally, the shutdown regime is characterized by the set of wholesale prices $q$ that ensure the out-house operates in the absence of in-house and hence is given by:
\begin{equation}
\F_{_{Sh}} := \{ (p, q) : p = n_o,\,  q \ne n_o,\, q \le  \theta(n_o)\},
\end{equation}
where $ \theta(n_o)$ is defined in a similar manner as in \eqref{Eqn_feasible_Regioin_Mj}---this represents the maximum wholesale price $q$ beyond which $\Me$ prefers not to operate even in the monopoly regime with $r$-fold-back---we obtain it by solving $U_\sMe^{*}(n_o,q)  = 0$, with   $D_\sMe$ as  in \eqref{demand_alonee} (see also   \eqref{Eqn_opt_policy_Mj}):
\begin{eqnarray}\label{eqn_theta_no}
    \theta(n_o) = \frac{\dbar_\sMe +\varepsilon\dbar_\sM - \alpha_\sMe(1- r\varepsilon)C_\sMe - 2\sqrt{\alpha_\sMe (1- r\varepsilon)O_\sMe}}{\alpha_\sMe(1-r\varepsilon)}.
\end{eqnarray}
 For finding the optimal utilities of the coalition $\V$ and the out-house $\Me$, we need to solve the Stackelberg game as we did in section \ref{sec_Co}, but here the utilities are given by \eqref{eqn_util_v_sh}-\eqref{eqn_util_Me_sh}. 
 For Sh regime, the Stackelberg game is simplified to optimization problem at both the levels---with  $p=n_o$, coalition $\V$ just has to optimize it's upstream wholesale price $q$ while $M_e$ has to optimize it's downstream retail price $p_e$ (as a monopoly). Thus the optimal or equilibrium  strategies  are obtained by solving the Stackelberg game as in the following:
 \begin{eqnarray*}
  \pe^{*}(q)  := \arg\max_{\pe}U_\sMe(\pe;n_o,q),  \ \mbox{ and } \ q^{*} := \arg\max_{q \le \theta(n_o)} U_\sV(n_o,q; \pe^{*}(q)). 
  \end{eqnarray*} 
  Further with $\pe^{*} := \pe^{*}(q^{*})$,
   \begin{eqnarray*}
  U_\sMe^{*}  :=   U_\sMe(\pe^{*};n_o,q^{*}) \mbox{ and } U_{_{Sh}}^{*} := U_\sV(n_o,q^{*};\pe^{*}).
 \end{eqnarray*}
 represent the utilities at the equilibrium. We obtain the same in the following:
\begin{lemma}\label{lem_shut}
Under $\bf{A.1}$ , the equilibrium prices and corresponding utilities in the Sh regime are as follows:
\begin{eqnarray*}
\pe^{*}  &=& 
\frac{3 (\dbar_{\sMe} + \varepsilon \dbar_\sMi) + \alpha_\sMe(1-r\varepsilon)(C_{\sS} + C_{\sMe})}{4 \alpha_\sMe(1-r\varepsilon)}, \nonumber \\
q^{*} &=& 
\frac{ \dbar_{\sMe} + \varepsilon \dbar_\sMi + \alpha_{\sMe}(1-r\varepsilon)(C_{\sS} - C_\sMe)}{2\alpha_{\sMe}(1-r\varepsilon)}, \label{eqn_opt_price_Sh}\\
U^{*}_\sMe &=& 
\frac{\left(\dbar_\sMe + \varepsilon \dbar_\sMi - \alpha_\sMe(1-r\varepsilon)(C_\sMe+ C_\sS)\right)^2}{16\alpha_\sMe(1-r\varepsilon)} - O_\sMe, \\
U^{*}_{_{Sh}}  &=&
\frac{\left(\dbar_\sMe + \varepsilon \dbar_\sMi - \alpha_\sMe(1-r\varepsilon)(C_\sMe+ C_\sS)\right)^2}{8\alpha_\sMe(1-r\varepsilon)} - O_\sS. \hspace{4mm} 
 \end{eqnarray*}
\eop 
\end{lemma}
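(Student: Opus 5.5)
Set $A := \dbar_\sMe + \varepsilon\dbar_\sMi$ and $\beta := \alpha_\sMe(1-r\varepsilon) > 0$, which is positive since $r\varepsilon \le \varepsilon < 1$ in the range of interest. Then \eqref{demand_alonee} reads $D_\sMe(n_o,\pe) = A - \beta \pe$. The Sh game is therefore a standard sequential (double-marginalization) monopoly, and I will solve it by backward induction exactly as the best response \eqref{Eqn_opt_policy_Mj} was obtained, with $(\dbar_\sMe + \varepsilon\alpha_\sMi p, \alpha_\sMe)$ replaced by $(A,\beta)$.

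\textbf{Step 1: follower.} For fixed $q$, the function $\pe \mapsto (A-\beta\pe)(\pe - q - C_\sMe) - O_\sMe$ is strictly concave. Its stationary point is $\pe^*(q) = \frac{A}{2\beta} + \frac{q + C_\sMe}{2}$. This point lies below ${\pe}_{_{mx}} = A/\alpha_\sMe \le A/\beta$ whenever $q + C_\sMe < A/\beta$, so the clamp is inactive. The follower's optimal utility is
\[
\frac{(A - \beta(q + C_\sMe))^2}{4\beta} - O_\sMe ,
\]
and it operates iff $q \le \theta(n_o)$. Setting this utility to zero recovers \eqref{eqn_theta_no}. The demand at the response is $D^* = \frac{A - \beta(q + C_\sMe)}{2}$.

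\textbf{Step 2: leader.} Substituting gives $U_\sV(n_o,q) = \frac{A - \beta(q + C_\sMe)}{2}(q - C_\sS) - O_\sS$, which is concave in $q$. Its unconstrained maximizer is
\[
q^* = \frac{A + \beta(C_\sS - C_\sMe)}{2\beta},
\]
which matches the claimed formula. Plugging $q^*$ back gives $D^* = \frac{A - \beta(C_\sMe + C_\sS)}{4}$ and $\pe^* = \frac{3A + \beta(C_\sS + C_\sMe)}{4\beta}$. The margins are $\pe^* - q^* - C_\sMe = \frac{A - \beta(C_\sMe + C_\sS)}{4\beta}$ and $q^* - C_\sS = \frac{A - \beta(C_\sMe + C_\sS)}{2\beta}$. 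Multiplying each margin by $D^*$ and subtracting the operating cost yields the stated $U^*_\sMe$ (denominator $16\beta$) and $U^*_{_{Sh}}$ (denominator $8\beta$).

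\textbf{Step 3: feasibility under {\bf A.1}.} This is the only nontrivial step. I must show that $q^*$ lies in $\F_{_{Sh}}$, i.e.\ $q^* \le \theta(n_o)$, equivalently $U^*_\sMe \ge 0$. This is the condition
\[
A - \beta(C_\sMe + C_\sS) \ge 4\sqrt{\beta O_\sMe}.
\]
{\bf A.1} gives $\dbar_\sMe - \alpha_\sMe(C_\sS + C_\sMe) \ge 4\sqrt{\alpha_\sMe O_\sMe}$. The left-hand side of the required condition only increases when $\dbar_\sMe$ is replaced by $A$ and $\alpha_\sMe$ by $\beta$, because $\varepsilon\dbar_\sMi \ge 0$ and $\beta \le \alpha_\sMe$. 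The right-hand side decreases, since $\sqrt{\beta} \le \sqrt{\alpha_\sMe}$. Hence the constraint $q \le \theta(n_o)$ is slack, the interior maximizer is the constrained maximizer, and the indifference convention (prefer to operate) handles equality.

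For the same reason, $A - \beta(C_\sMe + C_\sS) > 0$ makes demand and both margins positive, and $\pe^* < A/\beta$. The clamp at ${\pe}_{_{mx}}$ needs a one-line check. If it bound, I would instead restrict the follower's domain to $[0, A/\beta]$, which is the natural maximal price in the monopoly setting.
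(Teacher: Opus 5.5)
Your route is the paper's route. The paper only cites \cite[Lemma 4]{wadhwapartition}, which is exactly this backward induction for a linear monopoly demand, here with $(A,\beta)$ in place of a single manufacturer's potential and sensitivity. Your closed forms for $\pe^*(q)$, $q^*$, the two margins and the two utilities are correct. Step 3 is the right use of {\bf A.1}, since $\theta(n_o)-q^*=\frac{A-\beta(C_\sMe+C_\sS)-4\sqrt{\beta O_\sMe}}{2\beta}$.

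One claim in Step 1 is false, however, and your ``one-line check'' would fail.

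\begin{itemize}
\item From $q+C_\sMe<A/\beta$ you only get $\pe^*(q)<A/\beta$. Because $A/\alpha_\sMe\le A/\beta$, this says nothing about the cap $A/\alpha_\sMe$.
\item In fact, at the claimed equilibrium, $\pe^*\le A/\alpha_\sMe$ holds iff $\beta(C_\sMe+C_\sS)\le A(1-4r\varepsilon)$.
\item This fails whenever $r\varepsilon\ge 1/4$; for instance, $r=\varepsilon=0.9$ gives $\pe^*>3.9\,A/\alpha_\sMe$.
\end{itemize}

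So if the follower were capped at ${\pe}_{_{mx}}=A/\alpha_\sMe$ in the Sh regime, the stated prices and utilities would be wrong. The lemma holds because the relevant cap there is the zero-demand price $A/\beta$ of \eqref{demand_alonee}. This is the same rule that produced $\pmax$ and ${\pe}_{_{mx}}$. It is also what the citation implicitly uses, and it is consistent with \eqref{eqn_theta_no} being the unclamped threshold.

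Your fallback is therefore the operative reading, not a contingency. Commit to it, and the check really is one line. For every feasible $q\le\theta(n_o)$ one has $q+C_\sMe\le\theta(n_o)+C_\sMe=(A-2\sqrt{\beta O_\sMe})/\beta\le A/\beta$, hence $\pe^*(q)\le A/\beta$. The leader's objective on all of $\F_{_{Sh}}$ is then the concave quadratic you wrote.
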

{\bf Proof }
 follows from \cite[Lemma 4]{wadhwapartition} . 

\subsubsection*{Difference between  I$\ell$ and Sh regimes  }
The I$\ell$ and Sh regimes differ fundamentally in both market structure and strategic objective. In the \(I\ell\) regime, the coalition \(\V\) keeps the in-house unit operational but prices it sufficiently high that the unit incurs some losses. Its role is therefore not profit generation, but strategic discipline: by remaining active, the in-house unit constrains the downstream market power of the out-house manufacturer \(\Me\) and preserves the coalition’s bargaining position in wholesale pricing.

In contrast, under the Sh regime the coalition completely shuts down the in-house production unit \((p=n_o)\), allowing \(\Me\) to become the sole downstream producer. The \(Sh\) regime thus represents a qualitative departure from~I$\ell$---rather than maintaining downstream presence to influence market outcomes, the coalition deliberately relinquishes downstream control and relies exclusively on upstream value extraction.

{\color{blue}
Additionally, the two regimes differ fundamentally in the mechanism through which the out-house manufacturer captures demand. In the Sh regime, the complete shutdown of the in-house unit leaves the out-house manufacturer as the sole downstream seller. Consequently, its effective market potential increases from $\dbar_\sMe$ to $\dbar_\sMe+\varepsilon\dbar_\sMi$, while the effective price sensitivity changes from $1$ to $(1-r\varepsilon)$. In contrast, under the I$\ell$ regime, the in-house manufacturer remains operational, albeit at a loss. Consequently, dissatisfied customers continue to have an alternative seller and can switch directly to the out-house manufacturer instead of facing a monopolistic market. Thus, the additional demand captured by the out-house arises through customer substitution rather than through monopoly expansion. Since the coalition $\V$ may prefer to retail exclusively through the out-house manufacturer, it compares these two fundamentally different operating modes and selects the one yielding the higher profit. For ease of comparison, the equilibrium demands of the out-house manufacturer in the two regimes are summarized below (see \eqref{Eqn_Il_util}, \eqref{demand_alonee}):
\begin{eqnarray*}
\mbox{Sh regime: } \quad
D_\sMe^{*}
&=&
\frac{\left(\dbar_\sMe+\varepsilon\dbar_\sMi-\alpha_\sMe(1-r\varepsilon)(C_\sS+C_\sMe)\right)}{4},
\\[2mm]
\mbox{I$\ell$ regime: } \quad
D_\sMe^{*}
&=&
\frac{\left(\dbar_\sMe(1+\varepsilon^2)+\varepsilon\dbar_\sMi-\alpha_\sMe(C_\sMe+C_\sS)\right)}{4}
\indc{{\tilde q}^*\le r_{_{_{I\ell}}}}
\\
&&\qquad
+\;
\varepsilon^2\dbar_\sMe\,
\indc{{\tilde q}^*>r_{_{_{I\ell}}}}.
\end{eqnarray*}
In all, the Sh and the I$\ell$ regimes typically differ in customer response towards the out-house manufacturer.
}

This distinction raises a broader strategic question: under what market conditions should the coalition derive value through upstream wholesale extraction, downstream retail participation, or a combination of both? Section~\ref{sec_Co} showed that the I$\ell$  regime is not optimal when the product essentialness is high (see \eqref{eqn_il_not_opt}), irrespective of the relative powers  ($\alpha_\sMi,\alpha_\sMe$, $\dbar_\sMi, \dbar_\sMe$ and $C_\sMi, C_\sMe$) of the two production units. An immediate question in this regard is, will  $\V$ find Sh regime to be optimal in such scenarios?
The problem becomes more nuanced when product essentialness is moderate or low. 
Our aim in this paper is to derive answers to such questions in complete generality (however still under the minimal assumptions of this paper).  For now, we would like to mention that one can have scenarios under which  I$\ell$ is optimal or  Sh regime is optimal  among all possible choices for $\V$
(see  Figure \ref{fig:inferior}  of section \ref{sec_num} where we depict such possibilities, after considering the overall comparison).

\subsection{ Eliminate downstream competition (E$\ell$ regime)}
Similar to the exclusionary strategy followed by the coalition $\V$ to shut down it's in-house unit which led to the Sh regime analyzed in the previous subsection, the coalition $\V$ may choose to \textit{completely eliminate the out-house manufacturer (E$\ell$)} and operate solely through its in-house production unit, if such a choice is optimal. By strategically quoting a sufficiently high wholesale price, the coalition renders the operation of $\Me$ economically infeasible, thereby forcing its exit from the downstream market (recall we are considering a stylized model with one exclusive supplier, which can be close to real-world applications with monopoly in supply segment---probably where the competitor suppliers are significantly inferior).

This strategic choice is not motivated solely by short-term profit
considerations. By driving $\Me$ out of the market, the supplier reshapes the
competitive environment and attains full vertical control: the in-house unit
$\Mi$ becomes the exclusive downstream producer, effectively creating a
monopolistic configuration at both levels. Such a structure enables the supplier to govern
the entire value chain and this can become profitable  under certain conditions on the system parameters or the relative powers of the two production units.

The  payoff function of the coalition $\V$ in this regime is
\begin{equation}
U_{\sV}(p,q;n_o)
=
D_{\sMi}(p,n_o)\,(p - C_{\sMi} - C_{\sS})
-
O_{\sMi}
-
O_{\sS},
\end{equation}
where the demand $D_\sMi(p,n_o)$ faced by the in-house manufacturer $\Mi$ is given by the
$\Me$-absent demand model which follows a similar functional form as the  previously introduced $\Mi$-absent demand model in~\eqref{demand_alonee}:
\begin{equation}\label{demand_inhouse_alonee}
D_\sMi ( p, n_o) = \dbar_\sMi + \varepsilon \dbar_\sMe - \alpha_\sMi p + r\varepsilon \alpha_\sMi p,
\end{equation}
We again increase the potential to $\dbar_\sMi + \varepsilon \dbar_\sMe$ and decrease the price sensitivity to $\alpha_\sMi  + r\varepsilon \alpha_\sMi$, due to  the absence of  $\Me$, as in the previous subsection---
basically the term $\alpha_\sMi p$ captures the demand loss caused by the retail price $p$, while the final term $r\varepsilon \alpha_\sMi p$ models the fallback of  some customers who still consume the product of $M$  due to lack of options, despite being dissatisfied with the quoted price $p$.

Under the elimination regime, the supplier’s pricing decision therefore
centralizes all downstream activity within its own production unit.
Formally, the elimination regime is represented by the feasible action set
\begin{equation}
\F_{_{E\ell}}
:= 
\left\{
(p, q) :
p\neq n_o,\;
q \neq n_o,\;
q > \bar{\theta}
\right\},
\end{equation}
where $\bar{\theta}$ denotes the critical wholesale-price threshold beyond which the out-house manufacturer $\Me$ finds downstream operation economically infeasible and therefore optimally exits the market.
In this regime, the Stackelberg game like in section \ref{sec_Co} (see also the game in subsection \ref{subsec_sh}) collapses to the optimization problem (because of absence of $\Me$) where $\V$ maximizes the downstream retail price $p$. Thus we have the following:
\begin{eqnarray*}
    p^{*} := \arg \max_{p} U_\sV(p;n_o),
 \mbox{ and the optimal utility is, }
 U_{El}^{*} := U_\sV(p^{*};n_o)
 \end{eqnarray*}We prove that the above optimizer $p^*$ exists and obtain $U_{El}^*$  in the following:
\begin{lemma}\label{lem_elim}
The optimal price and utility $(p^*, U_{El}^*)$  of El regime is given by:
\begin{eqnarray}
    p^{*} &=& \frac{\dbar_\sMi + \varepsilon\dbar_\sMe + \alpha_\sMi(1-r\varepsilon)(C_\sMi + C_\sS)}{2\alpha_\sMi(1-r\varepsilon)}\label{eqn_opt_price_el}\\
    U_{E\ell}^{*} &=& \frac{\left(\dbar_\sMi+ \varepsilon\dbar_\sMe 
 -\alpha_\sMi(1-r\varepsilon)(C_\sMi+ C_\sS)\right)^2}{4\alpha_\sMi(1-r\varepsilon)}- O_\sMi - O_\sS.
\end{eqnarray}
\eop
\end{lemma}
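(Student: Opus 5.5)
The El regime reduces to a single-variable monopoly pricing problem, so the plan is to solve it directly. Then we check that the resulting price is admissible and that the regime is actually realizable.

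\emph{Step 1: reduce the objective.} Once $q>\bar\theta$, the follower chooses $n_o$ by \eqref{Eqn_opt_policy_Mj} (with $\bar\theta$ playing the role of $\theta(p)$). The term $D_\sMe(q-C_\sS)\F_\sMe$ in \eqref{Eqn_Util_V} then vanishes, so $q$ no longer enters the coalition's payoff. Writing $a:=\dbar_\sMi+\varepsilon\dbar_\sMe$, $b:=\alpha_\sMi(1-r\varepsilon)$ and $c:=C_\sMi+C_\sS$, the objective becomes
$$
U_\sV(p;n_o)=(a-bp)^+(p-c)-O_\sMi-O_\sS .
$$
Since $r,\varepsilon\in[0,1]$, we have $b\ge 0$. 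The degenerate case $r\varepsilon=1$ needs a separate remark: demand no longer falls with price, so $p$ is pushed to $\pmax$. For the stated formula I assume $r\varepsilon<1$, so $b>0$.

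\emph{Step 2: optimize over $p$.} On the set $\{p\le a/b\}$ the function $g(p)=(a-bp)(p-c)$ is a strictly concave quadratic, with $g''=-2b<0$. For $p>a/b$ the demand is zero and the payoff is the constant $-O_\sMi-O_\sS$, which is no larger than the value at any $p\in[c,a/b]$. The first-order condition $a-2bp+bc=0$ gives $p^*=(a+bc)/(2b)$, which is exactly \eqref{eqn_opt_price_el}. Substituting back gives $a-bp^*=(a-bc)/2$ and $p^*-c=(a-bc)/(2b)$. Their product is $(a-bc)^2/(4b)$, which is the claimed $U^*_{E\ell}$.

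\emph{Step 3: feasibility.} We need $c<p^*<a/b$, which holds if and only if $a>bc$. Assumption {\bf A.1} gives $\dbar_\sMi\ge\alpha_\sMi(C_\sS+C_\sMi)$. Hence $a\ge\dbar_\sMi\ge\alpha_\sMi c\ge bc$, so the interior optimizer is valid and the demand there is positive. We should also check $p^*\le\pmax$, or argue that in this regime the price bound should be taken as $a/b$, the level at which monopoly demand vanishes. I expect this to be the only delicate point: with fallback, $a/b$ can exceed $\pmax$. I would handle it by noting that the maximum price for the monopoly demand \eqref{demand_inhouse_alonee} is redefined in the same spirit as $\pmax$, namely as the price at which demand vanishes. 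Finally, I would remark that for large enough $q$ (for instance $q$ above the relevant $\theta$ thresholds) the set $\F_{E\ell}$ is nonempty. Because $q$ does not affect the payoff, the optimum is attained for any such $q$, which completes the argument.
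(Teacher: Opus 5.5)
Your proposal is correct and follows the same route as the paper. The paper reduces the E$\ell$ regime to the one-variable monopoly pricing problem with demand $\dbar_\sMi+\varepsilon\dbar_\sMe-\alpha_\sMi(1-r\varepsilon)p$ and invokes the concave-quadratic computation of \cite[Lemma 4]{wadhwapartition}, which is what your Steps 1--2 carry out. Your caveat about the price cap is also well taken. The condition $p^*\le\pmax$ fails, for example, whenever $r\varepsilon\ge 1/2$, so the stated formula holds only for the uncapped $\arg\max_p$ (equivalently, the cap $a/b$) that the paper writes and implicitly uses. That uncapped reading is also what gives $\Omega_{_{El}}(r)\to\infty$ as $r\to 1$ in the proof of Theorem~\ref{thm_all_r}.
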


{\bf Proof }
 follows from \cite[Lemma 4]{wadhwapartition}.

 \subsubsection*{Difference between  E$\ell$ and Op regimes  }
The E$\ell$ and Op regimes again differ fundamentally in both market structure and strategic objective. In the E$\ell$ regime, the coalition $\V$ completely eliminates the out-house manufacturer $\Me$  and  operates in a monopolistic manner in both the segments---it banks on the  effective market demand captured due to the absence of $\Me$,  which includes the transferable part of  $\Me$, captured with the help of parameters $(\varepsilon, r)$.
In contrast, under the Op regime,  it allows $\Me$ to remain operational while strategically maximizing its own revenue---this is achieved by setting a sufficiently high wholesale  price  that renders $\Me$ operate   at break-even point (or derive zero profit).  The Op regime therefore represents a controlled co-existence structure in which downstream competition is preserved operationally, but all economic surplus generated by the out-house manufacturer is extracted upstream by the coalition.

 As observed in Lemma~\ref{lem_compr} and also illustrated by numerical experiments in subsection \ref{subsec_num_2} (see cyan regions in Figure \ref{fig_overall}), even a relatively strong out-house manufacturer can be forced into break-even operation when the coalition strategically leverages its dominant upstream position, more so when the substituitability or the essentialness factors are high. Interestingly, we find many scenarios in which E$\ell$ regime can also become the optimal choice for coalition $\V$ (see yellow regions in Figure \ref{fig_overall}).

\section{Overall optimal choice for VC coalition}
The final question is regarding the optimal or the best choice among all the available configurations, which were   discussed in the previous sections. Clearly the optimal value of $\V$ is given by the following, using Theorem \ref{Thm_all_in_one}: 

\vspace{-3mm}
 {\small   \begin{eqnarray*}     
  U^{*}_{\sV} 
     & =&  \max \Bigg \{  U^{*}_{_{Sh}}, U^{*}_{_{E\ell}},  U^{*}_{co}  \Bigg \}  \\
      && \hspace{-15mm} = \  \max \Bigg \{ U^{*}_{_{Sh}}, U^{*}_{_{E\ell}} \\
      &&\hspace{-3mm}\max \Bigg \{  U(p_{co}^*, q_{co}^*) \indc{ (p_{co}^*, q_{co}^*) \in {\cal F}_{co}^+ }, \   U_{_{Op}}^{*},  \   U_{_{I\ell}}^{*} \indc{\psi(0) < \pmax},   \ U^*_{_{Mp}} \indc{ l_{_{Mp}}  <  r_{_{Mp}}  }   \Bigg \}    \Bigg \} ,
    \end{eqnarray*}}%
    and the corresponding optimizer (including the configuration) provides the overall  optimal choice for $\V$. 
The final solution of the above problem characterizes the optimal strategy of the encroaching supplier among the three strategic regimes (and their sub-regimes): \textit{co-existence (with Bp, I$\ell$ and Op as sub-choices)}, \textit{shutdown of the in-house production unit}, and \textit{elimination of downstream competition}. We next compare the equilibrium utilities achieved under each regime to identify the coalition's optimal strategy for the given market conditions and the relative strengths of the agents involved.

\subsection{Overall Comparison Analysis}\label{sec_overall_comp}

In the previous sections we derived the theoretical performance---the closed form expressions for optimizers and the utilities---in various sub-regimes.  
We now characterize the overall optimal choice of coalition $\V$, using these  performance expressions.  
One can  compute the  optimal  utilities of various regimes numerically using the derived expressions  and  easily obtain a numerical  comparative study for any given set of system parameters, the strength indicators of the two production units  $(\alpha_\sMi, \alpha_\sMe), $ $(\dbar_\sMi, \dbar_\sMe),$ $(C_\sMi, C_\sMe, C_\sS)$,  $(O_\sMi, O_\sMe, O_\sS)$ and the system parameters $(\varepsilon, r)$.
One can also derive
theoretical characterization of the optimal choice of $\V$ under some asymptotic regimes. 
We begin with the theoretical study in the immediate next, 
  while a more complete numerical study is considered later.

We study two asymptotic scenarios primarily based on the essentialness factor $\varepsilon$---the scenario characterized by $\varepsilon \to 0$ is referred to as the Low-Essentialness (LE) scenario, while that near $\varepsilon \to 1$ is the High-Essentialness (HE) scenario. We begin with characterization of the optimal choice in the LE scenario (proof is in  \ref{ref_thm_4}).
\begin{theorem}\label{thm_coex}{\bf [Optimal in LE scenario]}
Assume {\bf A.1--A.2}  and fix all system parameters except for $\varepsilon$.
There exists an $\bar{\varepsilon}>0$ such that for all $\varepsilon \le \bar{\varepsilon}$, it is optimal 
for the coalition $\V$ to operate in Bp regime, with both production units earning strictly positive profits. Further  the optimal price pair of $\V$ is given by \eqref{Eqn_pco_qco}, which converge to:
\begin{eqnarray*}
\lim_{\varepsilon \to 0}
 p^{*}_{co} = \frac{\left(\dbar_\sMi + \alpha_\sMi(C_\sMi + C_\sS)  \right)}{2\alpha_\sM} \  \mbox{ and } \  \lim_{\varepsilon \to 0} q^{*}_{co}  = \frac{\left(\dbar_\sMe - \alpha_\sMe(C_\sMe + C_\sS)  \right)} {2\alpha_\sMe}. \hspace{2mm} \mbox{ \eop}
\end{eqnarray*}
\end{theorem}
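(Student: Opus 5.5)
The plan is a continuity (perturbation) argument around $\varepsilon=0$. At $\varepsilon=0$ the two production units decouple, and every regime utility in the overall maximization above has an explicit limit. First I compute these limits. From \eqref{Eqn_ws}, at $\varepsilon=0$ we get $w_2=0$, $w_1=-\alpha_\sMi$, $w_3=-\alpha_\sMe/2$, $w_4=\dbar_\sMi+\alpha_\sMi(C_\sMi+C_\sS)$ and $w_5=(\dbar_\sMe-\alpha_\sMe C_\sMe+\alpha_\sMe C_\sS)/2$. Substituting into \eqref{Eqn_pco_qco} gives exactly the stated limits of $p^*_{co}$ and $q^*_{co}$, since these are rational functions of $\varepsilon$ with nonvanishing denominator $4w_1w_3-w_2^2\to 2\alpha_\sMi\alpha_\sMe>0$. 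The limiting value is
\[
U(p^*_{co},q^*_{co})\to \frac{(\dbar_\sMi-\alpha_\sMi(C_\sMi+C_\sS))^2}{4\alpha_\sMi}+\frac{(\dbar_\sMe-\alpha_\sMe(C_\sMe+C_\sS))^2}{8\alpha_\sMe}-O_\sMi-O_\sS,
\]
which is the sum of the monopoly profit of $\Mi$ and the supplier's double-marginalization profit from $\Me$.

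Second, I show that the limit point lies strictly inside ${\cal F}^+_{co}$, so that for small $\varepsilon$ the point $(p^*_{co},q^*_{co})$ is interior and Theorem~\ref{thm_Fco_positive}(i) applies.
\begin{itemize}
\item[(a)] \emph{Constraint $p<\pmax$.} As $\varepsilon\to0$, $\pmax\to\dbar_\sMi/\alpha_\sMi$, and $p^*_{co}<\dbar_\sMi/\alpha_\sMi$ holds iff $\dbar_\sMi>\alpha_\sMi(C_\sMi+C_\sS)$, which follows from {\bf A.1}.
\item[(b)] \emph{Constraint $p<\psi(q)$.} We have $\psi(q)\to\dbar_\sMi/\alpha_\sMi$, so this reduces to the same inequality as (a).
\item[(c)] \emph{Constraint $q<\phi(p)$.} Here $\phi\to(\dbar_\sMe-\alpha_\sMe C_\sMe)/\alpha_\sMe$, which exceeds the limit of $q^*_{co}$ by {\bf A.1}.
\item[(d)] \emph{Constraint $q<\theta(p)$.} Note that $p_{sw}\to\infty$, so the first branch of $\theta$ applies, with $\theta\to(\dbar_\sMe-\alpha_\sMe C_\sMe-2\sqrt{\alpha_\sMe O_\sMe})/\alpha_\sMe$. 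The strict inequality $q^*_{co}<\theta$ is then equivalent to $\dbar_\sMe-\alpha_\sMe(C_\sMe+C_\sS)>4\sqrt{\alpha_\sMe O_\sMe}$, which is exactly what the $2\cdot 2\sqrt{\alpha_\sMe O_\sMe}$ term in {\bf A.1} provides.
\end{itemize}
Boundary cases where {\bf A.1} holds with equality need care. If equality occurs, I would note it yields $U^*_\sMe=0$ only in the limit. I would check whether the first-order perturbation in $\varepsilon$ pushes the point inside, or else treat the statement as requiring strictness, which I expect the authors implicitly assume. Strict positivity of both manufacturers' profits then follows because interior points of ${\cal F}^+_{co}$ are in ${\cal F}_{Bp}$ by \eqref{eqn_f_bp}.

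Third, I compare against all the competing regimes in Theorem~\ref{Thm_all_in_one} and against Sh and E$\ell$.
\begin{itemize}
\item[(i)] \emph{I$\ell$ and Mp.} I$\ell$ requires $\psi(0)<\pmax$, and $\psi(0)-\pmax\approx\varepsilon\alpha_\sMe C_\sMe/(2\alpha_\sMi)-\varepsilon\dbar_\sMe/(2\alpha_\sMi)+O(\varepsilon^2)$. This may be negative, so I will not rely on emptiness. Instead I will use limits: the I$\ell$, Mp and Op limiting values all have $p=\pmax\to\dbar_\sMi/\alpha_\sMi$ (zero in-house demand), or, for Op at $p^{1,*}$, a $q$ pinned at $\theta$. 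Each is therefore a feasible but non-maximizing point of the decoupled limit problem, whose unique maximizer is the Bp limit.
\item[(ii)] \emph{Op.} Uniqueness via strict concavity ($w_1,w_3<0$, $w_2=0$) gives a strict gap against Op as well, since $\theta$ is strictly above the limit $q^*_{co}$.
\item[(iii)] \emph{Sh and E$\ell$.} By Lemmas~\ref{lem_shut} and \ref{lem_elim}, the limits are $(\dbar_\sMe-\alpha_\sMe(C_\sMe+C_\sS))^2/(8\alpha_\sMe)-O_\sS$ and $(\dbar_\sMi-\alpha_\sMi(C_\sMi+C_\sS))^2/(4\alpha_\sMi)-O_\sMi-O_\sS$. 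The Bp limit exceeds these by the other unit's term, minus $O_\sMi$ in the Sh case. Positivity of the $\Mi$ term minus $O_\sMi$ uses {\bf A.1}: $(\dbar_\sMi-\alpha_\sMi(C_\sMi+C_\sS))^2/(4\alpha_\sMi)\ge O_\sS+O_\sMi>O_\sMi$ whenever $O_\sS>0$.
\end{itemize}
All regime utilities are continuous in $\varepsilon$ (closed forms, with $r\in[0,1]$ fixed and $1-r\varepsilon\to1$). So strict limiting gaps persist for $\varepsilon\le\bar\varepsilon$.

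The main obstacle is step (i)/(ii): making the limits of $U^*_{_{I\ell}}$, $U^*_{_{Mp}}$ and $U^*_{_{Op}}$ rigorous, including the indicator switches in \eqref{Eqn_opt_loss_util} and \eqref{Eqn_opt_max_util}. I would handle this by bounding each of them above by $\sup_{{\cal F}_{co}}U_\sV$ restricted to $\{p\ge\pmax-\delta\}\cup\{q\ge\theta(p)-\delta\}$. By uniform convergence of $U_\sV$ on compacts (prices bounded by $\pmax$ and ${\pe}_{_{mx}}$), this converges to the decoupled problem's sup on a set excluding a neighborhood of its unique maximizer, which is strictly smaller.
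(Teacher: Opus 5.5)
Your argument is correct, provided {\bf A.1} holds strictly in its $\Me$ component (see below). It differs from the paper's mainly in how it disposes of the competing co-existence sub-regimes.

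\textbf{How the paper does it.} The paper evaluates every limit explicitly:
\begin{itemize}
\item It shows the Mp regime is empty for small $\varepsilon$, since $l_{_{Mp}}-r_{_{Mp}}$ has a positive limit (Lemma~\ref{lem_op_max_price}(a)).
\item It bounds I$\ell$ by Sh after identifying $r_{_{I\ell}}=\theta(\pmax)>\tilde q^*$ (Lemma~\ref{lem_op_max_price}(b)).
\item It computes $\lim U^*_{_{Op}}$ through $p^{1,*}$, obtaining the gap $\bigl(\dbar_\sMe-\alpha_\sMe(C_\sMe+C_\sS)-4\sqrt{\alpha_\sMe O_\sMe}\bigr)^2/(8\alpha_\sMe)$ below the Bp value.
\end{itemize}
Its Sh and E$\ell$ comparisons are the same as yours.

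\textbf{How yours differs.} You observe that I$\ell$, Mp and Op are all attained on $\{p=\pmax\}$ or on $\{q=\theta(p)\}$, and these sets stay a positive distance from the unique maximizer of the decoupled $\varepsilon=0$ problem. This avoids tracking the indicator switches in $q^*_{_{I\ell}}$ and the three-way minimum in $p^{1,*}$. The price is a Berge-type upper-semicontinuity step that you only sketch. It does go through: for small $\varepsilon$ one has $\pmax<p_{sw}$, so $\theta$ is its affine first branch, and $U_\sV$, $\theta$, $\pmax$ are jointly continuous in $(p,q,\varepsilon)$ on a fixed compact set.

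Your explicit interiority check of $(p^*_{co},q^*_{co})$ supplies a step the paper leaves implicit. The paper simply substitutes $(p^*_{co},q^*_{co})$ into ${\cal U}$ without verifying the indicator in Theorem~\ref{Thm_all_in_one}. That check is what Theorem~\ref{thm_Fco_positive}(i) and the ``strictly positive profits'' claim actually need.

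\textbf{Two corrections.}

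First, your substitution does not give the displayed limit of $q^*_{co}$. Since $w_2\to0$ and $w_3=-\alpha_\sMe/2$, you get $q^*_{co}\to w_5/\alpha_\sMe=\bigl(\dbar_\sMe-\alpha_\sMe(C_\sMe-C_\sS)\bigr)/(2\alpha_\sMe)$, the supplier's standard double-marginalization wholesale price. The statement, and the paper's proof, carry a sign slip in $C_\sS$. Your inequality in (d) is in fact the correct one for this true limit. With the displayed value it would have read $\dbar_\sMe-\alpha_\sMe(C_\sMe-C_\sS)>4\sqrt{\alpha_\sMe O_\sMe}$.

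Second, your boundary caveat cannot be removed by a perturbation argument. While $p^*_{co}<p_{sw}$ one has exactly
$2\alpha_\sMe\bigl(\theta(p^*_{co})-q^*_{co}\bigr)=\bigl[\dbar_\sMe-\alpha_\sMe(C_\sMe+C_\sS)-4\sqrt{\alpha_\sMe O_\sMe}\bigr]+\varepsilon\bigl[(\alpha_\sMi-\alpha_\sMe)p^*_{co}+\alpha_\sMe(C_\sMi+C_\sS)\bigr]$.
When the first bracket vanishes, the second can be negative, for example when $\alpha_\sMe\gg\alpha_\sMi$ and $\dbar_\sMi/\alpha_\sMi$ is large. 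In that case the constrained maximizer lies on $q=\theta(p)$ for all small $\varepsilon>0$, the out-house earns zero, and the theorem's conclusion fails.

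The paper tacitly needs the same strictness, since it asserts its final perfect square is positive. Both arguments also need $O_\sS>0$ for the Sh comparison. You should state the strict inequality as a hypothesis rather than treat it as optional.
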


Thus when the product 
is not sufficiently essential and   when the customer loyalty 
is high (when the consumers are loyal towards the preferred brands and would only purchase when the prices are not too high),  
 coalition $\V$ finds Bp as the optimal strategy---where both the units operate profitably. High customer loyalty towards individual manufacturers ensures that the coalition $\V$ finds it neither beneficial to shut down a unit nor optimal to choke one of them  (i.e., operate in  I\(\ell \) or Op regimes). Interestingly, this is true irrespective of the fallback rate $r$.

We now derive the optimal configuration for  the HE scenario (the proof is in \ref{ref_thm_5}).
\begin{theorem}\label{thm_all_r} {\bf [Optimal  in HE scenario]}
Assume {\bf A.1--A.2} and consider that $\alpha_\sMi \ne \alpha_\sMe$ 
    Fix all system parameters except 
$\varepsilon$ and $r$. 
There exists a threshold $\bar r$ and another threshold $\epsilon_r < 1$ for each $r \ne \bar r$,  such that the following are the optimal choices for  coalition $\V$:
\begin{itemize}
    \item [i)] for  $r < \bar r$ and $\varepsilon \in (\epsilon_r, 1] $, \underline{the optimal regimes are Mp-Bp or Op}  depending upon the validity of ~\eqref{Eqn_cond_forat_max}, and with optimal pair as in Lemma~\ref{lem_compr}. 

    \item[ii)] for  $r > \bar r$ and $\varepsilon \in (\epsilon_r, 1] $, \underline{Sh} is the optimal regime with optimal pair as in Lemma \ref{lem_shut}, if the single existence score $\Sse $, defined below is positive:

\vspace{-3mm}
{\small\begin{eqnarray}
\Sse (r) := 
 \alpha_\sMi \left(\dbar_\sMe +  \dbar_\sMi - \alpha_\sMe(1-r)(C_\sMe+ C_\sS)\right)^2 - 8 \alpha_\sMe \alpha_\sMi (1-r) O_\sMi\hspace{-93mm} \nonumber \\ 
& &  - 2\alpha_\sMe  \left(\dbar_\sMi + \dbar_\sMe - \alpha_\sMi(1-r)(C_\sMi+ C_\sS)\right)^2  
 > 0, \hspace{10mm}\label{eqn_ineq_elim_shut}
\end{eqnarray}}%
 else and if the score $\Sse(r)$ is strictly negative then  \underline{E$\ell$} is the  optimal regime with optimal price \eqref{eqn_opt_price_el}.
\end{itemize}
\end{theorem}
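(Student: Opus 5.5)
The plan is to reduce everything to a comparison of limiting values at $\varepsilon = 1$ and then carry strict inequalities back to $\varepsilon$ near $1$ by continuity. The overall optimum is $U^*_{\sV} = \max\{U^*_{_{Sh}}, U^*_{_{E\ell}}, U^*_{co}\}$, as displayed at the start of Section~\ref{sec_overall_comp}. For the co-existence part I will invoke Lemma~\ref{lem_compr}: for $\varepsilon \ge \bar\varepsilon$, $U^*_{co}$ equals $U_\sV(\pmax,\theta(\pmax))$ when $\Sco \ge 0$ and $U_\sV(\pmax,h(\pmax))$ when $\Sco < 0$. Both are explicit rational functions of $\varepsilon$ (using \eqref{eqn_r_max_simplified} and $h$). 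Hence $U^*_{co}(\varepsilon)$ is continuous on $[\bar\varepsilon,1]$ with a finite limit $U^*_{co}(1)$, and this limit does not depend on $r$, since the co-existence utilities never involve the fallback rate.

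For the single-existence regimes, Lemmas~\ref{lem_shut} and \ref{lem_elim} give closed forms that are continuous in $(\varepsilon,r)$ as long as $r\varepsilon<1$. At $\varepsilon=1$ both have the form $g(x)=(A-Bx)^2/(cx)-O$ with $x = 1-r$:
\begin{itemize}
\item for Sh, $A=\dbar_\sMe+\dbar_\sMi$, $B=\alpha_\sMe(C_\sMe+C_\sS)$, $c=8\alpha_\sMe$;
\item for E$\ell$, $A=\dbar_\sMi+\dbar_\sMe$, $B=\alpha_\sMi(C_\sMi+C_\sS)$, $c=4\alpha_\sMi$.
\end{itemize}
Under {\bf A.1} we have $A>Bx$ for all $x\in(0,1]$, and then $g'(x)=-(A-Bx)(A+Bx)/(cx^2)<0$. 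So both limits are strictly increasing in $r$ and diverge to $+\infty$ as $r\to1$. Therefore $G(r):=\max\{U^*_{_{Sh}}(1,r),U^*_{_{E\ell}}(1,r)\}$ is continuous, strictly increasing, and unbounded. I define $\bar r$ as the unique crossing point $G(\bar r)=U^*_{co}(1)$, with $\bar r:=0$ if $G(0)\ge U^*_{co}(1)$. Then $U^*_{co}(1)>G(r)$ for $r<\bar r$ and $U^*_{co}(1)<G(r)$ for $r>\bar r$.

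For each fixed $r\neq\bar r$, the relevant inequality is strict at $\varepsilon=1$, so joint continuity in $\varepsilon$ gives an $\epsilon_r\ge\bar\varepsilon$ beyond which it persists. This yields part (i) directly, with the optimal pair and the Op/Mp-Bp split inherited from Lemma~\ref{lem_compr}. For part (ii), it remains to decide between Sh and E$\ell$. I will compute $U^*_{_{Sh}}-U^*_{_{E\ell}}$ at $\varepsilon=1$ and multiply by the positive factor $8\alpha_\sMe\alpha_\sMi(1-r)$; the common $O_\sS$ terms cancel. The result should reduce to the expression $\Sse(r)$ of \eqref{eqn_ineq_elim_shut}, and the operating-cost term must be checked carefully against the stated sign convention. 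When $\Sse(r)\neq0$ the strict sign again persists for $\varepsilon$ near $1$ by continuity, possibly after enlarging $\epsilon_r$.

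The step I expect to be the main obstacle is the uniformity behind Lemma~\ref{lem_compr}. That lemma already asserts optimality within co-existence only for $\varepsilon\ge\bar\varepsilon$. I must make sure the limits used there (for instance, which branch of \eqref{eqn_r_max_simplified} is active, and whether $p_{sw}<\pmax$) stabilise, so that $U^*_{co}$ is genuinely continuous up to $\varepsilon=1$ rather than only having a one-sided limit. I would first verify that for $\varepsilon$ near $1$ we have $\varepsilon^2\dbar_\sMe\ge\sqrt{\alpha_\sMe O_\sMe}$, using {\bf A.1}, so that only the second branch is relevant. A secondary issue is the degenerate case $\Sco=0$, where the two co-existence candidates coincide; there the value is still continuous, so the argument is unaffected.
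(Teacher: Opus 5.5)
Your argument follows the paper's proof. It takes limits as $\varepsilon\to1$ with co-existence reduced to Op/Mp-Bp via Lemma~\ref{lem_compr}, uses limits $\Omega_{_{Sh}}(r),\Omega_{_{E\ell}}(r)$ that are strictly increasing and unbounded in $r$, places a crossing threshold $\bar r$, and extends by continuity in $\varepsilon$ for each fixed $r\neq\bar r$. Defining $\bar r$ through $G=\max\{\Omega_{_{Sh}},\Omega_{_{E\ell}}\}$ is actually the right choice. The paper's display \eqref{Eqn_comparison} uses $\min$, and $\lim_{\varepsilon\to1}U^*_{co}>\min\{\cdot\}$ would not by itself give Case~1.

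The step that does not go through as written is ``the result should reduce to $\Sse(r)$''. Set $X:=\dbar_\sMe+\dbar_\sMi-\alpha_\sMe(1-r)(C_\sMe+C_\sS)$ and $Y:=\dbar_\sMi+\dbar_\sMe-\alpha_\sMi(1-r)(C_\sMi+C_\sS)$. Carrying out the computation gives
\[
8\alpha_\sMe\alpha_\sMi(1-r)\bigl(\Omega_{_{Sh}}(r)-\Omega_{_{E\ell}}(r)\bigr)=\alpha_\sMi X^2-2\alpha_\sMe Y^2+8\alpha_\sMe\alpha_\sMi(1-r)O_\sMi=\Sse(r)+16\alpha_\sMe\alpha_\sMi(1-r)O_\sMi .
\]
The plus sign arises because E$\ell$ pays $O_\sMi$ and Sh does not.

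So the $O_\sMi$ term in \eqref{eqn_ineq_elim_shut} should carry a plus sign. That is also the only sign under which the paper's own Superior Out-house remark makes sense: it concludes $\Sse>0$ because $8\alpha_\sMe\alpha_\sMi(1-r)O_\sMi>0$. The paper's proof merely asserts that the difference is given by \eqref{eqn_ineq_elim_shut}.

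With the printed sign, the Sh half of (ii) still follows from your argument, since a positive printed score forces the corrected score to be positive. The E$\ell$ half, however, cannot be proved. Whenever $-16\alpha_\sMe\alpha_\sMi(1-r)O_\sMi<\Sse(r)<0$, Sh strictly beats E$\ell$ for $\varepsilon$ near $1$. Such cases are admissible: for example, take $\alpha_\sMi=2\alpha_\sMe$, $C_\sMe+C_\sS=2(C_\sMi+C_\sS)$ and $r$ close to $1$.

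State part (ii) with the corrected score $\alpha_\sMi X^2-2\alpha_\sMe Y^2+8\alpha_\sMe\alpha_\sMi(1-r)O_\sMi$; with that change your proof goes through.
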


The above result provides theoretical insights into optimal choice  of $\V$ when  $\varepsilon \approx 1$, see also Figure \ref{fig:theorem-regime}. 
One can further evaluate the two scores $\Sco$ and $\Sse$ analytically, to exactly determine the
the optimal configuration, 
for  two interesting case studies, which we consider in the immediate next.

\subsubsection{Superior In-house}

Consider a scenario where the in-house is superior both in terms of price-sensitivity and production technology:
 $\alpha_\sMi < \alpha_\sMe$ and $C_\sMi < C_\sMe$.  Then
$\alpha_\sMi (C_\sMi+ C_\sS) < \alpha_\sMe (C_\sMe+ C_\sS)$ and so
$
\alpha_\sMi \left(\dbar_\sMe+\dbar_\sMi-\alpha_\sMe(1-r)(C_\sMe+C_\sS)\right)^2
 $ is  strictly less than $
\alpha_\sMe\left(\dbar_\sMi+\dbar_\sMe-\alpha_\sMi(1-r)(C_\sMi+C_\sS)\right)^2
$, implying the score $\Sse$ is negative in \eqref{eqn_ineq_elim_shut}. 
Further $\Sco$ of  \eqref{Eqn_cond_forat_max} is positive,  using {\bf A}.1:

\vspace{-4mm}
{\small\begin{eqnarray*}
\dbar_\sMi+\dbar_\sMe-\alpha_\sMi(1-r)(C_\sMi+C_\sS)
-\sqrt{8\alpha_\sM(1-r)O_\sMi}  \hspace{-87mm}
\\
&=&\hspace{-2mm} 
\dbar_\sMi-\alpha_\sMi(1-r)(C_\sMi+C_\sS)
-\hspace{-1mm}\sqrt{4\alpha_\sM(1-r)O_\sMi}
+\dbar_\sMe
-(\hspace{-1mm}\sqrt8-\hspace{-1mm}\sqrt4)\sqrt{\alpha_\sM(1-r)O_\sMi}
\\
&\ge&
0+\dbar_\sMe
-(\sqrt8-\sqrt4)\sqrt{\alpha_\sMe(1-r)O_\sMi}>0.
\end{eqnarray*}}

Hence  \textit{E$\ell$ regime is optimal for
$r\ge\bar r$ and  Op is optimal when
$r<\bar r$.}
In other words, 
  when in-house unit is superior to  out-house solely in terms of price sensitivity and
production cost,  it is optimal for the coalition to keep the out-house manufacturer at
par when the fallback rate is low (not many customers of a  \manu \  exiting the market are interested in buying from the other) and eliminate  the downstream competition once
the fallback rate exceeds $\bar r$---this is true irrespective of the market potentials of the two production
units.

\vspace{2mm}

\subsubsection{Superior Out-house}
We next consider the scenario  where the out-house is superior  with,  $2\alpha_\sMe<\alpha_\sMi$ and $C_\sMe<C_\sMi$. Then
$
\alpha_\sMi(C_\sMi+C_\sS)>
\alpha_\sMe(C_\sMe+C_\sS),
$
and so
\vspace{-3mm}
{\small
\begin{eqnarray}
\alpha_\sMi
\left(\dbar_\sMe+\dbar_\sMi
-\alpha_\sMe(1-r)(C_\sMe+C_\sS)\right)^2
\nonumber\\
>
2\alpha_\sMe
\left(\dbar_\sMi+\dbar_\sMe
-\alpha_\sMi(1-r)(C_\sMi+C_\sS)\right)^2.
\end{eqnarray}
}%
Clearly, from \eqref{eqn_ineq_elim_shut}, $\Sse > 0$,  as
$8\alpha_\sMe\alpha_\sM(1-r)O_\sM>0$. Thus in scenarios with higher fallback rates, the Sh regime is optimal. 
However, the sign of $\Sco$ of \eqref{Eqn_cond_forat_max}  depends upon the relative values of the third characteristic of the manufacturers' market strengths, the market potentials $\dbar_\sMi$ and $\dbar_\sMe$---the score  $\Sco >0$  only if $\dbar_\sMe $ is sufficiently bigger than $\dbar_\sMi$.
In other words, in scenarios with smaller fallback rates, Op regime  is optimal if 
$\dbar_\sMe $ is sufficiently large, else the coalition finds it optimal to let the out-house operate profitably. 

Thus the surprising negative dependency  of the optimal configuration on the market potential of the out-house, among the co-existence scenarios, discussed immediately after Lemma \ref{lem_compr},  continues to  hold even for the overall optimal choice, albeit only when the fallback rates are small. More precisely, the coalition facing a strong out-house (in all three market characteristics)  will find it beneficial to shut-down its in-house unit, only if the fallback rate  is higher;  however will force the mighty out-house to operate at par when the fallback rate is small---and such a choice becomes optimal as the out-house market potential  $\dbar_\sMe$ increases beyond a threshold.

\ignore{
\newcommand{\Ces}{CE_{score}}
\newcommand{\Ses}{SE_{score}}
 Define the following from \eqref{Eqn_cond_forat_max} and \eqref{eqn_ineq_elim_shut} :
 \begin{eqnarray}
 \Ces &=&  \frac{ \alpha_\sMe - \alpha_\sMi}{\alpha_\sMi}\dbar_\sMi
+ \frac{2\alpha_\sMi + \alpha_\sMe}{\alpha_\sMi}\dbar_\sMe
+   \alpha_\sMe(C_\sMe - C_\sMi), \label{eqn_ce_score} \\
\Ses &=& \
\alpha_\sMi \left(\dbar_\sMe +  \dbar_\sMi  - \alpha_\sMe(1-r)(C_\sMe+ C_\sS)\right)^2 \nonumber \\ &-& 2\alpha_\sMe  \left(\dbar_\sMi + \dbar_\sMe - \alpha_\sMi(1-r)(C_\sMi+ C_\sS)\right)^2  
+ 8 \alpha_\sMe \alpha_\sMi (1-r) O_\sMi. \hspace{10mm} \label{eqn_se_score}
 \end{eqnarray}
 In all, Theorem \ref{thm_all_r}  identifies a threshold value of the fallback rate $r$ that clearly separates the strategic regime into co-existence and single-existence regimes. 
 When $r$ lies below this threshold, the coalition finds it more profitable to operate in the co-existence regime -- the sub-regime is decided based on $\Ces$ in \eqref{eqn_ce_score}. If $\Ces > 0$, coalition $\V$ finds it optimal to operate in Op regime and push the out-house manufacturer at break-even and if $\Ces < 0$, the coalition finds it optimal to operate in Mp-Bp regime and both units operate profitably.  Conversely, when $r$ exceeds the
threshold,single-existence strategy become the optimal choice --  the sub-regime is decided based on $\Ses$ in \eqref{eqn_se_score}. If $\Ses > 0$, coalition $\V$ finds it optimal to operate in Sh regime shutting down it's own in-house unit and if $\Ses < 0$, coalition $\V$ finds it optimal to operate in the E$\ell$ regime by eliminating the downstream competition.}

 \ignore{
The  condition \eqref{eqn_overall} is satisfied if the following holds
\begin{eqnarray}\label{eqn_dbar_m_me}
    \dbar_\sM < \left(\frac{(2\eta + 3) + \sqrt{4\eta^2 + 16\eta + 25}}{2}\right)\dbar_\sMe \mbox{, where, } \eta = \nicefrac{4\alpha_\sMe}{\alpha_\sM}
\end{eqnarray}
For example, the above is true if 
$
 \dbar_\sM < \left( 6\nicefrac{\alpha_\sMe}{\alpha_\sM} + 4\right)\dbar_\sMe
$ --- 
in most of the practical systems, the market potentials of the two manufacturers would not be drastically different and in such cases, Theorem \ref{thm_all_r} is applicable. One can derive similar results for other conditions in an analogous manner. } 

\subsubsection{Main Takeaways} 
Thus the following is the summary of our theoretical results.

\begin{itemize}
    \item[i)] If  we are dealing with a luxury product (with $\varepsilon $ close to zero), then the coalition finds it beneficial to operate profitably with optimal prices given by $p_{co}^*$, $q_{co}^*$ of Theorem \ref{thm_coex}.

     \item[ii)] On the other extreme if the product is essential, the optimal regime is determined by the sign of the two scores $\Sco$ and $\Sse$ defined respectively in \eqref{Eqn_cond_forat_max} and \eqref{eqn_ineq_elim_shut} and the fall back rate $r$ as in Theorem \ref{thm_all_r}. 

     \item[iii)]  For the remaining parameter space, the optimal regime can be obtained
numerically by evaluating and comparing  all the sub-optimal configurations in various sub-regimes, using the results  derived in the previous sections.
This numerical  procedure  is summarized
in Algorithm \ref{alg:figure4}. 

\begin{algorithm} 
\caption{To compute the optimal-choice of coalition~$\V$ }
\label{alg:figure4}
\small
\KwIn{Model parameters
$\dbar_\sM,\dbar_\sMe,
\alpha_\sM,\alpha_\sMe,
C_\sM,C_\sMe,C_\sS,
O_\sM,O_\sMe,O_\sS$, $\varepsilon$, $r$}

{

\Indp

1. Compute $(p^*_{co}, q^*_{co})$ using \eqref{Eqn_pco_qco}.

Check the validity of interior condition of Theorem \ref{thm_Fco_positive}, by verifying if the following inequalities are satisfied, using\eqref{Eqn_feasible_Regioin_Mj}, \eqref{Eqn_phi_st_p}-\eqref{Eqn_Fco_plus}: 
\begin{eqnarray*}
q_{co}^* &<& \min \{ \theta(p_{co}^*), \phi(p_{co}^*) \} , \  q_{co}^* > 0, \\  p_{co}^* &<& \min \{ \pmax, \psi(q_{co}^* ) \}
\mbox{ and }  p_{co}^* > 0, \mbox{ with } \pmax = \frac{\dbar_\sMi + \varepsilon \dbar_\sMe}{\alpha_\sMi}.
\end{eqnarray*}

If the above conditions are satisfied
set,   

$\qquad U^*_{co} = U(p_{co}^*, q_{co}^*)$; 

else set 

$\qquad  U^*_{co}  = - B$, where $B $ is a  positive value to indicate infeasibility. \\

\medskip
2.For the Op regime, compute the optimal utility $U_{_{Op}}^*$ using \eqref{eqn_u2}, \eqref{Eqn_p2_star}-\eqref{Eqn_opt_util_par}.\\

\medskip
3. Compute the optimal utility in Mp regime  
 using \eqref{Eqn_psi_inv_pmax}-\eqref{Eqn_opt_max_util}:\\
 if $l_{_{Mp}}  <  r_{_{Mp}}$, (see Theorem \ref{Thm_all_in_one}), set  
 
 $\qquad
 U_{_{Mp}}^* = U_\sV(\pmax, q^{*}(\pmax))$;
\\
 else set 
 
 $\qquad U_{_{Mp}}^* = - B$,  to indicate infeasibility.\\

\medskip
4. Compute the optimal utility  in $I\ell$ regime using \eqref{eqn_tilde_q_star}-\eqref{Eqn_opt_loss_util}:\\
 if $\psi(0) < \pmax$ (see Theorem \ref{Thm_all_in_one}), set 
 
 $\qquad
 U_{_{I\ell}}^*  = U_\sV(\pmax, q^*_{_{I\ell}})$;
 \\
 else set
 
 $\qquad U_{_{I\ell}}^* = -B$, to indicate infeasibility.\\
 
\medskip
5. Compute the optimal utilities
$U_{_{Sh}}^*$ and $U_{_{E\ell}}^*$ in Sh and E$\ell$ regimes respectively using Lemmas \ref{lem_shut}-\ref{lem_elim}.\\

\medskip

6. Obtain the  optimal value
$
U_{\sV}^*
=
\max
\left\{
U_{co}^* ,
U_{_{Op}}^* ,
U_{_{Mp}}^* ,
U_{_{I\ell}}^* ,
U_{_{Sh}}^* ,
U_{_{E\ell}}^*
\right\}$,   and  find the optimal regime.  

\ If Mp is optimal  

$\qquad$ if $r_{_{Mp}} = \theta(\pmax)$ 

$\qquad$ $\qquad$ then reassign the optimal regime as Op

$\qquad$ else

$\qquad$ $\qquad$ then reassign the optimal regime as Bp. 

 \Indm

}

\end{algorithm}

\end{itemize}

The overall theoretically derived optimal strategy of the coalition is shown in Figure \ref{fig:theorem-regime}.
Now we move to numerical examples  starting with the optimal regime characterization using Algorithm \ref{alg:figure4} to obtain the complete picture .

\begin{figure}[H]
\centering
\begin{tikzpicture}[scale=4]

\draw[->] (0,0) -- (1.05,0) node[right] {$\epsilon$};
\draw[->] (0,0) -- (0,1.05) node[above] {$r$};

\def\epslow{0.35}
\def\epshigh{0.65}
\def\rbar{0.5}

\fill[blue!25] (0,0) rectangle (\epslow,1);
\node at (0.18,0.5) {\textbf{Bp}};

\fill[green!25] (\epshigh,0) rectangle (1,\rbar);
\node at (0.83,0.25) {\textbf{Op/Bp}};

\fill[red!25] (\epshigh,\rbar) rectangle (1,1);
\node at (0.83,0.75) {\textbf{Sh/E$\ell$}};

\draw[dashed] (\epslow,0) -- (\epslow,1);
\draw[dashed] (\epshigh,0) -- (\epshigh,1);
\draw[dashed] (\epshigh,\rbar) -- (1,\rbar);

\node[below] at (\epslow,0) {$\bar{\epsilon}$};
\node[below] at (\epshigh,0) {$\tilde{\epsilon}$};
\node[left] at (0,\rbar) {$\bar r$};

\end{tikzpicture}
\caption{Optimal regimes for the  Coalition $\V$ in the $(\epsilon,r)$ plane.}
\label{fig:theorem-regime}
\end{figure}

\ignore{
{\color{blue} At $\varepsilon \to 1 $  and $r \to 0$, compute the following quantities:
\begin{eqnarray*}
\lim_{\varepsilon \to 1}\lim_{r \to 0}U^{*}_{shut} &\to&
\frac{\left(\dbar_\sMe +\dbar_\sMi - \alpha_\sMe(C_\sMe+ C_\sS)\right)^2}{8\alpha_\sMe} - O_\sS \\
 \lim_{\varepsilon \to 1}\lim_{r \to 0}  U_{elim}^{*} &\to& \frac{\left(\dbar_\sMi+ \dbar_\sMe 
 -\alpha_\sMi(C_\sMi+ C_\sS)\right)^2}{4\alpha_\sMi}- O_\sMi - O_\sS\\ 
  h(\pmax) &\to& \frac{ (\frac{\alpha_\sMi + \alpha_\sMe }{2})(\frac{\dbar_\sMi+ \dbar_\sMe}{\alpha_\sMi}) -\frac{\alpha_\sMe(C_\sMi + C_\sMe)}{2} + \frac{\dbar_\sMe}{2}}{\alpha_\sMe} \\
  \pmax  &\to&  \frac{\dbar_\sM + \dbar_\sMe}{\alpha_\sM}\\
  {\pe}_{_{mx}}  &\to& \frac{\dbar_\sM + \dbar_\sMe}{\alpha_\sMe}\\
  U_\sV(\pmax,q) &=&  \left ( \dbar_\sMi -\alpha_\sMi \pmax + \varepsilon\alpha_\sMe {\tilde p}_{max}^* \right ) (\pmax - C_\sMi - C_\sS) \nonumber \\
    && \hspace{-13mm}+ \left ( \dbar_\sMe -\alpha_\sMe {\tilde p}_{max}^* + \varepsilon\alpha_\sMi \pmax \right ) (q- C_\sS)  - O_\sMi - O_\sS\\
   \mbox{ when } &&  (\alpha_\sMe - \alpha_\sMi)\dbar_\sMe < \alpha_\sMi \alpha_\sMe (C_\sMi- C_\sMe) + \alpha_\sMi \dbar_\sMi\mbox{ and with } \epsilon \to 1,
   \\ 
U^{*}_{par} \to  U_{\sV}(\pmax,\theta_{\pmax}) &\to& -O_\sMi - O_\sS \nonumber\\ &+&\frac{\dbar_\sMi(\dbar_\sMi +\dbar_\sMe - \alpha_\sMi(C_\sMi +C_\sS)}{\alpha_\sMi} + \frac{\dbar_\sMe(\dbar_\sMi +\dbar_\sMe - \alpha_\sMe(C_\sMe +C_\sS)}{\alpha_\sMe} \hspace{8mm}
\end{eqnarray*}
As $\varepsilon \to 1$, $q = h(\pmax)$ and also ${\tilde p}_{max}^*  =  {\pe}_{_{mx}}$
\begin{eqnarray*}
    U_\sV (\pmax, h(\pmax)) &\to& ( \dbar_\sM - \alpha_\sM \pmax +  \alpha_\sMe  {\pe}_{_{mx}} )  ( \pmax - C_\sS - C_\sM ) \\
    &+&   ( \dbar_\sMe + \alpha_\sM \pmax -  \alpha_\sMe  {\pe}_{_{mx}} )  ( h(\pmax) - C_\sS ) - O_\sMi - O_\sS.\\
    U_\sV (\pmax, h(\pmax)) &\to&  (\dbar_\sM)\left(\frac{\dbar_\sM +\dbar_\sMe -\alpha_\sM(C_\sS + C_\sM)}{\alpha_\sM}\right)\\
    &+& 2 (\dbar_\sMe)\left(  \frac{ (\frac{\alpha_\sMi + \alpha_\sMe }{2})(\frac{\dbar_\sMi+ \dbar_\sMe}{\alpha_\sMi}) -\frac{\alpha_\sMe(C_\sMi + C_\sMe)}{2} + \frac{\dbar_\sMe}{2} -\alpha_\sMe C_\sS}{2\alpha_\sMe}\right) - O_\sMi - O_\sS.
\end{eqnarray*}

Now just consider the following term and observe:
\begin{eqnarray*}
   ( \alpha_\sMi + \alpha_\sMe  )(\frac{\dbar_\sMi+ \dbar_\sMe}{\alpha_\sMi}) - \alpha_\sMe(C_\sMi + C_\sMe)  + \ \dbar_\sMe - 2 \alpha_\sMe C_\sS \hspace{-60mm} \\
   & = & \dbar_\sMi+ \dbar_\sMe - \alpha_\sMe(C_\sMe + C_\sS) + \frac{\alpha_\sMe}{\alpha_\sM} \left ( \dbar_\sMi+ \dbar_\sMe - \alpha_\sM (C_\sMi+  C_\sS ) \right ) +  \dbar_\sMe \\
   &> & \dbar_\sMi+ \dbar_\sMe - \alpha_\sMe(C_\sMe + C_\sS)
\end{eqnarray*}

Finally,
\begin{eqnarray*}
    U_\sV (\pmax, h(\pmax)) &\to& \frac{  \dbar_\sM\left(\dbar_\sM +\dbar_\sMe -\alpha_\sM(C_\sS + C_\sM) \right)}{  \alpha_\sM}\\ &+& \dbar_\sMe \left( \frac{\dbar_\sMi+ 2\dbar_\sMe - \alpha_\sMe(C_\sMe + C_\sS) }{2\alpha_\sMe}\right ) + \dbar_\sMe  \left ( \frac{  \left ( \dbar_\sMi+ \dbar_\sMe - \alpha_\sM (C_\sMi+  C_\sS ) \right )}{2\alpha_\sM  } \right)   \\
    &=& 
    \frac{  (2\dbar_\sM+ \dbar_\sMe) \left(\dbar_\sM +\dbar_\sMe -\alpha_\sM(C_\sS + C_\sM) \right)}{2  \alpha_\sM} + \dbar_\sMe \left( \frac{\dbar_\sMi+ 2\dbar_\sMe - \alpha_\sMe(C_\sMe + C_\sS) }{2\alpha_\sMe}\right )\\
    &-& O_\sM - O_\sS 
\end{eqnarray*}

As $\epsilon \to 1$ and $r \to 0$, we have $ U_\sV (\pmax, h(\pmax))  \ge U^*_{elim}$.
Now we need to compare $U_\sV  (\pmax, h(\pmax))$ and $U^{*}_{shut}$.
Observe that 
\begin{eqnarray*}
 U_\sV  (\pmax, h(\pmax))- U^{*}_{shut} &=&  \frac{  (2\dbar_\sM+ \dbar_\sMe) \left(\dbar_\sM +\dbar_\sMe -\alpha_\sM(C_\sS + C_\sM) \right)}{2  \alpha_\sM} + \dbar_\sMe \left( \frac{\dbar_\sMi+ \dbar_\sMe - \alpha_\sMe(C_\sMe + C_\sS) }{2\alpha_\sMe}\right )\\
    &+& \frac{\dbar_\sMe^2}{2\alpha_\sMe} -\frac{\left(\dbar_\sMe +\dbar_\sMi - \alpha_\sMe(C_\sMe+ C_\sS)\right)^2}{8\alpha_\sMe} - O_\sM\\
    &=& \frac{  (2\dbar_\sM+ \dbar_\sMe) \left(\dbar_\sM +\dbar_\sMe -\alpha_\sM(C_\sS + C_\sM) \right)}{2  \alpha_\sM}\\ &+& \left( \dbar_\sMi+ \dbar_\sMe - \alpha_\sMe(C_\sMe + C_\sS)\right)\left(\frac{\dbar_\sMe}{2\alpha_\sMe} - \frac{( \dbar_\sMi+ \dbar_\sMe - \alpha_\sMe(C_\sMe + C_\sS))}{8\alpha_\sMe}\right)\\
    &+& \frac{\dbar_\sMe^2}{2\alpha_\sMe} - O_\sM.\\
     &=& \frac{  (2\dbar_\sM+ \dbar_\sMe) \left(\dbar_\sM +\dbar_\sMe -\alpha_\sM(C_\sS + C_\sM) \right)}{2  \alpha_\sM}\\ &+&\left(\frac{ ( \dbar_\sMi+ \dbar_\sMe - \alpha_\sMe(C_\sMe + C_\sS))(3\dbar_\sMe -\dbar_\sM + \alpha_\sMe(C_\sMe + C_\sS) + 4\dbar_\sMe^2}{8\alpha_\sMe}\right)\\
    &-&   O_\sM.\\
\end{eqnarray*}
Now we need to compare $U_\sV  (\pmax, h(\pmax))$ and $U^{*}_{shut}$. Now under the conditions of Lemma \ref{lem_compr}, we have $U^{*}_{pr} = U^{*}(\pmax, \theta(\pmax))$.}

\ignore{
\sout{if $8 \dbar_\sM >  \dbar_\sMe - \alpha_\sMe (C_\sMe + C_\sS) $}

Expand it
\begin{eqnarray*}
    8 \alpha_\sMe (\dbar_\sM)\left( \dbar_\sM +\dbar_\sMe -\alpha_\sM(C_\sS + C_\sM) \right) - \alpha_\sM \left(\dbar_\sMe +\dbar_\sMi - \alpha_\sMe(C_\sMe+ C_\sS)\right)^2 - 8\alpha_\sM \alpha_\sMe O_\sM  > 0 \\
    8\alpha_\sMe\dbar_\sM\left(\dbar_\sM + \dbar_\sMe -\alpha_\sM(C_\sM+ C_\sS)  \right) > \alpha_\sM\left( (\dbar_\sM + \dbar_\sMe -\alpha_\sMe(C_\sMe + C_\sS)^2 + 8\alpha_\sMe O_\sM \right)
\end{eqnarray*}}

}

\section{Numerical comparison of various regimes}\label{sec_num}
We derived complete analytical insights for some important  corner cases in the previous sections. We also derived closed-form expressions for optimal utilities and the prices in all the possible configurations.  Thus one can numerically compute the optimal  utilities in each sub-regime and then determine the overall optimal configuration for $\V$, for any given set of parameters as in Algorithm \ref{alg:figure4}. 
We consider the same in the immediate next, with an aim to derive more insights into the  problem. 

\subsection{Optimal Regimes with varying $r$ and $\varepsilon$}\label{subsec_num_1}

To validate the analytical results and extract more practical insights, we conduct numerical experiments across a wide range of market environments by varying the \textit{product essentialness parameter} $(\varepsilon)$ and the \textit{secondary fallback rate} $(r)$. We also consider experiments with varying potentials.  These numerical examples illustrate how customer loyalty, market potential asymmetries, and fallback behavior jointly determine the supplier’s optimal strategy among  Bp, I$\ell$ ,Op,  E$\ell$ and Sh regimes.
The following parameters are kept fixed, while others are fixed/varied based on the experiment:
\[
C_\sMi = C_\sMe = 4, \quad C_\sS = 3, \quad O_\sMi = O_\sMe = O_\sS = 10.
\]

Our first set of results are provided in 
Figure~\ref{fig_overall} and its five sub-figures. These results summarize the coalition’s optimal regime as a function of $\varepsilon$ and $r$ under five representative market structures---recall the in-house  and  out-house can be compared in terms of all three levers, price-sensitivity $\alpha$ factors, production  $C_{\sM}$ costs and  $\dbar$ the market potentials and our studies are focused on $(\alpha, \dbar)$ for equal production costs---the combinations of these parameters define the five market structures as described below:  
\begin{itemize}
    \item In the first sub-figure, we consider a completely symmetric scenario with $\dbar_\sMi = \dbar_\sMe = 100 $ and $\alpha_\sMi = \alpha_\sMe = 0.1$. 

    \item In the second sub-figure, we consider an inferior in-house scenario with $\dbar_\sMi = 50, $ $ \dbar_\sMe = 100, $ $ \alpha_\sMi = 0.1 $ and $ \alpha_\sMe = 0.01$.

    \item In the third sub-figure, we consider one non-comparable scenario  with $\dbar_\sMi = 10, $ 
    $\dbar_\sMe = 100, $ $ \alpha_\sMi = 0.01$ and $ \alpha_\sMe = 0.1$.

    \item In the fourth sub-figure, we consider yet  another non-comparable  scenario  with $\dbar_\sMi = 100, $ $\dbar_\sMe = 10, $  $ \alpha_\sMi = 0.1 $ and $\alpha_\sMe = 0.03$.

     \item In the fifth sub-figure, we consider a superior in-house scenario  with $\dbar_\sMi = 100, $ $\dbar_\sMe = 50, $ $ \alpha_\sMi = 0.03 $ and $ \alpha_\sMe = 0.1$.
\end{itemize}
 In all the figures and sub-figures  we provide color patches in a two dimensional square, where each color  at any point represents the optimal configuration for  $(r, \varepsilon)$ pair representing the point. The following are the observations:
 \begin{itemize}
     \item In all the sub-figures Bp is the optimal regime (dark blue color) near $\varepsilon$ close to zero  and irrespective of $r$. On the other hand, when    $\varepsilon $ is close to one, either Sh/E$\ell$ is optimal for large values of $r$ or Op/Bp is optimal for smaller $r$. These results match exactly with those in Theorems \ref{thm_coex}-\ref{thm_all_r}.

     \item We also plot the  $\Sse$ score of \eqref{eqn_ineq_elim_shut}, vertically after each figure, to represent the score as a function of $r$; we use the same color convention as the main figures. As can be seen, in all the sub-figures the theoretical prediction via $\Sse$ scores matches exactly  with the numerically computed optimal configuration, when $\varepsilon$ is close to one and $r$ is above a threshold (observe the yellow and orange patches towards the right top corners of the five sub-figures match with the colors of the vertical bars.) 
     We further compute the $\Sco$ score of \eqref{Eqn_cond_forat_max} (see captions of each sub-figure) and observe that the predictions exactly match with optimal configurations of right bottom corners (for smaller $r$ and $\varepsilon$ close to 1). 
     In all, our theoretical estimates are good indicators for both the asymptotic regimes. 

     \item In most of the figures (except for sub-figure \ref{fig:inferior} with inferior in-house), the optimal configurations are one among the asymptotic ones and extend in a natural manner---for example in sub-figure \ref{fig:symmetric} for all values of $(r, \varepsilon)$ the optimal configuration is one among Bp or Op or E$\ell$, the ones predicted by theory in the two asymptotic regimes.  Thus our theoretical predictions through the two scores $\Sco$ and $\Sco$ near $\varepsilon$ close to one  can extend to  sufficiently smaller values of $\varepsilon$, while the Bp can be optimal not just near $\varepsilon$ close to 0, but for some sufficiently big values of $\varepsilon$ too. For exact understanding of the optimal configuration, one can use Algorithm \ref{alg:figure4},

     \item When the in-house is inferior, one can see more dependency of the optimal configuration on values of $(r, \varepsilon)$. This is the interesting case, where the coalition finds it optimal to operate its in-house at losses---observe I$\ell$ (represented by green color) is optimal for intermediate values of $\varepsilon$ and $r$ close to zero.  \textit{In other words, when the in-house is inferior in all aspects,   the product is not so essential and the customers are not  willing to switch easily, the coalition finds it beneficial to operate its in-house at losses.}
     
 \end{itemize}

\begin{figure}[h!]
\centering


\begin{subfigure}[t]{0.4\textwidth}
    \centering
    \includegraphics[width=\linewidth,height=3.2cm]{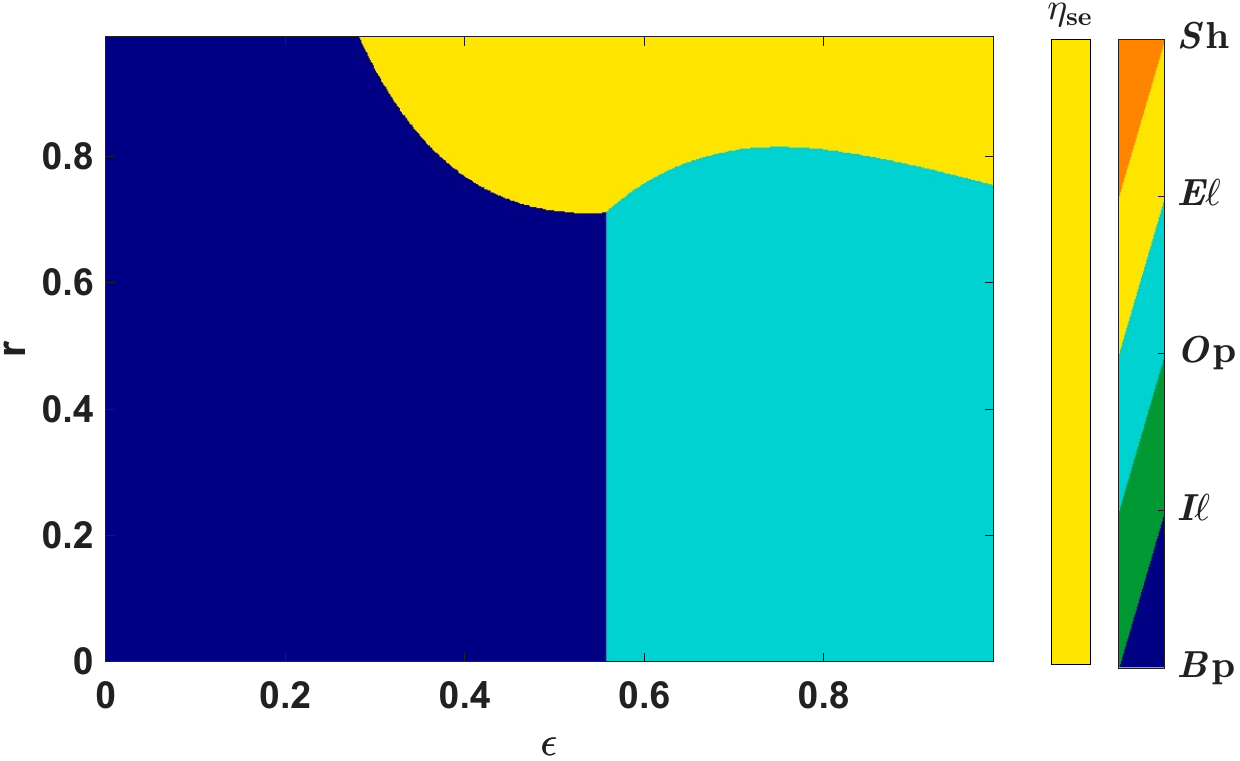}
    \caption{Symmetric ($\Sco = 300,$ so Op)}
    \label{fig:symmetric}
\end{subfigure}
\hfill
\begin{subfigure}[t]{0.4\textwidth}
    \centering
    \includegraphics[width=\linewidth,height=3.2cm]{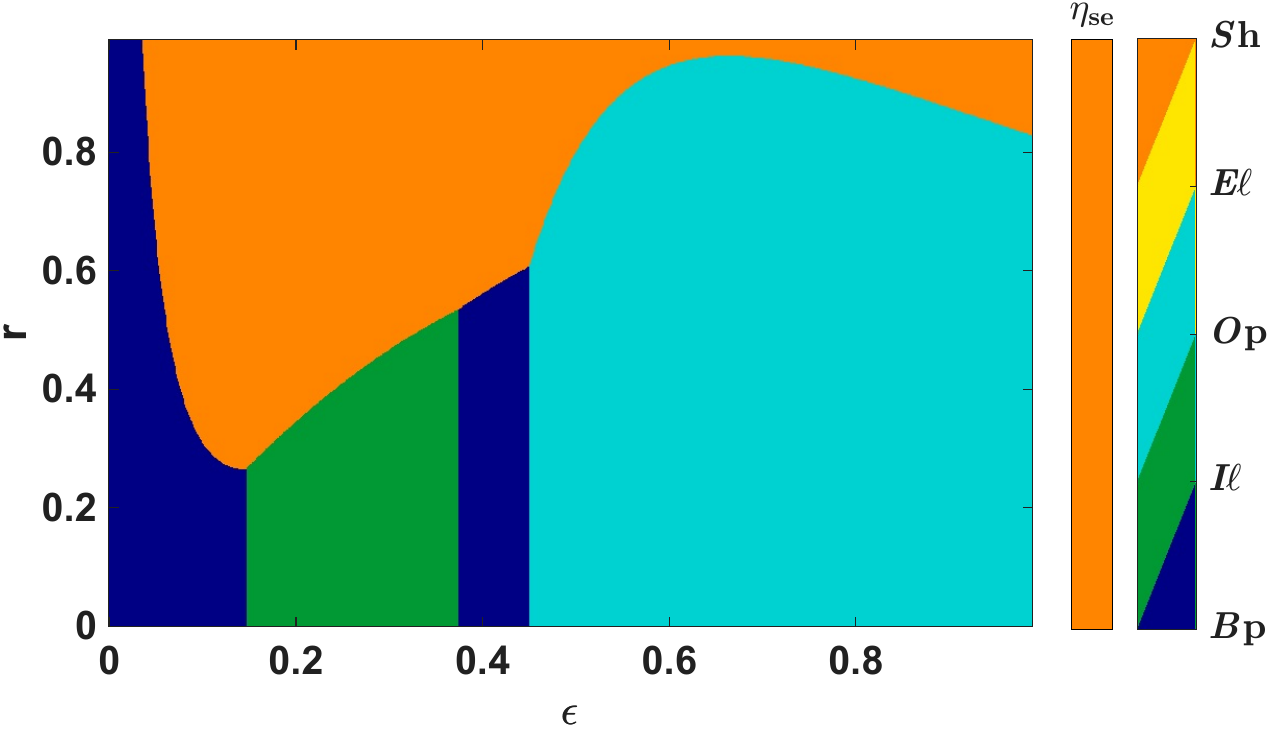}
    \caption{Inferior In-house ($\Sco = 165$)}
    \label{fig:inferior}
\end{subfigure}

\vspace{4mm}


\begin{subfigure}[t]{0.4\textwidth}
    \centering
    \includegraphics[width=\linewidth,height=3.2cm]{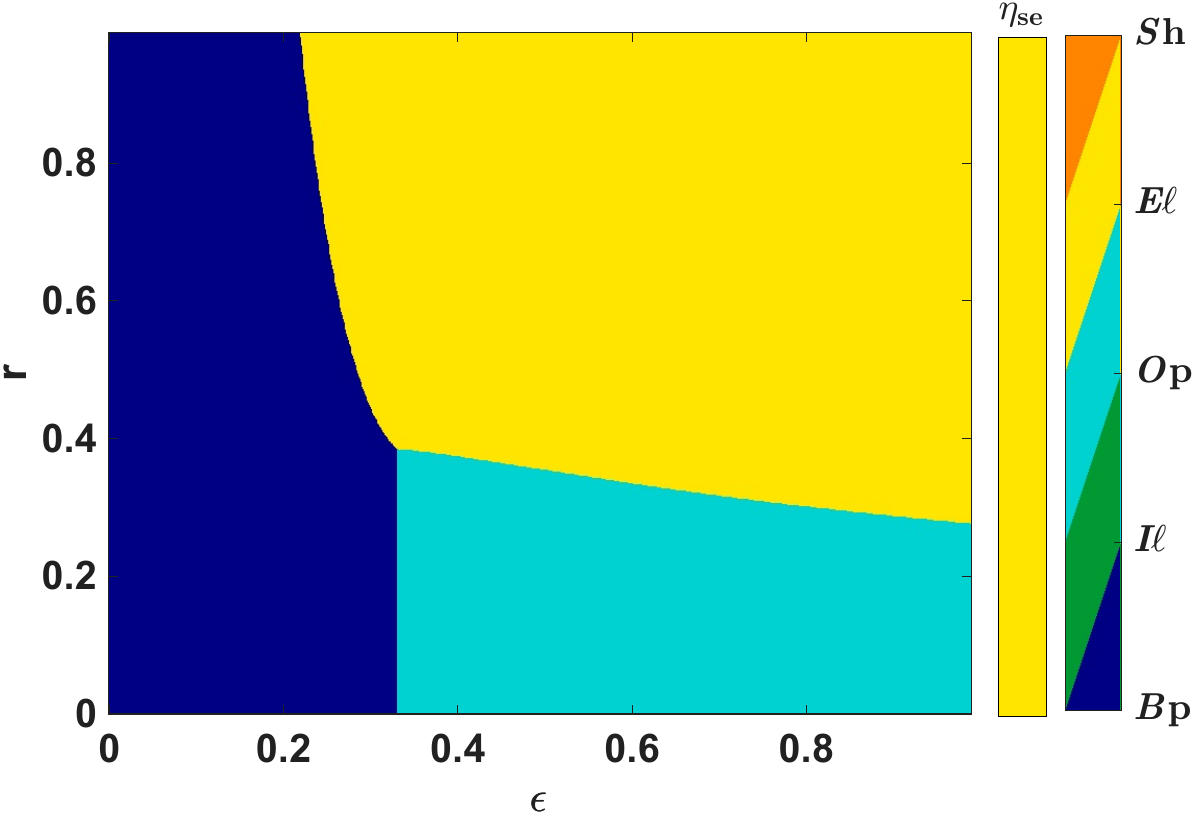}
    \caption{Non-Comparable I ($\Sco = 1200$)}
    \label{fig:noncomp1}
\end{subfigure}
\hfill
\begin{subfigure}[t]{0.4\textwidth}
    \centering
    \includegraphics[width=\linewidth,height=3.2cm]{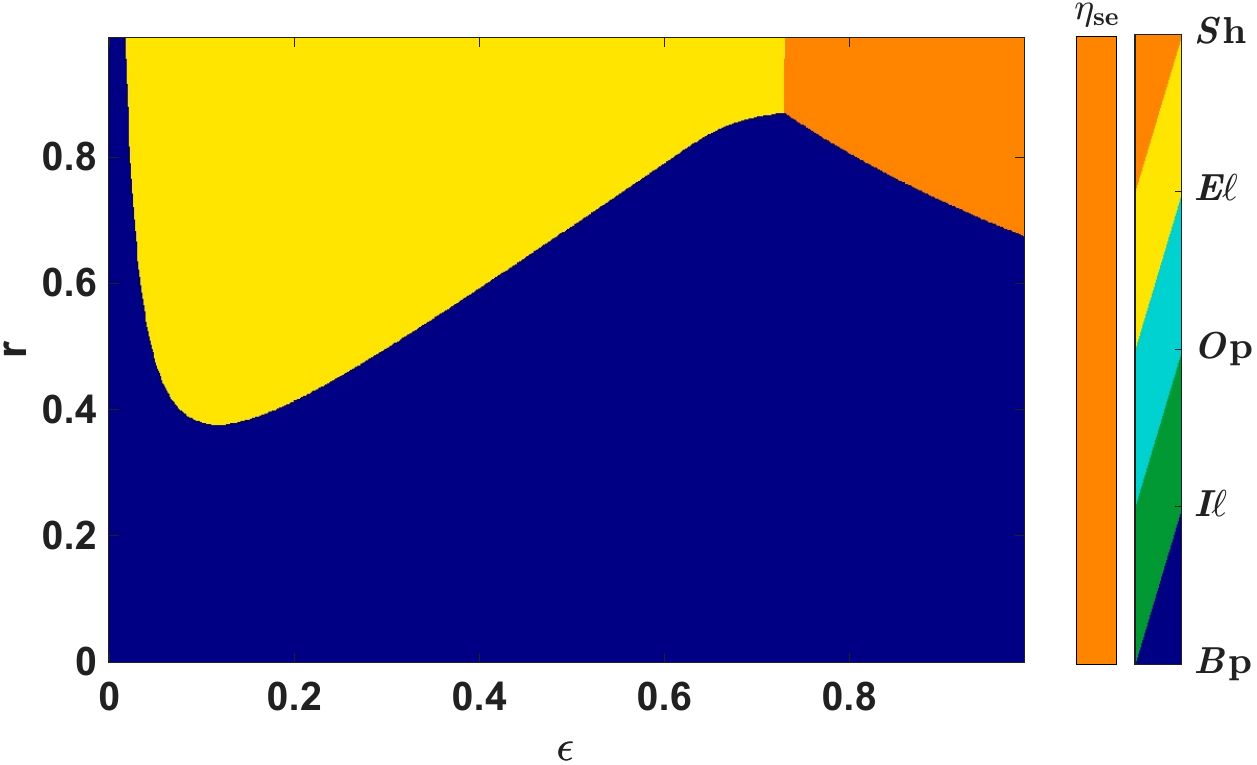}
    \caption{Non-Comparable II ($\Sco = -47$, so Bp)}
    \label{fig:noncomp2}
\end{subfigure}

\vspace{4mm}


\begin{subfigure}[t]{0.4\textwidth}
    \centering
    \includegraphics[width=\linewidth,height=3.2cm]{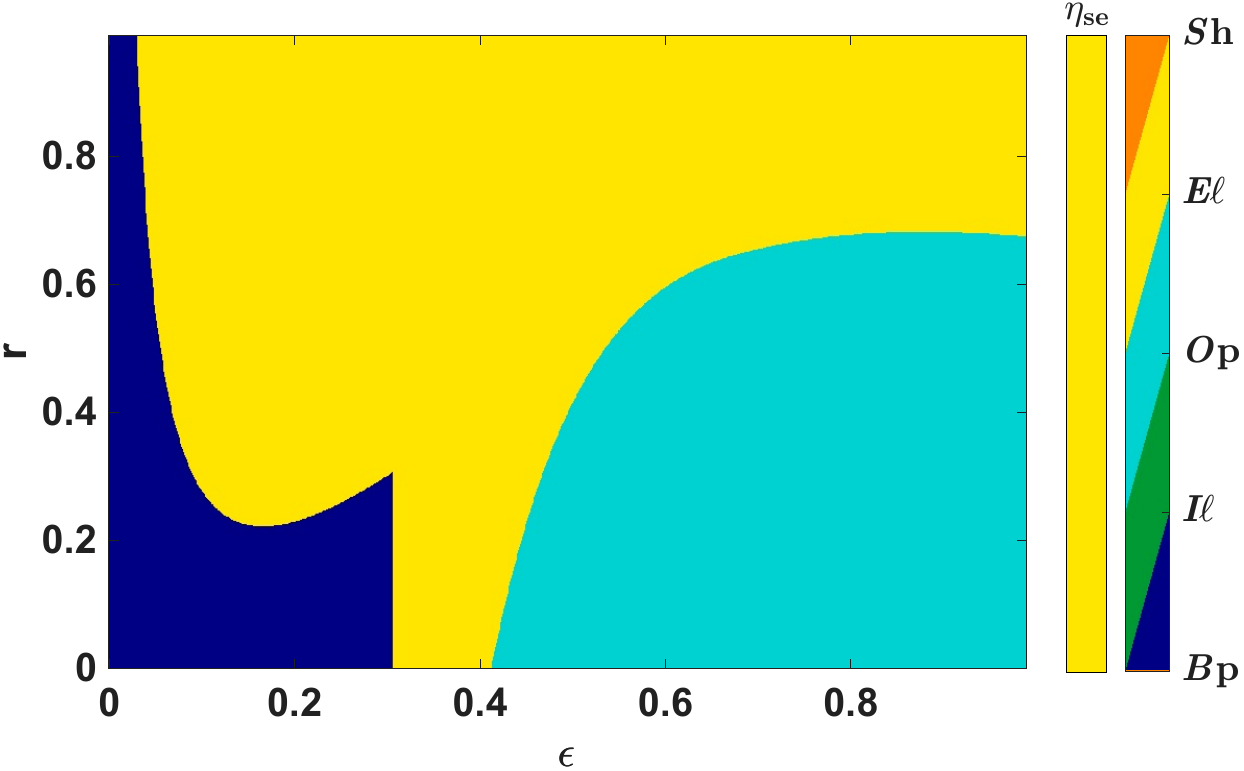}
    \caption{Superior In-house ($\Sco = 499.9967$)}
    \label{fig:superior}
\end{subfigure}
\hfill
\begin{minipage}[t]{0.32\textwidth}
\vspace{-3cm}
\begin{mdframed}
\scriptsize

\textbf{Non-comparable-I:}

In-house has smaller market but better reputation:
$
\bar d_M < \bar d_{Me},
\alpha_M < \alpha_{Me}.
$

\vspace{2mm}

\textbf{Non-comparable-II:}

In-house has bigger market but inferior reputation:
$
\bar d_M > \bar d_{Me},
\alpha_M > \alpha_{Me}.
$

\end{mdframed}

\end{minipage}

\caption{Optimal regimes for different values of  $r$ and $\varepsilon$.}
\label{fig_overall}

\end{figure}
\ignore{
\subsubsection{Symmetric Manufacturers , Figure \ref{fig:symmetric} -- ($\Sco = 300$) }

In this case, the manufacturers are symmetric in terms of market potential and price sensitivity $(\dbar_\sMi = \dbar_\sMe = 100,\; \alpha_\sMi = \alpha_\sMe = 0.1)$.
 For low essentialness (indicated by small $(\varepsilon$), the customer loyalty is strong and Bp regime  with strictly positive profits for both units is optimal strategy for the coalition  as proved in Theorem \ref{thm_coex} (see the dark blue regions).
 
 As essentialness increases, the competitive pressure intensifies (customer are less loyal and more willing to explore all options). When the fallback rate $r$ is low, the coalition optimally induces the out-house manufacturer to operate in Op regime as proved in Theorem \ref{thm_all_r} (see cyan regions), thereby forcing the out-house manufacturer to operate at the break-even. \ignore{Under high essentialness (or decreased customer loyalty)  and high fallback intensity  it is optimal for the coalition to eliminate the downstream competition  (E$\ell$ regimes  are indicted by yellow color) by exercising it's upstream monopolistic power---observe   high fallback rates  indicate customers that are willing to switch to other \manu in the absence of their favorite.}

Thus when its in-house has comparable market powers, the coalition either eliminates the out-house (when customers are absolutely willing to switch)---or forces it to operate at par (if the customers are not willing to switch easily)---or allows the out-house to derive profits when product is not essential or is  of high-end or premium variety. It never shuts the in-house.


\subsubsection{Inferior In-house Manufacturer, Figure \ref{fig:inferior} -- ($\Sco = 165$)}

In this  case, the in-house unit has both a smaller market potential and weaker reputation $(\dbar_\sMi = 50 < \dbar_\sMe = 100,\; \alpha_\sMi = 0.1 > \alpha_\sMe = 0.01)$. For low essentialness factor $\varepsilon$, coalition $\V$ finds it beneficial to operate in Bp regime  irrespective of  the fallback factor $r$ (see dark blue regions ).
 For a mid-range of $\varepsilon$ and low fallback factor $r$, the coalition finds it optimal to operate in I$\ell$ regime (see green regions) due to inferior in-house and then in Bp regime for a bit large value of $\varepsilon$. Further as $\varepsilon$ increases , coalition $\V$ finds it beneficial to operate in the Op regime for low values of  $r$ (see cyan regions) and in the Sh regime for high values of $r$ (see orange regions). Due to high essentialness the coalition can force the out-house manufacturer to operate at par if fallback factor is less and so it prefers to maintain it's own in-house and dominate the out-house manufacturer. As the fallback increases, the coalitions finds it beneficial to shut down it's own  in-house as it is inferior and keeping it operational would incur it losses so it prefers to gain profit through upstream revenue only.

\subsubsection{Non-comparable Manufacturers}

We next consider two non-comparable configurations.

\textit{Case I: Smaller market, stronger reputation, Figure \ref{fig:noncomp1} -- ($\Sco = 1200, $):}
In this case, the in-house unit has a smaller market potential but less price sensitivity (stronger reputation) $(\dbar_\sMi = 10 < \dbar_\sMe = 100,\; \alpha_\sMi = 0.01 < \alpha_\sMe = 0.1)$.
 For low essentialness, Bp regime is the  optimal irrespective of fallback intensity as already discussed (see dark blue regions ).
 As essentialness rises with low fallback rate, the coalition transitions to the Op regime (see cyan regions) , strategically disciplining the out-house manufacturer.
 For high essentialness and high fallback rates, E$\ell$ regime (see yellow regions) becomes the optimal regime, consolidating market power through the in-house unit’s superior reputation.

\textit{Case II: Larger market, weaker reputation,  Figure \ref{fig:noncomp2} -- ($\Sco = -47, $):} In this case, the in-house unit has a larger market but weaker reputation (high price-sensitivity) $(\dbar_\sMi = 100 > \dbar_\sMe = 10,\; \alpha_\sMi = 0.1 > \alpha_\sMe = 0.03)$.
In this case,  Bp regime is optimal for coalition $\V$ not only for low essentialness but also for any essentialness and less fallback intensity (see dark blue regions). For low essentialness and high fallback intensity, coalition $\V$ finds it beneficial to operate in E$\ell$ regime (see yellow regions).
As essentialness and fallback intensity increase, Sh regime becomes optimal, as reputation-induced demand loss outweighs scale benefits (see orange regions).

\subsubsection{Superior In-house Manufacturer, Figure \ref{fig:superior} -- ($\Sco = 499.9967, $)}

When the in-house unit dominates in both market potential and reputation $(\dbar_\sMi = 100 > \dbar_\sMe = 50,\; \alpha_\sMi = 0.03 < \alpha_\sMe = 0.1)$, the coalition exhibits the greatest strategic flexibility .  Under low essentialness, Bp regime remains optimal irrespective of the fallback intensity (see dark blue regions).
For high essentialness and low fallback intensity, the coalition progressively shifts toward disciplining the out-house manufacturer through Op regime (see cyan regions) and as fallback intensity increases, coalition finds it optimal to  operate in E$\ell$ regime (see yellow regions) to preserve the profitability of the dominant in-house unit.}

\ignore{
{\color{red} We enumerate the overall findings to summarize the important managerial takeaways --
\begin{itemize}
    \item [i)] Irrespective of the market condition and fallback rate $r$, low essentialness (manufacturers are not substitutable) leads to coalition $\V$ to allow both units to operate profitably.
    \item[ii)] When essentialness is high and fallback rate is low, $\V$ pushes the out-house to operate at par when it has same strength as  in-house unit, it is superior to in-house unit, inferior to in-house unit in price sensitivity but superior in terms of market potential and also when in-house unit is superior.
  \item[iii)] When essentialness is high and fallback rate is low, profitable operation of both units is possible only in case when out-house has less market potential as compared to in-house but good reputation.
  \item[(iv)] 
\end{itemize}}

}
\subsection{Optimal Regimes with varying $r$ and $\dbar_\sMe$}\label{subsec_num_2}

We next examine the surprising implication of Lemma~\ref{lem_compr} and Theorem~\ref{thm_all_r} (for smaller values of $r$)---if the out-house has bigger market potential it can actually be disadvantageous---it can be forced work  at break-even conditions. We fix $\bar d_{\sM}=800$, $\alpha_{\sMe}=0.00001$, $\alpha_{\sMi}=0.1$, $C_{\sMi}=4$, $C_{\sMe}=1$, $C_{\sS}=3$, and $O_{\sMi}=O_{\sMe}=O_{\sS}=10$, and plot the coalition's optimal regime as a function of the out-house market potential $\bar d_{\sMe}$ and the fallback rate $r$ for $\varepsilon\in\{0.35,0.65,0.95\}$  in Figures~\ref{fig:low_ess}--\ref{fig:high_ess}. 

At low essentialness ($\varepsilon=0.35$), the co-existence option dominates when the fallback for smaller values of rate $r$: the coalition finds it optimal to operate in Bp for small values of  $\bar d_{\sMe}$ and  shifts to $I\ell$ as $\bar d_{\sMe}$ increases; the coalition finds it beneficial to operate in Sh mode  once $r$ is sufficiently large. 

For intermediate values of essentialness   $\varepsilon=0.65$ in Figure \ref{fig:med_ess}, the Bp region shrinks sharply, with Op being optimal for  most values of $\dbar_\sMe$ when $r$ is small; however Sh or shutdown in-house 
is the best option once fallback rate $r$ is sufficiently large---interestingly this option is optimal for larger market potentials of out-house only for very high values of $r$. \textit{Thus once the essentialness factor is sufficiently high, the out-house is forced to operate mostly at break-even potential as its potential increases---this is  because of the combined effect of  monopoly of supplier in upper echelon and it's ability to enter downstream market via in-house.}

  At high essentialness ($\varepsilon=0.95$) in Figure \ref{fig:high_ess}, Bp almost disappears and the coalition's optimal choice alternates only between Op (below a fallback threshold) and Sh (above it). 
  Thus even at higher values of $\varepsilon$, as the out-house's market potential increases, the coalition forces it to operate at par rather than ceding profits, unless the fall-back rate is too high. Such high fall-back rates may not be realistic, and hence one can again conclude  that higher market potential is not a good news for out-house in the presence of a monopolistic supplier with in-house production house, unless the product is of luxury catagory.

\begin{figure}[h]
    \centering

    \begin{subfigure}[t]{0.32\textwidth}
        \centering
        \includegraphics[width=\linewidth,height=3.5cm]{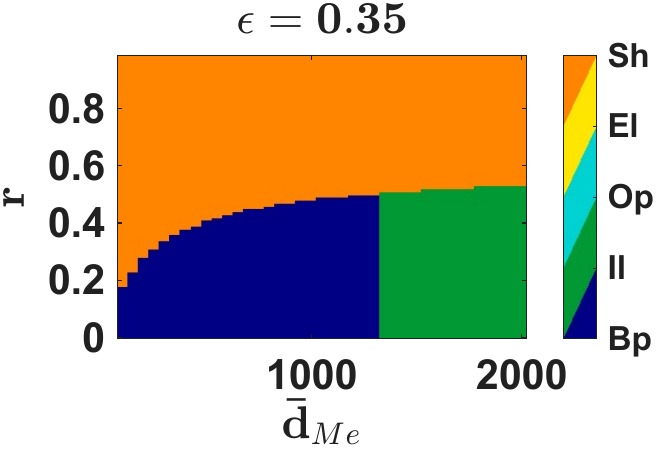}
        \caption{Less Essentialness}
        \label{fig:low_ess}
    \end{subfigure}
    \hfill
    \begin{subfigure}[t]{0.32\textwidth}
        \centering
        \includegraphics[width=\linewidth,height=3.5 cm]{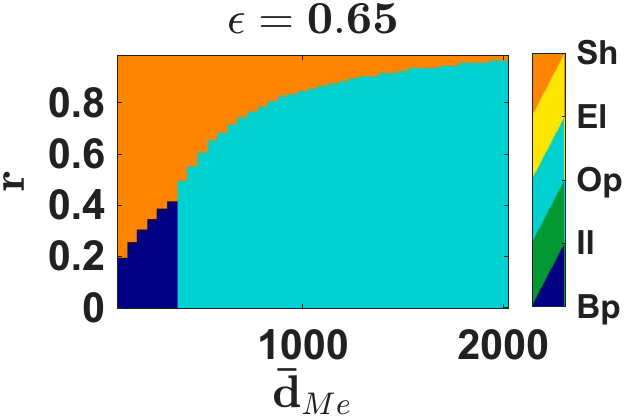}
        \caption{Medium Essentialness}
        \label{fig:med_ess}
    \end{subfigure}
    \hfill
    \begin{subfigure}[t]{0.32\textwidth}
        \centering
        \includegraphics[width=\linewidth,height=3.5cm]{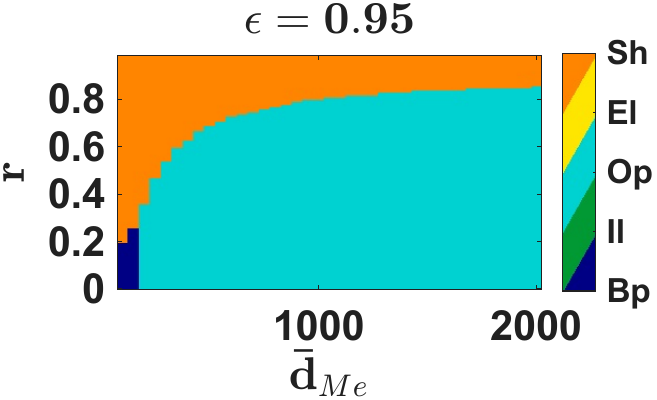}
        \caption{High Essentialness}
        \label{fig:high_ess}
    \end{subfigure}

    \caption{Optimal regimes with varying $\dbar_\sMe$ and $r$ when $\alpha_\sMe \to 0$}
    \label{fig:all_ess}
\end{figure}

\subsection{Optimal Regimes when $r=\varepsilon$}\label{subsec_num_3}

There is a natural correlation between the essentialness factor $\varepsilon$ and $r$,  the fallback rate---both these parameters represent a kind of fallback or substituting nature of the customers to an alternate production unit. We now consider a third case study, where we set $r=\varepsilon$, to obtain more focused insights for the scenarios that reflect the said correlation. 

We again investigate how the coalition's optimal regime changes with the out-house manufacturer's characteristics $(\dbar_\sMe, C_{\sMe}, \alpha_{\sMe})$ and with $r$ (which now equals $\varepsilon$) in Figures \ref{fig:low_ess_1}--\ref{fig:high_ess_1}.  The remaining parameters are set at: $\bar d_{\sM}=800$, $\alpha_{\sMi}=0.01$, $C_{\sMi}=4$, $C_{\sS}=3$, and $O_{\sMi}=O_{\sMe}=O_{\sS}=10$.

We first fix $\alpha_{\sMe}=0.001$ and $C_{\sMe}=1$ and vary the out-house market potential $\dbar_\sMe$ and $r$ in Figure~\ref{fig:low_ess_1}. For small values of $r$, once again the co-existence-oriented regimes dominate: the coalition operates in the $Bp$ regime when $\bar d_{\sMe}$ is small and transitions to the loss-making regime I$\ell$ as $\dbar_\sMe$ increases. As $r$ further increases, the coalition   relies on a more strategic market discipline Op, where it curbs the out-house to operate at par. For even  large values of $r$, co-existence ceases to be optimal and the coalition finds it optimal  to  shutdown the in-house (Sh regime),  irrespective of the out-house market potential.

Next, fixing $\alpha_{\sMe}=0.001$ and $\dbar_\sMe =1000$, we vary the out-house production cost $C_{\sMe}$ and $r$ in Figure~\ref{fig:med_ess_1}. The regime boundaries are nearly insensitive to changes in $C_{\sMe}$. As $r$ increases, the coalition sequentially transitions through the $Bp$, I$\ell$, $Op$, and $Sh$ regimes, indicating that the fallback intensity plays a much more significant role than the out-house production cost in determining the coalition's optimal strategy.

Finally, we fix $C_{\sMe}=1$ and $\bar d_{\sMe}=1000$ and vary the out-house manufacturer's price sensitivity $\alpha_{\sMe}$ and $r$  in Figure~\ref{fig:high_ess_1}. In contrast to the production cost, the coalition's optimal regime is highly sensitive to $\alpha_{\sMe}$. For very small values of $\alpha_{\sMe}$, the coalition transitions from I$\ell$ to $Bp$ and then to $Op$ as $r$ increases. However, even a modest increase in $\alpha_{\sMe}$ causes the elimination regime E$\ell$ to rapidly dominate the parameter space. The shutdown regime $Sh$ appears only for high values of $r$ when the out-house manufacturer has a very strong reputation.
\begin{figure}[H]
    \centering

    \begin{subfigure}[t]{0.32\textwidth}
        \centering
        \includegraphics[width=\linewidth,height=3cm]{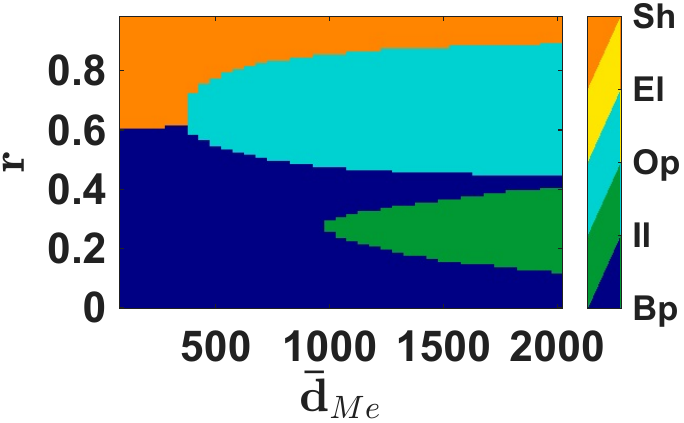}
        \caption{Varying Out-house Market Potential}
        \label{fig:low_ess_1}
    \end{subfigure}
    \hfill
    \begin{subfigure}[t]{0.32\textwidth}
        \centering
        \includegraphics[width=\linewidth,height=3cm]{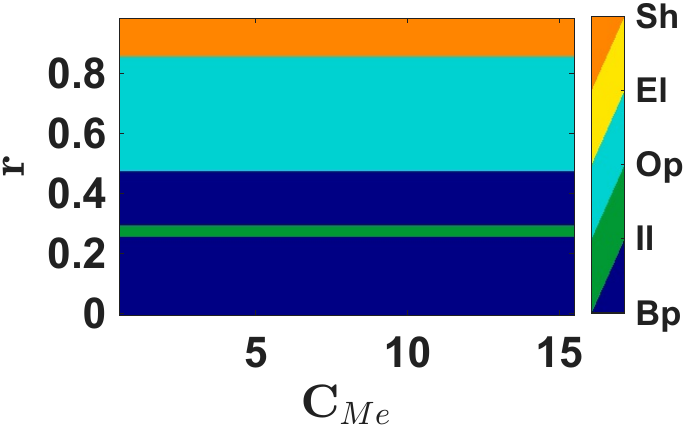}
        \caption{Varying Out-house Production Cost}
        \label{fig:med_ess_1}
    \end{subfigure}
    \hfill
    \begin{subfigure}[t]{0.32\textwidth}
        \centering
        \includegraphics[width=\linewidth,height=3cm]{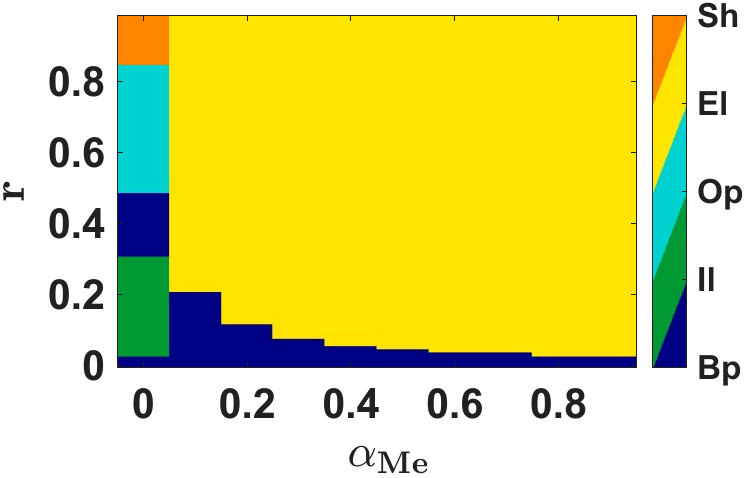}
        \caption{Varying Out-house Price Sensitivity}
        \label{fig:high_ess_1}
    \end{subfigure}

    \caption{Optimal regimes with $r= \varepsilon$}
    \label{fig:all_ess}
\end{figure}

\ignore{
\begin{figure*}[h!]
    \centering
    \begin{minipage}[t]{0.32\textwidth}
        \centering
        \includegraphics[width=\linewidth ,height = 2.9 cm]{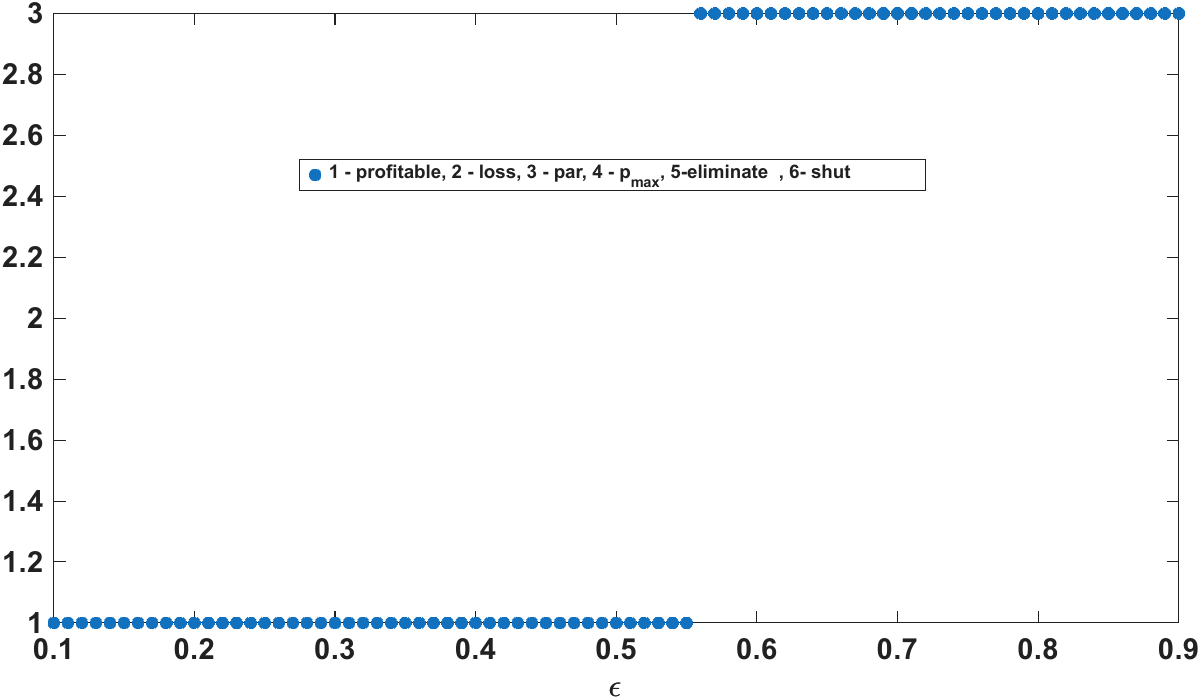}
        
        Optimal Choice
    \end{minipage}
    \hfill
    \begin{minipage}[t]{0.32\textwidth}
        \centering
        \includegraphics[width=\linewidth ,height = 3 cm]{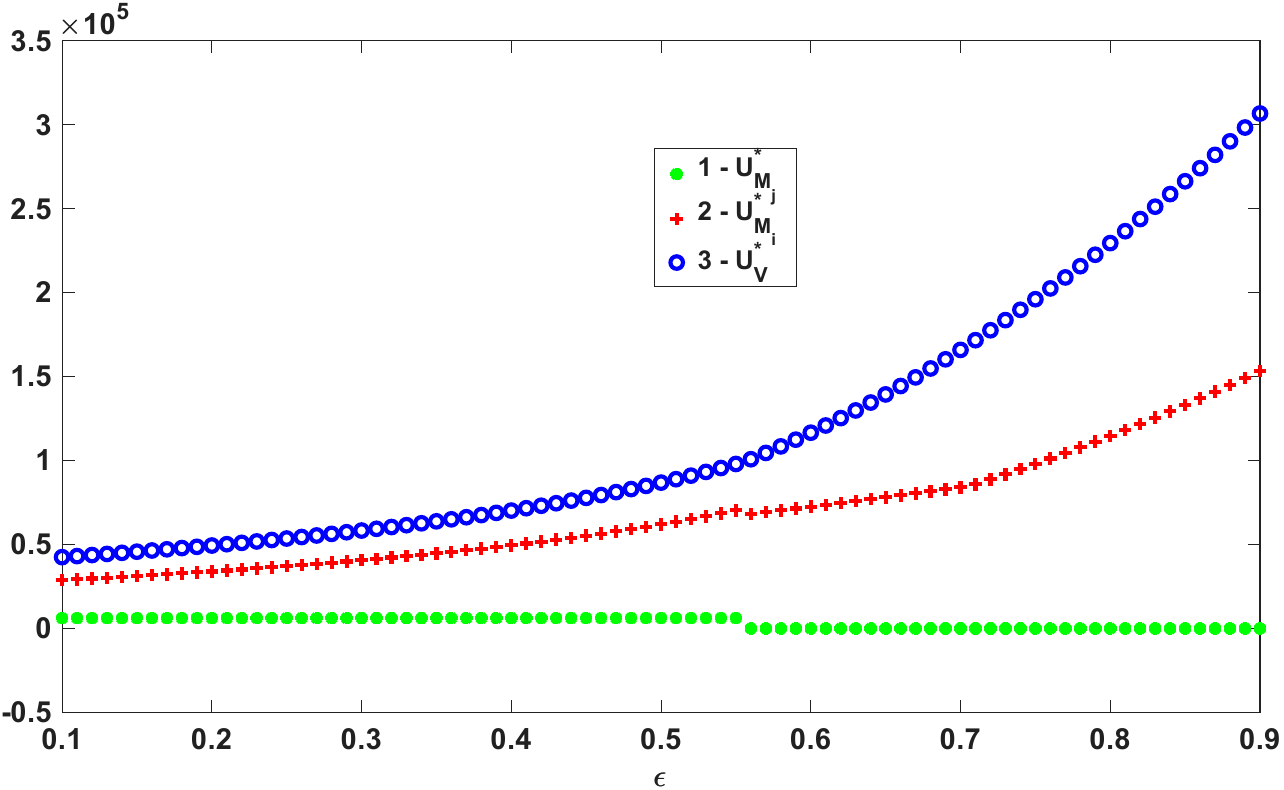}
        
        Utility of Agents
    \end{minipage}
    \hfill
    \begin{minipage}[t]{0.32\textwidth}
        \centering
        \includegraphics[width=\linewidth,height = 3 cm]{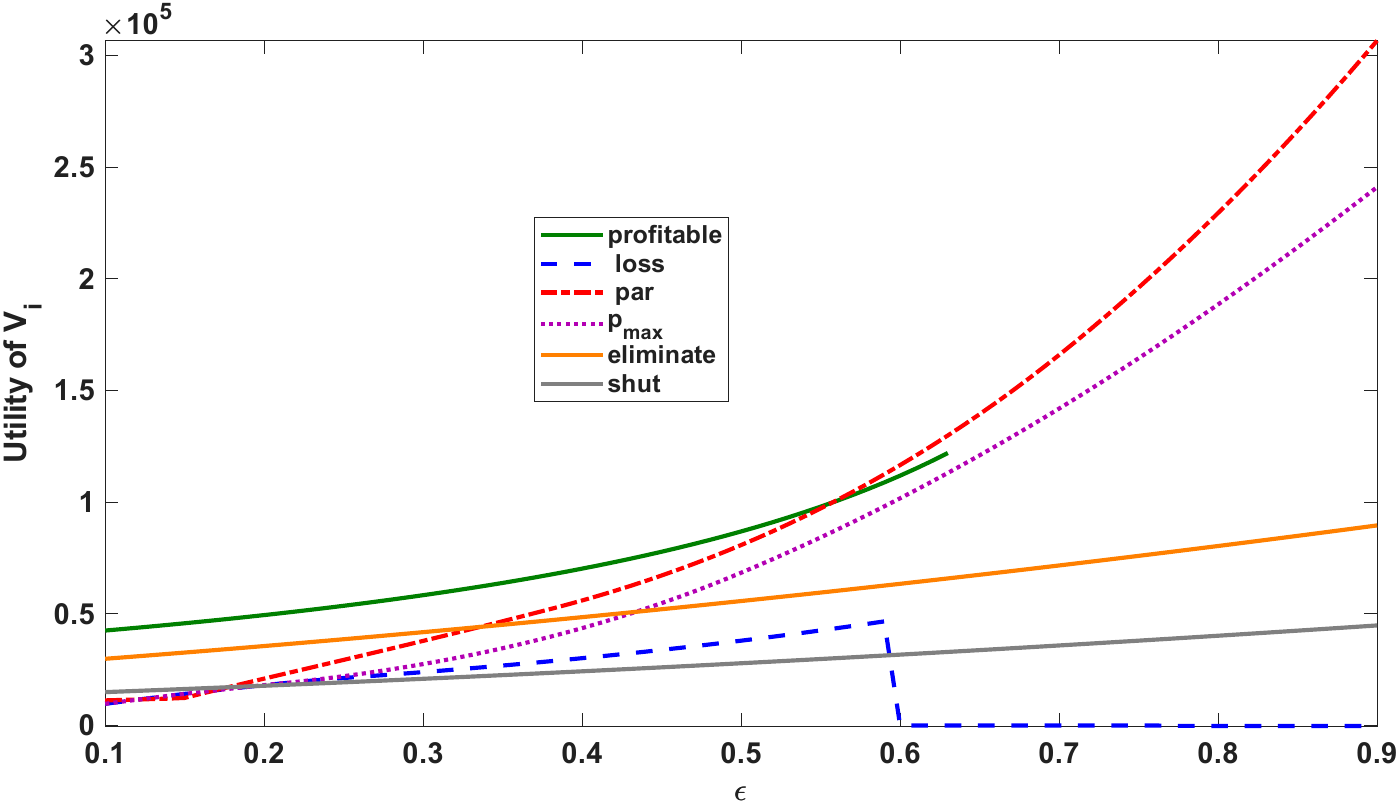}
        
        Utility of Coalition in all regimes
    \end{minipage}
    
    \caption{Symmetric Manufacturer for $r = .1$}
    \label{sym_manu}
\end{figure*}

\begin{figure*}[h!]
    \centering
    \begin{minipage}[t]{0.32\textwidth}
        \centering
        \includegraphics[width=\linewidth ,height = 2.8 cm]{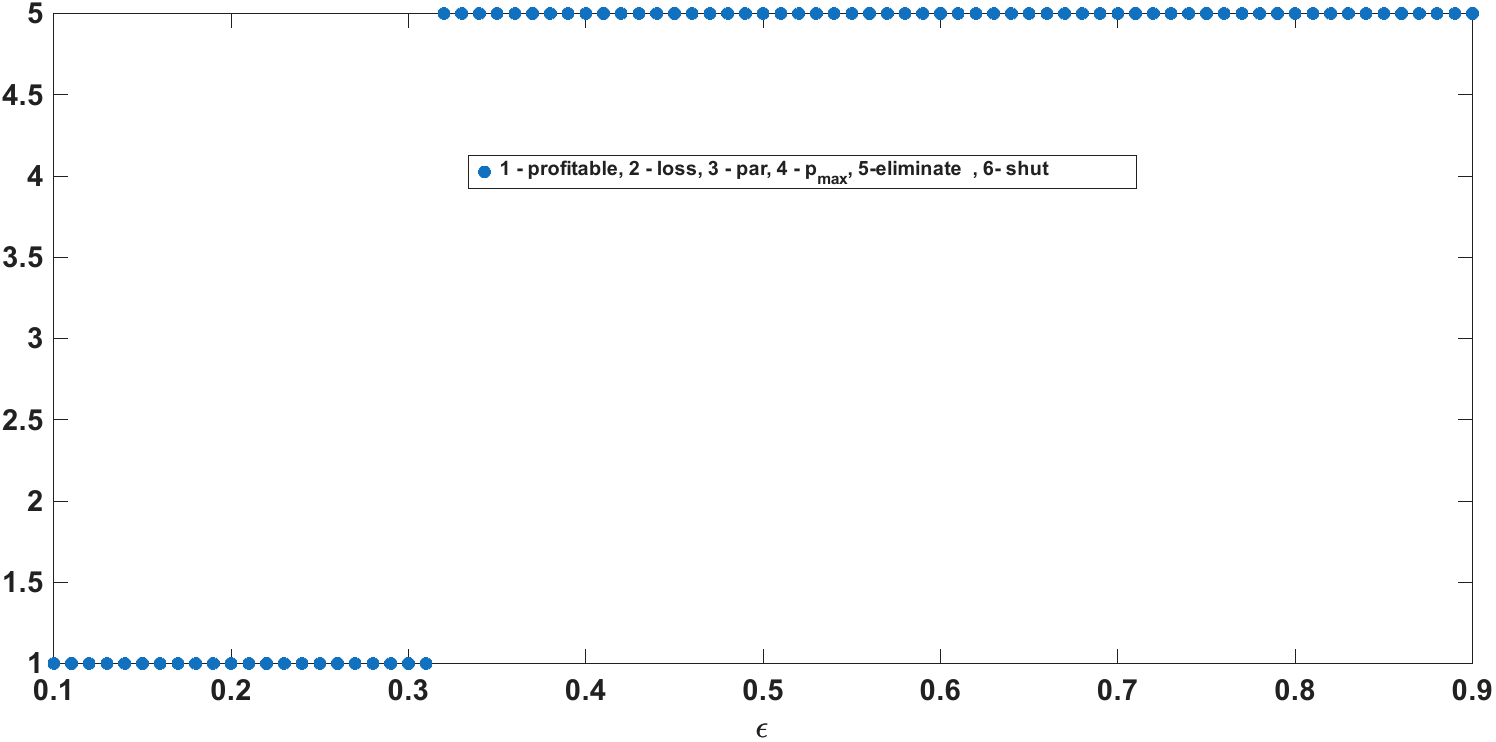}
       \caption*{ Optimal Choice}
        
    \end{minipage}
    \hfill
    \begin{minipage}[t]{0.32\textwidth}
        \centering
        \includegraphics[width=\linewidth ,height = 2.9 cm]{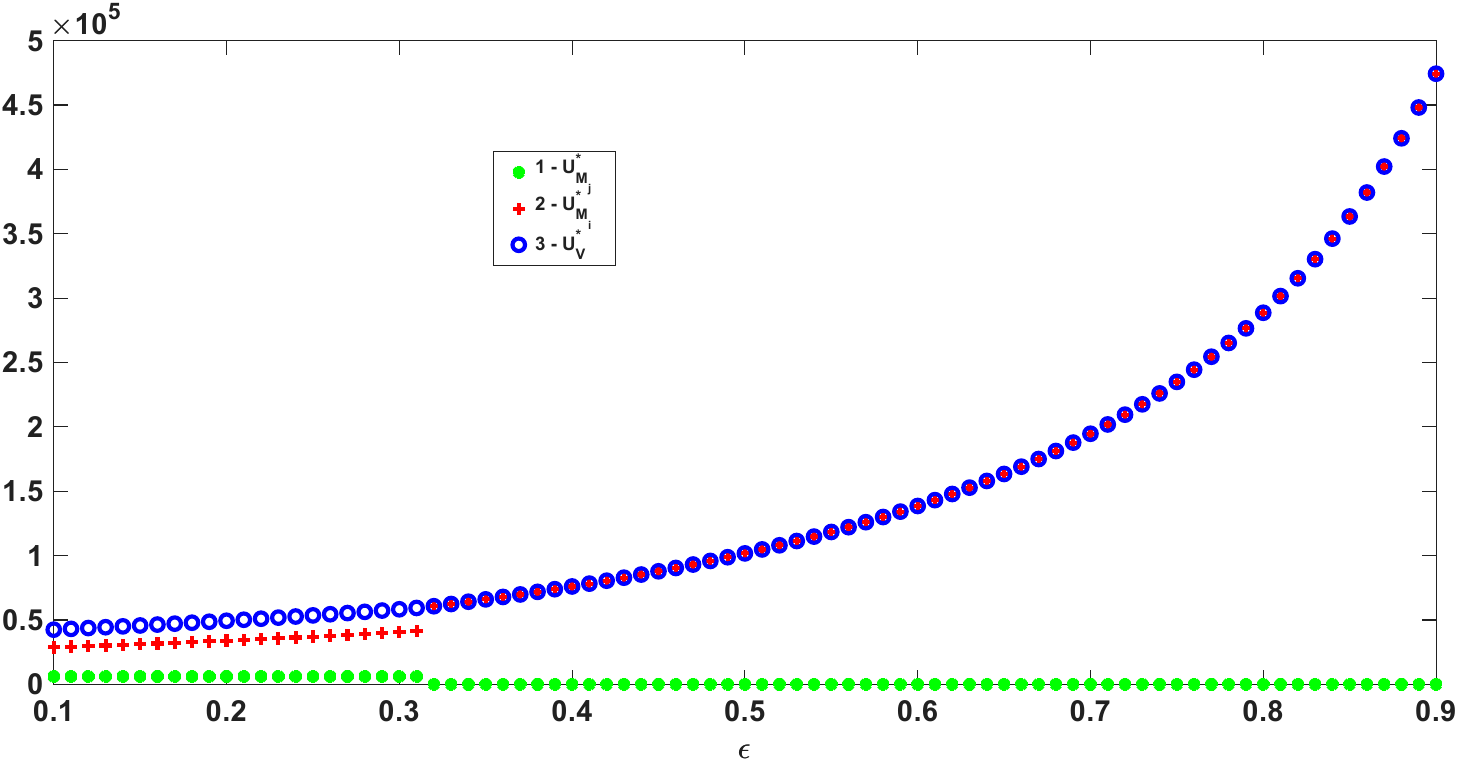}
        \caption*{Utility of Agents}
      
    \end{minipage}
    \hfill
    \begin{minipage}[t]{0.32\textwidth}
        \centering
        \includegraphics[width=\linewidth,height = 2.9 cm]{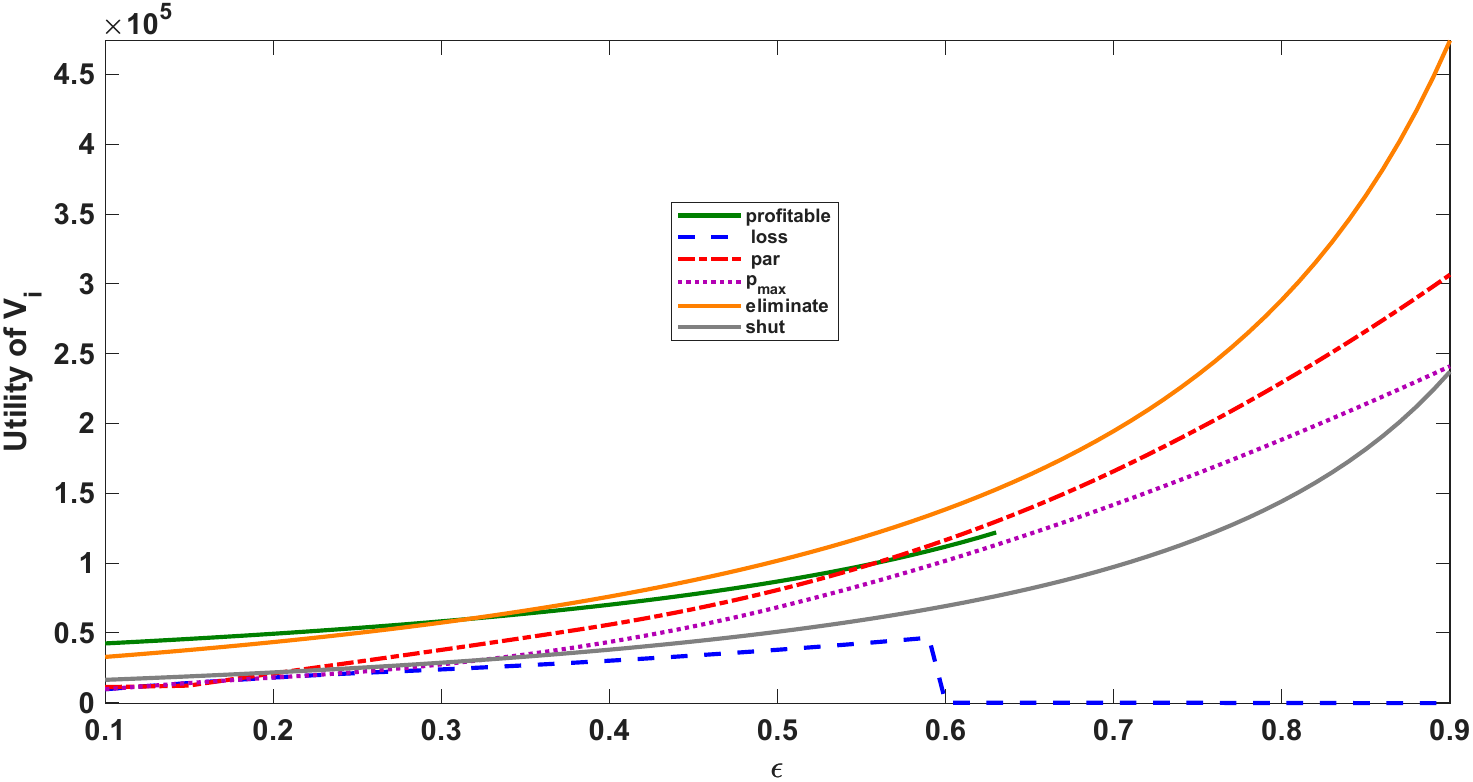}
       \caption*{Utility of Coalition in all regimes}
        
    \end{minipage}
    \caption{Symmetric Manufacturer for $r = .9$}
    \label{sym_manu_1}
\end{figure*}

\begin{figure*}[h!]
    \centering
    \begin{minipage}[t]{0.32\textwidth}
        \centering
        \includegraphics[width=\linewidth ,height = 2.9 cm]{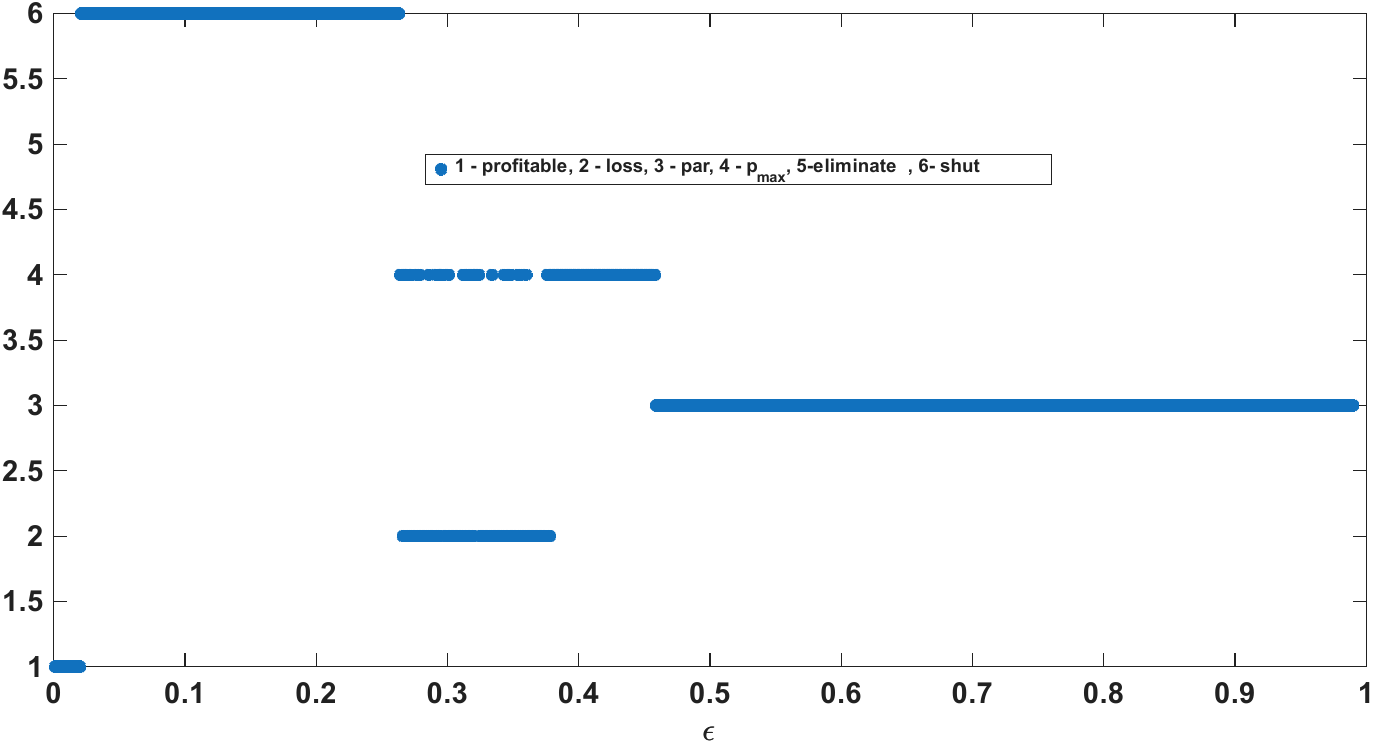}
        
        Optimal Choice
    \end{minipage}
    \hfill
    \begin{minipage}[t]{0.32\textwidth}
        \centering
        \includegraphics[width=\linewidth ,height = 3 cm]{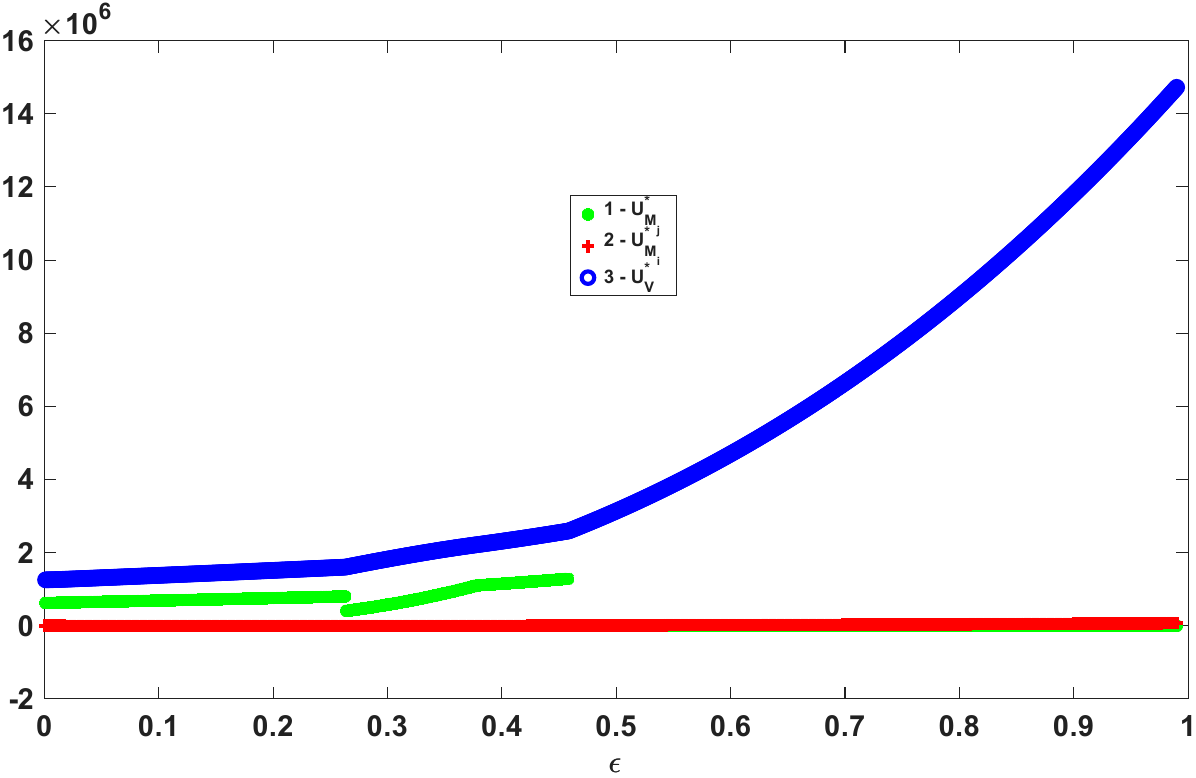}
        
        Utility of Agents
    \end{minipage}
    \hfill
    \begin{minipage}[t]{0.32\textwidth}
        \centering
        \includegraphics[width=\linewidth,height = 3 cm]{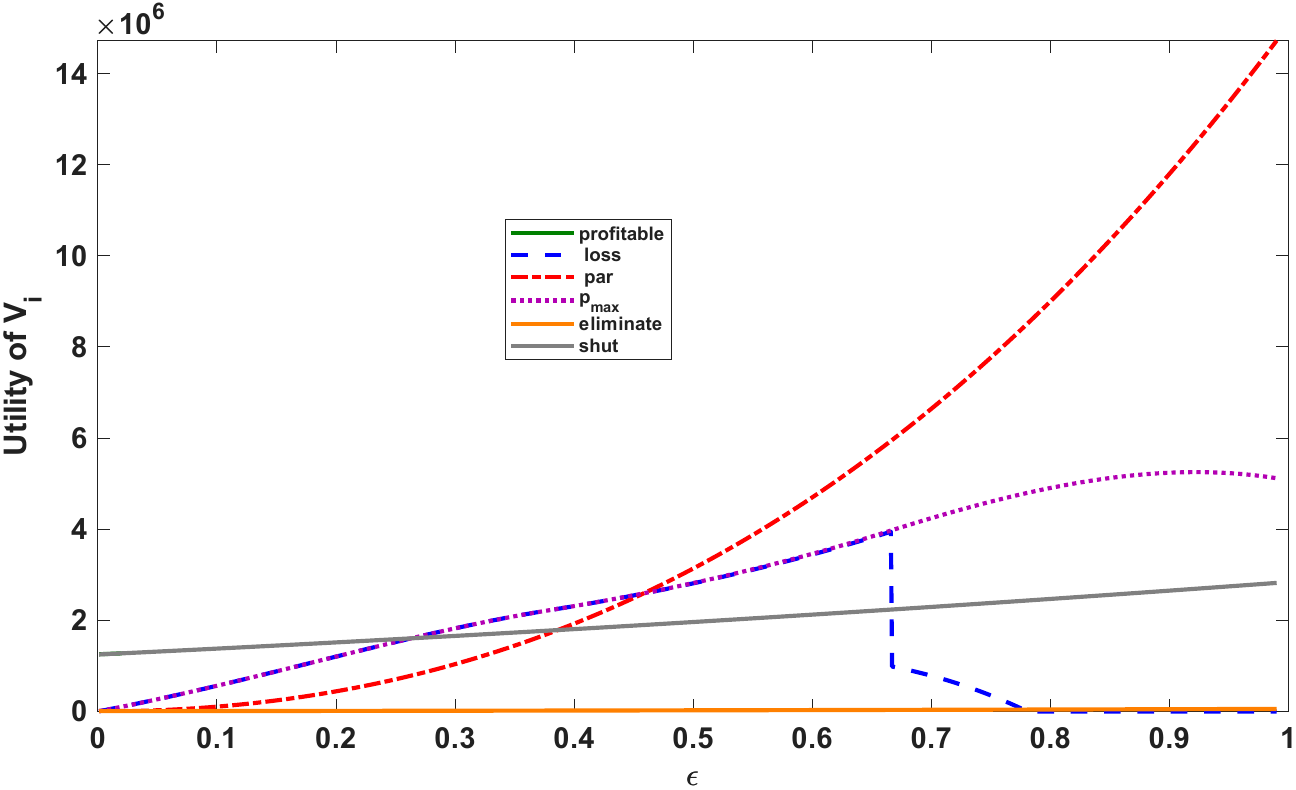}
        
        Utility of Coalition in all regimes
    \end{minipage}
    
   \caption{Inferior In-house Manufacturer for $r = .1$}
    \label{inf_manu}
\end{figure*}

\begin{figure*}[h!]
    \centering
    \begin{minipage}[t]{0.32\textwidth}
        \centering
        \includegraphics[width=\linewidth ,height = 2.9 cm]{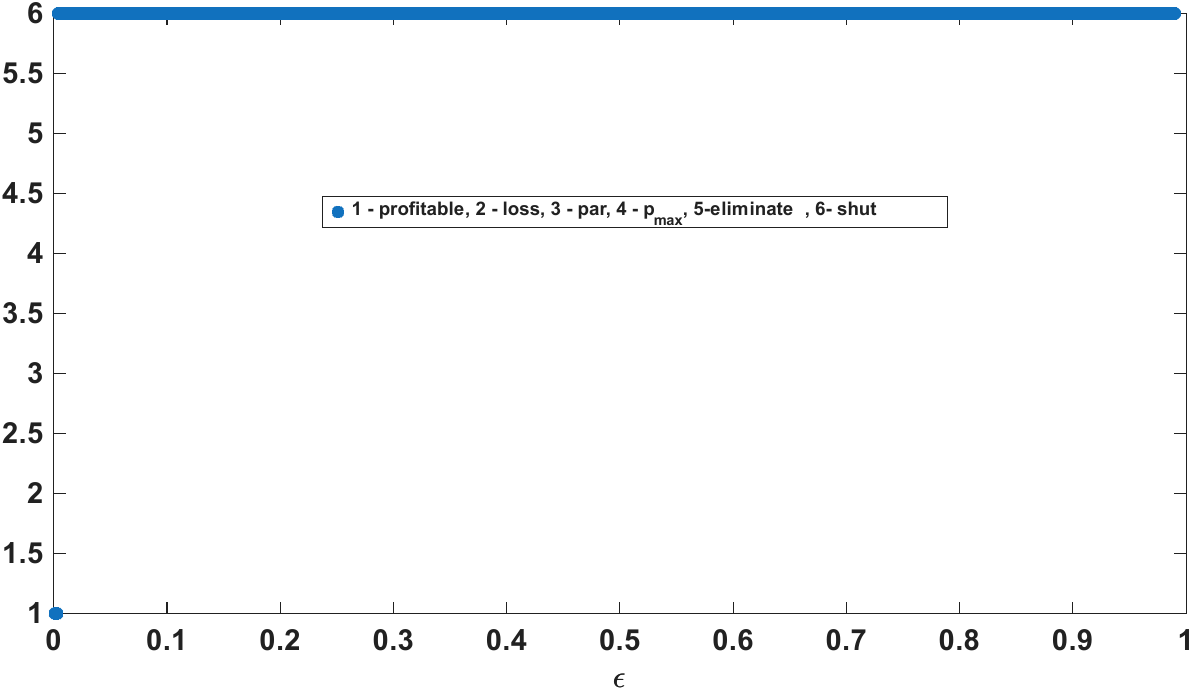}
        
        Optimal Choice
    \end{minipage}
    \hfill
    \begin{minipage}[t]{0.32\textwidth}
        \centering
        \includegraphics[width=\linewidth ,height = 3 cm]{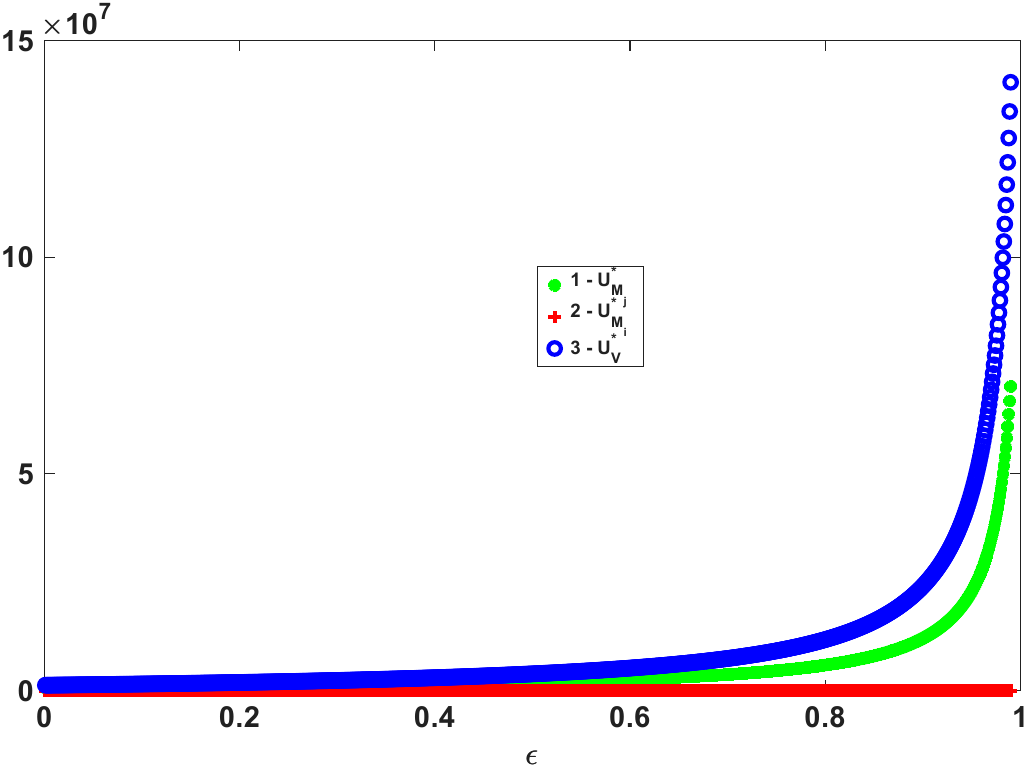}
        
        Utility of Agents
    \end{minipage}
    \hfill
    \begin{minipage}[t]{0.32\textwidth}
        \centering
        \includegraphics[width=\linewidth,height = 3 cm]{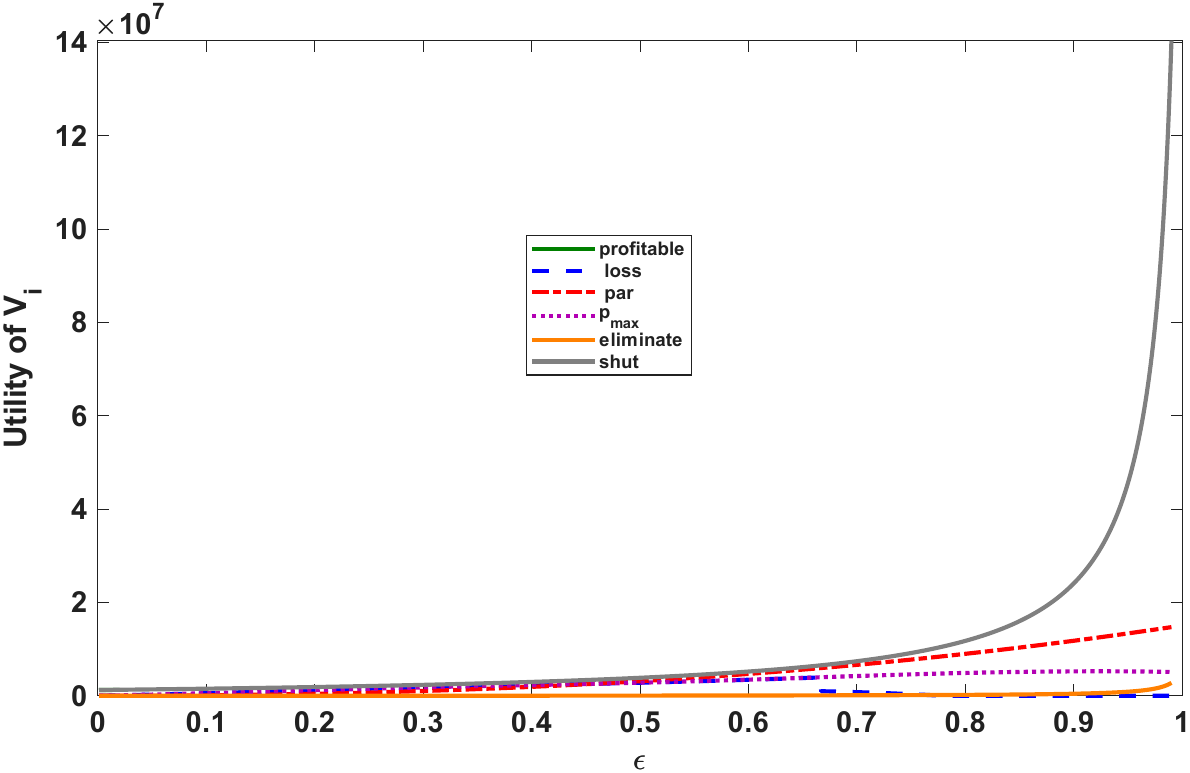}
        
        Utility of Coalition in all regimes
    \end{minipage}
    
    \caption{Inferior In-house Manufacturer for $r = .9$}
    \label{inf_manu_1}
\end{figure*}

\begin{figure*}[h!]
    \centering
    \begin{minipage}[t]{0.32\textwidth}
        \centering
        \includegraphics[width=\linewidth ,height = 2.9 cm]{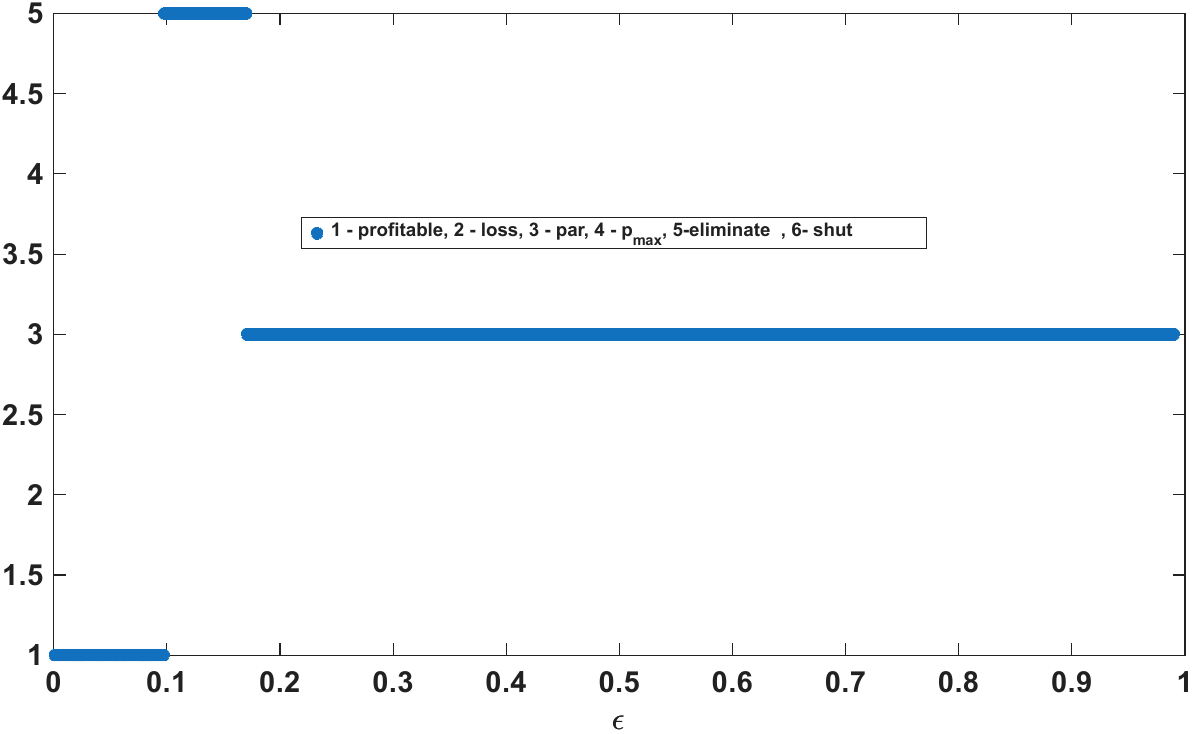}
        
        Optimal Choice
    \end{minipage}
    \hfill
    \begin{minipage}[t]{0.32\textwidth}
        \centering
        \includegraphics[width=\linewidth ,height = 3 cm]{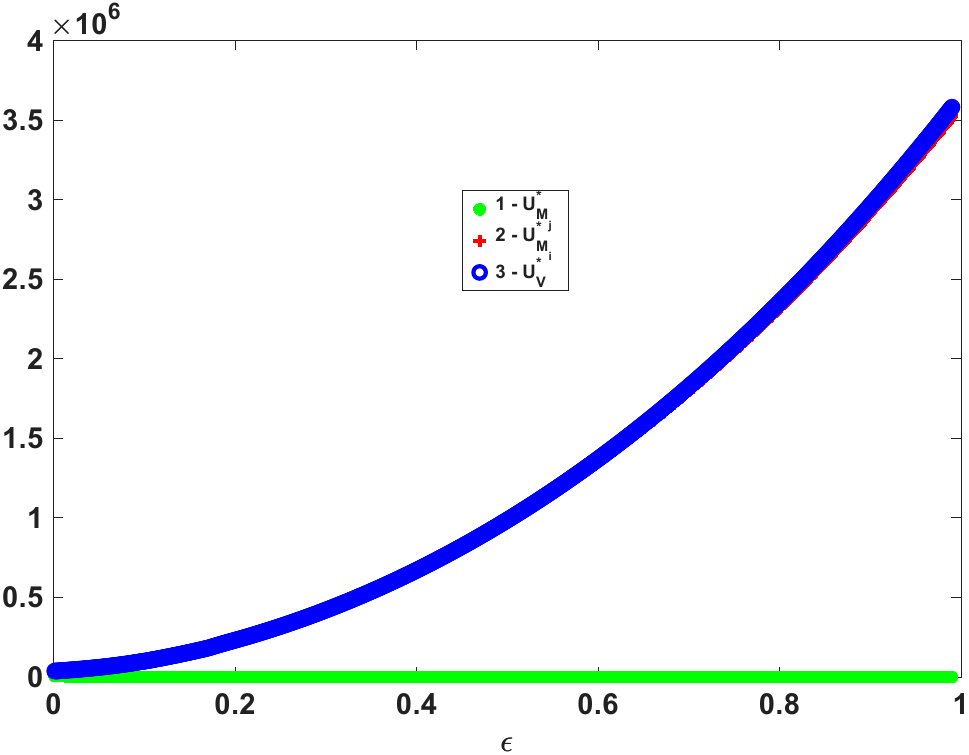}
        
        Utility of Agents
    \end{minipage}
    \hfill
    \begin{minipage}[t]{0.32\textwidth}
        \centering
        \includegraphics[width=\linewidth,height = 3 cm]{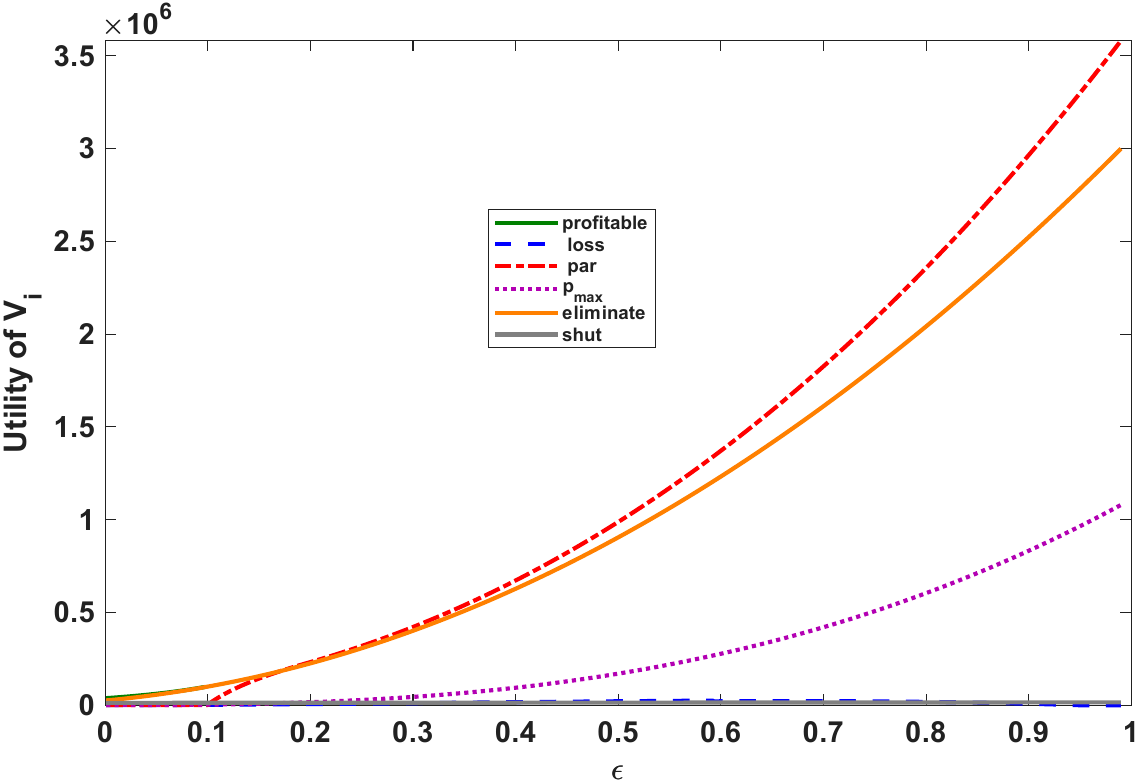}
        
        Utility of Coalition in all regimes
    \end{minipage}
    
    \caption{Non- Comparable Manufacturer - I for $r = .1$}
    \label{non-comp_manu}
\end{figure*}

\begin{figure*}[h!]
    \centering
    \begin{minipage}[t]{0.32\textwidth}
        \centering
        \includegraphics[width=\linewidth ,height = 2.9 cm]{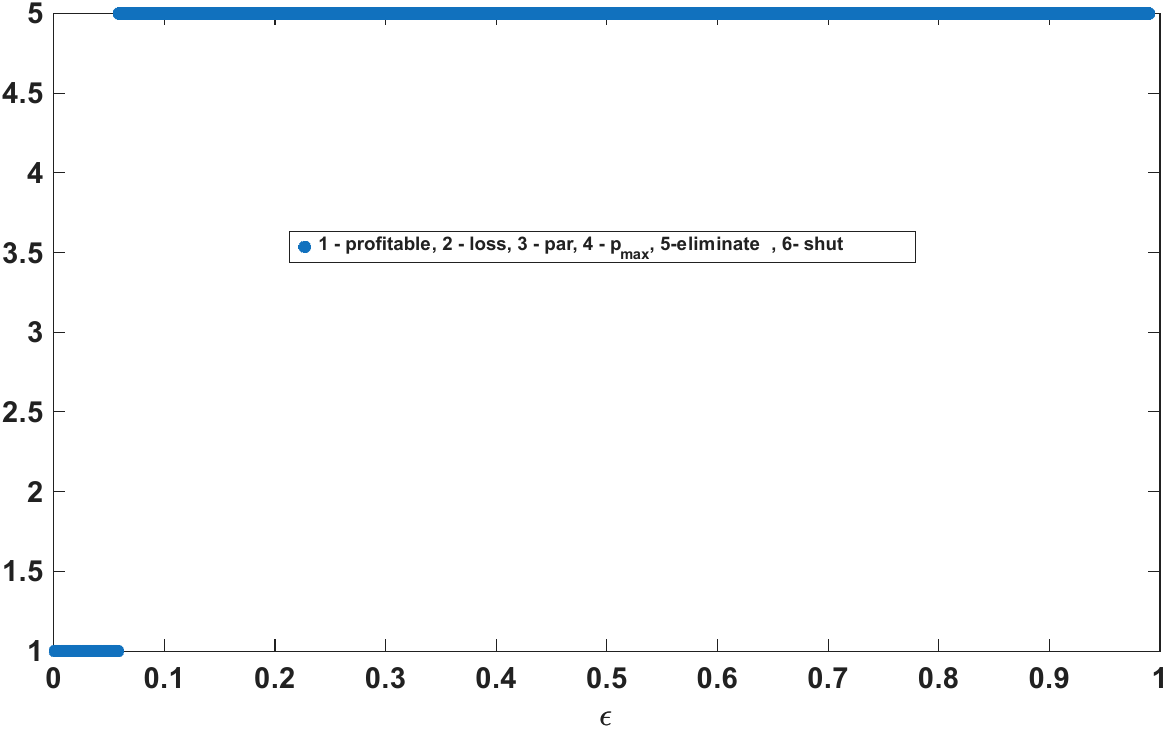}
        
        Optimal Choice
    \end{minipage}
    \hfill
    \begin{minipage}[t]{0.32\textwidth}
        \centering
        \includegraphics[width=\linewidth ,height = 3 cm]{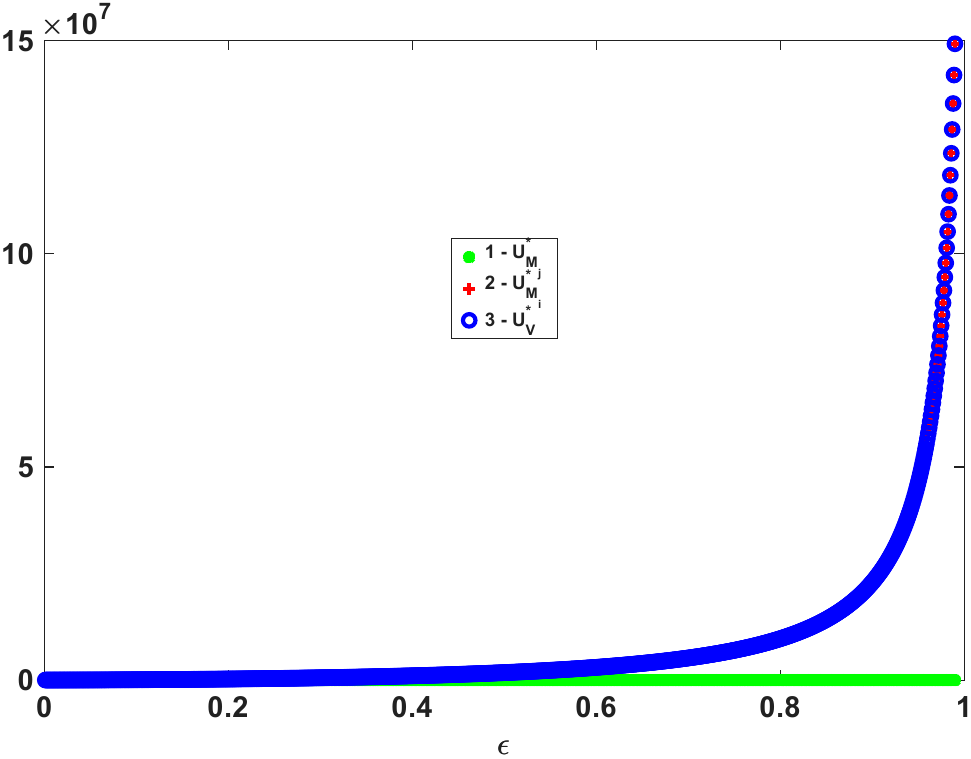}
        
        Utility of Agents
    \end{minipage}
    \hfill
    \begin{minipage}[t]{0.32\textwidth}
        \centering
        \includegraphics[width=\linewidth,height = 3 cm]{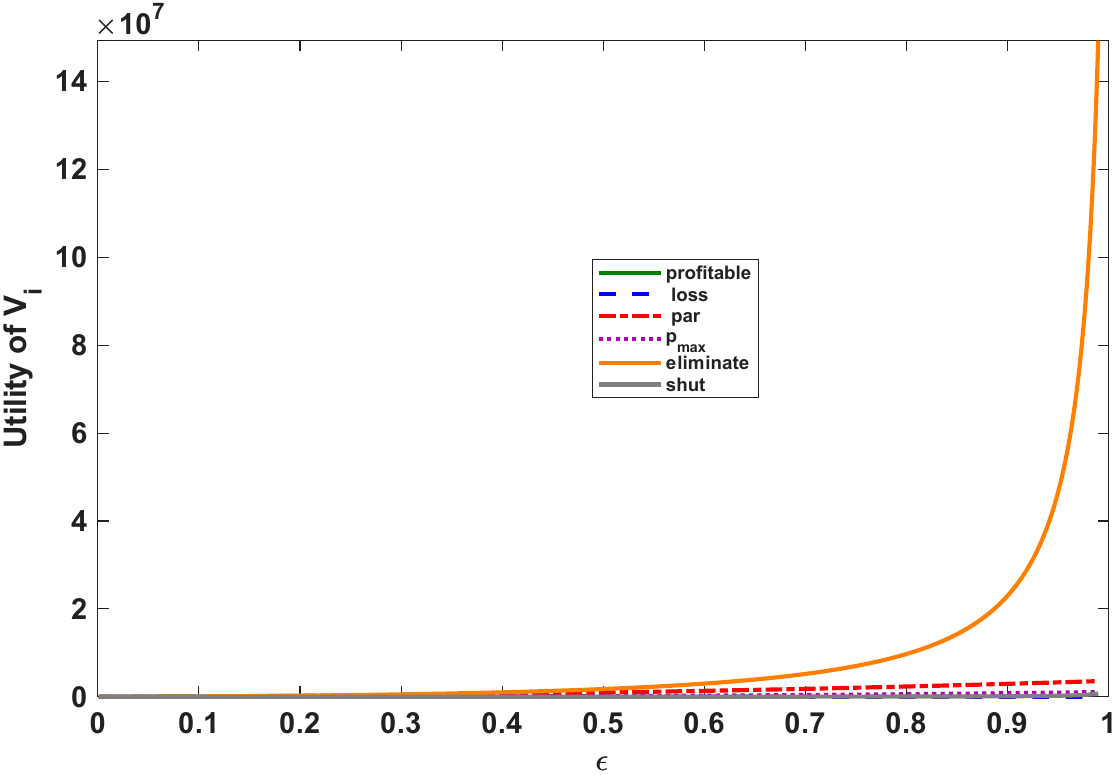}
        
        Utility of Coalition in all regimes
    \end{minipage}
    
    \caption{Non- Comparable Manufacturer - I for $r = .9$}
    \label{non-comp_manu_1}
\end{figure*}

\begin{figure*}[h!]
    \centering
    \begin{minipage}[t]{0.32\textwidth}
        \centering
        \includegraphics[width=\linewidth ,height = 2.9 cm]{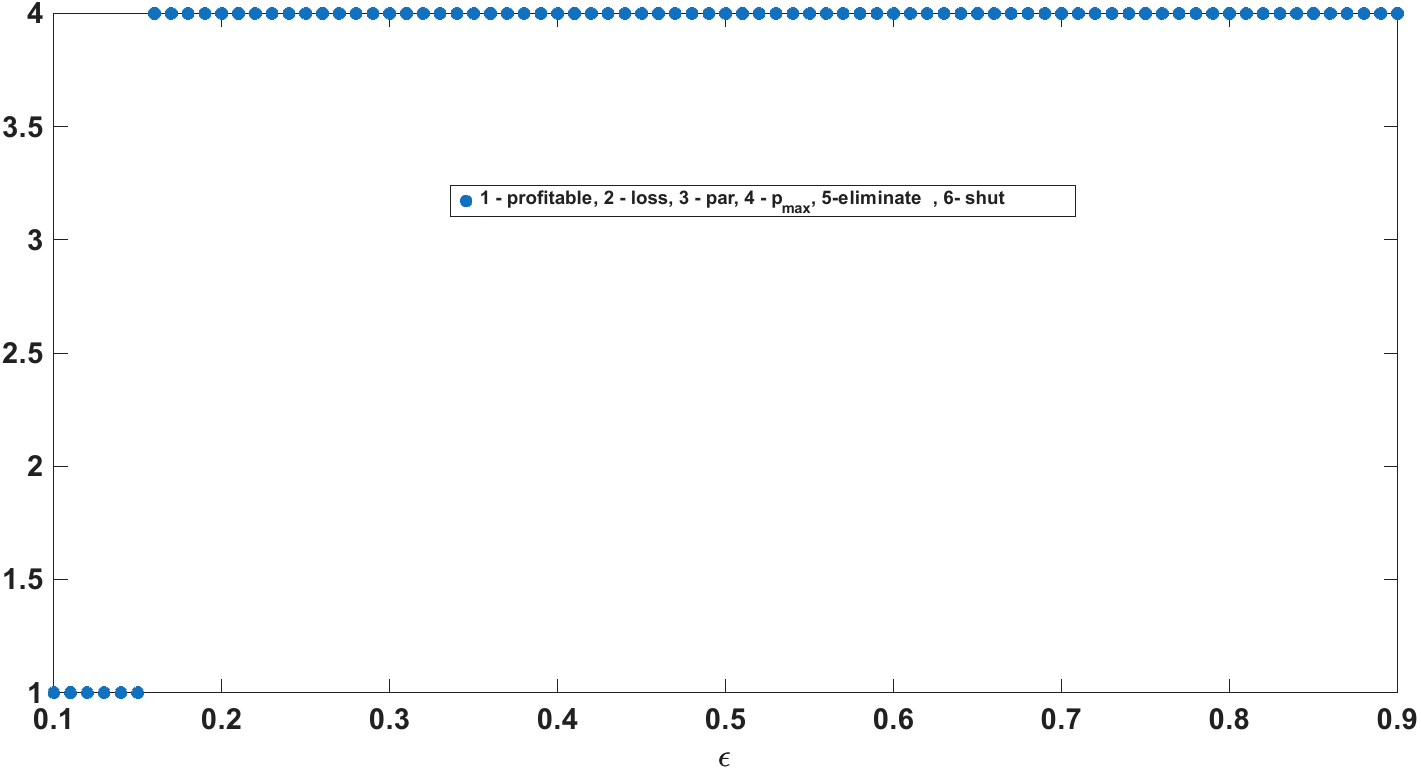}
        
        Optimal Choice
    \end{minipage}
    \hfill
    \begin{minipage}[t]{0.32\textwidth}
        \centering
        \includegraphics[width=\linewidth ,height = 3 cm]{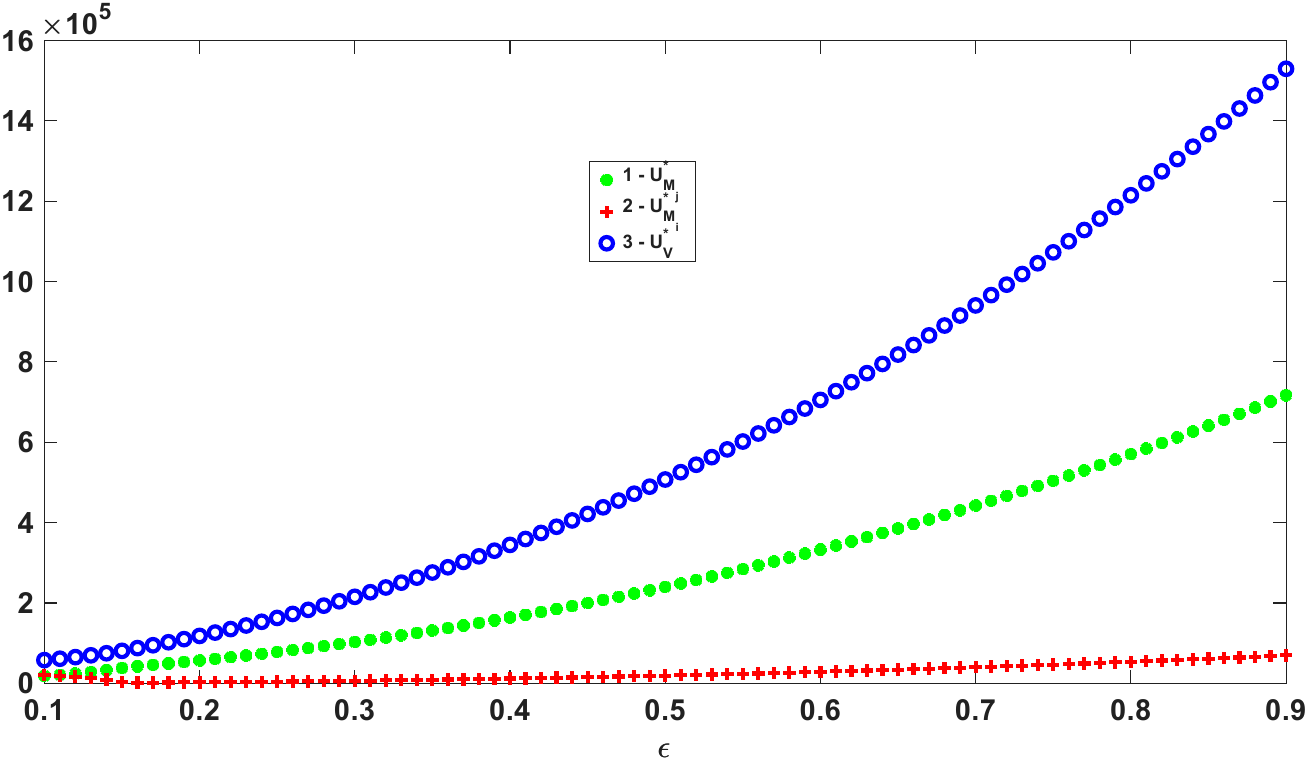}
        
        Utility of Agents
    \end{minipage}
    \hfill
    \begin{minipage}[t]{0.32\textwidth}
        \centering
        \includegraphics[width=\linewidth,height = 3 cm]{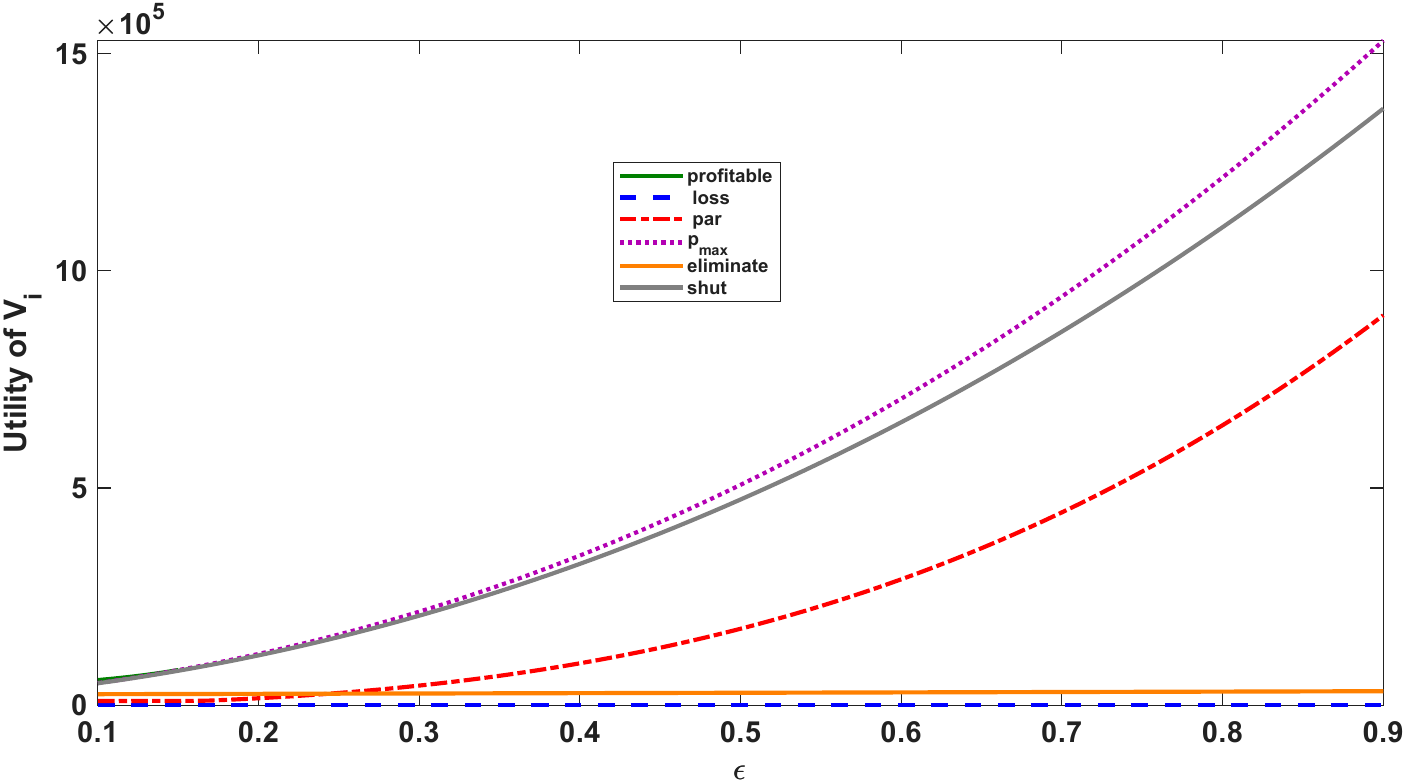}
        
        Utility of Coalition in all regimes
    \end{minipage}
    
    \caption{Non- Comparable Manufacturer - II for $r = .1$}
    \label{non-comp_manu_2}
\end{figure*}

\begin{figure*}[h!]
    \centering
    \begin{minipage}[t]{0.32\textwidth}
        \centering
        \includegraphics[width=\linewidth ,height = 2.9 cm]{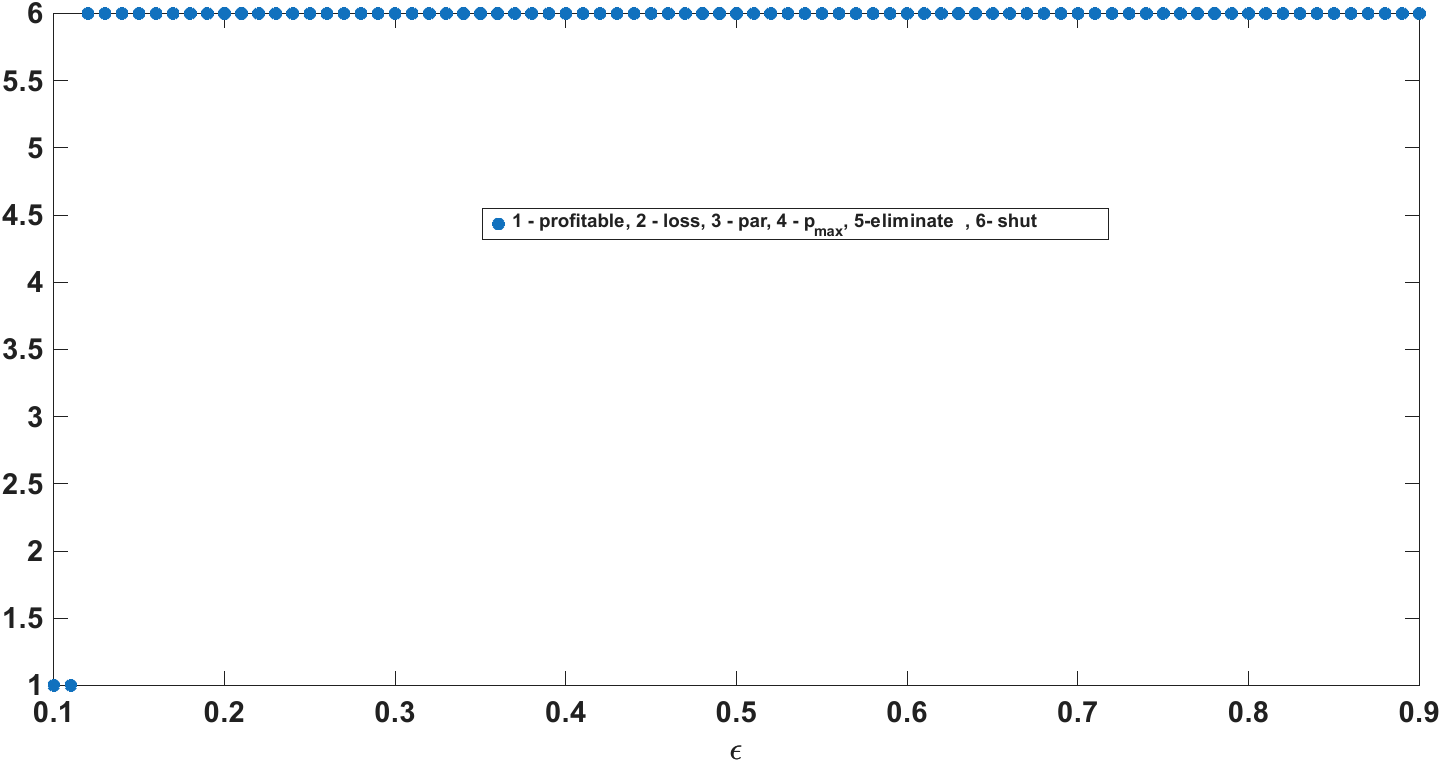}
        
        Optimal Choice
    \end{minipage}
    \hfill
    \begin{minipage}[t]{0.32\textwidth}
        \centering
        \includegraphics[width=\linewidth ,height = 3 cm]{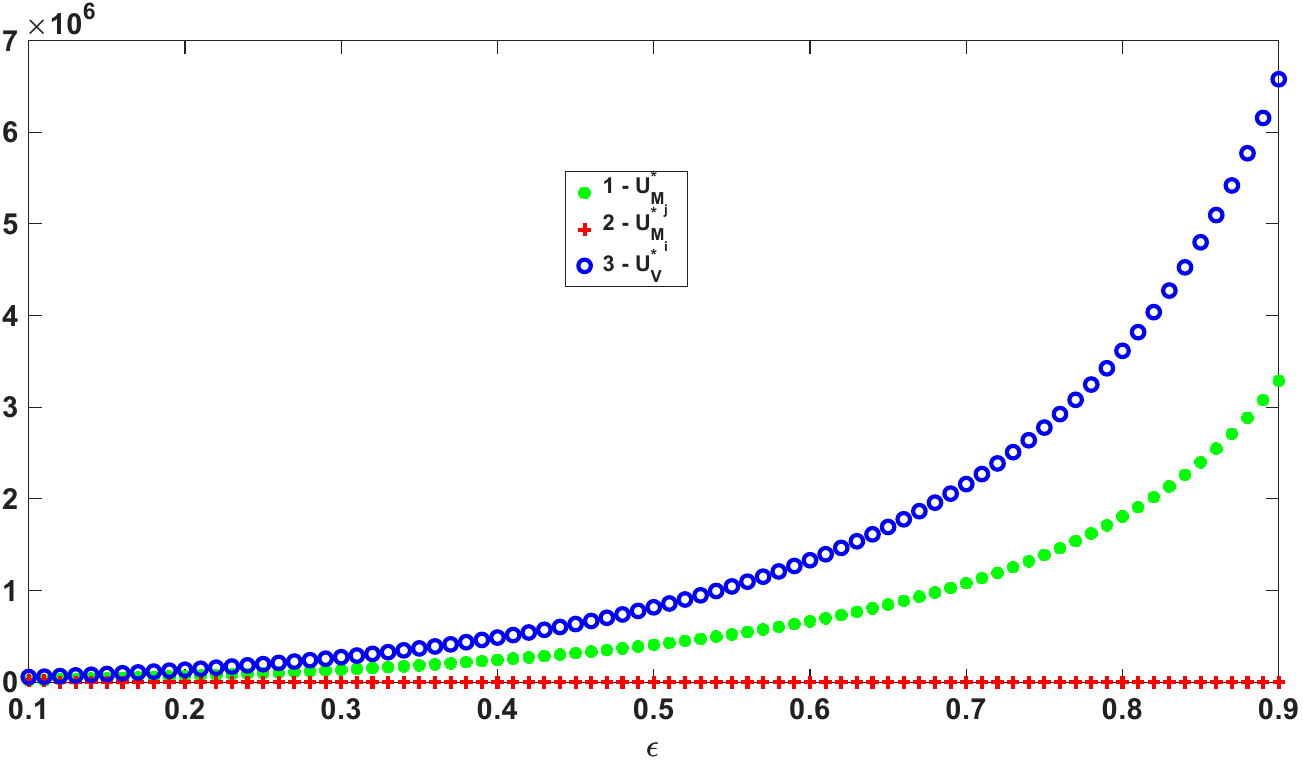}
        
        Utility of Agents
    \end{minipage}
    \hfill
    \begin{minipage}[t]{0.32\textwidth}
        \centering
        \includegraphics[width=\linewidth,height = 3 cm]{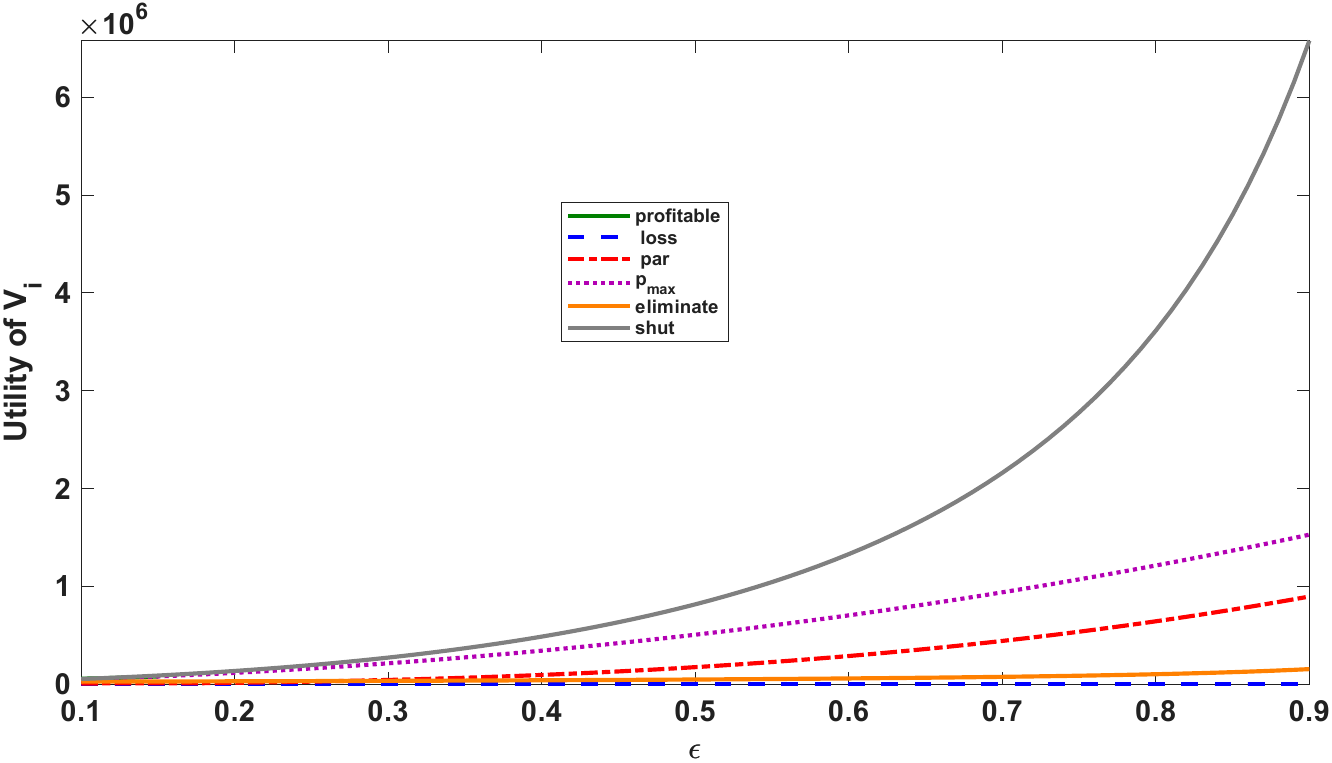}
        
        Utility of Coalition in all regimes
    \end{minipage}
    
    \caption{Non- Comparable Manufacturer - II for $r = .9$}
    \label{non-comp_manu_3}
\end{figure*}

\begin{figure*}[h!]
    \centering
    \begin{minipage}[t]{0.32\textwidth}
        \centering
        \includegraphics[width=\linewidth ,height = 2.9 cm]{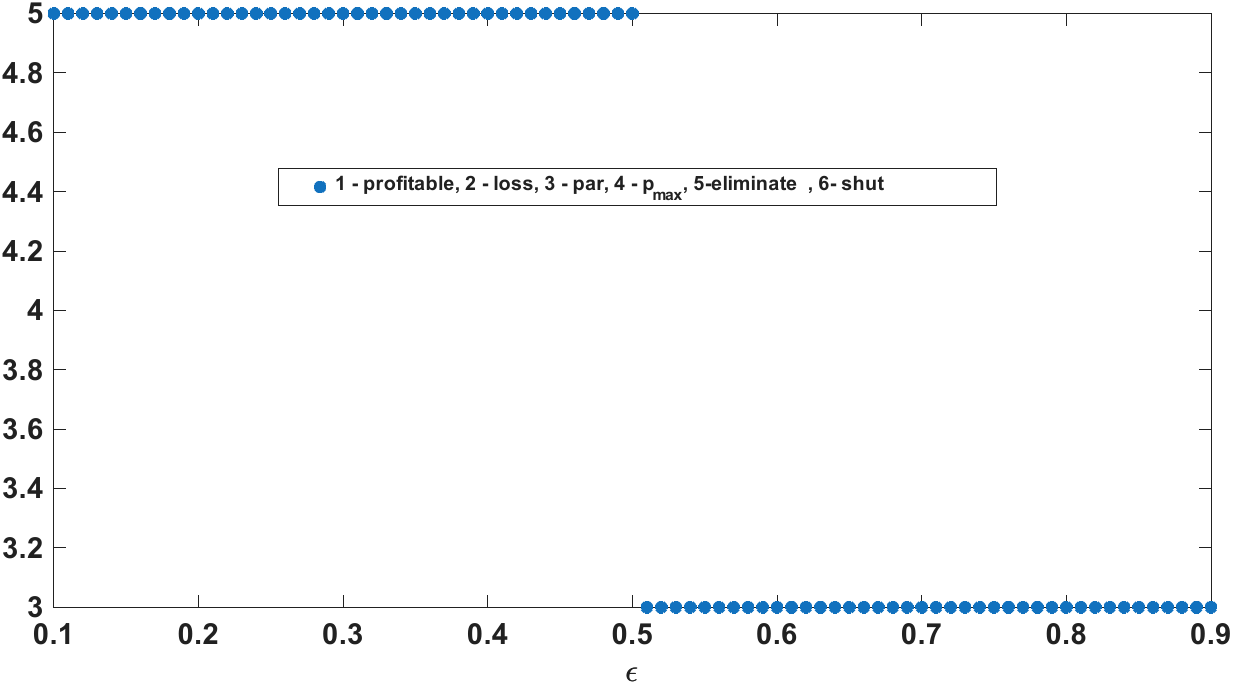}
        
        Optimal Choice
    \end{minipage}
    \hfill
    \begin{minipage}[t]{0.32\textwidth}
        \centering
        \includegraphics[width=\linewidth ,height = 3 cm]{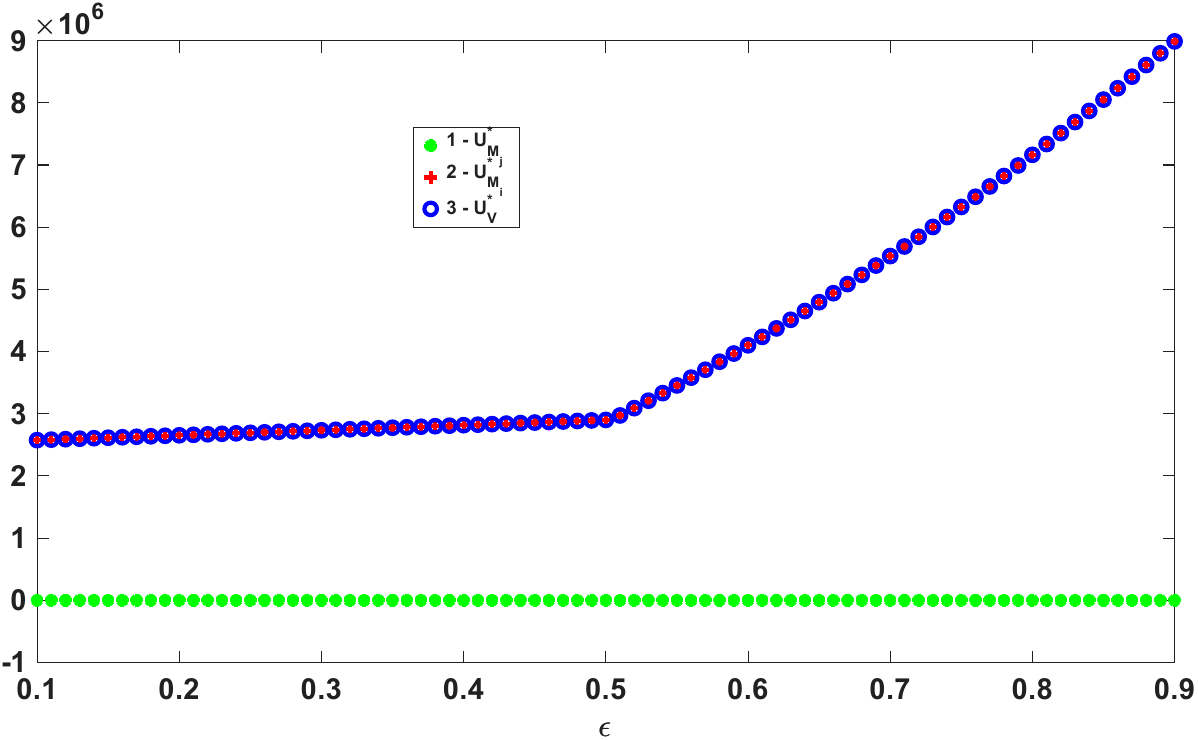}
        
        Utility of Agents
    \end{minipage}
    \hfill
    \begin{minipage}[t]{0.32\textwidth}
        \centering
        \includegraphics[width=\linewidth,height = 3 cm]{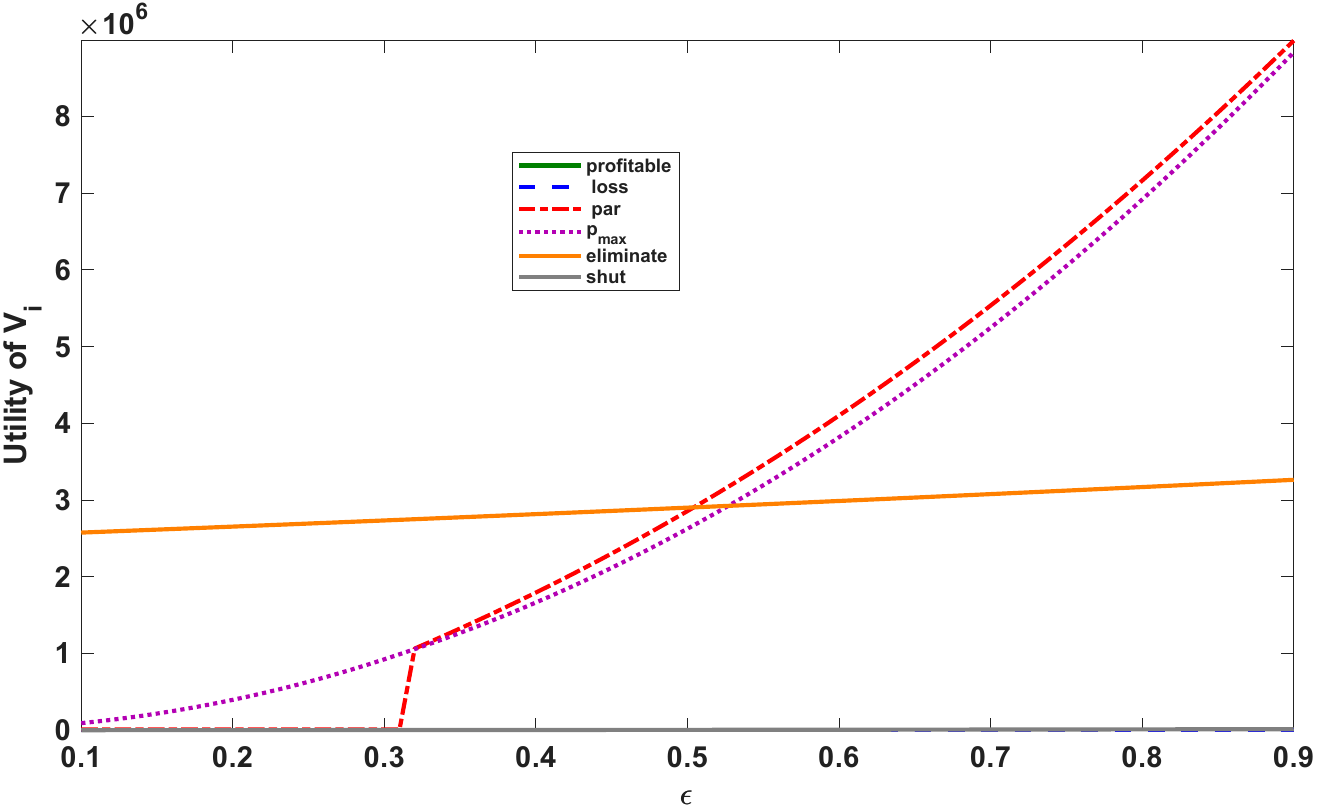}
        
        Utility of Coalition in all regimes
    \end{minipage}
    
    \caption{Superior In-house Manufacturer for $r = .1$}
    \label{sup_manu}
\end{figure*}

\begin{figure*}[h!]
    \centering
    \begin{minipage}[t]{0.32\textwidth}
        \centering
        \includegraphics[width=\linewidth ,height = 2.9 cm]{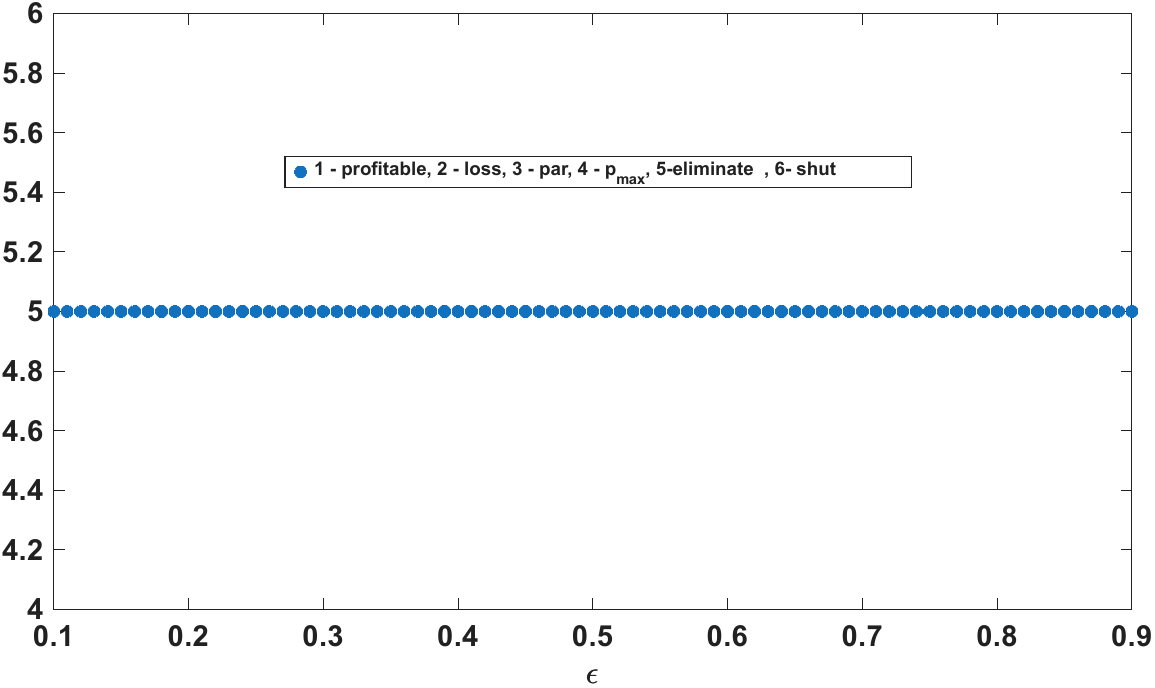}
        
        Optimal Choice
    \end{minipage}
    \hfill
    \begin{minipage}[t]{0.32\textwidth}
        \centering
        \includegraphics[width=\linewidth ,height = 3 cm]{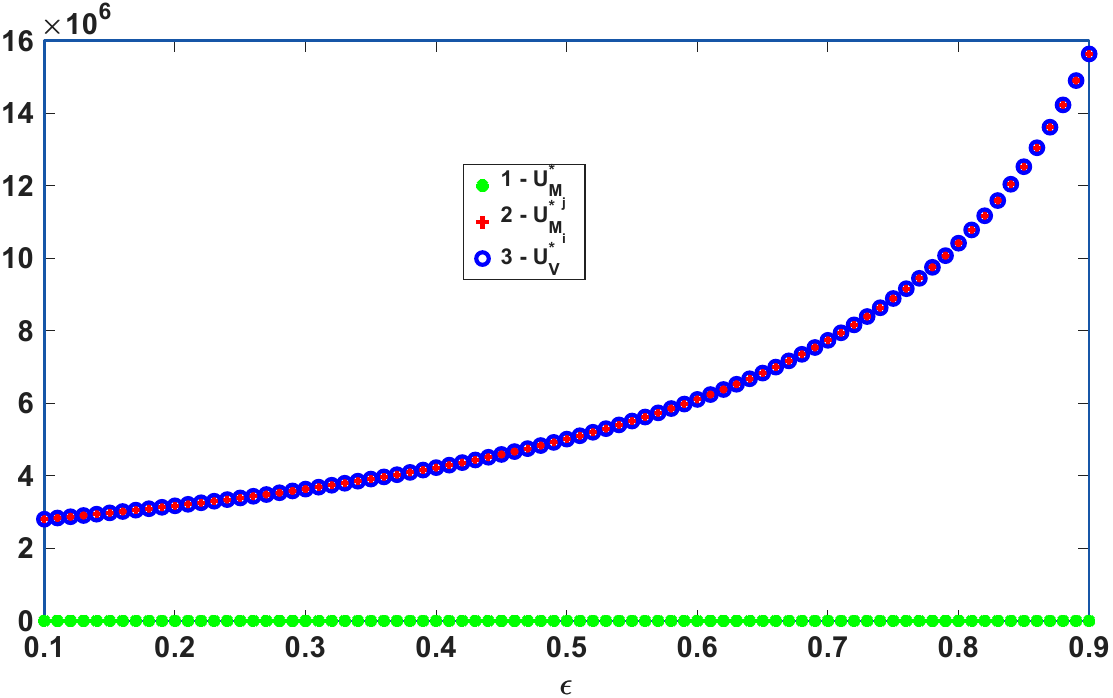}
        
        Utility of Agents
    \end{minipage}
    \hfill
    \begin{minipage}[t]{0.32\textwidth}
        \centering
        \includegraphics[width=\linewidth,height = 3 cm]{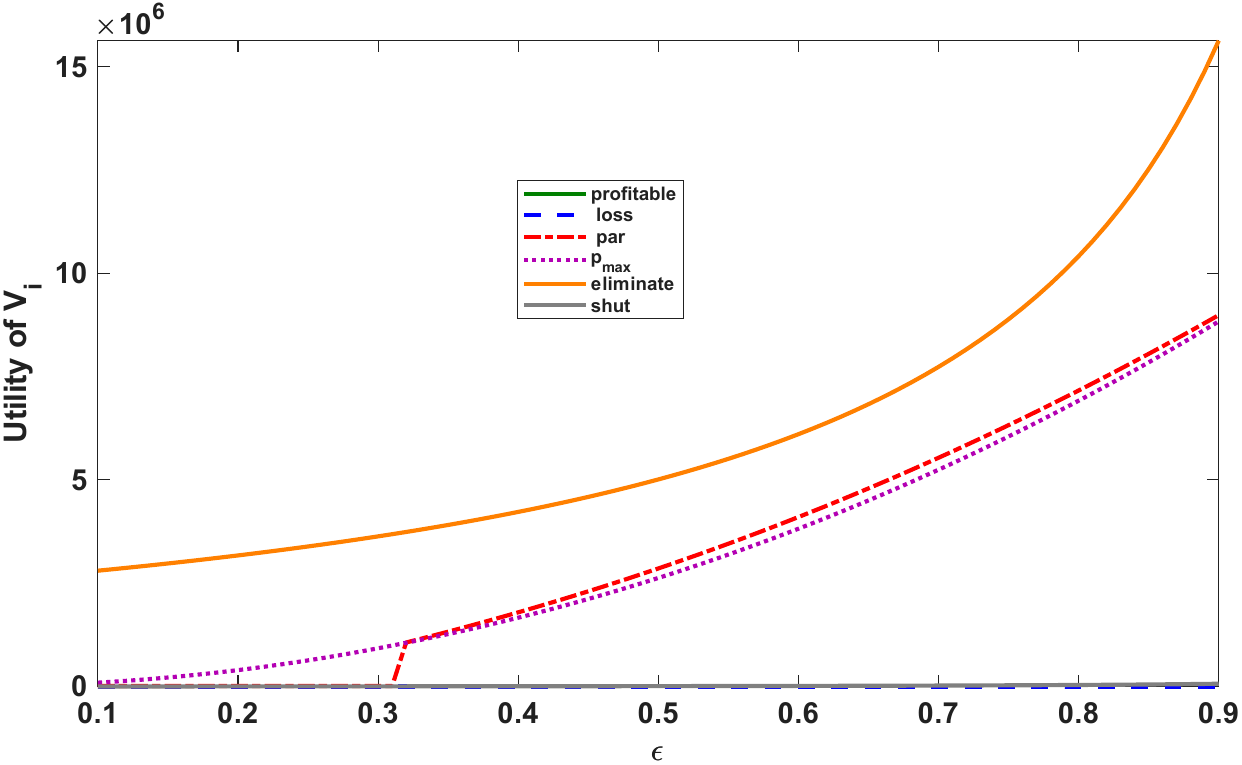}
        
        Utility of Coalition in all regimes
    \end{minipage}
    
    \caption{Superior In-house Manufacturer for $r = .9$}
    \label{sup_manu_1}
\end{figure*}}

\vspace{-3mm}

\section{Conclusions}
\label{sec_conclusions}

This paper studies  a partially vertically integrated supply chain in which a supplier simultaneously operates an in-house production unit and supplies raw material to an independent out-house manufacturer that competes with it in the downstream. Using a Stackelberg game framework with dedicated yet cross-influenced customer bases, we characterize how the coalition of supplier and in-house manufacturer can jointly set wholesale and retail prices to shape the entire downstream market to their advantage based on system parameters, rather than merely coordinate it.

Our analysis shows that the presence of an in-house unit fundamentally alters the competitive dynamics: it gives the supplier an instrument to discipline the out-house manufacturer  by forestalling its downstream monopoly. It can optimally select among co-existence with profits for both, force the opponent to operate at break-even point, in-house at losses, shutdown of in-house, or outright exclusion of the rival---depending on customer loyalty, product essentialness, demand fallback (in the absence of a particular production-unit), and the relative strengths of the two units. We derive closed-form optimal prices and utilities for each of these regimes and provide a numerical procedure that identifies the coalition's optimal choice for any given set of market and system parameters.

For two limiting, yet important, cases, we also provide  provable optimal characterizations. When the customer loyalty is high and the products are not so essential (i.e., for the case with luxury products), we prove that profitable co-existence is optimal for the coalition. At the opposite extreme, when essentialness is high, we prove that the optimal regime collapses to a choice between at most two alternatives, governed by two closed-form relative-strength scores: for low demand fallback, the coalition either forces the out-house manufacturer to operate at break-even or sustains profitable co-existence at the maximum retail price; for high fallback, it either shuts down its own in-house unit or eliminates the out-house manufacturer entirely. Outside these two limiting regimes, our numerical study---validated against the theoretical asymptotics---shows that the transition between regimes is generally smooth and monotone in essentialness factor and fallback rate, with one notable exception: when the in-house unit is distinctly inferior to its rival, the coalition can find it optimal to operate the in-house unit at a loss over an intermediate range of essentialness, using it purely as a competitive instrument rather than a profit center.

The most surprising outcome of this study is that a higher market potential can become detrimental to the out-house---the coalition forces the otherwise superior out-house  to operate at par when the market potential of the latter is higher than  a threshold and optimally prefers profitable operation of both units otherwise---this is true especially when the out-house is superior in terms of price-sensitivity (or market reputation) and production costs.

This work opens several avenues for further research. A natural extension is to study the same encroachment problem when the supplier is no longer a monopoly in the upstream market. Another promising direction is to embed the model in a dynamic setting incorporating inventory decisions, demand fluctuations, learning, and inter-period strategic adaptation. 
Yet another interesting study is to quantify the incremental value of the vertical encroachment by comparing supplier profit with and without an in-house production capability.

\ignore{
\section{Impact of Encroachment on SC Agents}
Talk about total utility in all the three cases, co-existence, shut down in-house, and eliminate downstream...also out-house manufacturer is most benefited when?

\section{ Case Study when costs are normalized to zero} 
\begin{cor}
 When essentialness is high and manufacturers have same price sensitivity parameters, it is always beneficial to operate at par.      
\end{cor}
}



 \bibliographystyle{elsarticle-harv}   \bibliography{ref}


\appendix
\section*{Appendices}
\setcounter{section}{1}

 
\section{ }
\label{sec_Appendix_AA}

\subsection*{\bf Proof of Theorem \ref{thm_Fco_positive}:}
\label{Proof_thm_1}
From  \eqref{Eqn_phi_st_p}, 
 $\phi(p) < 0$, when $\varepsilon \alpha_\sMi p > \dbar_\sMe + 2 \varepsilon \dbar_\sMi - \alpha_\sMe C_\sMe$. For such $p$,    $(p,q) \notin {\cal F}_{co}^+$ for any   $q$ (see \eqref{Eqn_Fco_plus}). 
 Also, from \eqref{Eqn_psi_q}, $p \le \psi(q)$ if and only if $q \ge \psi^{-1} (p)$.
 Thus a more direct  representation of ${\cal F}_{co}^+$   \eqref{Eqn_Fco_plus} is given by:
\begin{eqnarray}\label{eqn_cal_F_co_+}
    {\cal F}^+_{co} &=&  \left \{ (p, q)    :  0 \le   p \le \bar{p} (q)  \mbox{ and } \max\left \{0, \psi^{-1}(p) \right \} \le  q \le {\bar q}(p)  \right \}  \mbox{ with }  \hspace{10mm} \\
    {\bar p} (q) &:=&
    \min \left \{\pmax, \psi(q), \phi^{-1}(0)  \right  \}  \nonumber \\
    &= & \min \left \{ \frac{\dbar_\sMi + \varepsilon \dbar_\sMe}{\alpha_i}, \psi(q), \frac{\dbar_\sMe + 2\varepsilon\dbar_\sMi -\alpha_\sMe C_\sMe}{\varepsilon\alpha_\sMi}   \right  \}. \label{Eqn_bar_p}  \mbox{ and }  \\
    {\bar q}(p) &:=& \min \left \{\theta (p), \phi (p) \right  \} \stackrel{a}{=} \left \{ 
    \begin{array}{lll}
       \theta (p)  &  \mbox{ if } p  \le  p_{sw} = \frac{\dbar_\sMi}{\alpha_\sMi} + \frac{\sqrt{\alpha_\sMe O_\sMe}}{\varepsilon\alpha_\sMi} \\
      \phi (p)    & \mbox{ else. }
      \label{Eqn_bar_q_p}
    \end{array} 
    \right . 
\end{eqnarray}
(by direct computations using \eqref{Eqn_opt_policy_Mj}, \eqref{Eqn_psw} and \eqref{Eqn_phi_st_p} one can verify equality `$a$'). 

The function
$U_\sV$ is continuous and ${\cal F}_{co}^+$ is bounded (as $\pmax < \infty$), thus we have an optimizer for the problem in the LHS of \eqref{eqn_vc_opt_f_co_+}.

  Define $p$-sections $\S_p := {\cal F}_{co}^+ \cap \{(p, q): q \ge 0\}$ lines for each $p \le \pmax$. From~\eqref{eqn_cal_F_co_+},\vspace{-3mm}
\begin{eqnarray}
    \label{Eqn_Sp}
 \S_p  =   [ l(p), \ {\bar q} (p) ], \mbox{ with }  l(p) :=  \max\left \{0, \psi^{-1}(p) \right \} 
  \end{eqnarray}
  which is  an empty set $\emptyset$, when left point $l(p) > {\bar q}(p)$, and otherwise an interval.
The idea is to find sub-optimizers in each 
  $\S_p$ and then find the global optimizer. 
  Towards this goal, first note that
     the function $U_\sV$ in  ${\cal F}_{co}^+$ matches with  the   `unconstrained' function ${\cal U}$ given in equation \eqref{eqn_util_co-exist_uc}, which can be rewritten as (we are repeating the definitions of $\{w_i\}_i$ constants for ease of reading):
\begin{eqnarray}\label{eqn_Util_w}
   {\cal U}(p,q) \ = \ w_1 p^2 + w_2 pq + w_3 q^2 + w_4 p + w_5 q + w_6, \mbox{ with }  \hspace{26mm}&& \\
    \begin{array}{llll}
  &  w_1 \ = \ \frac{ -\alpha_\sMi \left (2- \varepsilon^2  \right )}{2} \hspace{4mm}\nonumber  %
    &   w_4 \ = \  \frac{2\dbar_\sMi + \varepsilon\dbar_\sMe + \varepsilon\alpha_\sMe C_\sMe  - \varepsilon\alpha_\sMi C_\sS + \alpha_\sMi(2-\ \varepsilon^2)\left(C_\sMi + C_\sS \right) }{2}  \\ 
   &  w_2 \ =  \  \frac{\varepsilon\left(\alpha_\sMi + \alpha_\sMe\right)}{2}   
   & w_5  \ = \  -\frac{\varepsilon\alpha_\sMe \left(C_\sMi + C_\sS\right)}{2} + \frac{\left(\dbar_\sMe - \alpha_\sMe C_\sMe + \alpha_\sMe C_\sS\right)}{2}   \\
   & w_3 \ = \  -\frac{\alpha_\sMe}{2} 
   &   w_6 \ = -\left(\dbar_\sMi + \frac{\varepsilon\left(\dbar_\sMe + \alpha_\sMe C_\sMe \right)}{2}\right)\left(C_\sMi + C_\sS \right)  \\
   & & \hspace{20mm}  - \left(\frac{\dbar_\sMe - \alpha_\sMe C_\sMe}{2}\right)C_\sS.
    \end{array} \nonumber 
\end{eqnarray}    
The   second derivative $\nicefrac{\partial^2 {\cal U}}{\partial^2 q} = w_3 < 0$ for all $(p,q)$. 
 %
     Thus for any $p$ with $\S_p \ne \emptyset$,  the     sub-optimizer of  the sub-optimization problem $ \max_{q : (p,q)\in \S_p} {\cal U}(p,q)$ is  unique by strict concavity  and equals,
  \begin{eqnarray}\label{eqn_q_star_p}
      q^*(p) := 
     \max\{l(p),  \min \{ h(p),  {\bar q}(p)  \},  \mbox{ where } h(p) := - \frac{w_2 p + w_5} {2 w_3}  
  \end{eqnarray} 
     is the  `unconstrained'  optimizer  of  ${\cal U} (p, \cdot)$ over $\{  q  \in {\cal R} \}$, 
     $l(p)$  and ${\bar q}(p)$ are  the   boundary points   of $\S_p$ (see~\eqref{eqn_cal_F_co_+}, \eqref{Eqn_Sp}).   

 Using \eqref{Eqn_bar_p} and \eqref{Eqn_psi_q}, define,  
\begin{eqnarray}
         \bar p &:=& \bar p(0) = \min \left \{ \pmax, \ \psi(0), \ \phi^{-1} (0) \right  \}     \nonumber \\
         &
     = & \hspace{-5mm}
     \min \left \{ \frac{\dbar_\sMi+\varepsilon \dbar_\sMe }{\alpha_\sMi},   \frac{   2 \dbar_\sMi + \varepsilon  \dbar_\sMe + \varepsilon \alpha_\sMe  C_\sMe }{\alpha_\sMi (2-\varepsilon^2)},  \frac{\dbar_\sMe + 2\varepsilon\dbar_\sMi -\alpha_\sMe C_\sMe}{\varepsilon\alpha_\sMi} \right  \}.  \hspace{9mm}
\label{Eqn_pbar}
 \end{eqnarray}
 From \eqref{Eqn_phi_st_p}, \eqref{Eqn_psi_q} and \eqref{eqn_cal_F_co_+},  when  $p \le  {\bar p} (0)$,  we have  $\phi(p) \ge   0$   and so $   {\bar q}(p) \ge  0$ (as from \eqref{Eqn_feasible_Regioin_Mj},  $\theta(p') >0$ for any $p'$, under {\bf A.1}) and $\psi^{-1}(p) \le 0$.
 Hence $l(p) = 0$ and $   {\bar q}(p) \ge  0$ in \eqref{Eqn_Sp} and 
 \textit{hence $\S_p \ne \emptyset$  for all    $p \le  {\bar p} (0)$.} Also observe $\bar q(p) > 0$  for all $p < \bar p$,  as $\phi$ is linear, and by definition of $\bar p$ in \eqref{Eqn_pbar}, 
 $\bar q(\bar p) = 0$.
 
 Further   $h(p) >0$ for all $p$ by {\bf A}.2 and thus  \eqref{eqn_q_star_p} equals (recall $q^*(p)$ is only defined for $p$ with $\S_p \ne \emptyset$):
\begin{eqnarray}
    q^*(p) = \left \{
    \begin{array}{ll}
  \min \{ h(p),  {\bar q}(p) \}       &  \mbox{ if } p \le \bar p \\ \\
\max \{ l(p),   \min \{ h(p),  {\bar q}(p) \}   \}         & \mbox{ else, i.e., if,  } {\bar p}  < p  < \pmax \mbox{ and } \S_p \ne \emptyset   \hspace{-12mm}
    \end{array}
    \right .  \nonumber  \\
    \label{Eqn_qstar}
\end{eqnarray}
%
%
We now analyze the last line of \eqref{Eqn_qstar}.
In the sub-case with $\bar p=\bar p (0) < \psi(0)$ we have $\S_p = \emptyset$ as: 
 \begin{itemize}
     \item either   ${\bar p} = \pmax$ in \eqref{Eqn_bar_p} and then clearly $\S_p = \emptyset$ for all $p > \bar p$; 
     \item   or 
 $\phi(\bar p) = 0$ and so $\phi(p) < 0$ (and so $\bar q(p) < 0$) for all $p > \bar p$,  and then again 
 $\S_p = \emptyset$ for all $p$.
 \end{itemize}
And  when $\bar p = \psi(0)$, for all $p > \bar p$ we have $l(p) = \psi^{-1}(p) > 0$ --- hence in the last line of \eqref{Eqn_qstar} we have $l(p) > 0$ and so $q^*(p) > 0$. Thus, in all, we have $q^*  (p) > 0 $ for all $p$ with $\S_p \ne \emptyset$, except for the corner case\footnote{We always have $h(\bar p) > 0$.  
When $\bar p < \phi^{-1}(0)$, we have $\phi(\bar p) > 0$ from \eqref{Eqn_phi_st_p} and further:\\   
i) if $\bar p = \psi(0)$, then  $\bar{q}(\bar p) = \phi (\bar p) > 0$, as $  \psi(0) > p_{sw}$ using \eqref{Eqn_psw} and \eqref{Eqn_bar_q_p}; and \\   ii)  if $\bar p = \pmax$, then also $\bar{q}(\bar p) > 0$ as  $  \theta(p_{mx}) > 0 $.
} when $\bar p = \phi^{-1}(0)$  for which $q^*(\bar p) =\bar q(\bar p) = 0$. 
 
From \eqref{Eqn_pbar} and {\bf A}.1  we have $\bar p > 0$, thus  
 there exists at  least one $p$ such that $\S_p \ne \emptyset$,
  and hence:
$$
\max_{(p, q) \in {\cal F}_{co}^+} U_\sV(p,q) = \max_{p \le \pmax, \S_p \ne \emptyset}  U_\sV(p, q^*(p) ). 
$$
In other words,   the global optimizer of $U_\sV$ in ${\cal F}_{co}^+$ is among, 
    \begin{eqnarray} 
\nonumber
    \mathbb {L}^* &:= &\{ (p, q) : 0\le p \le \pmax, \S_p \ne \emptyset, q = q^*(p) \} \\
     &=& \bigg  \{ (p, q) :  0 \le  p \le   {\bar p}, \ \  \ \ \ \ \ \ \ q= q^*(p) = \min \{ h(p),  {\bar q}(p) \} \bigg  \}   \nonumber \\
&& \cup \bigg \{ (p, q) : {\bar p}  < p \le \pmax, \ \S_p \ne \emptyset, \ \ \  q = q^*(p) \bigg  \} . \label{Eqn_L_star} 
 \end{eqnarray}
 Also since $q^*(p) > 0$ for all $\S_p \ne \emptyset$ with $p \ne \bar p$ (and when $\bar p = \phi^{-1} (0)$), we have (depending upon $\bar p = \phi^{-1} (0)$ or $\bar p \ne \phi^{-1} (0)$)
 $$
 {\mathbb L}^* = {\mathbb L}^* \cap \left ( \{ (p, q) : q > 0\} \cup \{ (\bar p, 0) \} \right ) \mbox{ or } {\mathbb L}^* \cap \left ( \{ (p, q) : q > 0\}  \right ). 
 $$

   Define the  following mapping, using \eqref{eqn_Util_w} and function $h$,  whose optimizer over $\{p: \S_p \ne \emptyset\}$   can  be a potential optimal point:
 \begin{eqnarray}
 \label{Eqn_omega}
\omega(p) &:=& {\cal U}(p, h(p)) = w_1 p^2 -  \frac{w_2^2 p +w_5 w_2 }{2  w_3  }   p  +\frac{ (w_2 p +w_5)^2}{4  w_3  }    \\
&&
+ w_4 p - \frac{w_2 w_5  p +w_5^2}{2  w_3  }  + w_6. \nonumber
\end{eqnarray}
The first derivative of    $\omega$ 
 at $p = 0$ is given by:
 \ignore{
 {\color{red} $$
\left . \frac{d \omega }{ d p}\right .
=   2 w_1 p - \frac{ 2 w_2^2p  + w_5 w_2}{2w_3} + \frac{ 2w_2(w_2 p +w_5)}{4  w_3  }  + w_4  - \frac{w_2 w_5}{2w_3}   
=  2 w_1 p -  \frac{  w_2^2p  }{2w_3} + w_4  - \frac{w_2 w_5}{2w_3} ,
 $$
 Thus 
 $$
 p_{co}^* = \frac{ 2w_3 w_4 - w_2 w_5} {w_2^2 - 4 w_1 w_3} \mbox{ and }  q_{co}^* = - \frac{ w_2 \frac{ 2w_3 w_4 - w_2 w_5} {w_2^2 - 4 w_1 w_3}  + w_5 } {2 w_3} = -\frac{ 2w_3 w_4 w_2 - 4 w_1 w_3  w_5 }{ 2 w_3 (w_2^2 - 4 w_1 w_3) }
 $$

 Now 
 $$
 p_{co}^* - \psi (q_{co}^* )  = \frac{ 2w_3 w_4 - w_2 w_5} {w_2^2 - 4 w_1 w_3} -  
\frac{1}{\alpha_\sMi (2-\varepsilon^2)} \left ( 2 \dbar_\sMi + \varepsilon  \dbar_\sMe + \varepsilon \alpha_\sMe  (C_\sMe - \frac{ 2w_3 w_4 w_2 - 4 w_1 w_3  w_5 }{ 2 w_3 (w_2^2 - 4 w_1 w_3) } ) \right ) = 
 $$
 }
 }
 $$
\left . \frac{d \omega }{ d p}\right |_{p=0} 
= \frac{2w_3w_4 - w_2w_5}{2w_3},
 $$
 which is 
  positive as $w_3 < 0$ and  $w_2,w_4,w_5$ terms are positive under {\bf A}.1-2. Thus $(0, h (0)) $ can never be the global optimizer in ${\cal F}_{co}^+$, even if $q^*(p) = h(p)$ near $p = 0$. Like wise 
\begin{eqnarray*}
     \left .
 \frac{ \partial {\cal U}(p, \theta(p) )  }{\partial p }  \right |_{p = 0} \hspace{-2mm} &=&   w_2  \theta(0) +w_4 + (2 w_3 \theta(0)   + w_5) \theta'(0)  \\
 &=&  \frac{1}{2}( \varepsilon (\alpha_\sM + \alpha_\sMe)  - 2 \varepsilon \alpha_\sM) \theta(0))  +w_4  + w_5  \frac{\varepsilon \alpha_\sM}{\alpha_\sMe}
 > 0 
\end{eqnarray*}
   when  $h(0) =-\frac{w_5}{2w_3} > \theta(0)
  $ implying $(2w_3 \theta(0) + w_5) > 0$  (as $w_3 < 0$) and since the derivative $\theta'(0) = \nicefrac{\varepsilon \alpha_\sM}{\alpha_\sMe} > 0 $ (see \eqref{Eqn_opt_policy_Mj}).

Now consider the corner case with $\bar p = \phi^{-1}(0)$ for which $q^*(\bar p) = 0$.  For this case,  by direct substitution into  \eqref{Eqn_opt_policy_Mj},  one can verify $\pe^*(\bar p, 0) = {\pe}_{mx} = \pe^*(\bar p, q) $ for any  (but sufficiently  small) $q > 0$  and that  the demands at such $(\bar p, q)$ as well as 
$(\bar p, 0)$ are positive:

\vspace{-4mm}
{\small\begin{eqnarray*}
    (\dbar_\sM - \alpha_\sM \bar p + \varepsilon \alpha_\sMe\pe^*(\bar p, 0))^+ = (\dbar_\sM - \alpha_\sM \bar p + \varepsilon\alpha_\sMe \pe^*(\bar p, q))^+ = (\dbar_\sM - \alpha_\sM \bar p + \varepsilon \alpha_\sMe {\pe}_{mx}) \\
    (\dbar_\sMe + \varepsilon \alpha_\sM \bar p -   \pe^*(\bar p, 0))^+ = (\dbar_\sMe + \varepsilon \alpha_\sM \bar p - \alpha_\sMe \pe^*(\bar p, 0))^+ = (\dbar_\sMe + \varepsilon \alpha_\sM \bar p - \alpha_\sMe {\pe}_{mx})
\end{eqnarray*}} and thus from \eqref{eqn_util_co-exist_given_pq}  we have, $U_\sV(\bar p, 0) <  U_\sV (\bar p, q)$. 
  
  
 Thus, in all, 
 $(0, q^*(0))$ or $(\bar p, 0)$ can't be global optimizers  of $U_\sV$ in ${\cal F}_{co}^+$   and  thus we have:
\begin{eqnarray}\label{eqn_L_star_intersect}
 {\mathbb L}^* = {\mathbb L}^* \cap \{ (p, q) :  q  > 0\} \cap \{ (p, q) :  p  > 0\}. 
 \end{eqnarray}


 For further analysis, we consider the 
  second derivative of the mapping $\omega$   \eqref{Eqn_omega}, which  equals,
\begin{eqnarray}
  \frac{d^2 \omega }{d p^2} =\frac{4w_1w_3 - w_2^2}{4w_3}.  \label{Eqn_sec_derivative}  
\end{eqnarray}

The second derivative   is 
either negative or positive --  thus the mapping $\omega$ is either concave or convex. This implies either of the two possibilities:

\begin{itemize}
    \item  If $
\left \{ (p, h(p) ) : \S_p \ne \emptyset \right \} \cap \mathbb{L}^*  = \emptyset
$, then from \eqref{Eqn_L_star},  the global optimizer is among (basically the global attractor of $U_v$ across each $p$-section is at the respective appropriate boundary point)
 \begin{eqnarray*}  
\mathbb {L}^*     &=& \bigg  \{ (p, q) :  0 <  p \le   {\bar p}, \ \  \ \ \ \ \ \ \  q =    {\bar q}(p)  \bigg  \} \\
&&\hspace{-13mm} \cup \bigg \{ (p, q) : {\bar p}  < p \le \pmax, \S_p \ne \emptyset, \  \  q =  l(p) \indc {h(p) < l(p)} + {\bar q}(p) \indc{h(p) > {\bar q}(p)}   \bigg  \}.  
 \end{eqnarray*}
Hence it is at 
one of the four border lines  $\mathbb{L}_1$-${\mathbb L}_4$ (observe here that $\psi (.)$ is inverse function of  $l(.)$ with $p > \bar p$). This completes the proof of the theorem for this sub-case, also by \eqref{eqn_L_star_intersect}.

\item  In the other case, some section-wise optimizers $\{(p, h(p))\}$ intersect with $\mathbb{L}^*$, and one can again have two sub-cases:  

\begin{itemize}
    \item   the global optimizer of ${\cal U}$  is in the interior of ${\cal F}_{co}^+$ and this happens if and only if  $(p_{co}^*, q_{co}^*)$ of \eqref{Eqn_pco_qco} is in the interior of $   {\cal F}_{co}^+$  because of the following reasons:
    \begin{enumerate}[(g.1)]
        \item observe $p_{co}^*$   solves $\nicefrac{d\omega}{dp}  = 0$ or equivalently $(4 w_1 w_3  -w_2^2) p_{co}^* + 2w_3 w_4 - w_5 w_2 = 0$,     $q_{co}^* = h(p_{co}^*)$ 
        
        \item  and hence $(p_{co}^*, q_{co}^*)$ becomes the global optimizer of ${\cal U}$  when $\omega$ is concave; and further

        \item  when $\omega$  is convex, from \eqref{Eqn_sec_derivative} the solution of  $\nicefrac{d\omega}{dp}  = 0$ is a negative point and hence $ (p_{co}^*, q_{co}^*)$  is not in the interior of ${\cal F}_{co}^+$.

    \end{enumerate}
     
      In this case  $(p_{co}^*, q_{co}^*)$ of \eqref{Eqn_pco_qco} is also the global optimizer of $U_v$  in~${\cal F}_{co}^+$.

\item  the global optimizer of ${\cal U}$ is outside   ${\cal F}_{co}^+$.

Then 
by concavity/convexity of $\omega$  the global optimizer of $U_\sV$ is on one of the  boundaries, $\{q= {\bar q}(p) = \min \{\theta(p), \phi(p) \} \}$ or $\{p = \min\{\pmax, \psi(q)\}\}$, hence is at one of the four boarder lines ${\mathbb L}_1$- ${\mathbb L}_4$. 
\end{itemize}
This completes the proof for this sub-case also (also by \eqref{eqn_L_star_intersect})  and hence the theorem. \eop 
\end{itemize}

One of the side results that can be derived from the proof of  the above theorem is provided below. This lemma 
  establishes the conditions 
under which the coalition  never prefers interior-Bp regime and is useful in deriving the comparison analysis: 
\begin{lemma}\label{lem_comp}
\textit{If}
$
(8-6\varepsilon^2)\alpha_\sMi \alpha_\sMe
- \varepsilon^2(\alpha_\sMi^2 + \alpha_\sMe^2) < 0,
$
\textit{then the $\Vi$ coalition finds it optimal to operate either in I$\ell$, Op, or Mp regimes.}
\end{lemma}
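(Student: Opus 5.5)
The plan is to show that the hypothesis of the lemma is exactly the condition under which the map $\omega$ of \eqref{Eqn_omega} is convex. The interior (Bp) candidate then disappears, and every remaining candidate can be placed on a boundary that is already covered by the I$\ell$, Op or Mp analyses.

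\textbf{Step 1 (reduce the hypothesis to convexity of $\omega$).} Using the constants in \eqref{eqn_Util_w}, I compute
\[
4w_1w_3-w_2^2=\alpha_\sMi\alpha_\sMe(2-\varepsilon^2)-\tfrac{\varepsilon^2}{4}(\alpha_\sMi+\alpha_\sMe)^2 .
\]
Multiplying by $4$ gives $(8-6\varepsilon^2)\alpha_\sMi\alpha_\sMe-\varepsilon^2(\alpha_\sMi^2+\alpha_\sMe^2)$. The hypothesis therefore says $4w_1w_3-w_2^2<0$. Since $w_3<0$, \eqref{Eqn_sec_derivative} gives $d^2\omega/dp^2>0$, so $\omega$ is strictly convex. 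Equivalently, the Hessian of ${\cal U}$ is indefinite, so ${\cal U}$ has no local maximum anywhere.

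\textbf{Step 2 (no interior optimum in ${\cal F}^+_{co}$).} By step (g.3) in the proof of Theorem~\ref{thm_Fco_positive}, convexity of $\omega$ means $(p^*_{co},q^*_{co})$ is not in the interior of ${\cal F}^+_{co}$. So part (i) of that theorem does not apply, and part (ii) puts the maximizer of $U_\sV$ over ${\cal F}^+_{co}$ on one of $\mathbb{L}_1,\dots,\mathbb{L}_4$, with $p,q>0$. I then treat each line separately.
\begin{itemize}
\item $\mathbb{L}_4=\{p=\pmax\}$ is, by definition, the Mp regime with value $U^*_{Mp}$ from \eqref{Eqn_opt_max_util}.
\item $\mathbb{L}_3=\{q=\theta(p)\}$ is the Op regime, so its value is at most $U^*_{Op}$.
\item On $\mathbb{L}_2=\{p=\psi(q)\}$ the in-house demand is zero, so $U_\sV$ coincides with $U_{I\ell}$ of \eqref{Eqn_Il_util}. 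Since $U_{I\ell}$ is increasing in $p$ for fixed $q$, raising $p$ up to $\pmax$ gives a point of ${\cal F}_{I\ell}$ (or of the Mp/Op lines, if the constraint $q\le\theta(p)$ binds first) that is at least as good.
\item On $\mathbb{L}_1=\{q=\phi(p)\}$ with $p>p_{sw}$, the out-house price is $\pe^*=\pe_{mx}$, so $U_\sV=U_{St}$ there. By continuity of $U_\sV$ at $\mathbb{L}_1$ and by \eqref{Eqn_Fst}, $U_{St}(p,\phi(p))\le U_{St}(p,\theta(p))$, which is again an Op value. For $p\le p_{sw}$, $\bar q(p)=\theta(p)$ and $\mathbb{L}_1$ is not an active boundary.
\end{itemize}

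\textbf{Step 3 (the rest of ${\cal F}_{co}$).} In \eqref{eqn_opt_v_coexxx}, the complement ${\cal F}_{co}\setminus{\cal F}^+_{co}$ splits into the I$\ell$ part $\{p>\psi(q)\}$ and the saturation part ${\cal F}_{St}$. The first is bounded by $U^*_{I\ell}$ by construction. The second is bounded by $U^*_{Op}$ via \eqref{Eqn_Fst}.

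Combining Steps 2 and 3, $U^*_{co}$ equals the largest of $U^*_{I\ell}$, $U^*_{Op}$ and $U^*_{Mp}$, which is the claim.

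\textbf{Expected obstacle.} The delicate part is the $\mathbb{L}_1$ and $\mathbb{L}_2$ boundaries, which are not themselves named regimes. There I must check carefully that the monotonicity or section-wise arguments move the point into a feasible I$\ell$, Op or Mp point, rather than out of ${\cal F}_{co}$. In particular, for the $\mathbb{L}_2$ push to $\pmax$, I must verify that the constraint $q\le\theta(p)$ is preserved; this holds because $\theta$ is increasing in $p$.
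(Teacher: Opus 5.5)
Your proposal is correct and follows the paper's argument. The hypothesis is exactly $4(4w_1w_3-w_2^2)<0$, so $\omega$ is convex by \eqref{Eqn_sec_derivative}, $(p^*_{co},q^*_{co})$ lies outside the interior of ${\cal F}^+_{co}$, and Theorem~\ref{thm_Fco_positive}(ii) places the maximizer on $\mathbb{L}_1$--$\mathbb{L}_4$. The paper's one-line proof stops there. Your explicit assignment of the remaining pieces to the I$\ell$, Op and Mp values completes that argument rather than taking a different route: $\mathbb{L}_1$ via \eqref{Eqn_Fst}, $\mathbb{L}_2$ via monotonicity of $U_{I\ell}$ in $p$, and ${\cal F}_{co}\setminus{\cal F}^+_{co}$. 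The monotonicity step needs $q>C_S$, which is harmless at a maximizer under {\bf A.1}.
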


 {\bf Proof of Lemma \ref{lem_comp}:} Under the given condition, the second derivative in \eqref{Eqn_sec_derivative} of Appendix (while proving Theorem \ref{thm_Fco_positive}) is positive, which implies the optimal is on one of the boundaries, excluding $\{q=0\}$ and $\{p=0\}$ lines.  \eop

\subsection*{{\bf Proof of Lemma \ref{lem_compr}}:}
\label{Proof_Lemma_2}
By  {\bf A.1},  there exists a $\bar \varepsilon < 1$ such that 
 the condition in 
\eqref{eqn_il_not_opt}
is satisfied    
 and so I$\ell$ is not optimal for $\varepsilon >\bar \varepsilon$.
 Thus the coalition $\V$ finds
it optimal to operate either in   Op or Mp regimes
 by  Lemma \ref{lem_comp} for all  $\varepsilon >\bar \varepsilon$ (if required with a bigger $\bar \varepsilon$ but a  $\bar \varepsilon<1$), as  the term 
$$
(8-6\varepsilon^2)\alpha_\sMi \alpha_\sMe - \varepsilon^2(\alpha_\sMi^2 + \alpha_\sMe^2) \to - (\alpha_\sMi - \alpha_\sMe)^2  <  0.  
$$

For all   $\varepsilon > \bar  \varepsilon $ because of \eqref{eqn_il_not_opt},   we have  
$l_{_{Mp}} = 0$ (see also \eqref{Eqn_psi_inv_pmax}). Choose $\bar  \varepsilon $ further big if required such that, for all $\varepsilon > \bar  \varepsilon $,  we have   $\pmax = \nicefrac{(\dbar_\sMi + \varepsilon\dbar_\sMe)}{\alpha_\sMi} > p_{sw}$ (see   \eqref{Eqn_psw} and by {\bf A.1}). 
Hence  as $\varepsilon \to 1$, from  second row of \eqref{eqn_r_max_simplified}  and from  \eqref{eqn_q_star_p} of \ref{sec_Appendix_AA},  we have that:    

\vspace{-3mm}
{\small\begin{eqnarray}
 \label{Eqn_rmp_hpax_lim}   
r_{_{Mp}} &\stackrel{\mbox{\tiny when $\varepsilon \ge \bar \varepsilon$}}{=}&  \phi(\pmax)  \to \frac{ (\dbar_\sMi - \alpha_\sMe C_\sMe)}{\alpha_\sMe} 
\mbox{ and }  \\
h(\pmax) &\to& \frac{ (\frac{\alpha_\sMi + \alpha_\sMe }{2})(\frac{\dbar_\sMi+ \dbar_\sMe}{\alpha_\sMi}) -\frac{\alpha_\sMe(C_\sMi + C_\sMe)}{2} + \frac{\dbar_\sMe}{2}}{\alpha_\sMe}, \mbox{ and thus } \nonumber
 \\
    \label{Eqn_hpmax_minus_rmp}
&& \hspace{-35mm} z\ := \ \lim_{\varepsilon\to 1} \left ( h(\pmax) - r_{_{Mp}} \right )  = \frac{ \frac{ \alpha_\sMe - \alpha_\sMi}{\alpha_\sMi}\dbar_\sMi
+ \frac{2\alpha_\sMi + \alpha_\sMe}{\alpha_\sMi}\dbar_\sMe
+   \alpha_\sMe(C_\sMe - C_\sMi) }{2\alpha_\sMe} . \hspace{4mm}
\end{eqnarray}}
Observe the sign of $z $ equals that of the  LHS in \eqref{Eqn_cond_forat_max} and hence $z< 0$ iff \eqref{Eqn_cond_forat_max} is false. 

{\bf Case 1:}  If  in \eqref{Eqn_hpmax_minus_rmp}
  $z< 0$, then 

\vspace{-3mm}
{\small$$U^*_{_{Mp}} = U_\sV (\pmax, h (\pmax) )  \mbox{ as } l_{_{Mp}} = 0 < h(\pmax) \mbox{  and by  \eqref{Eqn_Ustar_Op_when_z_negative}, }  U^*_{_{O_p}} = U_\sV (\pmax, \theta(\pmax) ). $$}%
Also by optimality of $h(\pmax)$, along $p=\pmax$ line, we have 
$$
U^*_{_{O_p}}=U_\sV (\pmax, \theta(\pmax) ) \le U_\sV (\pmax, h(\pmax) ) = U^*_{_{Mp}}  .
$$

{\bf Case 2:} Now if in \eqref{Eqn_hpmax_minus_rmp}
  $z\ge 0$, then 
  $
  U^*_{_{Mp}} = U_\sV (\pmax, r_{_{M_p}} ).
  $
Further since $\pe^{*}(\pmax, r_{_{Mp}} ) = \pe^{*}(\pmax, \theta(\pmax))  = \nicefrac{(\dbar_\sMe + \dbar_\sMi)}{\alpha_\sMe}$ at limit,  using \eqref{Eqn_rmp_hpax_lim} and  \eqref{Eqn_psw}:

\vspace{ -4 mm}
{\small\begin{eqnarray*}
\lim_{\varepsilon \to 1}  \left (  U_\sV (\pmax, \theta(\pmax) ) - U_\sV (\pmax, r_{_{Mp}}  ) \right ) \hspace{-59mm} \nonumber \\
   &=&  \lim_{\varepsilon \to 1}  \left [
     \left(\dbar_\sMe + \alpha_\sMi \pmax - \alpha_\sMe \pe^{*}(\pmax,\theta(\pmax))\right)\left(\theta(\pmax) -  r_{_{Mp}}   \right) \right ]\label{eqn_util_diff} \\
       &=&  
 \dbar_\sMe\left ( \frac{\dbar_\sMe +  \dbar_\sMi - \alpha_\sMe C_\sMe  }{\alpha_\sMe} -\frac{   O_\sMe}{  \dbar_\sMe}  -\frac{\dbar_\sMi- \alpha_\sMe C_\sMe }{\alpha_\sMe  } \right )  \ 
           = \ \frac{ (\dbar_\sMe)^2 }{\alpha_\sMe}  -  O_\sMe > 0, 
   \end{eqnarray*}}
and thus  we have $U^*_{_{O_p}} \ge U_\sV (\pmax, \theta(\pmax) ) \ge U^*_{_{M_p}}$.   \eop

\ignore{

\newpage
 \sout{Thus when \eqref{Eqn_cond_forat_max} is satisfied,} $h(\pmax) - r_{_{Mp}}  $ converges to a positive value and thus the optimal point in sub-regime Mp is at $r_{_{Mp}} $ for all $\varepsilon$ sufficiently high. Finally as in sub-section \ref{sec_in-house_at_loss}, the performance at $\pmax$ and  $q=r_{_{Mp}}  = \phi (\pmax)$  is inferior to that at $(\pmax, \theta(\pmax))$. \sout{This completes the   proof  of part (i). We now provide the proof of part (ii)}. 

On the other hand, when $h(\pmax) - r_{_{Mp}} $ converges to a negative value  as $\varepsilon \to 1$,  to compare between operate at max and  operate at par we need to find the difference between the utilities of $\Vi$ at $h(\pmax)$ and at $\theta(\pmax)$ (recall at here that $\pmax > p_{sw}$ near $\varepsilon \to 1$). Towards this, we consider the following terms at limit $\varepsilon \to 1$ using \eqref{eqn_Util_w}-\eqref{eqn_q_star_p} (recall by   definition  and with $\varepsilon \to 1$, we will have  $r_{_{Mp}}  = \phi(\pmax)$ and that $\pmax \to \nicefrac{(\dbar_\sMi+\dbar_\sMe)}{\alpha_\sMe}$ ),

\vspace{-3mm}
{\small\begin{eqnarray*}
  \lim_{\varepsilon \to 1}  \left (  U_\sV (\pmax, h(\pmax) ) - U_\sV (\pmax, r_{_{Mp}}  )  \right ) \hspace{-45mm} \\
  &=&  ( w_2 \pmax + w_5 ) \left ( h(\pmax)  - r_{_{Mp}}  \right )  + w_3 \left ( h^2(\pmax)  - r^2_{mx} \right ) 
  \\
   &=&  -2w_3h(\pmax)  \left ( h(\pmax)  - r_{_{Mp}}  \right )  + w_3 \left ( h^2(\pmax)  - r^2_{mx} \right )  \\
   &=&  -w_3  \left ( h(\pmax)  - r_{_{Mp}}  \right ) \left ( h(\pmax) - r_{_{Mp}}   \right)  = -w_3 \left ( h(\pmax) - r_{_{Mp}}   \right )^2.
  \end{eqnarray*}}
Further since $\pe^{*}(\pmax, r_{_{Mp}} ) = \pe^{*}(\pmax, \theta(\pmax))  = \nicefrac{(\dbar_\sMe + \dbar_\sMi)}{\alpha_\sMe}$, we have:

\vspace{ -4 mm}
{\small\begin{eqnarray*}
    U_\sV (\pmax, \theta(\pmax) ) - U_\sV (\pmax, r_{_{Mp}}  ) \hspace{-30mm} \nonumber \\
   &=&  
     \left(\dbar_\sMe + \alpha_\sMi \pmax - \alpha_\sMe \pe^{*}(\pmax,\theta(\pmax))\right)\left(\theta(\pmax) -  r_{_{Mp}}   \right)\label{eqn_util_diff} \\
       &=&  
      \dbar_\sMe  \left(\theta(\pmax) -  r_{_{Mp}}   \right) \\
      &=& \dbar_\sMe\left ( \frac{\dbar_\sMe +  \dbar_\sMi - \alpha_\sMe C_\sMe  }{\alpha_\sMe} -\frac{   O_\sMe}{  \dbar_\sMe}  -\frac{\dbar_\sMi- \alpha_\sMe C_\sMe }{\alpha_\sMe  } \right )  \\
           &=& \frac{ (\dbar_\sMe)^2 }{\alpha_\sMe}  -  O_\sMe, 
   \end{eqnarray*}}
and thus finally, 

\vspace{-3mm}
{\small\begin{eqnarray*}
  \lim_{\varepsilon \to 1} U^*_{Mp} - U^*_{Op}  & = &
-w_3 \left ( h(\pmax) - r_{_{Mp}}   \right )^2 - \frac{ (\dbar_\sMe)^2 }{\alpha_\sMe}  +  O_\sMe \\
 &=& 
\frac{\alpha_\sMe }{2}\left ( h(\pmax) - r_{_{Mp}}   \right )^2 - \frac{ (\dbar_\sMe)^2 }{\alpha_\sMe}  +  O_\sMe.
   \end{eqnarray*}}
In other words,    
  $(\pmax, h(\pmax) ) $ is optimal if the above is greater than 0 or if,  
   $$
  \lim_{\varepsilon \to 1} \left (  r_{_{Mp}}  -  h(\pmax) \right ) >  \frac{\sqrt{2} \sqrt{ \left (\dbar_\sMe \right )^2 -  \alpha_\sMe O_\sMe } }{  \alpha_\sMe},
   $$
   else the optimal point is $(\pmax, \theta(\pmax) )$. Thus we have  the result, because the limit of the LHS of the above is given by the following: 

    \vspace{-3mm}
    {\small
\begin{eqnarray*}
 \lim_{\varepsilon \to 1} \left (  r_{_{Mp}}  -  h(\pmax) \right ) = 
  \tiny{ \frac{ ( \alpha_\sMi - \alpha_\sMe) \dbar_\sMi - (2 \alpha_\sMi + \alpha_\sMe) \dbar_\sMe  - \alpha_\sMi  \alpha_\sMe  ( C_\sMe - C_\sMi)  }{2 \alpha_\sMe   \alpha_\sMi }} . \ \mbox{\normalsize \eop}
\end{eqnarray*}
}
 
}
\begin{lemma}\label{lem_op_max_price}
Under the hypothesis of Theorem \ref{thm_coex},   there exists a $\bar\varepsilon > 0$ such that for all $\varepsilon \le \bar\varepsilon $: \\
(a) $\textit{Mp}$ regime is empty,  and 
(b) $U_{_{Sh}}^* > U_{_{I\ell}}^*$.
\end{lemma}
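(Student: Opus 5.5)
The plan is to compute the limits as $\varepsilon \to 0$ of the closed-form quantities already derived (with $r$ and all other parameters fixed), show that each desired inequality holds strictly in the limit, and then use continuity in $\varepsilon$ to obtain a common $\bar\varepsilon>0$. All the expressions involved, namely \eqref{Eqn_psi_inv_pmax}, \eqref{eqn_r_max_simplified}, \eqref{eqn_pi_il}, \eqref{eqn_tilde_q_star}, \eqref{Eqn_Ustar_Il} and Lemma~\ref{lem_shut}, are continuous in $\varepsilon$ near $0$. The only exceptions are the branch indicators, and these are handled by fixing the branch for small $\varepsilon$. Throughout I assume the operating costs are strictly positive.

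For part (a), I would first fix the branch of \eqref{eqn_r_max_simplified}. For small $\varepsilon$ we have $\varepsilon^2\dbar_\sMe < \sqrt{\alpha_\sMe O_\sMe}$, so the first row applies and $r_{_{Mp}} \to (\dbar_\sMe - \alpha_\sMe C_\sMe - 2\sqrt{\alpha_\sMe O_\sMe})/\alpha_\sMe$. From \eqref{Eqn_psi_inv_pmax}, the inner term of $l_{_{Mp}}$ tends to $(\dbar_\sMe - \alpha_\sMe C_\sMe)/\alpha_\sMe$, which is positive by {\bf A.1}. Hence $l_{_{Mp}} - r_{_{Mp}} \to 2\sqrt{\alpha_\sMe O_\sMe}/\alpha_\sMe > 0$. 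So $l_{_{Mp}} > r_{_{Mp}}$ for all small $\varepsilon$, and the Mp regime is empty.

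For part (b), I would bound $U_{_{I\ell}}^*$ from above by its first-branch value. If the I$\ell$ regime is empty there is nothing to prove, so suppose it is non-empty. In \eqref{Eqn_Ustar_Il}, the second branch is $\varepsilon^2\dbar_\sMe(r_{_{I\ell}} - C_\sS)$ with $r_{_{I\ell}}$ bounded, so this term tends to $0$. The first branch tends to $(\dbar_\sMe - \alpha_\sMe(C_\sMe + C_\sS))^2/(8\alpha_\sMe)$. Therefore
$$\limsup_{\varepsilon\to 0} U_{_{I\ell}}^* \le \max\left\{ \frac{(\dbar_\sMe - \alpha_\sMe(C_\sMe+C_\sS))^2}{8\alpha_\sMe},\ 0 \right\} - O_\sMi - O_\sS .$$
The first entry of the max already dominates, so the bound is simply that entry minus $O_\sMi + O_\sS$. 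This avoids having to determine which indicator is active, although one could check it: $r_{_{I\ell}} \to \theta$-limit and $\tilde q^* \to (\dbar_\sMe - \alpha_\sMe C_\sMe)/(2\alpha_\sMe) + C_\sS/2$, and {\bf A.1} gives $\dbar_\sMe - \alpha_\sMe(C_\sMe+C_\sS) \ge 4\sqrt{\alpha_\sMe O_\sMe}$, which places us in the first branch.

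On the other side, Lemma~\ref{lem_shut} with $\varepsilon \to 0$ (so that $1 - r\varepsilon \to 1$, for any $r$) gives $U_{_{Sh}}^* \to (\dbar_\sMe - \alpha_\sMe(C_\sMe+C_\sS))^2/(8\alpha_\sMe) - O_\sS$. Consequently $\liminf (U_{_{Sh}}^* - U_{_{I\ell}}^*) \ge O_\sMi > 0$, and strict inequality holds for all small $\varepsilon$. Finally, I would take $\bar\varepsilon$ to be the smaller of the thresholds from (a) and (b).

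The main obstacle is the discontinuous dependence on $\varepsilon$ through the indicators and the $\max/\min$ clamps, for example the switch between the two rows of $r_{_{Mp}}$. I would handle these by fixing the branch for small $\varepsilon$ where possible. In (b), I would instead use the upper bound by the larger branch value, which removes any need to track which indicator is active.
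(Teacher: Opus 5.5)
Your proof is correct and follows the paper's own argument. For small $\varepsilon$ you fix the $\pmax<p_{sw}$ branch, take the $\varepsilon\to 0$ limits of $l_{_{Mp}}$ and $r_{_{Mp}}$ (gap $2\sqrt{\alpha_\sMe O_\sMe}/\alpha_\sMe>0$) and of $U^*_{_{I\ell}}$ and $U^*_{_{Sh}}$ (gap $O_\sMi>0$), and then conclude by continuity. The only deviation is in (b), where you bound $U^*_{_{I\ell}}$ by the larger of its two branch values instead of showing, as the paper does, that $r_{_{I\ell}}=\theta(\pmax)$ and $\tilde q^*<\theta(\pmax)$ so that the first branch is active; this shortcut is valid and simply spares you that branch identification, which {\bf A.1} settles anyway.
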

\textbf{Proof of Lemma \ref{lem_op_max_price}:}
From \eqref{Eqn_psw}, 
$
\lim_{\varepsilon \to 0} \ p_{mx} = \nicefrac{\dbar_\sMi} {\alpha_\sMi}  \mbox{ and }   p_{sw} \to \infty, 
$
thus  $p_{mx} \le  p_{sw}$ for all $\varepsilon \le \bar \varepsilon$ (for some $\bar \varepsilon > 0$). Now from \eqref{Eqn_psi_inv_pmax} and \eqref{eqn_r_max_simplified}, we have the following as $\varepsilon \to 0$,
\begin{eqnarray*}
 l_{_{Mp}} &\to& \max \left \{ \frac{\dbar_\sMe - \alpha_\sMe C_\sMe}{\alpha_\sMe}, 0  \right  \}  =   \frac{\dbar_\sMe - \alpha_\sMe C_\sMe}{\alpha_\sMe} \mbox{ by assumption } {\bf A.1}, \mbox{ and, }\\
 r_{_{Mp}}  &\to& \frac{\dbar_\sMe - \alpha_\sMe C_\sMe - 2\sqrt{\alpha_\sMe O_\sMe}}{\alpha_\sMe}.
\end{eqnarray*}
Thus  $\lim_{\varepsilon \to 0} ( l_{_{Mp}} - r_{_{Mp}} ) > 0$ and hence the Mp regime is empty by Theorem~\ref{Thm_all_in_one}; this completes the proof of part (a). 

{\bf Part (b):} 
From \eqref{eqn_tilde_q_star}, \eqref{Eqn_psi_inv_less_phi_at_pmax} and \eqref{Eqn_feasible_Regioin_Mj}, we have the following limits as $\varepsilon\to 0$:
\begin{eqnarray*}
\tilde{q}^{*} &\to& \frac{\dbar_\sMe - \alpha_\sMe C_\sMe + \alpha_\sMe C_\sS}{2\alpha_\sMe}, \ 
\psi^{-1}(p_{mx})  \ \to \  \frac{\dbar_\sMe - \alpha_\sMe C_\sMe}{\alpha_\sMe}, \mbox{ and } \\
\theta(p_{mx}) &\to&  \frac{\dbar_\sMe - \alpha_\sMe C_\sMe - 2\sqrt{\alpha_\sMe O_\sMe}}{\alpha_\sMe}, \mbox{ and } \lim_{\varepsilon\to 0} \left ( \theta(p_{mx})  -   \psi^{-1}(p_{mx}) \right ) < 0.
\end{eqnarray*}
Thus from \eqref{eqn_pi_il}, $r_{_{I\ell}} = \theta(p_{mx})$ and $\tilde{q}^{*} < \theta(p_{mx})$,  for all  $\varepsilon  \le  \bar \varepsilon$  for some  $\bar \varepsilon > 0$. Thus  for all such $\varepsilon$, we have 
\begin{eqnarray*}
    U_{_{I\ell}}^* = U_{_{I\ell}} (\pmax, \tilde{q}^{*} ) \mbox{ and }  \lim_{\varepsilon \to 0}  U_{_{I\ell}}^*  = \frac{ \left (\dbar_\sMe  -\alpha_\sMe (C_\sMe + C_\sS) \right )^2 }{8 \alpha_\sMe} - O_\sMi - O_\sS.
\end{eqnarray*}
Next by  Lemma \ref{lem_shut}, as $\varepsilon \to 0$,   
\begin{eqnarray}\label{eqn_sh_eps_zero}
 U_{_{Sh}}^*
\to
\frac{\left(\dbar_\sMe - \alpha_\sMe(C_\sMe + C_\sS)\right)^2}
{8\alpha_\sMe}
- O_\sS,   
\end{eqnarray}
and hence part~(b) follows, as $O_\sM > 0$.  \eop

\medskip

\medskip

\subsection*{\bf Proof of Theorem \ref{thm_coex}:}
\label{ref_thm_4}
The idea is to   compare the utilities at different    regimes with that at co-existence,  at limit $\varepsilon \to 0$; we will show that
all the limits are strictly less than $  \lim_{\varepsilon \to 0} U^*_{co}.$
 Then by continuity,  we will have inequalities in the same direction for any $\varepsilon$ small enough.

We begin with the Bp co-existence regime.  From \eqref{Eqn_ws}, as 
$\varepsilon \to 0$,  we have:

\vspace{-3mm}
{\small\begin{eqnarray}
    w_1 &\to& \hspace{-2mm}-\alpha_\sMi,  \  \
    w_2 \to 0,  \  \ 
    w_3 \to -\frac{\alpha_\sMe}{2}, \nonumber\\
   w_4 &\to&    (\dbar_\sMi +  \alpha_\sMi(C_\sMi + C_\sS)) ,  \mbox{ and, } 
    w_5 \to  \frac{\left(\dbar_\sMe - \alpha_\sMe C_\sMe + \alpha_\sMe C_\sS\right)}{2}. \ \hspace{4mm}\nonumber
\end{eqnarray}}
By substituting the above limits  in \eqref{Eqn_pco_qco}, we have the following as  $\varepsilon \to 0$,
\begin{eqnarray*}
 p^{*}_{co} \to \frac{\left(\dbar_\sMi + \alpha_\sMi(C_\sMi + C_\sS)  \right)}{2\alpha_\sM}, \  \  q^{*}_{co} \to \frac{\left(\dbar_\sMe - \alpha_\sMe(C_\sMe + C_\sS)  \right)}{2\alpha_\sMe}.
\end{eqnarray*}
By substituting the above  in \eqref{eqn_util_co-exist_uc}, we have
\begin{eqnarray}\label{eqn_coex_eps_to_zero}
\lim_{\varepsilon \to 0} U^*_{co}
 =   \frac{\left(\dbar_\sMi - \alpha_\sMi (C_\sMi +C_\sS) \right )^2  }{4 \alpha_\sMi} +\frac{\left(\dbar_\sMe - \alpha_\sMe (C_\sMe +C_\sS) \right )^2  }{8 \alpha_\sMe} - O_\sMi  - O_\sS. \hspace{4mm}   
\end{eqnarray}
By Lemma \ref{lem_op_max_price}.(b),   and   using \eqref{eqn_sh_eps_zero}
$$
\lim_{\varepsilon \to 0} U^{*}_{_{_{I\ell}}}  < \lim_{\varepsilon \to 0} U^{*}_{_{Sh}}   < \lim_{\varepsilon \to 0} U{^*}_{co}.  
$$ 
Next from Lemma \ref{lem_elim}, 
$$
\lim_{\varepsilon \to 0} U^{*}_{_{E\ell}} = \frac{\left(\dbar_\sMi - \alpha_\sMi (C_\sMi +C_\sS) \right )^2  }{4 \alpha_\sMi} - O_\sS - O_\sMi  < \lim_{\varepsilon \to 0} U^*_{co}.  
$$ 
Thus, by Lemma \ref{lem_op_max_price}.(a), it suffices to show that 
$\lim_{\varepsilon \to 0}  U^*_{_{Op}} < \lim_{\varepsilon \to 0} U^*_{co}   $, which is the last part of this proof.  Towards  computing  $\lim_{\varepsilon \to 0} U^*_{_{Op}}$, first observe  using \eqref{Eqn_psw}  that  $\lim_{\varepsilon \to 0} p_{sw} > \lim_{\varepsilon \to 0} p_{mx}$; thus using \eqref{Eqn_opt_util_par} $ \lim_{\varepsilon \to 0} U^*_{_{Op}} = 
   \lim_{\varepsilon \to 0} U_{\sV} (p^{1,*} \theta(p^{1,*}))$. Also, observe the following by assumption {\bf A.1},
 \begin{eqnarray*}
     \lim_{\varepsilon \to 0}  \left ( \frac{(C_\sMi + C_\sS)}{2} 
    + \frac{\left( \dbar_\sMi + \varepsilon\dbar_\sMe - \varepsilon\sqrt{\alpha_\sMe O_\sMe} + \frac{\varepsilon\alpha_\sMi \sqrt{\alpha_\sMe O_\sMe}}{\alpha_\sMe} \right)}{2\alpha_\sMi(1-\varepsilon^2)} \right )  &<&  \lim_{\varepsilon \to 0} p_{mx}.
 \end{eqnarray*}
Thus by \eqref{eqn_u2}, \vspace{-4mm}
\begin{eqnarray*}
      \lim_{\varepsilon \to 0} p^{1,*} = \frac{\dbar_\sMi }{2\alpha_\sMi} + \frac{C_\sMi + C_\sS}{2}.
\end{eqnarray*}
After substituting
$\lim_{\varepsilon \to 0} p^{1,*}$  as $p$ 
in   \eqref{eqn_u1} and using \eqref{Eqn_opt_util_par}, we obtain the limit of the optimal Op utility:

\vspace{-3mm}
{\small{
\begin{eqnarray}
 \lim_{\varepsilon \to 0} U^*_{_{Op}} &=& 
   \lim_{\varepsilon \to 0} U_{\sV} (p^{1,*} \theta(p^{1,*})) \nonumber  \\
   &=& - O_\sMi - O_\sS  + \frac{\left(\dbar_\sMi -\alpha_\sMi(C_\sMi + C_\sS)\right)^2}{4\alpha_\sMi} \nonumber \\
   &&  + \  \frac{\sqrt{\alpha_\sMe O_\sMe}\left(\dbar_\sMe - \alpha_\sMe (C_\sMe + C_\sS) - 2\sqrt{\alpha_\sMe O_\sMe} \right)}{\alpha_\sMe}. \hspace{10mm}\label{eqn_par_eps_to_zero}
\end{eqnarray}}

Now, comparing  \eqref{eqn_coex_eps_to_zero} and \eqref{eqn_par_eps_to_zero}, we 
obtain, 
\begin{eqnarray*}
 \lim_{\varepsilon \to 0} (U^*_{co} - U^{*}_{_{Op}}) &=& \frac{\left(\dbar_\sMe - \alpha_\sMe (C_\sMe +C_\sS) \right )^2}{8\alpha_\sMe}\\ &-& \frac{8\sqrt{\alpha_\sMe O_\sMe}(\dbar_\sMe - \alpha_\sMe (C_\sMe + C_\sS) - 2\sqrt{\alpha_\sMe O_\sMe}) }{8\alpha_\sMe}\\
 &=& \frac{\left( \dbar_\sMe - \alpha_\sMe (C_\sMe +C_\sS) - 4\sqrt{\alpha_\sMe O_\sMe} \right)^2}{8\alpha_\sMe} > 0.
\end{eqnarray*}
Thus, we have, 
$
\lim_{\varepsilon \to 0} U^{*}_{co} > \lim_{\varepsilon \to 0} U^{*}_{_{Op}}.  $
\eop

{\bf Proof of Theorem \ref{thm_all_r}:}
\label{ref_thm_5}
From Theorem \ref{Thm_all_in_one} and Lemma \ref{lem_compr}, there exists an $\tilde\varepsilon < 1$ such that for all $\varepsilon > \tilde{\varepsilon}$, we have $ U^{*}_{co} = \max \{U^{*}_{_{Op}}, U^{*}_{_{Mp}}\}$. Further, from the proof of Lemma \ref{lem_compr}, we have ($\lim_{\varepsilon \to 1} h(\pmax), \lim_{\varepsilon \to 1} \theta(\pmax) $ are computed and $\lim_{\varepsilon \to 1} \pmax = \nicefrac{(\dbar_\sMi+\dbar_\sMe)}{\alpha_\sMi}$ and substituting the same into \eqref{eqn_util_co-exist_given_pq}) 
\begin{eqnarray}
  \lim_{\varepsilon \to 1} U^{*}_{_{Op}} &=&\lim_{\varepsilon \to 1} U_\sV(\pmax,\theta(\pmax)) =\dbar_\sMe\left ( \frac{\dbar_\sMe +  \dbar_\sMi - \alpha_\sMe C_\sMe  }{\alpha_\sMe} -\frac{   O_\sMe}{  \dbar_\sMe}  \right )  \label{eqn_op_1}  \hspace{4mm} \\  
   \lim_{\varepsilon \to 1} U^{*}_{_{Mp}} &=& \lim_{\varepsilon \to 1} U_\sV(\pmax,h(\pmax)) = \frac{  (2\dbar_\sM+ \dbar_\sMe) \left(\dbar_\sM +\dbar_\sMe -\alpha_\sM(C_\sS + C_\sM) \right)}{2  \alpha_\sM} \nonumber\\ &+& \dbar_\sMe \left( \frac{\dbar_\sMi+ 2\dbar_\sMe - \alpha_\sMe(C_\sMe + C_\sS) }{2\alpha_\sMe}\right )
    - O_\sM - O_\sS. \label{eqn_mp_1}
\end{eqnarray}
Thus $\lim_{\varepsilon \to 1}  U^{*}_{co} < \infty $. 
Now obtain the following using Lemmas \ref{lem_shut}-\ref{lem_elim} for any given $r \in [0, 1]$:
\begin{eqnarray}
\Omega_{_{Sh}} (r) := \lim_{\varepsilon \to 1} U^{*}_{_{Sh}} &=& \frac{\left(\dbar_\sMe +  \dbar_\sMi - \alpha_\sMe(1-r)(C_\sMe+ C_\sS)\right)^2}{8\alpha_\sMe(1-r)} - O_\sS. \label{eqn-omega_sh} \\  
\Omega_{_{El}} (r) :=  \lim_{\varepsilon \to 1} U^{*}_{_{El}} &=& \frac{\left(\dbar_\sMi + \dbar_\sMe - \alpha_\sMi(1-r)(C_\sMi+ C_\sS)\right)^2}{4\alpha_\sMi(1-r)} - O_\sMi - O_\sS. \hspace{2mm}\label{eqn-omega_el} \hspace{6mm}
\end{eqnarray}
Also consider the derivative of the above two terms which are positive by {\bf{A.1}}:
\begin{eqnarray*}
    \frac{\partial \Omega_{_{Sh}} (r)}{\partial r} &=& \frac{16\alpha_\sMe^2(1-r)(C_\sMe+C_\sS)(\dbar_\sMe + \dbar_\sMi -\alpha_\sMe(1-r)(C_\sMe + C_\sS))}{64\alpha_\sMe^2(1-r)^2}\\
    &+& \frac{8\alpha_\sMe (\dbar_\sMe + \dbar_\sMi -\alpha_\sMe(1-r)(C_\sMe + C_\sS))^2}{64\alpha_\sMe^2(1-r)^2} > 0, \\
    \frac{\partial \Omega_{_{El}} (r)}{\partial r} &=& \frac{8\alpha_\sMi^2(1-r)(C_\sMi+C_\sS)(\dbar_\sMe + \dbar_\sMi -\alpha_\sMi(1-r)(C_\sMi + C_\sS))}{16\alpha_\sMi^2(1-r)^2}\\
    &+& \frac{4\alpha_\sMi (\dbar_\sMe + \dbar_\sMi -\alpha_\sMi(1-r)(C_\sMi + C_\sS))^2}{16\alpha_\sMi^2(1-r)^2} > 0.
\end{eqnarray*}
As seen from the above, both $\Omega_{_{Sh}}$ and $\Omega_{_{E\ell}}$ are strictly increasing in $r$;  further   
 $\lim_{r \to 1}\Omega_{_{Sh}} (r) = \infty$ and  $\lim_{r \to 1}\Omega_{_{El}} (r) = \infty$.
Thus there exists a $\bar r  < 1$ such that 
\begin{eqnarray}
\lim_{\varepsilon \to 1}  U^{*}_{co} &<& \min \{ \Omega_{_{Sh}} (r)  \Omega_{_{El}} (r) \} \mbox{ for all } r >  \bar r, \mbox{ and }  \nonumber  \\ 
\lim_{\varepsilon \to 1}  U^{*}_{co}  &>&  \min \{ \Omega_{_{Sh}} (r), \Omega_{_{El}} (r) \} \mbox{ for all } r <  \bar r. \label{Eqn_comparison}
\end{eqnarray}

{\bf Case 1:
Fix any   $r  < \bar r$}.
From \eqref{Eqn_comparison}, 
 fo any $r < \bar r$, the overall optimizer is among the co-existence regime. Thus this part of the theorem follows by Lemma~\ref{lem_compr}, with $\epsilon_r = {\tilde \varepsilon}$, defined at the beginning of the proof, for every $r < \bar r$.

{\bf Case 2:
Fix any   $r  > \bar r$}. If $\Omega_{_{Sh}} (r)-  \Omega_{_{El}} (r)  > 0$ given by \eqref{eqn_ineq_elim_shut}, then  there exists an $\epsilon_r $ such that 
$U^*_{_{Sh}} (r, \varepsilon) > U^* _{_{El}} (r, \varepsilon)$, for all $\varepsilon \ge \epsilon_r$ and thus 
$$
U^*_\sV = U^*_{_{Sh}} \mbox{ for all }  \varepsilon \ge \epsilon_r, \mbox{ and  all } r \ge \bar r.  
$$
and thus the optimal configuration is to shut down the in-house. 
In similar lines, when 
$\Omega_{_{Sh}} (r)-  \Omega_{_{El}} (r)  < 0$ for all $\varepsilon \ge \epsilon_r$ (for some appropriate $\epsilon_r$),  E$\ell$ is the optimal among all configurations.  \eop

\ignore{
\newpage 
\textbf{ Proof of Lemma \ref{lem_elim_shut} } It is clear from Theorem \ref{Thm_all_in_one} and Lemma \ref{lem_compr}, that there exists an $\tilde\varepsilon < 1$ auch that for all $\varepsilon > \tilde{\varepsilon}$, we have $ U^{*}_{co} = \max \{U^{*}_{_{Op}}, U^{*}_{_{Mp}}\} < \infty$.
Now from Lemma \ref{lem_shut} and Lemma \ref{lem_elim}, observe that $ \lim_{\varepsilon \to 1} \lim_{r \to 1} U^{*}_{_{Sh}} \to \infty$ and $ \lim_{\varepsilon \to 1} \lim_{r \to 1} U^{*}_{_{E\ell}} \to \infty$. Thus there exists a $\bar\varepsilon < 1$ and if required $\bar\varepsilon > \tilde \varepsilon$ and  $\bar r_\varepsilon < 1 $ such that for all $\varepsilon > \bar\varepsilon$ and $r > \bar r_\varepsilon $, the following holds true:
\begin{eqnarray*}
 U^{*}_{co} = \max \{U^{*}_{_{Op}}, U^{*}_{_{Mp}}\} < \min\{U^{*}_{_{Sh}} , U^{*}_{_{E\ell}}\}.
\end{eqnarray*}
Now  again from Lemma \ref{lem_shut} and \ref{lem_elim}, observe the following:
\begin{eqnarray}\label{eqn_diff_shut_elim}
    \lim_{\varepsilon \to 1} \lim_{r \to 1} \left(U^{*}_{_{E\ell}} - U^{*}_{_{Sh}}  \right) \to \left(\dbar_\sMi + \dbar_\sMe\right)^2 \left(\frac{1}{4\alpha_\sMi} - \frac{1}{8\alpha_\sMe} \right).
\end{eqnarray}
 Thus  from \eqref{eqn_diff_shut_elim}, when $\alpha_\sMi \le 2\alpha_\sMe$, E$\ell$ regime is the optimal choice for $\V$ and the optimal price for coalition $\V$ is given by \eqref{eqn_opt_price_el}, else the optimal choice is Sh regime with optimal point given in Lemma \ref{lem_shut}. This completes the proof of part a).

{\bf Part (b):}
 From Lemma \ref{lem_compr}, it is clear that when $\alpha_\sMi \ne \alpha_\sMe$ and when $\varepsilon \to 1$, among the co-existence regime, it is beneficial for the coalition $\V$ to either operate at par or at maximum price. Now we need to compare the utilities of the coalition $\V$ among the other regimes of eliminate downstream competition and shut down in-house competition. To do so we first need to compute $U^{*}_{_{Sh}}$ and $U^{*}_{_{E\ell}}$ at $\varepsilon \to 1$ and $r \to 0$. From Lemma \ref{lem_shut} and \ref{lem_elim}, we get the following:
\begin{eqnarray}
\lim_{\varepsilon \to 1}\lim_{r \to 0}U^{*}_{_{Sh}} &\to&
\frac{\left(\dbar_\sMe +\dbar_\sMi - \alpha_\sMe(C_\sMe+ C_\sS)\right)^2}{8\alpha_\sMe} - O_\sS .\label{eqn_u_shut_at_eps_1}\\ 
 \lim_{\varepsilon \to 1}\lim_{r \to 0}  U^{*}_{_{E\ell}} &\to& \frac{\left(\dbar_\sMi+ \dbar_\sMe 
 -\alpha_\sMi(C_\sMi+ C_\sS)\right)^2}{4\alpha_\sMi}- O_\sMi - O_\sS \label{eqn_u_elim_at_eps_1}.
 \end{eqnarray}
 Now, again from Lemma \ref{lem_compr}, it is clear that at $\varepsilon \to 1$, $U^{*}_{Mp} = U_\sV(\pmax,h(\pmax))$. Now from proof of Lemma \ref{lem_compr} in \ref{sec_Appendix_AA}, we get that,
 \begin{eqnarray}
    U_\sV (\pmax, h(\pmax)) &\to& ( \dbar_\sM - \alpha_\sM \pmax +  \alpha_\sMe  {\pe}_{_{mx}} )  ( \pmax - C_\sS - C_\sM )\nonumber \\
    &+&   ( \dbar_\sMe + \alpha_\sM \pmax -  \alpha_\sMe  {\pe}_{_{mx}} )  ( h(\pmax) - C_\sS ) - O_\sMi - O_\sS.\nonumber\\
    U_\sV (\pmax, h(\pmax)) &\to&  \dbar_\sM \left(\frac{\dbar_\sM +\dbar_\sMe -\alpha_\sM(C_\sS + C_\sM)}{\alpha_\sM}\right)\nonumber\\
    &+& 2 \dbar_\sMe\left(  \frac{ (\frac{\alpha_\sMi + \alpha_\sMe }{2})(\frac{\dbar_\sMi+ \dbar_\sMe}{\alpha_\sMi}) -\frac{\alpha_\sMe(C_\sMi + C_\sMe)}{2} + \frac{\dbar_\sMe}{2} -\alpha_\sMe C_\sS}{2\alpha_\sMe}\right) \nonumber\\ &-& O_\sMi - O_\sS.\nonumber\\
     &=& \frac{  \dbar_\sM\left(\dbar_\sM +\dbar_\sMe -\alpha_\sM(C_\sS + C_\sM) \right)}{  \alpha_\sM} \nonumber \\ &+& \dbar_\sMe \left( \frac{\dbar_\sMi+ 2\dbar_\sMe - \alpha_\sMe(C_\sMe + C_\sS) }{2\alpha_\sMe}\right ) \nonumber\\ &+& \dbar_\sMe  \left ( \frac{  \left ( \dbar_\sMi+ \dbar_\sMe - \alpha_\sM (C_\sMi+  C_\sS ) \right )}{2\alpha_\sM  } \right) - O_\sMi - O_\sS  
     \nonumber \\
    &=& 
    \frac{  (2\dbar_\sM+ \dbar_\sMe) \left(\dbar_\sM +\dbar_\sMe -\alpha_\sM(C_\sS + C_\sM) \right)}{2  \alpha_\sM} \nonumber\\ &+& \dbar_\sMe \left( \frac{\dbar_\sMi+ 2\dbar_\sMe - \alpha_\sMe(C_\sMe + C_\sS) }{2\alpha_\sMe}\right )
    - O_\sM - O_\sS.\label{eqn_u_max_eps_1}
\end{eqnarray}
It is clear from \eqref{eqn_u_elim_at_eps_1} and \eqref{eqn_u_max_eps_1} that at $\varepsilon \to 1$ and at $r \to 0$
$\lim_{\varepsilon\to 1} U^{*}_{Mp} > \lim_{\varepsilon\to 1}\lim_{r \to 0}U^{*}_{El}$. Now we need to compare $\lim_{\varepsilon\to 1} U^{*}_{Mp}$ and $\lim_{\varepsilon\to 1}\lim_{r \to 0}U^{*}_{Sh}$.
Observe from \eqref{eqn_u_shut_at_eps_1} and \eqref{eqn_u_max_eps_1} that 

\vspace{-3mm}
{\small
\begin{eqnarray*}
 \lim_{\varepsilon\to 1} U^{*}_{Mp}- \lim_{\varepsilon\to 1}\lim_{r \to 0} U^{*}_{_{Sh}}  
    &\to& \frac{  (2\dbar_\sM+ \dbar_\sMe) \left(\dbar_\sM +\dbar_\sMe -\alpha_\sM(C_\sS + C_\sM) \right)}{2  \alpha_\sM}\\ &+& \left( \dbar_\sMi+ \dbar_\sMe - \alpha_\sMe(C_\sMe + C_\sS)\right)\left(\frac{\dbar_\sMe}{2\alpha_\sMe}\right)\\ &-& \left( \dbar_\sMi+ \dbar_\sMe - \alpha_\sMe(C_\sMe + C_\sS)\right)\left(\frac{( \dbar_\sMi+ \dbar_\sMe - \alpha_\sMe(C_\sMe + C_\sS))}{8\alpha_\sMe}\right)\\
    &+& \frac{\dbar_\sMe^2}{2\alpha_\sMe} - O_\sM.\\
     &=& \frac{  (2\dbar_\sM+ \dbar_\sMe) \left(\dbar_\sM +\dbar_\sMe -\alpha_\sM(C_\sS + C_\sM) \right)}{2  \alpha_\sM}\\ &+&\left(\frac{ ( \dbar_\sMi+ \dbar_\sMe - \alpha_\sMe(C_\sMe + C_\sS))(3\dbar_\sMe -\dbar_\sM + \alpha_\sMe(C_\sMe + C_\sS) }{8\alpha_\sMe}\right)\\
    &+&  \frac{\dbar_\sMe^2}{2\alpha_\sMe} -O_\sM.\\
\end{eqnarray*}}
Observe that under the hypothesis of the Lemma, $ \lim_{\varepsilon\to 1} U^{*}_{_{Mp}} > \lim_{\varepsilon\to 1}\lim_{r \to 0} U^{*}_{_{Sh}}  $. We don't need to compare with $U^{*}_{_{Op}}$ as from Lemma \ref{lem_compr}, it is clear that the optimal co-existence utility is maximum of operate at par and at maximum price. This completes the proof.\eop}

\subsection*{\underline{Some computations for Op regime as $\varepsilon \to 1$}}
With $\varepsilon \to 1$, we have $\lim_{\varepsilon \to 1} (\pmax - p_{sw} ) > 0 $, and  so by \eqref{eqn_u2} $  p^{1,*} \to \lim_{\varepsilon \to 1}   p_{sw}  = \frac{\dbar_\sM + \sqrt{ \alpha_\sMe O_\sMe}}{\alpha_\sM}$. 

Further towards understanding $p^{2,*}$  of \eqref{Eqn_p2_star}, at limit $\varepsilon\to 1$, we first observe

\vspace{-3mm}
{\small \begin{eqnarray*}
    \frac{
\dbar_{\sMi}(1+\varepsilon^{2})
+ \varepsilon\dbar_{\sMe}
+ \alpha_{\sMi}(C_{\sMi}+C_{\sS})
+ \frac{\varepsilon\alpha_{\sMi}}{\alpha_{\sMe}}
(\dbar_{\sMe}+\varepsilon\dbar_{\sMi}-\alpha_{\sMe} C_{\sMe}-\alpha_{\sMe} C_{\sS})
}{2\alpha_{\sMi}} - \pmax  \hspace{-120mm}
  \\ &\to& 
     \frac{ (2\alpha_{\sMe} + \alpha_{\sMi} )
\dbar_{\sMi} 
+ (\alpha_{\sMe} + \alpha_{\sMi} ) \dbar_{\sMe}
+ \alpha_{\sMi} \alpha_\sMe (C_{\sMi} - C_{\sMe})
}{2\alpha_{\sMi} \alpha_\sMe }   -  \frac{\dbar_\sMe + \dbar_\sM } {\alpha_\sM}  \\
 &=& \frac{   \alpha_{\sMi} 
\dbar_{\sMi} 
+ ( - \alpha_{\sMe} + \alpha_{\sMi} ) \dbar_{\sMe}
+ \alpha_{\sMe}\alpha_{\sMi}(C_{\sMi} - C_{\sMe})
}{2\alpha_{\sMi} \alpha_\sMe } .
\end{eqnarray*}}

{\bf Case 1:} When $z$ in \eqref{Eqn_hpmax_minus_rmp} is negative or when 

\vspace{-3mm}
{\small$$
\alpha_\sMe \alpha_\sMi C_\sMi + \alpha_\sMi \dbar_\sMi > \alpha_{\sMe}  \dbar_{\sMi} +  ( 2\alpha_{\sMi} + \alpha_{\sMe} ) \dbar_{\sMe} + \alpha_\sMe \alpha_\sMi C_\sMe > \alpha_{\sMe}  \dbar_{\sMe}  + \alpha_\sMe \alpha_\sMi C_\sMe
$$}
we have 
$$
\lim_{\varepsilon \to 1} p^{2,*} = \pmax, 
$$
and this implies, by \eqref{Eqn_p2_star}, that the optimizer of $p \mapsto U_\sV (p, \theta(p) )$ in  range $[p_{sw}, \pmax]$ is at $\pmax$ while the unconstrained optimizer of the objective function  in  \eqref{Eqn_second_opt_Par} is above   $\pmax$.  
Further since $p_{sw} < \pmax$ at limit, 
  by strict concavity of $p \mapsto U_\sV (p, \theta(p) )$ on  $[p_{sw}, \pmax]$  and by continuity of  $p \mapsto U_\sV (p, \theta(p) )$ on $[0, \pmax]$,  we have
\begin{eqnarray}
  \label{Eqn_Ustar_Op_when_z_negative}
U^*_{Op} = U_\sV (\pmax, \theta(\pmax) )  \mbox{(when $z$ of \eqref{Eqn_hpmax_minus_rmp} is negative)}, 
\end{eqnarray}
also as optimal  of $U_\sV (p, \theta(p) )$ over the range $[0, p_{sw}]$ is at $p_{sw}$.

\end{document}